\def\confversion{0}
\def\ifconf{\ifnum\confversion=1}
\def\ifnotconf{\ifnum\confversion=0}
\def\showauthornotes{0}
\def\showkeys{0}
\def\showdraftbox{0}
\def\({\left(}
\def\){\right)}
\def\[{\left[}
\def\]{\right]}
\def\<{\left\langle}
\def\>{\right\rangle}
\def\iff{\Longleftrightarrow}
\def\implies{\Longrightarrow}
\newcommand{\floor}[1]{\ensuremath{\left\lfloor#1\right\rfloor}}
\newcommand{\symdiff}{\mathbin{\triangle}}
\newcommand{\comp}{\mathbin{\circ}}
\newcommand{\place}{{{}\cdot{}}}
\let\geq\geqslant
\let\leq\leqslant
\let\succeq\succcurlyeq
\let\preceq\preccurlyeq
\let\epsilon\varepsilon
\def\CC{\ensuremath{\mathbb{C}}}
\def\FF{\ensuremath{\mathbb{F}}}
\def\HH{\ensuremath{\mathbb{H}}}
\def\NN{\ensuremath{\mathbb{N}}}
\def\PP{\ensuremath{\mathbb{P}}}
\def\RR{\ensuremath{\mathbb{R}}}
\def\ZZ{\ensuremath{\mathbb{Z}}}
\def\One{\ensuremath{\mathbbm{1}}}
\def\cA{\ensuremath{\mathcal{A}}}
\def\cD{\ensuremath{\mathcal{D}}}
\def\cF{\ensuremath{\mathcal{F}}}
\def\cL{\ensuremath{\mathcal{L}}}
\def\cS{\ensuremath{\mathcal{S}}}
\definecolor{darkred}{rgb}{0.5,0,0}
\definecolor{darkgreen}{rgb}{0,0.35,0}
\definecolor{darkblue}{rgb}{0,0,0.55}
\newcommand{\Authornote}[2]{{\sf\small\color{red}{[#1: #2]}}}
\newcommand{\Authorcomment}[2]{{\sf \small\color{gray}{[#1: #2]}}}
\newcommand{\Authorfnote}[2]{\footnote{\color{red}{#1: #2}}}
\newcommand{\Authornote}[2]{}
\newcommand{\Authorcomment}[2]{}
\newcommand{\Authorfnote}[2]{}
\newcommand{\draftbox}{\begin{center}
  \fbox{%
    \begin{minipage}{2in}%
      \begin{center}%
        \begin{Large}%
          \textsc{Working Draft}%
        \end{Large}\\
        Please do not distribute%
      \end{center}%
    \end{minipage}%
  }%
\end{center}
\vspace{0.2cm}}
\newcommand{\draftbox}{}
\def\to{\rightarrow}
\def\epsilon{\varepsilon}
\def\phi{\varphi}
\def\cal{\mathcal}
\def\implies{\Rightarrow}
\def\psdgeq{\succeq} 
\newcommand{\defeq}{:=}
\newcommand{\etal}{et al.\xspace}
\newcommand{\R}{{\mathbb R}}
\newcommand{\N}{{\mathbb{N}}}
\newcommand{\F}{{\mathbb F}}
\newcommand{\abs}[1]{\ensuremath{\left\lvert #1 \right\rvert}}
\newcommand{\ip}[2] {\ensuremath{\left\langle #1 , #2 \right\rangle}}
\newcommand{\one}{{\mathbf{1}}}
\newcommand{\Esymb}{\mathbb{E}}
\DeclareMathOperator*{\ExpOp}{\Esymb}
\def\ProbabilityRender#1#2{%fancy probability command
  \@ifnextchar\bgroup%
  {\renderwithdist{#1}{#2}}
   {\singlervrender{#1}{#2}}
}
\def\singlervrender#1#2{%
   \ensuremath{\mathchoice
       {{#1}\left[ #2 \right]}
       {{#1}[ #2 ]}
       {{#1}[ #2 ]}
       {{#1}[ #2 ]}
   }
}
\def\renderwithdist#1#2#3{%
   \@ifnextchar\bgroup
   {\superfancyrender{#1}{#2}{#3}}
   {\ensuremath{\mathchoice
      {\underset{#2}{#1}\left[ #3 \right]}
      {{#1}_{#2}[ #3 ]}
      {{#1}_{#2}[ #3 ]}
      {{#1}_{#2}[ #3 ]}
     }
   }
}
\def\superfancyrender#1#2#3#4#5{
   \ensuremath{\mathchoice
      {\underset{#1}{{#1}}\left#4 #3 \right#5}
      {{#1}_{#2}#4 #3 #5}
      {{#1}_{#2}#4 #3 #5}
      {{#1}_{#2}#4 #3 #5}
   }
}
\newfont{\inhead}{eufm10 scaled\magstep1}
\newcommand{\calP}{{\cal P}}
\newcommand{\poly}{{\mathrm{poly}}}
\DeclareMathOperator\supp{Supp}
\DeclareMathOperator{\tr}{\operatorname {tr}}
\DeclareMathOperator{\Span}{\operatorname {span}}
\DeclareMathOperator{\im}{\operatorname{im}}
\newcommand{\el}{\ensuremath{\ell} }
\renewcommand{\iff}{\ensuremath{\Leftrightarrow}}
\newcommand{\val}{{\sf val}}
\newcommand{\Lovasz}{Lov\'asz\xspace}
\newcommand{\KLP}{\textup{KrawtchoukLP}}
\newcommand{\DLP}{\textup{DelsarteLP}}
\newtheoremstyle{plain}%style name
  {\medskipamount}%space before
  {\smallskipamount}%space after
  {\slshape}%font used
  {0pt}%indentation
  {\bfseries}%modifier theorem head
  {.}%punctuation between theorem head and body
  { }%space after punctuation
  {\thmname{#1}\thmnumber{ #2}{\normalfont\thmnote{ (#3)}}}%theorem specifier
\theoremstyle{plain}
\newenvironment{proofsketch}[1][Proof (sketch)]{%
\begingroup
\begin{proof}[#1]}%
{\end{proof}\endgroup}
\setlist[enumerate]{label={\roman*.}, ref={(\roman*)}}
\def\rn{\bm}
\newcommand{\df}{\stackrel{\text{def}}{=}}
\let\emph\textit
\def\supp{\operatorname{supp}}
\newcommand{\config}{\textup{\textsf{Config}}}
\newcommand{\forbconfig}{\textup{\textsf{ForbConfig}}}
\DeclareMathOperator{\Aut}{Aut}
\DeclareMathOperator{\Ann}{Ann}
\DeclareMathOperator{\id}{id}
\DeclareMathOperator{\Stab}{Stab}
\newcommand{\unif}{\in_{\text{R}}}
\newcommand{\TransSDP}{\textup{TranslationSDP}}
\newcommand{\FLP}{\textup{FourierLP}}
\newcommand{\Lin}{\textup{Lin}}
\DeclareMathOperator{\GL}{GL}
\definecolor{orange}{HTML}{FF7F00}
\def\st@r#1#2{\m@th\ooalign{$\hfil#1*\hfil$\cr$#1#2$}}
\def\starprod{\mathop{\mathpalette\st@r\prod}}
\title{A Complete Linear Programming Hierarchy for Linear Codes}
\author{Leonardo Nagami Coregliano\thanks{{\tt IAS}. {\tt lenacore@ias.edu}. This material is based upon work supported by the National Science Foundation, and by the IAS School of Mathematics.} \and
        Fernando Granha Jeronimo\thanks{{\tt IAS}. {\tt granha@ias.edu}. This material is based upon work supported by the National Science Foundation under Grant No. CCF-1900460.} \and
        Chris Jones \thanks{{\tt UChicago}. {\tt csj@uchicago.edu}. This material is based upon work supported by the National Science Foundation under Grant No. CCF-2008920.}}
\begin{document}
\maketitle
\pagenumbering{roman}

\draftbox

\begin{abstract}  
  A longstanding open problem in coding theory is to determine the
  best (asymptotic) rate $R_2(\delta)$ of binary codes with minimum
  constant (relative) distance $\delta$. An existential lower bound
  was given by Gilbert and Varshamov in the 1950s. On the
  impossibility side, in the 1970s McEliece, Rodemich, Rumsey and
  Welch (MRRW) proved an upper bound by analyzing Delsarte's linear
  programs. To date these results remain the best known lower and
  upper bounds on $R_2(\delta)$ with no improvement even for the
  important class of \emph{linear} codes.  Asymptotically, these
  bounds differ by an exponential factor in the blocklength.

  In this work, we introduce a new hierarchy of linear programs (LPs)
  that converges to the true size $A^{\textup{Lin}}_2(n,d)$ of an
  optimum \emph{linear} binary code (in fact, over any finite field)
  of a given blocklength $n$ and distance $d$.
  This hierarchy has several notable features:
  \begin{enumerate}
     \item It is a natural generalization of the Delsarte LPs used in the first MRRW bound.

     \item It is a hierarchy of linear programs rather than semi-definite programs potentially making
           it more amenable to theoretical analysis.

    \item It is \emph{complete} in the sense that the optimum code size can be retrieved
          from level $O(n^2)$.
                        
    \item It provides an answer in the form of a hierarchy (in larger dimensional spaces) to the question of
          how to cut Delsarte's LP polytopes to approximate the true size of \emph{linear} codes.
  \end{enumerate}
  We obtain our hierarchy by generalizing the Krawtchouk polynomials
  and MacWilliams inequalities to a suitable ``higher-order'' version
  taking into account interactions of $\ell$ words. Our method also
  generalizes to translation schemes under mild assumptions.
\end{abstract}

% LocalWords:  Varshamov McEliece Rodemich Rumsey Welch MRRW Delsarte's blocklength LPs Delsarte polytopes
% LocalWords:  Krawtchouk MacWilliams

\thispagestyle{empty}

\newpage
\tableofcontents
\clearpage

\pagenumbering{arabic}
\setcounter{page}{1}

\section{Introduction}

A fundamental question in coding theory is the maximum size of a
binary code given a blocklength parameter $n$ and a minimum distance
parameter $d_n$. This value is typically denoted by $A_2(n,d_n)$. A
particularly important regime occurs when $\lim_{n\to\infty} d_n/n =
\delta$ for some absolute constant $\delta \in (0,1/2)$. In this
regime, $A_2(n,d_n)$ is known to grow exponentially in $n$. However,
the precise rate of this exponential growth remains an elusive major
open problem. It is often convenient to denote the asymptotic basis of
this growth as $2^{R_2(\delta)}$, where the rate $R_2(\delta)$ is defined
as
\begin{align*}
  R_2(\delta) \coloneqq \limsup_{n\to \infty} \frac{1}{n} \log_2\left(A_2(n,\lfloor \delta n \rfloor)\right).
\end{align*}

An equivalent way of defining $A_2(n,d)$ is as the independence number
of the graph $H_{n,d}$ whose vertex set is $V(H_{n,d})\coloneqq\F_2^n$ and two
vertices $x,y \in V(H_{n,d})$ are adjacent if and only if their
Hamming distance $\Delta(x,y)$ lies in $\{1,\ldots,d-1\}$. Note that
there is a one-to-one correspondence between independent sets in this
graph and binary codes of blocklength $n$ and minimum distance
$d$. Most of the literature about $A_2(n,d)$ takes advantage of this
graph-theoretic interpretation.

A lower bound on $A_2(n,d)$ follows from the trivial degree bound on
the independence number of a graph, namely, $\alpha(H_{n,d}) \ge
\lvert V(H_{n,d}) \rvert/(\operatorname{deg}(H_{n,d})+1)$, which gives
$\alpha(H_{n,d}) \ge 2^{(1-h_2(d/n) + o(1))n}$ where $h_2$ is the
binary entropy function. First discovered
by Gilbert \cite{G52} and later generalized to linear codes by
Varshamov \cite{V57}, this existential bound is now known as the
Gilbert--Varshamov (GV) bound. Observe that the GV~bound readily
implies that $R_2(\delta) \ge 1-h_2(\delta)$. Despite its simplicity,
this bound remains the best (existential) lower bound on
$R_2(\delta)$.

The techniques to upper bound $A_2(n,d)$ are oftentimes more involved,
with the most prominent being the Delsarte linear programming method
that we now describe. A binary code ${C \subseteq \F_2^n}$ is
\emph{linear} if it is a subspace of $\F_2^n$ and its weight
distribution is the tuple $(a_0,a_1,\ldots,a_n) \in \mathbb{N}^{n+1}$,
where $a_i$ is the number of codewords of $C$ of Hamming weight
$i$. MacWilliams \cite{Macwilliams63} showed that the weight
distribution $(b_0,b_1,\ldots,b_n)$ of the dual code $C^{\perp}$ can
be obtained by applying a linear transformation to
$(a_0,a_1,\ldots,a_n)$. More precisely, the MacWilliams identities
establish that
\begin{align*}
b_j = \frac{1}{\abs{C}} \sum_{i=0}^n K_j(i) \cdot a_i,
\end{align*}
where the coefficients $K_j(i)$ are evaluations of the so-called
Krawtchouk (or Kravchuk) polynomial of degree $j$. The Krawtchouk
polynomials form a family of orthogonal polynomials under the measure
$\mu_n(i) = \binom{n}{i}/2^n$ and they play an important role in
coding theory~\cite[Chapter 1]{vanLint99}. Since the weight
distribution of the dual $C^{\perp}$ is non-negative, the
MacWilliams identities can be relaxed to inequalities
\begin{align*}
\sum_{i=0}^n K_j(i) \cdot a_i \ge 0
\end{align*}
for $j=0,\ldots,n$. This naturally leads to the following linear
program (LP) relaxation for $A_2(n,d)$ when $C$ is a linear code
(recall that for a linear code, having distance at least $d$ is equivalent
to having no words of Hamming weight 1 through $d-1$).
\begin{figure}[H]
  \begin{align*}
    \max \quad
    & \sum_{i=0}^n a_i
    \\
    \text{s.t.} \quad
    & a_0 = 1
    & &
    & & (\text{Normalization})
    \\
    & a_i = 0
    & & \text{for } i=1,\ldots,d-1
    & &  (\text{Distance constraints})
    \\
    & \sum_{j=1}^n K_i(j)\cdot a_j \geq 0
    & & \text{for } i=0,\ldots,n
    & & (\text{MacWilliams inequalities})
    \\
    & a_i \geq 0
    & & \text{for } i=0,\ldots,n
    & & (\text{Non-negativity}).
  \end{align*}
  \caption{Delsarte's linear program for $A_2(n,d)$.}\label{lp:delsarte}
\end{figure}
The $a_i$ can be suitably generalized to codes which are not
necessarily linear (by setting $a_i\coloneqq \abs{\{(x,y) \in C^2 \vert
  \Delta(x,y)=i\}}/\abs{ C}$). The MacWilliams inequalities
hold for these generalized $a_i$'s as proven by MacWilliams, Sloane
and Goethals~\cite{MSG72}. Therefore, the same linear program above
also bounds $A_2(n,d)$ for general codes. This family of linear
programs was first introduced by Delsarte
in~\cite{delsarte1973algebraic}, where it was obtained in greater
generality from the theory of association schemes. We refer to the
above linear program as Delsarte's linear program, or, more formally,
as $\DLP(n,d)$.

The best known upper bound on $R_2(\delta)$ for distances $\delta \in
(0.273, 1/2)$ is obtained by constructing solutions to the dual
program of Delsarte's linear program, as first done by McEliece,
Rodemich, Rumsey and Welch (MRRW) \cite{MRRW77} in their first linear
programming bound. In the same work, McEliece~\etal also gave the best
known bound for $\delta \in (0,0.273)$ via a second family of linear
programs. Since our techniques are more similar to their first linear
programming bound, we restrict our attention to it in this discussion.
In the first linear programming bound, they showed that $R_2(\delta) \le
h_2(1/2-\sqrt{\delta(1-\delta)})$ with a reasonably sophisticated
argument using properties of general orthogonal polynomials and also
particular properties of Krawtchouk polynomials. Simpler perspectives
on the first LP bound analysis were found by Navon and
Samorodnitsky~\cite{NS05} and by
Samorodnitsky~\cite{Samorodnitsky2021proof}.

Instead of linear programming, one can use more powerful techniques
based on semi-definite programming (SDP) to upper bound
$A_2(n,d)$. For instance, the Sum-of-Squares/Lasserre SDP hierarchy
was suggested for this problem by Laurent~\cite{Laurent09}. The value
of the program equals $\alpha(H_{n,d})$ for a sufficiently high level
of the hierarchy, so in principle analyzing these programs could give
$A_2(n,d)$ exactly. Analyzing SDP methods to improve $R_2(\delta)$
seems challenging and we do not even know how to analyze the simplest
of them~\cite{Schrijver05}, which is weaker than degree-$4$
Sum-of-Squares (see related work below for more details on SDP
methods).

On the one hand, we have reasonably simple linear programs of Delsarte
already requiring a non-trivial theoretical analysis for proving upper
bounds on $R_2(\delta)$. On the other hand we have more sophisticated
SDP methods which are provably stronger than the Delsarte LP, but for
which no theoretical analyses are known.

\subsection{Our Contribution}

In this work, we refine the Delsarte linear programming method used in
the first LP bound for $A_2(n,d)$ by designing a \emph{hierarchy} of
linear programs.  For a parameter $\el \in \N_+$, the hierarchy is based
on specific higher-order versions of Krawtchouk polynomials and
MacWilliams inequalities that take advantage of $\ell$-point
interactions of words. We denote by $\KLP(n,d,\ell)$ the linear
programming relaxation for $A_2(n,d)$ at level $\ell$ of our hierarchy.

We define $A^\Lin_2(n,d)$ analogously to $A_2(n,d)$ as the
maximum size of a \emph{linear} binary code of blocklength $n$ and
minimum distance $d$. For linear codes, we impose additional
``semantic'' constraints on the programs in the hierarchy taking
advantage of the linear structure of the code. We denote by
$\KLP_\Lin(n,d,\ell)$ the linear program relaxation for
$A^\Lin_2(n,d)$ with these additional constraints. Both
$\KLP_\Lin(n,d,1)$ and $\KLP(n,d,1)$ coincide with
$\DLP(n,d)$ at the first level of our hierarchy.

There is a known gap between the value of Delsarte's linear programs
and the GV~bound. In particular when $\delta = 1/2-\epsilon$,
Delsarte's linear programs do not yield an upper bound tighter than
$R_2(1/2-\epsilon) \le \Theta(\epsilon^2 \log(1/\epsilon))$, as shown
by Navon and Samorodnitsky~\cite{NS05}, whereas the GV~bound
establishes a lower bound of $R_2(1/2-\epsilon) \ge
\Omega(\epsilon^2)$. There are no known improvements to these bounds
even for the important class of \emph{linear} codes. If the GV~bound
is indeed tight, then analyzing $\DLP$ is not sufficient to prove it.
The goal of our hierarchy is to give tighter and tighter upper bounds
on $A_2(n,d)$ as the level of the hierarchy increases.

We show that for \emph{linear} codes the hierarchy $\KLP_\Lin(n, d,
\el)$ is \emph{complete}, meaning that the value of the hierarchy
converges to $A^\Lin_2(n,d)$ as $\el$ grows larger.  We prove that
level roughly $\el = O(n^2)$ is enough to retrieve the correct value
of $A^\Lin_2(n,d)$. More generally, for linear codes over $\F_q$, we
have the following completeness theorem for $A^\Lin_q(n,d)$.

\begin{theorem}[Completeness - Informal version of \cref{theo:main:completeness:formal}]\label{theo:completeness_informal}
  For $\el \ge \Omega_{\epsilon,q}(n^2)$, we have
  \begin{align*}
    A^\Lin_q(n,d) ~\leq~ 
    \val(\KLP^{\FF_q}_\Lin(n, d, \el))^{1/\ell} ~\leq~ (1+\epsilon) \cdot A^\Lin_q(n,d).
  \end{align*}
\end{theorem}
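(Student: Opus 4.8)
The plan is to prove the two inequalities of \cref{theo:completeness_informal} separately. The left-hand inequality $A^{\Lin}_q(n,d) \le \val(\KLP^{\FF_q}_\Lin(n,d,\ell))^{1/\ell}$ is the statement that the hierarchy is a valid relaxation, and I would dispatch it right after the formal definitions of the programs. Take an optimal linear code $C^\ast \subseteq \FF_q^n$ of minimum distance at least $d$ with $\abs{C^\ast} = A^{\Lin}_q(n,d)$, and feed its genuine level-$\ell$ enumerator into the LP, i.e.\ the weights recording, for each higher-order configuration $\kappa$ of $\ell$ words, how many $\ell$-tuples in $(C^\ast)^\ell$ have type $\kappa$. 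This point is feasible by inspection: normalization, the distance constraints, the linearity constraints, and the higher-order MacWilliams inequalities all hold for honest codes (the last because the higher-order enumerator of $(C^\ast)^{\perp}$ is non-negative), and the objective value of this solution is $\abs{C^\ast}^{\ell} = A^{\Lin}_q(n,d)^{\ell}$. Taking $\ell$-th roots finishes this direction.

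The right-hand inequality is where the content lies, and I would isolate it as a structural ``decomposition'' lemma: for $\ell$ at least a small multiple of $n$, every feasible solution $x$ of $\KLP^{\FF_q}_\Lin(n,d,\ell)$ satisfies
\begin{align*}
  \val(x) ~\le~ \sum_{W} \abs{W}^{\ell},
\end{align*}
where $W$ ranges over all linear codes $W \subseteq \FF_q^n$ of minimum distance at least $d$. Granting this, since every such $W$ has $\abs{W} \le A^{\Lin}_q(n,d)$ and the number of linear subspaces of $\FF_q^n$ is at most $q^{O(n^2)}$, we obtain $\val(x) \le q^{O(n^2)} \cdot A^{\Lin}_q(n,d)^{\ell}$, hence $\val(x)^{1/\ell} \le q^{O(n^2)/\ell}\, A^{\Lin}_q(n,d)$; choosing $\ell = \Omega_{\epsilon,q}(n^2)$ (concretely $\ell$ a large enough multiple of $n^2 \log q / \log(1+\epsilon)$) forces the prefactor to be at most $1+\epsilon$, which is exactly the claimed bound.

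To prove the lemma I would proceed in two stages. First, \emph{the support is honest}: using the linearity constraints (closure of the configuration weights under the $\GL_\ell(\FF_q)$-action on the $\ell$ words and under passing to sub-tuples) together with the distance constraints, argue that every configuration $\kappa$ with $x_\kappa > 0$ is realized by an honest $\ell$-tuple of codewords of a genuine linear code $W(\kappa) \coloneqq \Span(\text{those words}) \subseteq \FF_q^n$ of minimum distance at least $d$; here one uses $\ell \ge n$ so that $\dim W(\kappa) \le n$ never obstructs realizability. Second, \emph{the weights are controlled}: for each honest code $W$, bound $\sum_{\kappa \, : \, W(\kappa) = W} x_\kappa$ by $\abs{W}^{\ell}$, the number of $\ell$-tuples in $W^{\ell}$, by propagating the normalization constraint (which fixes the weight of the all-zero $\ell$-tuple's configuration to $1$) up the configuration poset via the consistency/marginalization constraints, and invoking the higher-order MacWilliams inequalities to rule out ``spurious'' weight not attributable to a genuine code. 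Summing over $W$ yields the displayed inequality.

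I expect the second stage to be the main obstacle: converting the purely local constraints (linearity, marginalization, higher-order MacWilliams positivity) into the clean global bound $\sum_{\kappa \, : \, W(\kappa)=W} x_\kappa \le \abs{W}^{\ell}$ is precisely where the higher-order Krawtchouk machinery must do quantitative work — not merely cutting down the support, but preventing a feasible solution from over-weighting configurations beyond what any honest code permits. The natural route is induction on $\dim W(\kappa)$ (equivalently on the rank of the $\ell$-tuple), with base case the normalization at rank $0$ and the inductive step peeling off one word via marginalization; the fiddly points will be the bookkeeping of non-spanning (degenerate) $\ell$-tuples and of how many distinct configurations map to a fixed $W$. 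If the primal argument becomes unwieldy, a fallback is to instead exhibit an explicit feasible \emph{dual} certificate built from products of (higher-order) Krawtchouk polynomials of total degree $O(n)$ and value $q^{O(n^2)} A^{\Lin}_q(n,d)^{\ell}$, which gives the same bound by weak duality, albeit less transparently.
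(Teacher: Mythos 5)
Your high-level outline — soundness on one side, a counting bound $\val \le (\text{polynomial factor})\cdot A^\Lin_q(n,d)^\ell$ on the other — agrees with the paper, and your ``decomposition lemma'' $\val(x) \le \sum_W \abs{W}^\ell$ is morally the intermediate bound the paper establishes. But the mechanism you propose for proving it has a genuine gap, and the paper in fact goes a quite different route.

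The concrete problem is in your second stage. You write that you would bound $\sum_{\kappa:W(\kappa)=W} x_\kappa \le \abs{W}^\ell$ by ``propagating the normalization constraint \dots up the configuration poset via the consistency/marginalization constraints.'' But $\KLP_\Lin$ has no consistency or marginalization constraints. Its only constraints are normalization (a single equality $a_0 = 1$), the distance constraints $a_g = 0$ on forbidden configurations, non-negativity, and the higher-order MacWilliams inequalities. There is nothing in the LP linking $a_g$ to $a_{g'}$ across the configuration poset in the way Sherali--Adams or Lasserre moment consistency does; the paper itself points this out in the conclusion (``The hierarchy does not have the same conceptual structure as Sum-of-Squares or Sherali--Adams, so completeness does not follow in the same way''). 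Without those constraints, your inductive peeling has nothing to induct on, and normalization constrains only the single variable $a_0$. It is also not evident how the MacWilliams inequalities alone would rule out a solution that places arbitrarily large mass on a single nonforbidden configuration.

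The paper's proof supplies exactly the missing ingredient: it passes to the $\vartheta'$ / SDP reformulation (Schrijver's theorem gives $\val(\KLP_\Lin) = \vartheta'(G)$ for an explicit Cayley graph $G$ on $(\FF_q^n)^\ell$). Symmetrizing a feasible $M$ under $\Aut(G)$ makes all diagonal entries equal to $q^{-n\ell}$ (from $\tr M = 1$), and PSD-ness then forces \emph{every} entry to have absolute value at most $q^{-n\ell}$. This uniform entrywise bound is what does the quantitative work that your marginalization-induction was meant to do: it converts the support restriction (nonzero entries of $M$ have difference tuple spanning a subspace of dimension $\le k_0 := \log_q A^\Lin_q(n,d)$) directly into the counting bound $\ip{J}{M} \le \sum_{k\le k_0}\gamma_{n,\ell,k} \le 2\ell^n q^{n^2} q^{k_0\ell}$, after which taking $\ell$-th roots and choosing $\ell = \Theta(n^2 \ln q / \ln^2(1+\epsilon))$ finishes. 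Your observation that the support must be ``honest'' is correct and is used in the paper (that is the support restriction), and your final $q^{O(n^2)/\ell}$ calculation is sound granting the lemma; the missing idea is that bounding the per-entry mass comes for free from PSD-ness plus trace normalization once you reformulate as $\vartheta'$, rather than from poset propagation inside the LP. Your fallback suggestion of a dual certificate is a different, also-not-taken route and would require constructing it explicitly.
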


We think that the $\KLP_\Lin(n, d, \el)$ hierarchy is an extremely
interesting object for the following reasons.
\begin{enumerate}
     \item It is takes advantage of  higher-order interactions of codewords by
           naturally computing Hamming weight statistics of subspaces spanned by $\ell$ codewords
          (see \cref{def:configuration}).
  
    \item It is a generalization of the Delsarte LP used in the first MRRW bound and the two
    share strong structural similarities (see \cref{sec:binary_warmup}).

    \item It is a hierarchy of linear programs rather than semi-definite programs (see \cref{def:holp} and \cref{sec:diagonalization}).

    \item It is a \emph{complete} hierarchy (see \cref{theo:main:completeness:formal}).
                        
    \item It provides an answer in the form of a hierarchy (in larger dimensional spaces) to the question of
          how to cut Delsarte's LP polytopes~\cite{NS05} to approximate the true size of \emph{linear} codes.
\end{enumerate}

We hope this hierarchy will fill an important gap in the coding theory
literature between Delsarte's LP, for which theoretical analyses are
known, and more powerful SDP methods, for which we seem to have no clue 
how to perform asymptotic analysis.

Not unexpectedly, the hierarchy $\KLP(n, d, \el)$ corresponding to
general (not necessarily linear) codes does not improve on Delsarte's
linear program. Without the extra structure of linearity, the number
of constraints we can add to our LP hierarchy is limited.  We prove
that solutions of $\DLP(n,d)$ (easily) lift to solutions of
$\KLP(n,d,\el)$ with the same value as follows.

\begin{proposition}[Hierarchy Collapse - Informal version of \cref{prop:main:lifting:formal}]\label{prop:lifting}
  For $\el \in \mathbb{N}$, we have
  \begin{align*}
    \val(\KLP(n, d, \el))^{1/\ell} ~=~ \DLP(n,d).
  \end{align*}
\end{proposition}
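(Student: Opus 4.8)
The plan is to prove the two inequalities $\val(\KLP(n,d,\el))^{1/\ell} \le \DLP(n,d)$ and $\val(\KLP(n,d,\el))^{1/\ell} \ge \DLP(n,d)$ separately, with the second (the lifting direction) being the substantive one. For the first inequality, I would argue that any code-theoretic object witnessing a large value of $\KLP(n,d,\el)$ must project down to a feasible solution of $\DLP(n,d)$: given the $\ell$-point weight/configuration statistics that the level-$\ell$ program tracks, one forgets all but the pairwise Hamming-distance information, i.e. one reads off the marginal distribution on $\Delta(x_1,x_2)$, and checks that (a) the normalization and distance constraints are inherited, and (b) the classical MacWilliams inequalities $\sum_j K_i(j)\,a_j \ge 0$ are exactly the $\ell=1$ specialization of the higher-order MacWilliams inequalities imposed at level $\ell$ (this should be immediate from the construction referenced in \cref{sec:binary_warmup} and the definition of the higher-order Krawtchouk polynomials). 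One then needs to relate the objective value: the level-$\ell$ objective counts something like the $\ell$-th moment, so its $1/\ell$-th power is bounded by the projected first-order objective. The cleanest way to phrase this may be via the duality already invoked in the MRRW discussion — a feasible dual solution to $\DLP(n,d)$ raised to a tensor power gives a feasible dual solution to $\KLP(n,d,\el)$ of value $\DLP(n,d)^\ell$ — which would give the upper bound on the primal value directly.

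For the lifting direction, the idea is: take an optimal (or near-optimal) feasible point $(a_0,\dots,a_n)$ of $\DLP(n,d)$ and build from it a feasible point of $\KLP(n,d,\el)$ of value $\DLP(n,d)^\ell$. The natural guess is a product/tensor construction: since the level-$\ell$ program tracks the joint Hamming-weight statistics of the subspace spanned by $\ell$ words, one posits the "independent" assignment in which the $\ell$ words behave like $\ell$ independent draws, each with pair-distance profile governed by $(a_i)$. Concretely, one defines the level-$\ell$ variable indexed by a configuration (in the sense of \cref{def:configuration}) to be a suitable product of the $a_i$'s over the pairwise distances determined by that configuration, normalized appropriately. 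One must then verify: (1) the normalization and distance constraints hold coordinatewise; (2) non-negativity is preserved since products of nonnegatives are nonnegative; (3) the higher-order MacWilliams inequalities hold — this should follow because the higher-order Krawtchouk transform factors (or bounds) as a tensor power of the ordinary Krawtchouk transform, so positivity of each factor $\sum_j K_i(j) a_j \ge 0$ forces positivity of the product; and (4) the objective evaluates to $(\sum_i a_i)^\ell$, so its $1/\ell$-th power recovers $\DLP(n,d)$.

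The main obstacle I anticipate is step (3) in the lifting direction — showing that the tensor-power solution actually satisfies \emph{all} the higher-order MacWilliams constraints, not just those that factor cleanly. The higher-order MacWilliams inequalities are indexed by more data than a single degree parameter (they encode the dual weight enumerator of an $\ell$-dimensional space of "test" configurations), and it is conceivable that some of these constraints couple the coordinates in a way that a naive product does not obviously respect. Resolving this will require understanding the precise combinatorial form of the higher-order Krawtchouk matrix: I expect it is (a positive combination of) Kronecker powers of the classical Krawtchouk matrix, possibly after a change of basis reflecting the lattice of subspaces, in which case Schur-product positivity closes the argument; if instead there are genuinely new constraints, one would need to check that the product solution's extra structure (it is supported only on "generic" configurations where the $\ell$ words are in general position) makes those constraints vacuous or trivially satisfied. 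A secondary, more routine obstacle is bookkeeping the normalization factors ($|C|$, the $2^n$, the counts of configurations of each type) so that the objective comes out to exactly $\DLP(n,d)^\ell$ rather than something off by a sub-exponential factor — but since the statement is an exact equality for fixed $n$, these factors genuinely have to cancel, which I'd track carefully via the $\ell=1$ base case.
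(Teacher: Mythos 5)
Your overall plan — tensor a primal solution of $\DLP$ to get the lower bound, tensor a dual solution to get the upper bound — matches the skeleton of the paper's argument, and you correctly flag the one genuinely delicate step: verifying that the product construction satisfies \emph{all} the higher-order MacWilliams inequalities, not just those that factor as Kronecker powers of the classical Krawtchouk constraints. But you leave this step unresolved, and it is exactly the step that requires a new idea. The higher-order Krawtchouk transform in $\KLP$ is \emph{not} the $\ell$-fold Kronecker power of the classical Krawtchouk matrix: the association scheme underlying $\KLP$ is a strict \emph{refinement} of the $\ell$-fold tensor power of the Hamming scheme (a single tuple of pairwise distances $(i_1,\dots,i_\ell)$ splits into many configurations $g$ according to the weights of all the subset sums), and the refined scheme has a finer eigenspace decomposition, hence strictly more MacWilliams constraints than the tensor power does. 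Your fallback guess — that the product solution is supported only on ``generic'' configurations, making the extra constraints vacuous — is false: the symmetrized product solution spreads mass over every configuration compatible with the chosen pairwise distances, so the extra constraints genuinely bind.

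The missing idea that closes the gap is to change formulation before tensoring. The paper invokes Schrijver's theorem (\cref{thm:vartheta'}) to rewrite the Delsarte LP of any commutative association scheme as $\vartheta'$ of the associated exclusion graph, and then splits the argument into two clean lemmas. First, a \emph{refinement lemma} (\cref{lem:refinementlifting}): when the restriction set $D'$ of the refined scheme consists precisely of the refined relations contained in $D$, the exclusion graphs $G_S(D)$ and $G_{S'}(D')$ coincide, so their $\vartheta'$ values are trivially equal — the refinement step is free. Second, a \emph{tensor lemma} (\cref{lem:tensorlifting}): for $\vartheta'$, tensoring a feasible primal $M_1 \otimes M_2$ is feasible and has objective $\langle J, M_1 \rangle \langle J, M_2 \rangle$, and tensoring a feasible dual $(\beta_1\beta_2, N^1 \otimes N^2)$ is feasible, so $\vartheta'$ of a tensor-product exclusion graph multiplies. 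Here PSD-ness of Kronecker products, and nonnegativity of entrywise products, replace the combinatorial case analysis on higher-order Krawtchouks that your proposal would have to do by hand. Once these two lemmas are in place, \cref{prop:main:lifting:formal} is a two-line corollary. So: your instinct about tensoring duals and primals is correct, and your worry is well-placed, but you need the passage to $\vartheta'$ (or equivalently to the unsymmetrized program $\FLP$/$\TransSDP$ of \cref{sec:sos}, where the Fourier-positivity constraints factor trivially) to actually execute it.
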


This contrast between the hierarchies $\KLP_\Lin(n, d, \el)$ and
$\KLP(n, d, \el)$ reinvigorates the question of whether the maximum
sizes of general and linear codes are substantially different or not.

Though we give special attention to the binary case since it may be
the most important one, we prove completeness and lifting results more
generally in the language of association schemes. For example, a
suitable modification of the linear programming hierarchy also
converges to the maximum size of a $D$-code in the Hamming scheme over
any finite field (see \cref{rmk:completeness}); this in particular
covers other types of codes that may be of interest such as
$\epsilon$-balanced codes.

\subsection*{More on Related Work}

Although quantitatively the McEliece~\etal~\cite{MRRW77} upper bound
on $R_2(\delta)$ has not improved, our qualitative understanding of
this upper bound is now substantially better. Friedman and
Tillich~\cite{FT05} designed generalized Alon--Boppana theorems in
order to bound the size of linear binary codes. Inspired
by~\cite{FT05}, Navon and Samorodnitsky~\cite{NS05,NavonS09} rederived
the McEliece~\etal bound on $R_2(\delta)$ for general codes using a
more intuitive proof based on Fourier analysis. Despite a seemingly
different language, the proof in~\cite{NS05} also yields feasible
solutions to the dual of Delsarte's LP as in MRRW. More recently,
Samorodnitsky~\cite{Samorodnitsky2021proof} gave yet a new
interpretation of the McEliece~\etal upper bound and conjectured
interesting hypercontractivity inequalities towards improving the
upper bound on $R_2(\delta)$.

Schrijver~\cite{Schrijver79} showed that the seemingly artificial
Delsarte LP has the same value\footnote{In fact, by a symmetrization
  of the $\vartheta'$ SDP on $H_{n,d}$ using its graph automorphisms,
  one obtains $\DLP(n,d)$ exactly, see \cref{sec:sos}.} as the \Lovasz $\vartheta'$
relaxation for $\alpha(H_{n,d})$, which is also essentially the
degree-2 Sum-of-Squares/Lasserre relaxation of $\alpha(H_{n,d})$ (with
additional non-negativity constraints on the entries of the matrix).
Schrijver showed that this holds generally for commutative association
schemes, a connection that allows us to also express $\KLP$ as
$\vartheta'$ of a certain graph.

A line of work (similar in motivation to the current work) is to strengthen a convex relaxation of
$A_2(n,d)$. In Delsarte's approach, only the distance between pairs of points is taken into account in the
optimization. For this reason, Delsarte's approach is classified as a $2$-point bound~\cite{V19}. Nonetheless,
there is no reason to restrict oneself to just $2$-point interactions.  Schrijver~\cite{Schrijver05}
constructed a family of semi-definite programs (SDPs) for $A_2(n,d)$ designed to take into account the
$3$-point interactions. Extending Schrijver's result to a $4$-point interaction bound, Gijswijt, Mittelmann
and~Schrijver~\cite{GMS12} gave another tighter family of SDPs for $A_2(n,d)$ (they also give a description of their
hierarchy for arbitrary $\el)$. A complete SDP hierarchy for $\alpha(H_{n,d})$ is the Sum-of-Squares/Lasserre
hierarchy, which was proposed for code upper bounds by Laurent~\cite{L07}, building on de
Klerk~\etal~\cite{dKPS07}.

Since the Sum-of-Squares hierarchy is guaranteed to find the correct
value of $\alpha({H}_{n,d})$ when the level is sufficiently high
(precisely, level $2\alpha({H}_{n,d})$), in principle it would be
enough to analyze this SDP to compute $A_2(n,d)$. Unfortunately,
studying the performance of SDPs on a fixed instance is a notoriously
difficult task.  In particular, the global positive semi-definiteness
constraint is nontrivial.  Unfortunately, no theoretical analysis is
known for ``genuine'' SDP methods even for the simplest of them, the
$3$-point bound of Schrijver~\cite{Schrijver05} mentioned above.

In summary, the state of affairs on upper bounding $A_2(n,d)$ or
$A_2^\Lin(n,d)$ is as follows. On one hand, we have a
thorough theoretical understanding of techniques based on Delsarte's
LP, but if the true value of $A_2(n,d)$ or $A_2^\Lin(n,d)$
is closer to the GV~bound, then these techniques fall short of
providing tight bounds. On the other hand, we have $\ell$-point bounds
from SDP techniques capable of yielding the correct value of
$A_2(n,d)$, but (apparently) no clue how to theoretically analyze them
to bound $R_2(\delta)$ for general codes or linear codes. We hope that
our hierarchy will open a new angle of attack on this elusive problem
for the important class of \emph{linear} binary codes.

\subsection{Outline of the Paper}

\cref{sec:prelim} contains some notation and basic facts.

\cref{sec:binary_warmup} shows the construction of the LP hierarchy for the binary code case.  In this
section, we introduce the notion of an \emph{$\el$-configuration}, which roughly capture the Hamming weights
of all words in the subspace spanned by the $\el$ points. In analogy with the usual the Delsarte LP, we then
analyze statistics of codes called \emph{$\ell$-configuration profiles}, which capture the number of
$\ell$-tuples in each possible $\el$-configuration. In the rest of the section we construct higher-order
Krawtchouk polynomials, show MacWilliams identities, define the LP hierarchy and prove that its restrictions
can be computed in $O(n^{2^{\ell+1}-2})$ time (for $\ell\in\NN_+$ fixed).

\cref{sec:sos} shows how the LP hierarchy admits several other interpretations. The LP hierarchy is a
\emph{symmetrization} of an exponential-size hierarchy, which has a natural interpretation either as checking
non-negativity of Fourier coefficients of the code, or as $\vartheta'(G)$ for a large graph $G$.

In \cref{sec:association}, we study our construction in more generality through the theory of
\emph{association schemes}.  Our construction can be seen as adding extra constraints to the $\el$-fold tensor
product of the Delsarte LP. More specifically, the underlying association scheme is a \emph{refinement} of the
$\el$-fold tensor product scheme in which ``semantic'' constraints can be added due to linearity of the code
in the original translation scheme. We study this type of refinement, giving conditions under which it is
still a bona fide translation scheme. The other sections may be read mostly independently of this section.

In \cref{sec:main-results}, we show the main results: that the LP hierarchy is complete for linear codes, and
no better than the Delsarte LP in the general (not necessarily linear) case.

We conclude in \cref{sec:concl} with some open problems.

% LocalWords:  McEliece Delsarte's MRRW Samorodnitsky hypercontractivity Schrijver Delsarte symmetrization GV
% LocalWords:  SDP Lasserre SDPs Schrijver's Gijswijt Mittelmann Klerk Krawtchouk MacWilliams blocklength LPs
% LocalWords:  Varshamov codewords Kravchuk Rodemich Rumsey Welch Navon polytopes Alon Boppana

\section{Preliminaries}
\label{sec:prelim}

A binary code $C$ of block length $n$ is a subset of $\F_2^n$. For a word $x\in\FF_2^n$, we denote by
$\abs{x}\coloneqq\lvert\{i\in[n]\mid x_i\neq 0\}\rvert$ its \emph{Hamming weight}. Given two words $x,y \in
\F_2^n$, we denote by $\Delta(x,y) \coloneqq\lvert x - y\rvert$ their \emph{Hamming distance}. The
\emph{(minimum) distance} of $C$ is defined by $\Delta(C) \coloneqq \min \{\Delta(x,y) \mid x,y \in C\land x
\ne y\}$. The \emph{rate} of $C$ is defined by $r(C) \coloneqq \log_2(\abs{C})/n$. The maximum size of a code
of blocklength $n$ and minimum distance at least $d$ is defined as
\begin{align*}
  A_2(n,d) \coloneqq \max \{ \abs{C} ~\vert~ C \subseteq \F_2^n, \Delta(C) \ge d\}.
\end{align*}
We denote the \emph{asymptotic rate} of codes of relative distance at least $\delta$ and alphabet size $q$ as
\begin{align*}
  R_2(\delta) \coloneqq \limsup_{n\to \infty} \frac{1}{n} \log_2\left(A_2(n,\lfloor \delta n \rfloor)\right).
\end{align*}
We define $A_2^\Lin(n,d)$ and $R_2^\Lin(\delta)$ for linear codes in an analogous way, by further requiring
the code $C$ to be \emph{linear} (i.e., an $\FF_2$-linear subspace of $\F_2^n$).

Note that a code of distance at least $d$ can alternatively be viewed as an independent set in the
\emph{Hamming cube graph of distance less than $d$}, $H_{n,d}$, whose vertex set is $V(H_{n,d})\coloneqq
\FF_2^n$ and whose edge set is $E(H_{n,d})\coloneqq \{\{x,y\}\in\binom{\FF_2^n}{2} \mid \Delta(x,y)\leq
d-1\}$.

Let $f,g : \F_2^n \to \mathbb{R}$. We denote by $\ip{f}{g}\coloneqq\ExpOp_{x\unif\F_2^n}[f(x)g(x)]$ the
\emph{inner product} of $f$ and $g$ under the uniform measure and we denote by $f * g$ their convolution given
by $(f*g)(x) \coloneqq \ExpOp_{y \unif \F_2^n} [f(y)\cdot g(x-y)]$ ($x\in\FF_2^n$). The \emph{Fourier
  transform} $\widehat{f}$ of $f$ is given by $\widehat{f}(x) \coloneqq \ip{f}{\chi_x} = \ExpOp_{y \unif
  \F_2^n} [f(y) \cdot \chi_x(y)]$, where $\chi_x(y)\coloneqq (-1)^{\ip{x}{y}}$. The (simple) Plancherel
identity will be used in our computations.
\begin{fact}[Plancherel]\label{fact:plancherel}
  Let $f,g \colon \F_2^n \to \mathbb{R}$. Then $\ip{f}{g} = \sum_{x \in \F_2^n} \widehat{f}(x) \cdot
  \widehat{g}(x)$.
\end{fact}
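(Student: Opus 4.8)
The plan is to deduce Plancherel from the orthonormality of the character family $\{\chi_x\}_{x\in\F_2^n}$ with respect to the uniform-measure inner product $\ip{\cdot}{\cdot}$. First I would record two elementary facts about the characters. (i)~For any $z\in\F_2^n$ the map $\chi_z\colon\F_2^n\to\{-1,1\}$ is a group homomorphism, so it is either identically $1$ (when $z=0$) or has an index-two kernel and thus takes each of the values $\pm 1$ on exactly half of $\F_2^n$; hence $\E_{w\unif\F_2^n}[\chi_z(w)]=\One[z=0]$. (ii)~Since $\ip{x}{w}+\ip{y}{w}=\ip{x+y}{w}$ in $\F_2$, we have $\chi_x(w)\chi_y(w)=(-1)^{\ip{x+y}{w}}=\chi_{x+y}(w)$ for all $x,y,w\in\F_2^n$. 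Combining (i) and (ii) gives $\ip{\chi_x}{\chi_y}=\E_{w\unif\F_2^n}[\chi_{x+y}(w)]=\One[x=y]$, so the $2^n$ characters form an orthonormal family in the $2^n$-dimensional space of functions $\F_2^n\to\R$, hence an orthonormal basis; in particular the Fourier inversion formula $f=\sum_{x\in\F_2^n}\widehat f(x)\,\chi_x$ holds since $\widehat f(x)=\ip{f}{\chi_x}$.

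Given this, the identity is essentially a one-line calculation: expanding $f$ and $g$ in this basis and using bilinearity of $\ip{\cdot}{\cdot}$, $\ip{f}{g}=\sum_{x,y\in\F_2^n}\widehat f(x)\widehat g(y)\,\ip{\chi_x}{\chi_y}=\sum_{x\in\F_2^n}\widehat f(x)\widehat g(x)$. Alternatively, to avoid invoking inversion, I would substitute the definitions of $\widehat f$ and $\widehat g$ directly, obtaining $\sum_{x}\widehat f(x)\widehat g(x)=\E_{y,z}\bigl[f(y)g(z)\sum_{x}\chi_x(y)\chi_x(z)\bigr]$, and then use (ii) together with $\sum_{x\in\F_2^n}\chi_{y+z}(x)=2^n\cdot\One[y=z]$ to collapse the double expectation to $\E_{y\unif\F_2^n}[f(y)g(y)]=\ip{f}{g}$. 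There is no real obstacle in either route; the only point worth a moment of care is bookkeeping of the normalization $1/2^n$ hidden in the uniform-measure inner products against the unnormalized counting sums over $x\in\F_2^n$ on the right-hand side.
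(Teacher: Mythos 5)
Your proof is correct. The paper states Plancherel as an unproved standard fact, so there is no paper argument to compare against; both of your routes (orthonormality of the characters plus Fourier inversion, or direct substitution of the definitions of $\widehat{f},\widehat{g}$ and collapsing the character sum) are standard and the normalization bookkeeping works out exactly as you indicate, since the inner product carries the $2^{-n}$ factor while the outer sum over $x$ is unnormalized.
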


Given a linear code $C\subseteq\FF_2^n$, the \emph{dual code} of $C$ is defined as $C^\perp \coloneqq
\{x\in\FF_2^n \mid \forall y\in C,\chi_x(y) = 1\}$.  The Fourier transform of the indicator of a linear code
maps it to a multiple of the indicator of its dual code in the following way.
\begin{fact}\label{fact:fourier_dual_linear_code}
  If $C \subseteq \F_2^n$ is a linear code and $\One_C$ is its indicator function, then $\widehat{\One_C} =
  \lvert C\rvert\cdot \One_{C^{\perp}}/2^n = \One_{C^\perp} / \lvert C^\perp\rvert$.
\end{fact}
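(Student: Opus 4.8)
The plan is a direct computation from the definition of the Fourier transform, followed by a two-case split. Unwinding definitions,
\[
  \widehat{\One_C}(x) \;=\; \ExpOp_{y\unif\F_2^n}\big[\One_C(y)\,\chi_x(y)\big] \;=\; \frac{1}{2^n}\sum_{y\in C}(-1)^{\ip{x}{y}},
\]
so the whole statement reduces to evaluating the character sum $\sum_{y\in C}(-1)^{\ip{x}{y}}$ for each fixed $x$.

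First I would handle $x\in C^\perp$: by the definition of the dual code $\chi_x(y)=1$ for every $y\in C$, so the sum equals $\abs{C}$ and hence $\widehat{\One_C}(x)=\abs{C}/2^n$. Next, for $x\notin C^\perp$, the map $y\mapsto\ip{x}{y}$ is a \emph{nonzero} $\F_2$-linear functional on the subspace $C$, hence surjective onto $\F_2$ with kernel a subspace of $C$ of index $2$; so exactly half the elements of $C$ contribute $+1$ and half contribute $-1$, and the character sum vanishes, giving $\widehat{\One_C}(x)=0$. Combining the two cases yields $\widehat{\One_C} = (\abs{C}/2^n)\cdot\One_{C^\perp}$, the first claimed equality.

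For the second equality it remains to note $\abs{C}/2^n = 1/\abs{C^\perp}$. Since the pairing $\ip{\cdot}{\cdot}$ on $\F_2^n$ is nondegenerate and $C^\perp$ is the annihilator of the subspace $C$, we have $\dim_{\F_2}C + \dim_{\F_2}C^\perp = n$, so $\abs{C}\cdot\abs{C^\perp}=2^n$, as needed. I do not expect any genuine obstacle here: the only substantive ingredients are the elementary facts that a nontrivial $\F_2$-linear functional is balanced and that $\dim C + \dim C^\perp = n$. (Alternatively one could combine \cref{fact:plancherel} applied to $\One_C$ with $\widehat{\One_C}(0)=\abs{C}/2^n$ once the support of $\widehat{\One_C}$ is known, but the direct character-sum argument above is cleaner and self-contained.)
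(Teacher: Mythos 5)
Your proof is correct. The paper states this as a \emph{Fact} without supplying a proof (it is a standard result), so there is no "paper's proof" to compare against; your character-sum argument — splitting into the cases $x\in C^\perp$ (all terms equal $1$) and $x\notin C^\perp$ (a nontrivial linear functional on $C$ is balanced), together with $\dim C + \dim C^\perp = n$ — is exactly the canonical derivation one would expect here.
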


% LocalWords:  blocklength Plancherel

\section{Krawtchouk Hierarchies for Binary Codes}\label{sec:binary_warmup}

In this section, we describe the LP hierarchy for the standard case of binary codes. We opt for an ad hoc
derivation from boolean Fourier analysis to show how the higher-order Krawtchouk polynomials nicely
parallel the usual Krawtchouk polynomials. Any omitted proofs in this section can be found in
\cref{sec:binaryproofs}. In \cref{sec:association}, we will generalize the construction using the language of
association schemes.

\subsection{Higher-order Krawtchouk polynomials}
\label{subsec:highkrawtchouk}

As we alluded to previously, we want to incorporate $\ell$-point interactions in our optimization problem for
$A_2(n,d)$ similar in spirit to the Sum-of-Squares semi-definite programming hierarchy for the independence
number of a graph but in the simpler setting of \emph{linear} programming. To accomplish this goal, we measure
the profile of ``configurations'' of $\ell$-tuples of codewords from the code.

We start with the definition of symmetric difference configurations. In plain English, the symmetric
difference configuration of an $\ell$-tuple $(z_1,\ldots,z_\ell)\in(\FF_2^n)^\ell$ of words captures all
information of $(z_1,\ldots,z_\ell)$ corresponding to Hamming weights of linear combinations of the words.

\begin{definition}\label{def:configuration}
  The \emph{symmetric difference configuration} of the $\ell$-tuple $(z_1,\ldots,z_\ell)\in(\FF_2^n)^\ell$ is
  the function $\config_{n,\ell}^\Delta(z_1,\ldots,z_\ell)\colon 2^{[\ell]}\to\RR$ defined by
  \begin{align*}
    \config_{n,\ell}^\Delta(z_1,\ldots,z_\ell)(J)
    & \coloneqq
    \abs{\sum_{j\in J} z_j},
  \end{align*}
  for every $J\subseteq[\ell]$, that is, the value of the function at $J\subseteq[\ell]$ is the Hamming weight
  of the linear combination $\sum_{j\in J} z_j$.

  By viewing $\config_{n,\ell}^\Delta$ as a function $(\FF_2^n)^\ell\to\RR^{2^{[\ell]}}$ (i.e., a function
  from the space of $\ell$-tuples of words to the space of functions $2^{[\ell]}\to\RR$), the set of (valid)
  symmetric difference configurations of $\ell$-tuples of elements of $\FF_2^n$ is captured by its image
  $\im(\config_{n,\ell}^\Delta)$.

  Given a symmetric difference configuration $g\in\im(\config_{n,\ell}^\Delta)$, we will also abuse notation
  and write $(z_1,\ldots,z_\ell)\in g$ to mean that $(z_1,\ldots,z_\ell)\in(\FF_2^n)^\ell$ has configuration
  $\config_{n,\ell}^\Delta(z_1,\ldots,z_\ell) = g$. In other words, this abuse of notation consists of
  thinking of a configuration as the set of all $\ell$-tuples of words that have this configuration
  (see also \cref{lem:configorbits} below). We also let $\lvert g\rvert$ be the size of this set, i.e., the
  number of $\ell$-tuples whose configuration is $g$.

  The \emph{trivial symmetric difference configuration} is the constant $0$ function (denoted by $0$), which
  is the symmetric configuration of the tuple $(0,\ldots,0)\in(\FF_2^n)^\ell$.
\end{definition}

\begin{remark}
A configuration measures the Hamming weights of vectors
in the subspace of $\F_2^n$ spanned by $z_1, \dots, z_\el$. However, \cref{def:configuration}
depends on the choice of basis for the subspace.
With more technical difficulty one 
can define configurations in a basis-independent way; see the 
discussion at the end of \cref{sec:sos:symmetrization}.
\end{remark}

Even though the space $(\FF_2^n)^\ell$ has exponential size in $n$ (for a fixed $\ell$), the next lemma says
that the number of configurations is polynomial in $n$ (for a fixed $\ell$).

\begin{restatable}{lemma}{configcount}\label{lem:configcount}
  We have
  \begin{align*}
    \abs{\im(\config_{n,\ell}^\Delta)} = \binom{n + 2^\ell - 1}{2^\ell - 1}.
  \end{align*}
\end{restatable}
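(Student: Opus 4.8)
The plan is to count the possible symmetric difference configurations by observing that a configuration $g = \config_{n,\ell}^\Delta(z_1,\ldots,z_\ell)$ is determined by, and determines, the multiset of Hamming weights of the $2^\ell$ linear combinations $\sum_{j\in J} z_j$ for $J \subseteq [\ell]$ — but these weights are not independent. The key structural fact is that an $\ell$-tuple $(z_1,\ldots,z_\ell) \in (\FF_2^n)^\ell$ is completely described, up to the configuration, by how the $n$ coordinates distribute among the $2^\ell$ possible ``patterns'': for each coordinate $i \in [n]$, record the vector $((z_1)_i,\ldots,(z_\ell)_i) \in \FF_2^\ell$. Let $c_v$ denote the number of coordinates $i$ with pattern $v \in \FF_2^\ell$. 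Then $\sum_{v \in \FF_2^\ell} c_v = n$, so the tuple $(c_v)_{v \in \FF_2^\ell}$ is a composition of $n$ into $2^\ell$ nonnegative parts, and there are exactly $\binom{n + 2^\ell - 1}{2^\ell - 1}$ of these.

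First I would make precise the bijection between configurations and these count-vectors $(c_v)_{v \in \FF_2^\ell}$. In one direction: given $(z_1,\ldots,z_\ell)$, the weight of $\sum_{j \in J} z_j$ is $\sum_{i : \sum_{j\in J}(z_j)_i = 1} 1 = \sum_{v \in \FF_2^\ell : \sum_{j \in J} v_j = 1} c_v$, so the configuration $g$ is a fixed linear function of the vector $(c_v)_v$; call this linear map $M$ (it sends $\RR^{\FF_2^\ell} \to \RR^{2^{[\ell]}}$, with $M_{J,v} = \sum_{j\in J} v_j \bmod 2$ viewed in $\{0,1\}$). In the other direction, I need to show $M$ is injective on the relevant domain, i.e., that the count-vector $(c_v)_v$ can be recovered from the configuration $g$. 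This is the crux: the matrix $M$ indexed by subsets $J$ and patterns $v$ is, after a change of variables, essentially a truncated Fourier / Hadamard-type transform, and one can invert it — for instance by noting that for each $w \in \FF_2^\ell$, the value $c_w$ is recoverable via an inclusion–exclusion over the $g(J)$'s, using characters $\chi_w(J)$. Concretely, $\sum_{J \subseteq [\ell]} (-1)^{\sum_{j \in J} w_j}\bigl(n - 2 g(J)\bigr) = \sum_v c_v \sum_J (-1)^{\sum_j (w_j + v_j)}$ picks out $2^\ell c_w$ when the inner sum is nonzero only for $v = w$. So $M$ restricted to count-vectors summing to $n$ is injective, and its image is exactly $\im(\config_{n,\ell}^\Delta)$.

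Having established the bijection $\im(\config_{n,\ell}^\Delta) \leftrightarrow \{(c_v)_{v\in\FF_2^\ell} \in \NN^{2^\ell} : \sum_v c_v = n\}$, the count follows from the standard stars-and-bars identity: the number of ways to write $n$ as an ordered sum of $2^\ell$ nonnegative integers is $\binom{n + 2^\ell - 1}{2^\ell - 1}$. Every such count-vector is realizable by an actual $\ell$-tuple (just lay out $c_v$ coordinates with pattern $v$, in any order), so there is no issue with the image being a proper subset.

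The main obstacle I expect is the injectivity step — making rigorous that distinct count-vectors give distinct configurations, i.e., that no nontrivial information is lost when passing from the coordinate-pattern multiset to the tuple of Hamming weights. Everything else (the surjectivity onto count-vectors and the final binomial evaluation) is routine. The cleanest way to handle injectivity is the character-sum inversion sketched above, which amounts to the invertibility of the $2^\ell \times 2^\ell$ sign matrix $\bigl((-1)^{\langle w, v\rangle}\bigr)_{w,v \in \FF_2^\ell}$ (a Hadamard matrix), combined with the single affine relation $g(\emptyset) = 0$ that fixes the overall normalization $\sum_v c_v = n$; I would present it in that order.
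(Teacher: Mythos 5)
Your proof is correct and follows essentially the same route as the paper: your count-vector $(c_v)_{v\in\FF_2^\ell}$ is precisely the paper's Venn diagram configuration $\config_{n,\ell}^V$ (after identifying $v$ with the subset $\{j : v_j = 1\}$), your Hadamard-type inversion formula is exactly the paper's map $V_{n,\ell}$ from \cref{lem:configconversion}, and the concluding count of compositions of $n$ into $2^\ell$ nonnegative parts is \cref{lem:validVDconfigs}. The only difference is organizational — the paper factors the bijection and realizability into separate lemmas while you carry them out inline.
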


The next lemma provides an alternative way of viewing configurations: for each symmetric difference
configuration $g\in\im(\config_{n,\ell}^\Delta)$, the set of $\ell$-tuples with a certain configuration $g$ is
precisely one of the orbits of the natural (diagonal) right action of the symmetric group $S_n$ on $n$ points
on $(\FF_2^n)^\ell$.

\begin{restatable}{lemma}{configorbits}\label{lem:configorbits}
  Let $n,\ell\in\NN_+$ and consider the natural (diagonal) right action of $S_n$ on $(\FF_2^n)^\ell$ given by
  $(x_1,\ldots,x_\ell)\cdot\sigma\coloneqq (y_1,\ldots,y_\ell)$, where $(y_j)_i\coloneqq (x_j)_{\sigma(i)}$
  ($(x_1,\ldots,x_\ell),(y_1,\ldots,y_\ell)\in(\FF_2^n)^\ell$, $\sigma\in S_n$, $j\in[\ell]$, $i\in[n]$).

  The following are equivalent for $(x_1,\ldots,x_\ell),(y_1,\ldots,y_\ell)\in(\FF_2^n)^\ell$.
  \begin{enumerate}
  \item $(x_1,\ldots,x_\ell)$ and $(y_1,\ldots,y_\ell)$ are in the same $S_n$-orbit.
    \label{lem:configorbits:orbit}
  \item $\config_{n,\ell}^\Delta(x_1,\ldots,x_\ell) = \config_{n,\ell}^\Delta(y_1,\ldots,y_\ell)$.
    \label{lem:configorbits:SD}
  \end{enumerate}
\end{restatable}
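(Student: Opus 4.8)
The plan is to identify an $\ell$-tuple of length-$n$ words with the multiset of its ``columns'' in $\FF_2^\ell$, observe that the $S_n$-orbit is exactly this multiset, and then recognize that the symmetric difference configuration is (up to an affine change of variables) a Hadamard transform of the multiset, hence determines it. The direction \ref{lem:configorbits:orbit}~$\Rightarrow$~\ref{lem:configorbits:SD} is immediate and I would dispose of it first: if $(y_1,\dots,y_\ell)=(x_1,\dots,x_\ell)\cdot\sigma$, then for every $J\subseteq[\ell]$ one has $\sum_{j\in J}y_j=\bigl(\sum_{j\in J}x_j\bigr)\cdot\sigma$, since the diagonal $S_n$-action permutes coordinates and therefore commutes with $\FF_2$-linear combinations; as permuting coordinates preserves Hamming weight, $\config_{n,\ell}^\Delta$ is constant on $S_n$-orbits.

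For the converse, to each $z=(z_1,\dots,z_\ell)\in(\FF_2^n)^\ell$ I associate its \emph{column-count function} $m_z\colon\FF_2^\ell\to\ZZ_{\geq 0}$ defined by $m_z(v)\coloneqq\lvert\{i\in[n]:((z_1)_i,\dots,(z_\ell)_i)=v\}\rvert$, which satisfies $\sum_{v}m_z(v)=n$. The first claim is that two tuples lie in the same $S_n$-orbit if and only if $m_x=m_y$: one direction is the computation above, and conversely, if $m_x=m_y$ then one constructs $\sigma\in S_n$ simply by matching up coordinates with equal columns. So it suffices to show that $\config_{n,\ell}^\Delta(z)$ determines $m_z$.

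The key identity, writing $\mathbf 1_J\in\FF_2^\ell$ for the indicator vector of $J\subseteq[\ell]$ and using that $\bigl(\sum_{j\in J}z_j\bigr)_i\equiv\langle\mathbf 1_J,v\rangle\pmod 2$ on the coordinates $i$ whose column equals $v$, is
\begin{align*}
  \config_{n,\ell}^\Delta(z)(J)
  \;=\; \Bigl\lvert\sum_{j\in J}z_j\Bigr\rvert
  \;=\; \sum_{\substack{v\in\FF_2^\ell\\ \langle\mathbf 1_J,v\rangle=1}} m_z(v)
  \;=\; \frac12\Bigl(n-\sum_{v\in\FF_2^\ell}(-1)^{\langle\mathbf 1_J,v\rangle}\,m_z(v)\Bigr).
\end{align*}
Since $J\mapsto\mathbf 1_J$ is a bijection $2^{[\ell]}\to\FF_2^\ell$, knowing $\config_{n,\ell}^\Delta(z)(J)$ for all $J$ is equivalent to knowing $\sum_{v}(-1)^{\langle u,v\rangle}m_z(v)$ for all $u\in\FF_2^\ell$, i.e.\ knowing $Hm_z$ where $H=\bigl((-1)^{\langle u,v\rangle}\bigr)_{u,v\in\FF_2^\ell}$ is the Hadamard matrix; as $H^2=2^\ell I$, we recover $m_z=2^{-\ell}H(Hm_z)$. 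Hence $\config_{n,\ell}^\Delta(x)=\config_{n,\ell}^\Delta(y)$ forces $m_x=m_y$, so $x$ and $y$ are in the same $S_n$-orbit, completing \ref{lem:configorbits:SD}~$\Rightarrow$~\ref{lem:configorbits:orbit}.

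The only mildly delicate point is the bookkeeping: being careful that an $\ell$-tuple of words, a function $[n]\to\FF_2^\ell$, and a column-count multiset are the same data, and that the $S_n$-orbit is precisely the multiset class; the rest is the one-line Hadamard inversion displayed above, so I do not expect a genuine obstacle. (As a byproduct this re-proves \cref{lem:configcount}: valid configurations correspond bijectively to functions $\FF_2^\ell\to\ZZ_{\geq 0}$ summing to $n$, of which there are $\binom{n+2^\ell-1}{2^\ell-1}$.)
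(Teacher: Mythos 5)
Your proof is correct and takes essentially the same route as the paper's. The paper factors the argument through its \cref{lem:configconversion}, passing to the Venn diagram configuration $\config_{n,\ell}^V$ (which is exactly your column-count function $m_z$, after identifying $2^{[\ell]}$ with $\FF_2^\ell$ via $J\mapsto\mathbf 1_J$) and showing the linear maps $D_{n,\ell}$, $V_{n,\ell}$ between the two configuration types are mutually inverse; your Hadamard inversion is precisely that same inversion, just derived inline via $H^2=2^\ell I$ rather than by explicit verification of both conversion formulas. The remaining step — constructing $\sigma$ by matching up coordinates with equal columns once the column-counts agree — is identical to the paper's construction of $\sigma$ mapping the partition cells $X_J$ to $Y_J$.
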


Similarly to the weight profile of a code, we can define a higher-order configuration profile.

\begin{definition}
  The \emph{$\ell$-configuration profile} of a code $C\subseteq\FF_2^n$ is the sequence
  $(a^C_g)_{g \in\im(\config_{n,\ell}^\Delta)}$ defined by
  \begin{align*}
    a^C_g
    & \coloneqq
    \frac{1}{\lvert C\rvert^\ell}
    \Bigl\lvert\Bigl\{
    \bigl((x_1,\ldots,x_\ell),(y_1,\ldots,y_\ell)\bigr)\in C^\ell\times C^\ell
    \;\Big\vert\;
    \config_{n,\ell}^\Delta(x_1-y_1,\ldots,x_\ell-y_\ell) = g
    \Bigr\}\Bigr\rvert.
  \end{align*}
\end{definition}

\begin{remark}\label{rmk:linearprofile}
  Note that if $C$ is linear, $a^C_g$ can alternatively be computed as
  \begin{align*}
    a^C_g
    =
    \lvert\{(z_1,\ldots,z_\ell) \in C^\ell \mid \config_{n,\ell}^\Delta(z_1,\ldots,z_\ell)=g\}\rvert.
  \end{align*}  
\end{remark}

\medskip

Recall that the (usual) Krawtchouk polynomial $K_i$ of degree $i$ is defined by
\begin{align*}
  K_i(t)
  & \coloneqq
  2^n \ExpOp_{x \in \mathbb{F}_2^n}[\one_{W_i}(x)\cdot \chi_y(x)]
  \\
  & =
  \sum_{x\in W_i} \chi_y(x),
\end{align*}
where $W_i\subseteq\FF_2^n$ is the set of all words of Hamming weight $i$, $\one_{W_i} \colon \mathbb{F}_2^n
\to \{0,1\}$ is its indicator function and $y\in W_t$ is any element with of Hamming weight $t$.

\begin{definition}[Higher-order Krawtchouk]
  Let $h \in\im(\config_{n,\ell}^\Delta)$ be a symmetric difference configuration. The \emph{higher-order Krawtchouk
    polynomial} indexed by $h$ is the function $K_h\colon\im(\config_{n,\ell}^\Delta)\to\RR$ defined by
  \begin{equation}\label{eq:highkrawtchouk}
    \begin{aligned}
      K_h(g)
      & \coloneqq
      2^{\ell n}
      \ExpOp_{(y_1,\ldots,y_\ell) \in (\mathbb{F}_2^n)^\ell}
      \left[\One_h(y_1,\ldots,y_\ell)\cdot \prod_{j=1}^\ell \chi_{x_j}(y_j)\right]
      \\
      & =
      \sum_{(y_1,\ldots,y_\ell)\in h} \prod_{j=1}^\ell \chi_{x_j}(y_j),
    \end{aligned}
  \end{equation}
  for every symmetric difference configuration $g\in\im(\config_{n,\ell}^\Delta)$, where
  $(x_1,\ldots,x_\ell)\in g$ is any $\ell$-tuple of words with symmetric difference configuration $g$ and
  $\One_h$ is the indicator function of the set of $\ell$-tuples whose symmetric difference configuration is
  $h$ (\cref{lem:explicitkrawtchouk} shows this is well-defined).
\end{definition}

\begin{remark}\label{rmk:highkrawtchouk}
  Another way to see $K_h$ above is as the unique function (see \cref{lem:explicitkrawtchouk} below) such that
  \begin{align*}
    \widehat{\One_h}
    & =
    \frac{K_h\comp\config_{n,\ell}^\Delta}{2^{n\ell}}.
  \end{align*}
\end{remark}

Note that when $\ell=1$, a symmetric difference configuration $\config_{n,1}^\Delta(x)$ of a word
$x\in\FF_2^n$ only tracks the Hamming weight $\config_{n,1}^\Delta(x)(\{1\}) = \lvert x\rvert$ of $x$ (as
$\config_{n,1}^\Delta(x)(\varnothing)$ is always equal to $0$) thus we recover the univariate Krawtchouk
polynomials.

For explicit computations of the higher-order Krawtchouk polynomials, the formula~\eqref{eq:highkrawtchouk} is
quite inconvenient as it involves a sum of $2^{n\ell}$ terms. We will provide an alternative formula in
\cref{subsec:prophighkrawtchouk}.

\subsection{Higher-order MacWilliams Identities and Inequalities}

In this section, we show a higher-order analogue of MacWilliams identities and inequalities using only basic
Fourier analysis. Later we are going to define a suitable family of association schemes from which MacWilliams
identities and inequalities follow from the general theory of association schemes of
Delsarte~\cite{delsarte1973algebraic,DL98}.

The MacWilliams identities show a surprising combinatorial fact: the weight profile of the dual $C^\perp$ of a
linear code $C\subseteq\FF_2^n$ is completely determined by the weight profile of $C$. The higher-order
MacWilliams identities generalize this fact to $\ell$-configuration profiles.

\begin{lemma}[Higher-order MacWilliams identities]
  Let $C\in\FF_2^n$ be a linear code and let $h\in\im(\config_{n,\ell}^\Delta)$ be a symmetric difference
  configuration. Then
  \begin{align*}
    a^{C^\perp}_h
    & =
    \frac{1}{\lvert C\rvert^\ell}\sum_{g\in\im(\config_{n,\ell}^\Delta)} K_h(g)\cdot a^C_g.
  \end{align*}
\end{lemma}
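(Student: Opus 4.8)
The plan is to mimic the classical one-line Fourier proof of the MacWilliams identities, using \cref{rmk:highkrawtchouk} (which says $\widehat{\One_h} = (K_h \comp \config_{n,\ell}^\Delta)/2^{n\ell}$), \cref{rmk:linearprofile} (which expresses $a^C_g$ as a count of $\ell$-tuples in $C^\ell$ lying in configuration $g$ when $C$ is linear), and \cref{fact:fourier_dual_linear_code} (the Fourier transform of $\One_C$ is $\One_{C^\perp}/\lvert C^\perp\rvert$, equivalently $\lvert C\rvert \One_{C^\perp}/2^n$). The starting point is to recognize that, because $C$ is linear, the defining sum for $a^{C^\perp}_h$ collapses to $\lvert\{(z_1,\dots,z_\ell)\in (C^\perp)^\ell : \config_{n,\ell}^\Delta(z_1,\dots,z_\ell)=h\}\rvert$, which is exactly $\sum_{(z_1,\dots,z_\ell)\in(\FF_2^n)^\ell}\One_h(z_1,\dots,z_\ell)\prod_{j=1}^\ell \One_{C^\perp}(z_j)$.

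First I would rewrite this last quantity as an inner product on the group $(\FF_2^n)^\ell \cong \FF_2^{n\ell}$: it equals $2^{n\ell}\cdot\ip{\One_h}{\bigotimes_{j=1}^\ell \One_{C^\perp}}$, where the product function $\bigotimes_{j=1}^\ell \One_{C^\perp}$ sends $(z_1,\dots,z_\ell)$ to $\prod_j \One_{C^\perp}(z_j)$. Then I would apply Plancherel (\cref{fact:plancherel}) on $\FF_2^{n\ell}$ to get $2^{n\ell}\sum_{(x_1,\dots,x_\ell)}\widehat{\One_h}(x_1,\dots,x_\ell)\cdot\prod_{j=1}^\ell\widehat{\One_{C^\perp}}(x_j)$, using that the Fourier transform of a tensor product factors as a tensor product of Fourier transforms. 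Now \cref{fact:fourier_dual_linear_code} applied to the linear code $C^\perp$ (whose dual is $C$, by linearity) gives $\widehat{\One_{C^\perp}} = \One_{(C^\perp)^\perp}/\lvert (C^\perp)^\perp\rvert = \One_C/\lvert C\rvert$, so each factor $\widehat{\One_{C^\perp}}(x_j)$ equals $\One_C(x_j)/\lvert C\rvert$. Substituting $\widehat{\One_h} = (K_h\comp\config_{n,\ell}^\Delta)/2^{n\ell}$ from \cref{rmk:highkrawtchouk} cancels the $2^{n\ell}$ prefactor and leaves $\frac{1}{\lvert C\rvert^\ell}\sum_{(x_1,\dots,x_\ell)\in C^\ell} K_h(\config_{n,\ell}^\Delta(x_1,\dots,x_\ell))$.

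The final step is to group the sum over $C^\ell$ by configuration: partitioning $C^\ell$ according to the value $g = \config_{n,\ell}^\Delta(x_1,\dots,x_\ell)\in\im(\config_{n,\ell}^\Delta)$ and using \cref{rmk:linearprofile} to identify the number of tuples in class $g$ as $a^C_g$, the expression becomes $\frac{1}{\lvert C\rvert^\ell}\sum_{g\in\im(\config_{n,\ell}^\Delta)} K_h(g)\cdot a^C_g$, which is exactly the claimed identity. I would also note in passing that $a^{C^\perp}_h$ is a nonnegative integer, so the same computation immediately yields the higher-order MacWilliams \emph{inequalities} $\sum_g K_h(g)\, a^C_g \ge 0$.

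I do not expect a genuine obstacle here; the argument is a routine lift of the degree-one case. The only points demanding minor care are: (i) justifying that $\widehat{\,\cdot\,}$ on $\FF_2^{n\ell}$ respects tensor products, i.e. $\widehat{f_1\otimes\cdots\otimes f_\ell} = \widehat{f_1}\otimes\cdots\otimes\widehat{f_\ell}$, which follows because $\chi_{(x_1,\dots,x_\ell)}(y_1,\dots,y_\ell) = \prod_j \chi_{x_j}(y_j)$; (ii) keeping the normalization constants straight between the "$\ip{\cdot}{\cdot}$ with uniform measure" convention and the "sum over $\FF_2^{n\ell}$" convention — this is where the factors of $2^{n\ell}$ and $\lvert C\rvert$ must be tracked; and (iii) confirming that the passage from the pair-counting definition of $a^{C^\perp}_h$ to the single-tuple count uses linearity exactly as in \cref{rmk:linearprofile} (the map $((x_1,\dots,x_\ell),(y_1,\dots,y_\ell))\mapsto(x_1-y_1,\dots,x_\ell-y_\ell)$ is $\lvert C\rvert^\ell$-to-one onto $(C^\perp)^\ell$). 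None of these is more than bookkeeping.
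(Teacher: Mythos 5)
Your proof is correct and follows essentially the same route as the paper: use \cref{rmk:linearprofile} to write $a^{C^\perp}_h$ as $2^{n\ell}\ip{\One_h}{\One_{(C^\perp)^\ell}}$, apply Plancherel, and substitute $\widehat{\One_h}$ and $\widehat{\One_{(C^\perp)^\ell}}$ via \cref{rmk:highkrawtchouk,fact:fourier_dual_linear_code} before regrouping by configuration. The only cosmetic difference is that you factorize $\widehat{\One_{(C^\perp)^\ell}}$ via the tensor-product rule applied to each $\One_{C^\perp}$, whereas the paper applies \cref{fact:fourier_dual_linear_code} once to $(C^\perp)^\ell$ viewed as a linear code in $\FF_2^{n\ell}$ with dual $C^\ell$ — the same computation.
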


\begin{proof}
  By \cref{rmk:linearprofile}, we have
  \begin{align*}
    a^{C^\perp}_h
    & =
    2^{n\ell}\ip{\One_h}{\One_{(C^\perp)^\ell}}
    =
    2^{n\ell}
    \sum_{x\in(\FF_2^n)^\ell} \widehat{\One_h}(x)\widehat{\One_{(C^\perp)^\ell}}(x)
    \\
    & =
    \frac{1}{\lvert C\rvert^\ell}
    \sum_{x\in(\FF_2^n)^\ell} K_h(\config_{n,\ell}^\Delta(x))\cdot\One_{C^\ell}(x)
    =
    \frac{1}{\lvert C\rvert^\ell}\sum_{g\in\im(\config_{n,\ell}^\Delta)} K_h(g)\cdot a^C_g,
  \end{align*}
  where the second equality follows from \cref{fact:plancherel} and the third equality follows from
  \cref{rmk:highkrawtchouk,fact:fourier_dual_linear_code}.
\end{proof}

Just as the usual MacWilliams inequalities hold for arbitrary codes, we can prove that the same transformation at
least yields non-negative numbers.

\begin{lemma}[Higher-order MacWilliams inequalities]\label{lem:highMacWilliamsineq}
  Let $C\in\FF_2^n$ be an arbitrary code. For $h\in\im(\config_{n,\ell}^\Delta)$, we have
  \begin{align*}
    \sum_{g\in\im(\config_{n,\ell}^\Delta)} K_h(g)\cdot a^C_g & \geq 0.
  \end{align*}
\end{lemma}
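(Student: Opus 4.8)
The plan is to run essentially the same Fourier computation as in the proof of the higher-order MacWilliams \emph{identities}, but in place of \cref{fact:fourier_dual_linear_code} (which requires $C$ linear) to observe that the quantity in question is a sum of squares over the fibers of $\config_{n,\ell}^\Delta$. First I would unfold the left-hand side into a sum over pairs of $\ell$-tuples of codewords. By the definition of $a^C_g$, the quantity $\lvert C\rvert^\ell\, a^C_g$ is the number of pairs $\bigl((x_1,\ldots,x_\ell),(y_1,\ldots,y_\ell)\bigr)\in C^\ell\times C^\ell$ whose difference tuple $(x_1-y_1,\ldots,x_\ell-y_\ell)$ has configuration $g$, so summing over all $g$ regroups the double sum and gives
\[
  \sum_{g\in\im(\config_{n,\ell}^\Delta)} K_h(g)\cdot a^C_g
  =
  \frac{1}{\lvert C\rvert^\ell}
  \sum_{\substack{(x_1,\ldots,x_\ell)\in C^\ell\\ (y_1,\ldots,y_\ell)\in C^\ell}}
  K_h\bigl(\config_{n,\ell}^\Delta(x_1-y_1,\ldots,x_\ell-y_\ell)\bigr).
\]
This step uses nothing about $C$ beyond its being a set; in particular, unlike \cref{rmk:linearprofile}, I keep the pair formulation instead of collapsing to a single $\ell$-tuple.

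Next I would substitute the identity of \cref{rmk:highkrawtchouk} (valid on all of $(\FF_2^n)^\ell$ by \cref{lem:explicitkrawtchouk}), namely $K_h\comp\config_{n,\ell}^\Delta = 2^{n\ell}\,\widehat{\One_h}$, and then expand the Fourier coefficient straight from its definition, $\widehat{\One_h}(z_1,\ldots,z_\ell) = \frac{1}{2^{n\ell}}\sum_{(w_1,\ldots,w_\ell)\in h}\prod_{j=1}^\ell \chi_{z_j}(w_j)$. Swapping the order of summation so that the sum over $(w_1,\ldots,w_\ell)\in h$ is outermost, and using that in characteristic $2$ we have $\chi_{x_j-y_j}(w_j) = \chi_{x_j}(w_j)\,\chi_{y_j}(w_j)$, the inner double sum over $C^\ell\times C^\ell$ factors as the product of two identical real sums. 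This yields
\[
  \sum_{g\in\im(\config_{n,\ell}^\Delta)} K_h(g)\cdot a^C_g
  =
  \frac{1}{\lvert C\rvert^\ell}\sum_{(w_1,\ldots,w_\ell)\in h}
  \Biggl(\;\sum_{(x_1,\ldots,x_\ell)\in C^\ell}\;\prod_{j=1}^\ell \chi_{x_j}(w_j)\Biggr)^{\!2}
  \;\geq\; 0,
\]
since each inner sum is a real number (a sum of $\pm1$'s) and the prefactor is positive.

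I do not expect a genuine obstacle: the whole argument is a few lines of Fourier bookkeeping. The only points requiring care are (i) checking that the regrouping in the first display is exactly the definition of $a^C_g$, so that an arbitrary (not necessarily linear) $C$ is allowed, and (ii) invoking \cref{rmk:highkrawtchouk}/\cref{lem:explicitkrawtchouk} correctly, so that $K_h\comp\config_{n,\ell}^\Delta = 2^{n\ell}\widehat{\One_h}$ is used as an honest identity of functions on all of $(\FF_2^n)^\ell$ rather than only on the image of the configuration map. Everything else is manipulation of products of $\pm1$-valued characters.
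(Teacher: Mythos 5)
Your proof is correct and follows essentially the same route as the paper's: both rewrite $\sum_g K_h(g)\,a_g^C$ as a double sum over $C^\ell\times C^\ell$, identify $K_h\circ\config_{n,\ell}^\Delta$ with $2^{n\ell}\widehat{\One_h}$, and conclude by exhibiting the quantity as a sum of squares indexed by $\ell$-tuples in $h$. The only cosmetic difference is that the paper routes through Plancherel and the convolution identity $\widehat{f*f}=\widehat{f}^{\,2}$ to land on $\sum_x\One_h(x)\widehat{\One_{C^\ell}}(x)^2$, whereas you expand $\widehat{\One_h}$ by definition and factor the double sum by hand into $\bigl(\sum_{x\in C^\ell}\prod_j\chi_{x_j}(w_j)\bigr)^2$; the two displays agree up to the positive normalization factor $|C|^\ell/2^{2n\ell}$.
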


\begin{proof}
  Note that \cref{rmk:highkrawtchouk} implies that the Fourier transform of $K_h\comp\config_{n,\ell}^\Delta$
  is $\One_h$, so we have
  \begin{align*}
    \sum_{g\in\im(\config_{n,\ell}^\Delta)} K_h(g)\cdot a^C_g
    & =
    \sum_{x,y\in(\FF_2^n)^\ell}
    K_h(\config_{n,\ell}^\Delta(x-y))\cdot \One_{C^\ell}(x)\One_{C^\ell}(y)
    \\
    & =
    2^{2n\ell}
    \ip{K_h\comp\config_{n,\ell}^\Delta}{\One_{C^\ell} * \One_{C^\ell}}
    \\
    & =
    2^{2n\ell}
    \sum_{x\in(\FF_2^n)^\ell} \One_h(x)\widehat{\One_{C^\ell}}^2
    \geq
    0,
  \end{align*}
  where the third equality follows from \cref{fact:plancherel}.
\end{proof}

\subsection{Higher-order Delsarte's Linear Programs}

Now we have all the elements to define a hierarchy of linear programs for $A_2(n,d)$ parameterized by the size
of the interactions $\ell\in\NN_+$ in analogy to $\DLP(n,d)$.

\begin{definition}\label{def:holp}
  For $n,\ell\in\NN_+$ and $d\in\{0,1,\ldots,n\}$, we let $\KLP(n,d,\ell)$ be the following linear program.
  \begin{align*}
    \max \quad
    & \sum_{g\in\im(\config_{n,\ell}^\Delta)} a_g
    \\
    \text{s.t.} \quad
    & a_0 = 1
    & &
    & & (\text{Normalization})
    \\
    & a_g = 0
    & & \forall g\in\forbconfig(n,d,\ell)
    & &  (\text{Distance constraints})
    \\
    & \sum_{g\in\im(\config_{n,\ell}^\Delta)} K_h(g)\cdot a_g \geq 0
    & & \forall h\in\im(\config_{n,\ell}^\Delta)
    & & (\text{MacWilliams inequalities})
    \\
    & a_g \geq 0
    & & \forall g\in\im(\config_{n,\ell}^\Delta)
    & & (\text{Non-negativity}),
  \end{align*}
  where the variables are $(a_g)_{g\in\im(\config_{n,\ell}^\Delta)}$ and
  \begin{align*}
    \forbconfig(n,d,\ell)
    & \coloneqq
    \{g\in\im(\config_{n,\ell}^\Delta)
    \mid
    \exists j\in[\ell], g(\{j\})\in\{1,\ldots,d-1\}
    \}.
  \end{align*}

  We also define $\KLP_\Lin(n,d,\ell)$ as the linear program obtained by replacing the set $\forbconfig(n,d,\ell)$
  with
  \begin{align*}
    \forbconfig_\Lin(n,d,\ell)
    & \coloneqq
    \{g\in\im(\config_{n,\ell}^\Delta)
    \mid
    \exists J\subseteq[\ell], g(J)\in\{1,\ldots,d-1\}\}.
  \end{align*}
\end{definition}

\begin{proposition}
  The linear programs $\KLP(n,d,\ell)$ and $\KLP_\Lin(n,d,\ell)$ are sound, that is, we have
  \begin{align*}
    \val(\KLP(n,d,\ell))^{1/\ell} & \geq A_2(n,d),\\
    \val(\KLP_\Lin(n,d,\ell))^{1/\ell} & \geq A^\Lin_2(n,d).
  \end{align*}
\end{proposition}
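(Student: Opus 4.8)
The plan is to exhibit, for any code $C$ of the appropriate type, a feasible solution to the corresponding linear program whose objective value is $|C|^\ell$; soundness then follows immediately by the optimality of $\val$. The natural candidate is the $\ell$-configuration profile itself: set $a_g \coloneqq a^C_g$ for every $g\in\im(\config_{n,\ell}^\Delta)$.

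First I would verify normalization. By definition $a^C_0$ counts (after dividing by $|C|^\ell$) the pairs of $\ell$-tuples $((x_i),(y_i))\in C^\ell\times C^\ell$ with $x_i - y_i = 0$ for all $i$ — wait, more precisely with $\config_{n,\ell}^\Delta(x_1-y_1,\ldots,x_\ell-y_\ell)$ equal to the trivial (all-zero) configuration, i.e.\ $|\sum_{j\in J}(x_j-y_j)| = 0$ for all $J\subseteq[\ell]$; taking singletons $J=\{i\}$ forces $x_i = y_i$ for each $i$, so there are exactly $|C|^\ell$ such pairs and $a^C_0 = 1$. Next, the objective: $\sum_g a^C_g = |C|^{-\ell}\cdot|C^\ell\times C^\ell| = |C|^\ell$, since every pair lands in exactly one configuration; hence $\val(\cdot)^{1/\ell}\geq |C|$. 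The MacWilliams inequalities hold for $(a^C_g)_g$ by \cref{lem:highMacWilliamsineq} (which is stated for arbitrary $C$), and non-negativity of each $a^C_g$ is obvious since it is a normalized count.

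The only remaining point — and the only place where the distinction between $\KLP$ and $\KLP_\Lin$, and between general and linear codes, matters — is the distance constraints. For $\KLP(n,d,\ell)$ and an arbitrary code $C$ with $\Delta(C)\geq d$: if $g\in\forbconfig(n,d,\ell)$, there is some $j$ with $g(\{j\})\in\{1,\ldots,d-1\}$; but for any pair contributing to $a^C_g$ we would need $|x_j - y_j| = \Delta(x_j,y_j) = g(\{j\})\in\{1,\ldots,d-1\}$ with $x_j,y_j\in C$, contradicting $\Delta(C)\geq d$ (note $x_j\neq y_j$ since the weight is nonzero). Hence $a^C_g = 0$, so the solution is feasible for $\KLP(n,d,\ell)$. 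For $\KLP_\Lin(n,d,\ell)$ and a \emph{linear} code $C$ with $\Delta(C)\geq d$: here I would use \cref{rmk:linearprofile}, so $a^C_g = |\{(z_1,\ldots,z_\ell)\in C^\ell \mid \config_{n,\ell}^\Delta(z_1,\ldots,z_\ell)=g\}|$. If $g\in\forbconfig_\Lin(n,d,\ell)$, there is $J\subseteq[\ell]$ with $g(J)\in\{1,\ldots,d-1\}$; any tuple $(z_i)\in C^\ell$ realizing $g$ would have $\sum_{j\in J} z_j\in C$ (by linearity) of Hamming weight $g(J)\in\{1,\ldots,d-1\}$, forcing a nonzero codeword of weight less than $d$, contradicting $\Delta(C)\geq d$. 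So again $a^C_g = 0$ and the solution is feasible for $\KLP_\Lin(n,d,\ell)$, giving $\val(\KLP_\Lin(n,d,\ell))^{1/\ell}\geq A^\Lin_2(n,d)$.

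There is no real obstacle here; the argument is a direct unwinding of definitions. The one subtlety worth stating carefully is precisely why the linear-code case permits the larger forbidden set $\forbconfig_\Lin$ (all $J\subseteq[\ell]$) rather than only singletons: it is exactly the closure of $C$ under linear combinations, together with the alternative formula of \cref{rmk:linearprofile}, that lets one forbid configurations witnessing a short \emph{linear combination} of codewords, not merely a short pairwise difference. I would make sure to phrase the feasibility check for $\KLP_\Lin$ via \cref{rmk:linearprofile} rather than the pair-counting definition, since it is the spanned-subspace viewpoint that makes the constraint on general $J$ transparent.
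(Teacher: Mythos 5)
Your proposal is correct and follows essentially the same route as the paper: take $a^C$ as the candidate feasible solution, invoke \cref{lem:highMacWilliamsineq} for the MacWilliams inequalities, verify the distance constraints by unwinding the definition of $\forbconfig$ (resp.\ $\forbconfig_\Lin$ via \cref{rmk:linearprofile}), and read off the objective value $\lvert C\rvert^\ell$. The only difference is that you spell out the normalization check explicitly where the paper dismisses it as trivial.
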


\begin{proof}
  Recall that for $C\subseteq\FF_2^n$, we have
  \begin{align*}
    a^C_g
    & \coloneqq
    \frac{1}{\lvert C\rvert^\ell}
    \lvert\{
    (x_1,\ldots,x_\ell),(y_1,\ldots,y_\ell)\in C^\ell\times C^\ell
    \mid
    \config_{n,\ell}^\Delta(x_1-y_1,\ldots,x_\ell-y_\ell) = g
    \}\rvert.
  \end{align*}

  If $C$ is an arbitrary code of distance at least $d$, then \cref{lem:highMacWilliamsineq} implies that the
  $\ell$-configuration profile $a^C$ satisfies the MacWilliams inequalities. On the other hand, if
  $g\in\forbconfig(n,d,\ell)$, that is, we have $g(\{j\})\in\{1,\ldots,d-1\}$ for some $j\in[\ell]$, then
  clearly no pair of $\ell$-tuples of codewords $(x_1,\ldots,x_\ell),(y_1,\ldots,y_\ell)\in C^\ell$ can satisfy
  $\config_{n,\ell}^\Delta(x_1-y_1,\ldots,x_\ell-y_\ell) = g$ as it would imply $\lvert x_j - y_j\rvert =
  g(\{j\})\in\{1,\ldots,d-1\}$, thus the distance constraints are also satisfied.

  All other restrictions follow trivially from the definition of $a^C$, thus $a^C$ is a feasible solution of
  $\KLP(n,d,\ell)$. Since the objective value of $a^C$ is $\sum_{g\in\im(\config_{n,\ell}^\Delta)} a^C_g =
  \lvert C\rvert^\ell$, it follows that $\val(\KLP(n,d,\ell))^{1/\ell}\geq A_2(n,d)$.

  \medskip

  If we further assume that $C$ is linear and $g\in\forbconfig_\Lin(n,d,\ell)$ is such that $g(J)\in [d-1]$
  for some $J\subseteq[\ell]$, then no tuple $(z_1,\ldots,z_\ell)\in C^\ell$ can satisfy
  $\config_{n,\ell}^\Delta(z_1,\ldots,z_\ell) = g$ as it would imply $\lvert\sum_{j\in J} z_j\rvert =
  g(J)\in\{1,\ldots,d-1\}$. By \cref{rmk:linearprofile}, we get $a^C_g = 0$, so $a^C$ is also a feasible
  solution of $\KLP_\Lin(n,d,\ell)$ and thus $\val(\KLP_\Lin(n,d,\ell))^{1/\ell}\geq A^\Lin_2(n,d)$.
\end{proof}

\subsection{Properties of higher-order Krawtchouk polynomials}
\label{subsec:prophighkrawtchouk}

In this section, we explore more properties of the higher-order Krawtchouk polynomials in order to show that
the objective and restrictions of the linear programs $\KLP(n,d,\ell)$ and $\KLP_\Lin(n,d,\ell)$ can
be algorithmically computed in $O(n^{2^{\ell+1}-2})$ time for a fixed $\ell\in\NN_+$ (see
\cref{prop:complexity}).

Even though symmetric difference configurations are more natural from the point of view of linear codes, for
computations and properties with the higher-order Krawtchouk polynomials, it is more convenient to work with
Venn diagram configurations defined below. In plain English, each word $z\in\FF_2^n$ induces a partition of
$[n]$ into its support $\supp(z)\coloneqq\{i\in[n]\mid z_i\neq 0\}$ and its complement $[n]\setminus\supp(z)$;
the Venn diagram configuration of a tuple $(z_1,\ldots,z_\ell)\in(\FF_2^n)^\ell$ then encodes the information
about the sizes of each of the cells of the Venn diagram of the coarsest common refinement of the partitions
induced by the $z_i$.

\begin{definition}
  The \emph{Venn diagram configuration} of the $\ell$-tuple $(z_1,\ldots,z_\ell)\in(\FF_2^n)^\ell$ is the
  function $\config_{n,\ell}^V(z_1,\ldots,z_\ell)\colon 2^{[\ell]}\to\RR$ defined by
  \begin{align*}
    \config_{n,\ell}^V(z_1,\ldots,z_\ell)(J)
    & \coloneqq
    \left\lvert
    \bigcap_{j\in J}\supp(z_i)\cap\bigcap_{j\in[\ell]\setminus J}([n]\setminus\supp(z_j))
    \right\rvert
    \\
    & =
    \Bigl\lvert\Bigl\{
    i\in[n]
    \;\Big\vert\;
    \{j\in[\ell] \mid (z_j)_i = 1\} = J
    \Bigr\}\Bigr\rvert,
  \end{align*}
  for every $J\subseteq[\ell]$.

  By viewing $\config_{n,\ell}^V$ as a function $(\FF_2^n)^\ell\to\RR^{2^{[\ell]}}$, the set of (valid) Venn
  diagram configurations of $\ell$-tuples of elements of $\FF_2^n$ is $\im(\config_{n,\ell}^V)$.
\end{definition}

The next lemma gives an easy description of the set of Venn diagram configurations of $\ell$-tuples of
elements of $\FF_2^n$ as the set of all functions $2^{[\ell]}\to\RR$ whose values are non-negative integers
that add up to $n$. Combining it with \cref{lem:configconversion} below gives an explicit description of the
set of symmetric difference configurations.

\begin{restatable}{lemma}{validVDconfigs}\label{lem:validVDconfigs}
  For every $n,\ell\in\NN_+$, we have
  \begin{align}\label{eq:validVDconfigs}
    \im(\config_{n,\ell}^V)
    & =
    \left\{
    g\colon 2^{[\ell]}\to\RR
    \;\middle\vert\;
    \sum_{J\subseteq[\ell]} g(J) = n\land
    \forall J\subseteq [\ell], g(J)\in\NN
    \right\}.
  \end{align}
\end{restatable}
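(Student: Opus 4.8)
The plan is to establish the two inclusions of \eqref{eq:validVDconfigs} separately, both by elementary combinatorial arguments: the inclusion ``$\subseteq$'' amounts to observing that an $\ell$-tuple of words induces a partition of the coordinate set $[n]$, while the inclusion ``$\supseteq$'' amounts to reconstructing such an $\ell$-tuple from a prescribed partition.

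For ``$\subseteq$'', I would fix $(z_1,\ldots,z_\ell)\in(\FF_2^n)^\ell$ and, for each $J\subseteq[\ell]$, set $S_J\coloneqq\{i\in[n]\mid\{j\in[\ell]\mid(z_j)_i=1\}=J\}$. Every coordinate $i\in[n]$ lies in exactly one of the $S_J$, namely the one with $J=\{j\in[\ell]\mid(z_j)_i=1\}$, so $(S_J)_{J\subseteq[\ell]}$ is a partition of $[n]$ into $2^\ell$ (possibly empty) parts. Since by definition $\config_{n,\ell}^V(z_1,\ldots,z_\ell)(J)=\lvert S_J\rvert$, its values are non-negative integers and $\sum_{J\subseteq[\ell]}\lvert S_J\rvert=n$, which is precisely membership in the right-hand side of \eqref{eq:validVDconfigs}.

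For ``$\supseteq$'', I would take any $g\colon 2^{[\ell]}\to\RR$ with all $g(J)\in\NN$ and $\sum_{J\subseteq[\ell]}g(J)=n$. Because the $g(J)$ are non-negative integers summing to $n$, one may choose a partition $(S_J)_{J\subseteq[\ell]}$ of $[n]$ with $\lvert S_J\rvert=g(J)$ for every $J\subseteq[\ell]$. Define $z_1,\ldots,z_\ell\in\FF_2^n$ by declaring $(z_j)_i\coloneqq 1$ if and only if the unique part $S_J$ containing $i$ satisfies $j\in J$. Then for $i\in S_J$ one has $\{j\in[\ell]\mid(z_j)_i=1\}=J$, and therefore $\config_{n,\ell}^V(z_1,\ldots,z_\ell)(J)=\lvert S_J\rvert=g(J)$ for all $J\subseteq[\ell]$, so $g\in\im(\config_{n,\ell}^V)$.

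I do not anticipate any genuine obstacle here: every step is routine. The only point deserving a sentence of care is the verification in the ``$\supseteq$'' construction that $\{j\in[\ell]\mid(z_j)_i=1\}$ equals $J$ exactly (and is not merely contained in or containing $J$) whenever $i\in S_J$, which is immediate since $i$ belongs to no part other than $S_J$. I would also note in passing that this lemma, combined with \cref{lem:configconversion}, yields an alternative proof of the count in \cref{lem:configcount}.
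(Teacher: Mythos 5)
Your proof is correct and follows essentially the same route as the paper's: the forward inclusion by observing that the Venn cells partition $[n]$, and the reverse by choosing a partition with the prescribed part sizes and reading off the $z_j$ from it. The only difference is that you spell out the ``obvious'' direction that the paper dismisses in one clause.
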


The next lemma provides a pair of linear transformations that transform a symmetric difference configuration
into a Venn diagram configuration and vice-versa.

\begin{restatable}{lemma}{configconversion}\label{lem:configconversion}
  Let $n,\ell\in\NN_+$, let
  \begin{align*}
    S_{n,\ell} & \df \left\{g\in\RR^{2^{[\ell]}} \;\middle\vert\; \sum_{J\subseteq[\ell]} g(J) = n\right\}, &
    Z_{n,\ell} & \df \{g\in\RR^{2^{[\ell]}} \mid g(\varnothing) = 0\}
  \end{align*}
  and let $V_{n,\ell}\colon Z_{n,\ell}\to S_{n,\ell}$ and $D_{n,\ell}\colon S_{n,\ell}\to Z_{n,\ell}$ be given by
  \begin{align}
    D_{n,\ell}(g)(J)
    & \df
    \sum_{\substack{T\subseteq[\ell]\\\lvert T\cap J\rvert\text{ odd}}} g(T),
    \label{eq:D}
    \\
    V_{n,\ell}(g)(J)
    & \df
    n\cdot\One[J =\varnothing] + 2^{1-\ell}\sum_{T\subseteq[\ell]} (-1)^{\lvert T\cap J\rvert-1} g(T),
    \label{eq:V}
  \end{align}
  for every $J\subseteq[\ell]$.

  Then $V_{n,\ell}$ and $D_{n,\ell}$ are inverses of each other and $\config_{n,\ell}^\Delta =
  D_{n,\ell}\comp\config_{n,\ell}^V$ and $\config_{n,\ell}^V = V_{n,\ell}\comp\config_{n,\ell}^\Delta$.
\end{restatable}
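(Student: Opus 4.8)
The plan is threefold: first prove the identity $\config_{n,\ell}^\Delta = D_{n,\ell}\comp\config_{n,\ell}^V$ directly from the definitions; then show $V_{n,\ell}$ and $D_{n,\ell}$ are mutually inverse by recognizing $D_{n,\ell}$ as an affine shear of the Hadamard transform on $\FF_2^\ell$; and finally obtain $\config_{n,\ell}^V = V_{n,\ell}\comp\config_{n,\ell}^\Delta$ formally by composing the first identity with $V_{n,\ell}$.

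For the first identity, fix $(z_1,\ldots,z_\ell)\in(\FF_2^n)^\ell$ and, for each coordinate $i\in[n]$, set $T_i\coloneqq\{j\in[\ell]\mid (z_j)_i = 1\}$, so that by definition $\config_{n,\ell}^V(z_1,\ldots,z_\ell)(T) = \abs{\{i\in[n]\mid T_i = T\}}$ for every $T\subseteq[\ell]$. The key observation is that the $i$-th coordinate of $\sum_{j\in J}z_j$ equals $\sum_{j\in J}(z_j)_i \bmod 2 = \abs{J\cap T_i}\bmod 2$, hence coordinate $i$ contributes to $\abs{\sum_{j\in J}z_j}$ exactly when $\abs{J\cap T_i}$ is odd. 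Summing over $i$ and grouping the coordinates by the value of $T_i$ gives
\[
  \config_{n,\ell}^\Delta(z_1,\ldots,z_\ell)(J) = \sum_{\substack{T\subseteq[\ell]\\ \abs{T\cap J}\text{ odd}}} \config_{n,\ell}^V(z_1,\ldots,z_\ell)(T) = D_{n,\ell}\bigl(\config_{n,\ell}^V(z_1,\ldots,z_\ell)\bigr)(J),
\]
which is the claim. This also confirms the composition makes sense: $\config_{n,\ell}^V$ takes values in $S_{n,\ell}$ (the Venn cells partition $[n]$; see \cref{lem:validVDconfigs}), and $D_{n,\ell}(g)(\varnothing)=0$ for every $g$ since $0$ is even, so $D_{n,\ell}$ does map into $Z_{n,\ell}$.

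For the inverse claim, write $\chi_J(T)\coloneqq(-1)^{\abs{T\cap J}}$ for $J,T\subseteq[\ell]$; these are the characters of $\FF_2^\ell$, so $\sum_{T}\chi_J(T) = 2^\ell\cdot\One[J=\varnothing]$, and the (unnormalized) Hadamard transform $\widehat g(J)\coloneqq\sum_T\chi_J(T)g(T)$ satisfies $\widehat{\widehat g} = 2^\ell g$. Using $\One[\abs{T\cap J}\text{ odd}] = \tfrac12(1-\chi_J(T))$ together with $\sum_T g(T)=n$ for $g\in S_{n,\ell}$, one rewrites $D_{n,\ell}(g)(J) = \tfrac{n}{2} - \tfrac12\widehat g(J)$, and directly $V_{n,\ell}(g)(J) = n\cdot\One[J=\varnothing] - 2^{1-\ell}\widehat g(J)$. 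A one-line computation with these two formulas and $\widehat{\widehat g}=2^\ell g$ then shows $V_{n,\ell}\comp D_{n,\ell} = \id_{S_{n,\ell}}$ and $D_{n,\ell}\comp V_{n,\ell} = \id_{Z_{n,\ell}}$, checking en route that $V_{n,\ell}$ lands in $S_{n,\ell}$ (indeed $\sum_J V_{n,\ell}(g)(J) = n - 2g(\varnothing) = n$ by orthogonality when $g\in Z_{n,\ell}$); alternatively, since $S_{n,\ell}$ and $Z_{n,\ell}$ are affine spaces of the common dimension $2^\ell-1$, one-sided invertibility already forces two-sided invertibility. For the last identity, apply $V_{n,\ell}$ to both sides of $\config_{n,\ell}^\Delta = D_{n,\ell}\comp\config_{n,\ell}^V$; since $\config_{n,\ell}^V$ takes values in $S_{n,\ell}$ and $V_{n,\ell}\comp D_{n,\ell}=\id_{S_{n,\ell}}$, this yields $V_{n,\ell}\comp\config_{n,\ell}^\Delta = \config_{n,\ell}^V$.

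The only real obstacle is bookkeeping: keeping track of the affine shift $n\cdot\One[J=\varnothing]$ and verifying that each of $V_{n,\ell}$, $D_{n,\ell}$ genuinely respects the defining constraints of $S_{n,\ell}$ and $Z_{n,\ell}$. The substantive content — that passing from $\config^V$ to $\config^\Delta$ is just a parity transform on the Venn cells — is immediate once each coordinate is written in terms of the sets $T_i$.
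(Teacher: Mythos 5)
Your proof is correct and follows essentially the same route as the paper's: verify $\config_{n,\ell}^\Delta = D_{n,\ell}\comp\config_{n,\ell}^V$ directly from the definitions, show $V_{n,\ell}$ and $D_{n,\ell}$ are mutual inverses, and then compose. The only cosmetic difference is that you package the inverse check via the Hadamard transform identity $\widehat{\widehat g}=2^\ell g$, whereas the paper expands the same double sum and evaluates the character sum $\sum_{\substack{T\subseteq[\ell]\\ \lvert K\cap T\rvert\text{ odd}}} (-1)^{\lvert T\cap J\rvert-1}$ by hand; your framing is a cleaner presentation of the identical computation, and your fallback dimension argument (one-sided invertibility between equidimensional affine spaces implies two-sided) is exactly the one the paper invokes.
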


Making use of Venn diagram configurations, we can also easily compute the number of $\ell$-tuples with a given
configuration as a multinomial.

\begin{restatable}{lemma}{configsize}\label{lem:configsize}
  For a symmetric difference configuration $g\in\im(\config_{n,\ell}^\Delta)$, we have
  \begin{align*}
    \lvert g\rvert
    & =
    K_g(0)
    =
    \binom{n}{V_{n,\ell}(g)}
    =
    \frac{n!}{\prod_{J\subseteq[\ell]} V_{n,\ell}(g)(J)!},
  \end{align*}
  where $V_{n,\ell}$ is given by~\eqref{eq:V}.
\end{restatable}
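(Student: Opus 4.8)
The plan is to prove the three displayed equalities essentially from right to left, the core of the argument being a change of description from symmetric difference configurations (in which the statement is phrased) to Venn diagram configurations (in which the count is manifestly a multinomial).

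\textbf{Step 1: $K_g(0) = \lvert g\rvert$.} I would start from the defining formula~\eqref{eq:highkrawtchouk} for the higher-order Krawtchouk polynomial, taking the subscript to be $g$ and the evaluation point to be the trivial configuration $0$. The representative tuple $(x_1,\dots,x_\ell)$ used in~\eqref{eq:highkrawtchouk} must have symmetric difference configuration equal to the evaluation point $0$, and since $\config_{n,\ell}^\Delta(x_1,\dots,x_\ell)(\{j\}) = \lvert x_j\rvert$, this forces $x_j = 0$ for every $j$; hence each character $\chi_{x_j}$ is identically $1$. The sum $\sum_{(y_1,\dots,y_\ell)\in g}\prod_{j=1}^\ell\chi_{x_j}(y_j)$ therefore ranges over all $\ell$-tuples of configuration $g$ and contributes $1$ from each, so $K_g(0) = \lvert g\rvert$ by the definition of $\lvert g\rvert$. (Well-definedness of $K_g$ is \cref{lem:explicitkrawtchouk}, but here the representative of $0$ is unique anyway, so it is not even needed.)

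\textbf{Step 2: $\lvert g\rvert = \binom{n}{V_{n,\ell}(g)} = n!/\prod_{J\subseteq[\ell]}V_{n,\ell}(g)(J)!$.} Here I would invoke \cref{lem:configconversion}: because $D_{n,\ell}$ and $V_{n,\ell}$ are mutually inverse and $\config_{n,\ell}^\Delta = D_{n,\ell}\comp\config_{n,\ell}^V$, a tuple $z\in(\FF_2^n)^\ell$ satisfies $\config_{n,\ell}^\Delta(z) = g$ if and only if $\config_{n,\ell}^V(z) = V_{n,\ell}(g)$. Writing $h \coloneqq V_{n,\ell}(g)$, this means $\lvert g\rvert$ equals the number of $\ell$-tuples whose Venn diagram configuration is $h$. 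To count these, identify a tuple $z = (z_1,\dots,z_\ell)$ with the coloring $c_z\colon[n]\to 2^{[\ell]}$ given by $c_z(i)\coloneqq\{j\in[\ell]\mid (z_j)_i = 1\}$; by the very definition of $\config_{n,\ell}^V$, the condition $\config_{n,\ell}^V(z) = h$ is equivalent to $\lvert c_z^{-1}(J)\rvert = h(J)$ for all $J\subseteq[\ell]$. The number of colorings of $[n]$ with prescribed fiber sizes $(h(J))_{J\subseteq[\ell]}$ is exactly the multinomial coefficient $n!/\prod_{J\subseteq[\ell]}h(J)!$, which is well-posed since $h\in\im(\config_{n,\ell}^V)$ forces $\sum_{J\subseteq[\ell]}h(J) = n$ by \cref{lem:validVDconfigs}. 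Substituting back $h = V_{n,\ell}(g)$ gives both remaining equalities at once, the last being just the definition of the multinomial symbol.

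I do not anticipate a real obstacle: the argument is a reindexing via \cref{lem:configconversion} followed by an elementary labeled-partition count. The only points that need a word of care are recognizing in Step 1 that the representative of the trivial configuration is the all-zero tuple (so the character product trivializes), and checking in Step 2 that the multinomial is well-defined, i.e.\ that the fiber sizes sum to $n$, which is precisely the content of \cref{lem:validVDconfigs}.
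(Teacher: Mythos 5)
Your proof is correct and follows essentially the same route as the paper: you derive $K_g(0)=\lvert g\rvert$ by specializing the defining sum~\eqref{eq:highkrawtchouk} at the all-zero representative (the paper phrases this via $\chi_{y_j}(0)$, which equals your $\chi_{x_j}(y_j)$ with $x_j=0$ by symmetry of the pairing), and you obtain the multinomial by converting to Venn diagram configurations via \cref{lem:configconversion} and counting labeled partitions of $[n]$, which is exactly the paper's bijection with $(X_J)_{J\subseteq[\ell]}$. The only cosmetic difference is that you spell out the well-posedness of the multinomial via \cref{lem:validVDconfigs}, which the paper leaves implicit.
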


\medskip

The following lemma says that, similarly to the univariate case, the higher-order Krawtchouk polynomials are
orthogonal with respect to the natural discrete measure on symmetric configurations in which each
$g\in\im(\config_{n,\ell}^\Delta)$ has measure $\lvert g\rvert = \binom{n}{V_{n,\ell}(g)}$ (see
\cref{lem:configsize}), i.e., the number of $\ell$-tuples with configuration $g$.

\begin{restatable}{lemma}{orthogonality}[Orthogonality]
  For $n,\ell\in\NN_+$ and $h,h'\in\im(\config_{n,\ell}^\Delta)$, we have
  \begin{align*}
    \sum_{g\in\im(\config_{n,\ell}^\Delta)}
    \lvert g\rvert\cdot K_h(g)\cdot K_{h'}(g)
    & =
    2^{\ell n}\cdot\lvert h\rvert\cdot\One[h = h'].
  \end{align*}
\end{restatable}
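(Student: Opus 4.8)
The plan is to mirror the proofs of the higher-order MacWilliams identities and inequalities: push everything through the Fourier transform of $\One_h$ and then invoke Plancherel. The only real content is bookkeeping the normalization factor $2^{n\ell}$.

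First I would rewrite the left-hand side as a sum over \emph{all} $\ell$-tuples of words rather than over configurations. By definition $\lvert g\rvert$ is the number of $x\in(\FF_2^n)^\ell$ with $\config_{n,\ell}^\Delta(x)=g$ (equivalently, by \cref{lem:configorbits}, the size of the corresponding $S_n$-orbit), and $K_h,K_{h'}$ depend only on the configuration; hence grouping $\ell$-tuples by their configuration yields
\[
  \sum_{g\in\im(\config_{n,\ell}^\Delta)} \lvert g\rvert\cdot K_h(g)\cdot K_{h'}(g)
  = \sum_{x\in(\FF_2^n)^\ell} K_h(\config_{n,\ell}^\Delta(x))\cdot K_{h'}(\config_{n,\ell}^\Delta(x)).
\]
Next I would apply \cref{rmk:highkrawtchouk}, i.e.\ $\widehat{\One_h}=(K_h\comp\config_{n,\ell}^\Delta)/2^{n\ell}$, to replace each factor $K_h(\config_{n,\ell}^\Delta(x))$ by $2^{n\ell}\,\widehat{\One_h}(x)$ (and likewise for $h'$), turning the right-hand side into $2^{2n\ell}\sum_{x\in(\FF_2^n)^\ell}\widehat{\One_h}(x)\,\widehat{\One_{h'}}(x)$. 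By Plancherel (\cref{fact:plancherel}, applied on $(\FF_2^n)^\ell\cong\FF_2^{n\ell}$) this equals $2^{2n\ell}\ip{\One_h}{\One_{h'}}$.

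Finally I would evaluate $\ip{\One_h}{\One_{h'}}=\ExpOp_{x\unif(\FF_2^n)^\ell}[\One_h(x)\One_{h'}(x)]$ directly: since every $\ell$-tuple has exactly one symmetric difference configuration, the product $\One_h(x)\One_{h'}(x)$ is identically $0$ when $h\neq h'$ and equals $\One_h(x)$ when $h=h'$, so $\ip{\One_h}{\One_{h'}}=\One[h=h']\cdot\lvert h\rvert/2^{n\ell}$. Substituting back gives $2^{2n\ell}\cdot\One[h=h']\cdot\lvert h\rvert/2^{n\ell}=2^{\ell n}\cdot\lvert h\rvert\cdot\One[h=h']$, as claimed. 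I do not anticipate any genuine obstacle; the only points requiring care are tracking the factors of $2^{n\ell}$ and remembering that the Fourier transform / Plancherel here live on the product space $(\FF_2^n)^\ell$ rather than on $\FF_2^n$. (Note that the measure $g\mapsto\lvert g\rvert$ that makes the $K_h$ orthogonal is exactly the pushforward under $\config_{n,\ell}^\Delta$ of the counting measure on $(\FF_2^n)^\ell$, which is why this computation works so cleanly; the evaluation $\lvert g\rvert=\binom{n}{V_{n,\ell}(g)}$ from \cref{lem:configsize} is not needed for the proof, only to identify the measure explicitly in the statement.)
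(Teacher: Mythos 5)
Your proposal is correct and follows essentially the same route as the paper: reindex the sum over configurations as a sum over $\ell$-tuples, use \cref{rmk:highkrawtchouk} to rewrite $K_h(\config_{n,\ell}^\Delta(x))$ as $2^{n\ell}\widehat{\One_h}(x)$, apply Plancherel on $(\FF_2^n)^\ell$, and evaluate $\ip{\One_h}{\One_{h'}}$. Your write-up is only marginally more explicit about the reindexing and the final inner-product evaluation than the paper's version, but the argument is identical.
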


Also similarly to the univariate case, the higher-order Krawtchouk polynomials satisfy the following
reflection property.

\begin{restatable}{lemma}{reflection}[Reflection]\label{lem:reflection}
  For $n,\ell\in\NN_+$ and $g,h\in\im(\config_{n,\ell}^\Delta)$, we have
  \begin{align*}
    \frac{K_h(g)}{\lvert h\rvert}
    =
    \frac{K_g(h)}{\lvert g\rvert}.
  \end{align*}
\end{restatable}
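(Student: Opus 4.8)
The plan is to exploit the fact, guaranteed by \cref{lem:explicitkrawtchouk}, that $K_h(g)$ does not depend on which representative $(x_1,\dots,x_\ell)\in g$ is plugged into the defining formula
\[
K_h(g) \;=\; \sum_{(y_1,\dots,y_\ell)\in h}\ \prod_{j=1}^\ell \chi_{x_j}(y_j)
\;=\; \sum_{(y_1,\dots,y_\ell)\in h}\ (-1)^{\sum_{j=1}^\ell \ip{x_j}{y_j}} .
\]
First I would use this well-definedness to rewrite $K_h(g)$ as the average of the right-hand side over \emph{all} $\lvert g\rvert$ tuples $(x_1,\dots,x_\ell)\in g$, which gives
\[
\frac{K_h(g)}{\lvert h\rvert}
\;=\;
\frac{1}{\lvert g\rvert\,\lvert h\rvert}\ \sum_{(x_1,\dots,x_\ell)\in g}\ \sum_{(y_1,\dots,y_\ell)\in h}\ (-1)^{\sum_{j=1}^\ell \ip{x_j}{y_j}} .
\]

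The point of this maneuver is that the last expression is now manifestly symmetric under exchanging the pair $\bigl(g,(x_1,\dots,x_\ell)\bigr)$ with $\bigl(h,(y_1,\dots,y_\ell)\bigr)$, since over $\FF_2$ one has $\ip{x_j}{y_j} = \ip{y_j}{x_j}$ for every $j$. Performing that exchange produces exactly $K_g(h)/\lvert g\rvert$, which is the claimed reflection identity. An equivalent route, which I would record as an alternative packaging, uses \cref{rmk:highkrawtchouk} to write $\widehat{\One_h} = (K_h\comp\config_{n,\ell}^\Delta)/2^{n\ell}$ and then evaluates $\sum_{x\in(\FF_2^n)^\ell}\One_g(x)\,\widehat{\One_h}(x)$ in two ways: directly it equals $\lvert g\rvert\,K_h(g)/2^{n\ell}$, while by \cref{fact:plancherel} the same sum is symmetric in $g$ and $h$, forcing $\lvert g\rvert\,K_h(g) = \lvert h\rvert\,K_g(h)$.

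I do not anticipate any genuine obstacle. The only step that needs justification beyond elementary manipulation is the replacement of a single representative $(x_1,\dots,x_\ell)\in g$ by the average over all representatives, and that is precisely the well-definedness already secured by \cref{lem:explicitkrawtchouk}; everything else is the trivial symmetry of the $\FF_2$-bilinear form. The same argument, incidentally, should go through verbatim in the association-scheme setting of \cref{sec:association}, where the higher-order Krawtchouk values again only depend on the class of the argument.
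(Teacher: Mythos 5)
Your proof is correct, and it takes a genuinely different route from the paper's. The paper proves the identity by plugging both $(g,h)$ and $(h,g)$ into the explicit multinomial formula of \cref{lem:explicitkrawtchouk} and observing that the two expressions differ exactly by the factor $\lvert g\rvert/\lvert h\rvert = \prod_{J} V_{n,\ell}(h)(J)!/V_{n,\ell}(g)(J)!$ computed from \cref{lem:configsize}; the symmetry is read off from the combinatorial shape of the sum over contingency tables $F$. You instead stay at the level of the defining character sum: you use well-definedness to replace the fixed representative $(x_1,\dots,x_\ell)\in g$ with an average over all $\lvert g\rvert$ representatives, which produces a double sum over $g\times h$ that is manifestly invariant under swapping $(g,x)\leftrightarrow(h,y)$ because $\chi_{x_j}(y_j)=\chi_{y_j}(x_j)$ over $\FF_2$. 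Your Plancherel packaging, $\sum_{x}\One_g(x)\widehat{\One_h}(x) = \lvert g\rvert K_h(g)/2^{n\ell}$ being symmetric in $g,h$, is the same idea dressed in Fourier notation. This averaging argument is shorter and more conceptual, and as you note it transfers verbatim to the translation-scheme setting, since it uses only that $K_h$ is constant on classes and that the pairing $\chi_x(y)$ is symmetric; the paper's version has the advantage of producing the exact multiplicative factor as a byproduct of a formula it needs anyway for \cref{prop:complexity}. Both proofs lean on \cref{lem:explicitkrawtchouk}, the paper for its formula and you for its well-definedness guarantee, so the dependency chains are comparable. One small presentational point: it would be slightly cleaner to invoke well-definedness via \cref{lem:configorbits} and $S_n$-equivariance of the Fourier transform directly rather than through \cref{lem:explicitkrawtchouk}, since that is really the underlying reason $\widehat{\One_h}$ is constant on configuration classes, but the citation you give is the one the paper itself uses when defining $K_h$, so it is defensible.
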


The next lemma provides an alternative formula for the higher-order Krawtchouk polynomial in which the sum
involves only $O(n^{2^{2\ell}})$ terms (as opposed to the $2^{\ell n}$ terms in~\eqref{eq:highkrawtchouk}).
\begin{restatable}{lemma}{explicitkrawtchouk}\label{lem:explicitkrawtchouk}
  For every $n,\ell\in\NN_+$ and every $g,h\in\im(\config_{n,\ell}^\Delta)$, we have
  \begin{align*}
    K_h(g)
    & =
    \sum_{F\in\cF}
    \prod_{J\subseteq[\ell]} \frac{V_{n,\ell}(g)(J)!}{\prod_{K\subseteq[\ell]} F(J,K)!}
    \cdot
    \prod_{j=1}^\ell \prod_{\substack{J,K\subseteq[\ell]\\ j\in J\cap K}} (-1)^{F(J,K)},
  \end{align*}
  where $\cF$ is the set of functions $F\colon 2^{[\ell]}\times 2^{[\ell]}\to\{0,1,\ldots,n\}$ such that
  \begin{align*}
    \forall J\subseteq [\ell], \sum_{K\subseteq[\ell]} F(J,K) = V_{n,\ell}(g)(J),\\
    \forall K\subseteq [\ell], \sum_{J\subseteq[\ell]} F(J,K) = V_{n,\ell}(h)(K),
  \end{align*}
  and $V_{n,\ell}$ is given by~\eqref{eq:V}.
\end{restatable}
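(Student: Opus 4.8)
The plan is to unwind the definition $K_h(g) = \sum_{(y_1,\ldots,y_\ell)\in h}\prod_{j=1}^\ell\chi_{x_j}(y_j)$ directly, after choosing a convenient representative $(x_1,\ldots,x_\ell)\in g$, by organizing the sum according to how the supports of $y_1,\ldots,y_\ell$ interact with the supports of $x_1,\ldots,x_\ell$.

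First I would fix any $(x_1,\ldots,x_\ell)\in g$. By \cref{lem:configconversion} its Venn diagram configuration is $\config_{n,\ell}^V(x_1,\ldots,x_\ell) = V_{n,\ell}(g)$, so $[n]$ splits into cells $(P_J)_{J\subseteq[\ell]}$, where $P_J = \{i\in[n]\mid\{j\in[\ell]\mid(x_j)_i=1\}=J\}$ and $\lvert P_J\rvert = V_{n,\ell}(g)(J)$. By \cref{lem:configorbits} together with \cref{lem:configconversion}, the tuples $(y_1,\ldots,y_\ell)\in h$ are exactly the tuples whose Venn diagram configuration equals $V_{n,\ell}(h)$; each such tuple likewise determines a partition $(Q_K)_{K\subseteq[\ell]}$ of $[n]$ with $\lvert Q_K\rvert = V_{n,\ell}(h)(K)$, and conversely the tuple is recovered from $(Q_K)_K$ via $(y_j)_i = 1 \iff i\in Q_K$ for the unique $K$ containing $j$.

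The crux is to refine the sum by the overlap matrix $F(J,K)\coloneqq\lvert P_J\cap Q_K\rvert$. Every $(y_1,\ldots,y_\ell)\in h$ produces some $F\in\cF$: its row sums are $\lvert P_J\rvert = V_{n,\ell}(g)(J)$ and its column sums are $\lvert Q_K\rvert = V_{n,\ell}(h)(K)$, which are precisely the two families of marginal constraints defining $\cF$ (and the entries lie in $\{0,\ldots,n\}$ since they are non-negative and sum to $n$). Conversely, for a fixed $F\in\cF$ the number of tuples producing it is the number of ways to partition each $P_J$ into labelled blocks of sizes $(F(J,K))_{K}$, independently over $J$, namely $\prod_{J\subseteq[\ell]}\frac{V_{n,\ell}(g)(J)!}{\prod_{K\subseteq[\ell]}F(J,K)!}$. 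Finally, the sign $\prod_{j=1}^\ell\chi_{x_j}(y_j) = (-1)^{\sum_{j=1}^\ell\langle x_j,y_j\rangle}$ is constant on each such block of tuples: a coordinate $i\in P_J\cap Q_K$ contributes to $\langle x_j,y_j\rangle$ exactly when $j\in J\cap K$, so $\sum_{j}\langle x_j,y_j\rangle = \sum_{J,K}\lvert J\cap K\rvert\,F(J,K)$, which rearranges to the sign factor $\prod_{j=1}^\ell\prod_{J,K\subseteq[\ell]:\,j\in J\cap K}(-1)^{F(J,K)}$. Summing over $F\in\cF$ yields the stated identity; and since the right-hand side depends on the representative only through $V_{n,\ell}(g)$, this simultaneously establishes that $K_h(g)$ is well-defined, as promised in the definition.

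I do not expect a genuine obstacle: the argument is bookkeeping. The one place to be careful is the correspondence in the previous paragraph — verifying that a tuple in configuration $h$ is determined uniquely by its cell partition $(Q_K)_K$ (so nothing is over- or under-counted), and that after fixing the $x$-partition the tuples with a prescribed overlap matrix $F$ are enumerated by a product of multinomials rather than something more complicated. A secondary sanity check is that the marginal constraints are mutually consistent, which holds because $\sum_J V_{n,\ell}(g)(J) = \sum_K V_{n,\ell}(h)(K) = n$ by \cref{lem:validVDconfigs}.
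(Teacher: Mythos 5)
Your proposal is correct and follows essentially the same route as the paper's proof: fix a representative $(x_1,\dots,x_\ell)\in g$, refine the sum over $(y_1,\dots,y_\ell)\in h$ by the overlap matrix $F(J,K)=\lvert P_J\cap Q_K\rvert$, check that $F\in\cF$, count the tuples with a given $F$ as a product of multinomials by splitting each cell $P_J$ independently, and observe that the sign $\prod_j\chi_{x_j}(y_j)$ depends only on $F$ via the coordinates in $P_J\cap Q_K$ with $j\in J\cap K$. The presentation order differs trivially (the paper computes the sign before the count), and your closing remark that the formula only depends on $V_{n,\ell}(g)$ correctly explains why $K_h(g)$ is well-defined, which is precisely the role this lemma plays in the paper.
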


The next lemma allows the computation of the Krawtchouk polynomials even faster via dynamic programming.

\begin{restatable}{lemma}{recursive}\label{lem:recursive}
  Let $n,\ell\in\NN_+$ with $n\geq 2$, let $g,h\in\im(\config_{n,\ell}^\Delta)$ be symmetric difference
  configurations and let $J_0\subseteq[\ell]$ be such that $V_{n,\ell}(g)(J_0) > 0$ for $V_{n,\ell}$ given
  by~\eqref{eq:V}. Then
  \begin{align}
    K_h(g)
    & =
    \sum_{\substack{K_0\subseteq[\ell]\\ V_{n,\ell}(h)(K_0) > 0}}
    (-1)^{\lvert J_0\cap K_0\rvert}\cdot K_{h\ominus K_0}(g\ominus J_0),
    \label{eq:recursive1}
    \\
    K_h(g)
    & =
    -
    \sum_{\substack{K_0\subseteq[\ell]\\ V(h)(K_0) > 0\\K_0\neq\varnothing}}
    K_{h\oplus\varnothing\ominus K_0}(g)
    +
    \sum_{\substack{K_0\subseteq[\ell]\\ V(h)(K_0) > 0}}
    (-1)^{\lvert J_0\cap K_0\rvert}\cdot K_{h\oplus\varnothing\ominus K_0}(g\oplus\varnothing\ominus J_0),
    \label{eq:recursive2}
  \end{align}
  where
  \begin{align*}
    h\ominus K_0 & \coloneqq D_{n-1,\ell}(V_{n,\ell}(h) - \One_{\{K_0\}}), &
    g\ominus J_0 & \coloneqq D_{n-1,\ell}(V_{n,\ell}(g) - \One_{\{J_0\}}),
    \\
    h\oplus\varnothing
    & \coloneqq
    D_{n+1,\ell}(V_{n,\ell}(h) + \One_{\{\varnothing\}}),
    &
    g\oplus\varnothing
    & \coloneqq
    D_{n+1,\ell}(V_{n,\ell}(g) + \One_{\{\varnothing\}}),
  \end{align*}
  and $D_{n-1,\ell}$ and $D_{n+1,\ell}$ are given by~\eqref{eq:D}.
\end{restatable}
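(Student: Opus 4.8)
The plan is to derive both recursions by writing $K_h(g)$ through its combinatorial definition in \cref{eq:highkrawtchouk}, \emph{peeling off} a single coordinate of the ambient $\FF_2^n$, and recognizing the resulting partial sums as higher-order Krawtchouk values at configurations with one fewer coordinate (for \cref{eq:recursive1}) or, after a small detour through an auxiliary coordinate, the same number of coordinates (for \cref{eq:recursive2}). Two inputs do the heavy lifting. First, by \cref{lem:explicitkrawtchouk,lem:configorbits} the quantity $K_h(g)=\sum_{(y_1,\dots,y_\el)\in h}\prod_{j=1}^\el\chi_{x_j}(y_j)$ does not depend on the choice of representative $(x_1,\dots,x_\el)\in g$, so we are free to pick one whose $n$-th coordinate sits in whatever Venn cell is convenient. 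Second, \cref{lem:configconversion} and \cref{lem:validVDconfigs} let us move between symmetric-difference and Venn-diagram configurations and tell us precisely which modified tuples $V_{n,\el}(\cdot)\pm\One_{\{\cdot\}}$ are legitimate configurations (on the appropriate number of coordinates).

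For \cref{eq:recursive1}: since $V_{n,\el}(g)(J_0)>0$, \cref{lem:configorbits} provides a representative $(x_1,\dots,x_\el)\in g$ with $(x_j)_n=\One[j\in J_0]$ for all $j$. Writing $\chi_{x_j}(y_j)=\prod_{i=1}^n(-1)^{(x_j)_i(y_j)_i}$ and partitioning the sum over $(y_1,\dots,y_\el)\in h$ according to $K_0\coloneqq\{j\in[\el]\mid(y_j)_n=1\}$, the $i=n$ factor contributes the constant $(-1)^{\lvert J_0\cap K_0\rvert}$; restriction to the first $n-1$ coordinates is a bijection from $\{(y_1,\dots,y_\el)\in h\mid\{j:(y_j)_n=1\}=K_0\}$ onto the $\el$-tuples in $(\FF_2^{n-1})^\el$ of Venn configuration $V_{n,\el}(h)-\One_{\{K_0\}}$ (nonempty exactly when $V_{n,\el}(h)(K_0)>0$), and $(x_1,\dots,x_\el)$ restricted to $[n-1]$ has symmetric-difference configuration $D_{n-1,\el}(V_{n,\el}(g)-\One_{\{J_0\}})=g\ominus J_0$. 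Hence the inner sum equals $K_{h\ominus K_0}(g\ominus J_0)$, and summing over the admissible $K_0$ yields \cref{eq:recursive1}; the hypothesis $n\geq2$ is what makes $\im(\config_{n-1,\el}^\Delta)$ and the configurations $g\ominus J_0$, $h\ominus K_0$ meaningful.

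For \cref{eq:recursive2}: append an all-zero coordinate. Set $\tilde g\coloneqq g\oplus\varnothing$ and $\tilde h\coloneqq h\oplus\varnothing$, configurations of $\el$-tuples in $(\FF_2^{n+1})^\el$ with Venn configurations $V_{n,\el}(g)+\One_{\{\varnothing\}}$ and $V_{n,\el}(h)+\One_{\{\varnothing\}}$. Since appending a zero coordinate changes no factor $\chi_{x_j}(y_j)$, we have a handle on $K_{\tilde h}(\tilde g)$, and we evaluate it two ways via \cref{eq:recursive1}. Peeling off the auxiliary cell-$\varnothing$ coordinate of $\tilde g$ (valid since $V_{n+1,\el}(\tilde g)(\varnothing)=V_{n,\el}(g)(\varnothing)+1>0$), using $\tilde g\ominus\varnothing=g$ and that $\tilde h\ominus K_0$ equals $h$ when $K_0=\varnothing$ and $h\oplus\varnothing\ominus K_0$ otherwise, gives $K_{\tilde h}(\tilde g)=K_h(g)+\sum_{K_0\neq\varnothing,\,V_{n,\el}(h)(K_0)>0}K_{h\oplus\varnothing\ominus K_0}(g)$. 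Peeling instead a cell-$J_0$ coordinate of $\tilde g$ (valid since $V_{n+1,\el}(\tilde g)(J_0)=V_{n,\el}(g)(J_0)+\One[J_0=\varnothing]>0$) gives $K_{\tilde h}(\tilde g)=\sum_{K_0}(-1)^{\lvert J_0\cap K_0\rvert}K_{h\oplus\varnothing\ominus K_0}(g\oplus\varnothing\ominus J_0)$, with $K_0$ ranging over $\varnothing$ together with all $K_0$ having $V_{n,\el}(h)(K_0)>0$. Equating the two expressions and solving for $K_h(g)$ yields \cref{eq:recursive2}, the $K_0=\varnothing$ term being reconciled through $h\oplus\varnothing\ominus\varnothing=h$ and $g\oplus\varnothing\ominus\varnothing=g$.

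I expect the only real difficulty to be bookkeeping, concentrated in \cref{eq:recursive2}: keeping straight which modified configurations are valid (hence which $K_0$ and $J_0$ index the sums), and correctly handling the degenerate cases $J_0=\varnothing$ and $K_0=\varnothing$, where $\oplus\varnothing$ and $\ominus\varnothing$ partially cancel, so that no admissible index is dropped and no spurious one is introduced. Everything else is the straightforward combinatorics of removing and adding a single coordinate, with the representative-independence of $K_h(g)$ and the Venn/symmetric-difference dictionary of \cref{lem:configconversion,lem:validVDconfigs} as the only external ingredients.
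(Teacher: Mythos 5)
Your proof of~\eqref{eq:recursive1} is correct but takes a genuinely different and arguably more direct route than the paper's. The paper derives~\eqref{eq:recursive1} algebraically from the closed-form expression of \cref{lem:explicitkrawtchouk}: it expands the multinomial coefficient $\binom{V_{n,\ell}(g)(J_0)}{F(J_0,\place)}$ via the standard multinomial recursion, which splits the $F$-sum over the value $K_0$ at which $F(J_0,\cdot)$ is decremented, and then reindexes via $F\mapsto F-\One_{\{(J_0,K_0)\}}$ to identify the summands as $K_{h\ominus K_0}(g\ominus J_0)$. You instead work directly from the defining Fourier sum $K_h(g)=\sum_{y\in h}\prod_j\chi_{x_j}(y_j)$, choose a representative of $g$ whose $n$-th coordinate lies in the Venn cell $J_0$ (possible since $V_{n,\ell}(g)(J_0)>0$ and $g$ is an $S_n$-orbit by \cref{lem:configorbits}), and split the sum by the Venn cell $K_0$ of the $n$-th coordinate of $y$. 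The $n$-th factor peels off as $(-1)^{\lvert J_0\cap K_0\rvert}$, and restriction to $[n-1]$ realizes each inner sum as a lower-order Krawtchouk evaluation. Both approaches are short; yours bypasses \cref{lem:explicitkrawtchouk} entirely (it uses it only for well-definedness of $K_h(g)$, which could also be obtained from \cref{lem:configorbits} alone), while the paper's makes the recursion visible as the multinomial recursion that the $\binom{\ell}{k}$ analogue would satisfy, which is aesthetically natural if one already has the explicit formula in hand. For~\eqref{eq:recursive2} your argument is essentially identical to the paper's: both pad with an all-zero coordinate to form $\tilde g=g\oplus\varnothing$, $\tilde h=h\oplus\varnothing$, evaluate $K_{\tilde h}(\tilde g)$ twice via~\eqref{eq:recursive1} (once peeling a cell-$\varnothing$ coordinate, once peeling a cell-$J_0$ coordinate), and equate; you spell out the index-set bookkeeping around $K_0=\varnothing$ that the paper compresses into the observation $h\oplus\varnothing\ominus\varnothing=h$.
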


\begin{proposition}\label{prop:complexity}
  The objective and restrictions of the linear programs $\KLP(n,d,\ell)$ and $\KLP_\Lin(n,d,\ell)$
  can be algorithmically computed in $O(n^{2^{\ell+1}-2})$ time for a fixed $\ell\in\NN_+$.
\end{proposition}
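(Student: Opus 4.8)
The plan is to output the linear program $\KLP(n,d,\ell)$ explicitly — its list of variables, its objective vector, the right-hand sides of its constraints, and (the bottleneck) the constraint matrix of higher-order Krawtchouk values $\big(K_h(g)\big)_{h,g}$ — and to charge each piece against the $O(n^{2^{\ell+1}-2})$ budget. By \cref{lem:configcount} there are $R \coloneqq \binom{n+2^\ell-1}{2^\ell-1} = O(n^{2^\ell-1})$ symmetric difference configurations, so the MacWilliams block of the program has $R^2 = O(n^{2^{\ell+1}-2})$ entries; the target is thus to spend amortized $O(1)$ time per such entry (for fixed $\ell$, so any $\poly(2^\ell)$ factor is $O(1)$) and only $O(n^{2^\ell-1})$ on everything else.

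First I would dispense with the cheap parts. By \cref{lem:validVDconfigs} the Venn diagram configurations are exactly the weak compositions of $n$ into the $2^\ell$ cells indexed by $2^{[\ell]}$; I enumerate them one by one ($R$ of them), recovering from each the corresponding symmetric difference configuration by applying the linear map $D_{n,\ell}$ of \cref{lem:configconversion}, at cost $O(2^{2\ell})=O(1)$ each. Fixing once and for all a ranking of weak compositions gives an $O(1)$-time bijection between configurations and $\{1,\dots,R\}$, so the Krawtchouk values can be held in an $R\times R$ array with $O(1)$ lookup. The objective is the all-ones vector over configurations; the normalization constraint pins the trivial configuration $0 = D_{n,\ell}(n\cdot\One_{\{\varnothing\}})$; the non-negativity constraints are trivial; and the distance constraints are obtained by scanning all $R$ configurations $g$ and testing whether $g(\{j\})\in\{1,\dots,d-1\}$ for some $j\in[\ell]$ (which defines $\forbconfig(n,d,\ell)$), or whether $g(J)\in\{1,\dots,d-1\}$ for some $J\subseteq[\ell]$ (which defines $\forbconfig_\Lin(n,d,\ell)$). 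Each of these steps costs $O(R\cdot 2^\ell) = O(n^{2^\ell-1})$, comfortably inside the budget.

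The crux is computing the matrix $\big(K_h(g)\big)_{h,g\in\im(\config_{n,\ell}^\Delta)}$ in amortized $O(1)$ per entry. For $n\leq 1$ there are only $O(1)$ configurations, and I evaluate $K_h(g)$ directly from \eqref{eq:highkrawtchouk} (it is $1$ when $n=0$, and $(-1)^{\lvert J_g\cap J_h\rvert}$ when $n=1$, where $J_g$ and $J_h$ are the unique cells carrying the single coordinate). For $n\geq 2$ I fill the array by dynamic programming, processing the pairs $(h,g)$ in increasing order of $\mu(h,g)\coloneqq\big(n-V_{n,\ell}(g)(\varnothing)\big)+\big(n-V_{n,\ell}(h)(\varnothing)\big)$, \ie the total Venn mass of $g$ and of $h$ lying outside the cell $\varnothing$. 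Whenever $g=0$ (equivalently $V_{n,\ell}(g)(\varnothing)=n$) I use $K_h(0)=\lvert h\rvert=\binom{n}{V_{n,\ell}(h)}$ from \cref{lem:configsize}, evaluated in $O(2^\ell)$ time from precomputed factorials up to $n$ (this covers the base case $\mu=0$, i.e.\ $g=h=0$). Otherwise $g\neq 0$, so some $J_0\neq\varnothing$ has $V_{n,\ell}(g)(J_0)>0$, and I apply the within-blocklength recursion \eqref{eq:recursive2} of \cref{lem:recursive} with this $J_0$; each term on its right-hand side is a value $K_{h'}(g')$ at the \emph{same} blocklength $n$, and since $g\oplus\varnothing\ominus J_0$ moves one unit of $g$'s mass from $J_0\neq\varnothing$ into $\varnothing$ (and $h\oplus\varnothing\ominus K_0$ does the same for $h$, leaving $h$ unchanged precisely when $K_0=\varnothing$), every such $(h',g')$ has $\mu(h',g')<\mu(h,g)$, so its value is already available. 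One entry therefore costs $O(2^\ell)$ table lookups plus $O(2^{2\ell})$ arithmetic to form the shifted configurations via the maps of \cref{lem:configconversion}, i.e.\ $O(1)$, and the whole matrix is produced in $O(R^2)=O(n^{2^{\ell+1}-2})$ time.

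I expect the last step to be the only real obstacle: the entire game is to obtain amortized $O(1)$ per Krawtchouk value, and for that it is essential to use the recursion \eqref{eq:recursive2}, which keeps the blocklength $n$ fixed and merely shifts Venn mass toward the cell $\varnothing$. Using instead the blocklength-decreasing recursion \eqref{eq:recursive1} would, even with full memoization, recompute a $\Theta(m^{2^{\ell+1}-2})$-entry Krawtchouk matrix at every blocklength $m\leq n$, costing $\Theta(n^{2^{\ell+1}-1})$ — a factor of $n$ over budget. Finally, $\KLP_\Lin(n,d,\ell)$ is handled by exactly the same procedure; the only difference is that its distance constraints scan the values $g(J)$ over all $J\subseteq[\ell]$ rather than only the singletons, which changes neither the asymptotics nor any other part of the argument.
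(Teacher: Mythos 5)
Your argument is correct and follows the same strategy as the paper's (terse, three-sentence) proof: enumerate the $O(n^{2^\ell-1})$ configurations, convert between the two configuration encodings via \cref{lem:configconversion}, and fill the $O(n^{2^{\ell+1}-2})$-entry Krawtchouk matrix by dynamic programming on the within-blocklength recursion~\eqref{eq:recursive2}. Where the paper merely cites \cref{lem:explicitkrawtchouk} and~\eqref{eq:recursive2} and asserts the quadratic cost, you supply the missing details — the progress measure $\mu(h,g)$ that justifies the evaluation order and the $O(1)$ amortized cost, the observation that $J_0\neq\varnothing$ can always be chosen when $g\neq 0$, the $g=0$ base case via \cref{lem:configsize}, and direct evaluation of~\eqref{eq:highkrawtchouk} for $n\leq 1$ — so this is a welcome fleshing-out rather than a genuinely different route.
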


\begin{proof}
  The number of variables and restrictions of these linear programs is the number of configurations at level
  $\ell$, which is $O(n^{2^\ell-1})$ by \cref{lem:configcount}. Furthermore, converting between symmetric
  difference configurations and Venn diagram configurations using \cref{lem:configconversion} can be done in
  time $O(2^\ell)=O(1)$ and using \cref{lem:explicitkrawtchouk} and~\eqref{eq:recursive2} in
  \cref{lem:recursive}, we can compute all values of all Krawtchouk polynomials of level $\ell$ in time
  $O((n^{2^\ell-1})^2) = O(n^{2^{\ell+1}-2})$.
\end{proof}

% LocalWords:  Krawtchouk codewords univariate MacWilliams Delsarte Delsarte's algorithmically

\section{Unsymmetrized Formulations of the Krawtchouk Hierarchies}\label{sec:sos}

In this section we give other formulations for $\KLP$.  These formulations are \emph{unsymmetrized} versions
of the same hierarchy.  Working with the unsymmetrized hierarchy can be easier, since it avoids the technical
definitions of the Krawtchouk polynomials $K_h(g)$, but computationally the number of variables and
constraints of these hierarchies is huge.

\subsection{The Hierarchy as Checking Non-negativity of Fourier Coefficients}
\label{sec:sos:symmetrization}

The LP hierarchy for linear codes can be simply described as checking non-negativity of products
of Fourier coefficients.
Define the linear programming hierarchy $\FLP_\Lin(n,d,\el)$ with the variables
$a_{x} \;{(x \in (\F_2^n)^\el)}$:
\begin{align*}
  \max \quad
  & \sum_{x \in (\F_2^n)^\el} a_{x}
  \\
  \text{s.t.} \quad
  & a_{0} = 1
  & &
  & & (\text{Normalization})
  \\
  & a_{(x_1, \dots, x_\el)} = 0
  & & \exists w \in \Span(x_1, \dots, x_\el), \abs{w} \in \{1, \dots, d-1\}
  & & (\text{Distance constraints})
  \\
  & \sum_{x \in (\F_2^n)^\el} a_x\chi_\alpha(x) \geq 0
  & & \forall \alpha \in (\F_2^n)^\el
  & & (\text{Fourier coefficients})
  \\
  & a_x \geq 0
  & & \forall x \in (\F_2^n)^\el
  & & (\text{Non-negativity}).
\end{align*}

\begin{proposition}
  For every $n,\ell\in\NN_+$ and $d\in\{0,1,\ldots,n\}$, $\val(\FLP_\Lin(n,d,\el)) \geq
  A^\Lin_2(n,d)^\el$.
\end{proposition}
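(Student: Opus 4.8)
The plan is to exhibit an explicit feasible solution of $\FLP_\Lin(n,d,\el)$ whose objective value equals $A^\Lin_2(n,d)^\el$, exactly mirroring the soundness argument already given for $\KLP_\Lin(n,d,\ell)$. First I would fix a linear code $C\subseteq\F_2^n$ attaining $\abs{C} = A^\Lin_2(n,d)$ with $\Delta(C)\geq d$, and set $a_x \coloneqq \One_{C^\el}(x)$ for every $x=(x_1,\ldots,x_\el)\in(\F_2^n)^\el$; that is, $a_x = 1$ if $x_j\in C$ for all $j\in[\el]$ and $a_x = 0$ otherwise. The objective value of this assignment is $\sum_{x\in(\F_2^n)^\el}\One_{C^\el}(x) = \abs{C}^\el = A^\Lin_2(n,d)^\el$, so it remains only to check feasibility.

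Normalization holds because $0\in C$ by linearity, so $a_0 = \One_{C^\el}(0,\ldots,0) = 1$; non-negativity is immediate as $a_x\in\{0,1\}$. For the distance constraints, suppose $\Span(x_1,\ldots,x_\el)$ contains a word $w$ with $\abs{w}\in\{1,\ldots,d-1\}$; if we had $x_j\in C$ for all $j$, then linearity of $C$ would give $\Span(x_1,\ldots,x_\el)\subseteq C$, hence $w\in C$ with $w\neq 0$ and $\Delta(C)\leq\abs{w}\leq d-1$, contradicting $\Delta(C)\geq d$. Therefore $a_x = \One_{C^\el}(x) = 0$ whenever a distance constraint applies, as required.

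The only constraint needing a short computation is the Fourier-coefficient constraint. Here the character of $(\F_2^n)^\el$ indexed by $\alpha = (\alpha_1,\ldots,\alpha_\el)$ factors over the tensor-product coordinates as $\chi_\alpha(x) = \prod_{j=1}^\el \chi_{\alpha_j}(x_j)$, so
\[
  \sum_{x\in(\F_2^n)^\el} a_x\,\chi_\alpha(x)
  = \sum_{x\in C^\el}\prod_{j=1}^\el \chi_{\alpha_j}(x_j)
  = \prod_{j=1}^\el\Bigl(\sum_{x_j\in C}\chi_{\alpha_j}(x_j)\Bigr)
  = \prod_{j=1}^\el\bigl(\abs{C}\cdot\One_{C^\perp}(\alpha_j)\bigr),
\]
where the last step uses the identity $\sum_{y\in C}\chi_\beta(y) = \abs{C}\,\One_{C^\perp}(\beta)$, which is a restatement of \cref{fact:fourier_dual_linear_code}. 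This product is either $\abs{C}^\el$ or $0$, hence non-negative, so all Fourier-coefficient constraints are satisfied and $a$ is feasible. This yields $\val(\FLP_\Lin(n,d,\el)) \geq A^\Lin_2(n,d)^\el$.

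There is no genuine obstacle here: it is a routine soundness check. The only point deserving care is the factorization of $\chi_\alpha$ over the $\el$ coordinate blocks together with the dual-code character-sum identity, which together make each Fourier constraint collapse into a product of non-negative factors. Equivalently, one can package the whole verification as $\widehat{\One_{C^\el}} = \widehat{\One_C}^{\otimes\el} = (\abs{C}/2^n)^\el\,\One_{(C^\perp)^\el} \geq 0$, again using \cref{fact:fourier_dual_linear_code}.
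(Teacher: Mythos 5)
Your proof is correct and takes essentially the same approach as the paper: both choose $a_x = \One_{C^\ell}(x)$ for an optimal linear code $C$ and verify the Fourier-coefficient constraints by factoring $\chi_\alpha$ over the $\ell$ coordinate blocks and invoking \cref{fact:fourier_dual_linear_code}. You simply spell out the routine checks (normalization, distance, non-negativity) that the paper leaves implicit.
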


\begin{proof}
  Given a linear code $C$ with distance $d$, a feasible solution with value $\abs{C}^\el$ is $a_{(x_1, \dots,
    x_\el)} \coloneqq \prod_{i=1}^\el\One[x_i \in C]$.  The Fourier coefficient constraints are satisfied
  because
  \begin{align*}
    \sum_{x \in (\F_2^n)^\el} \prod_{i=1}^\el\One[x_i \in C] \chi_{\alpha_i}(x_i)
    & =
    2^{n\el} \prod_{i=1}^\el \widehat{\one_C}(\alpha_i),
  \end{align*}
  which are nonnegative by \cref{fact:fourier_dual_linear_code}.
\end{proof}

The corresponding hierarchy for non-linear codes $\FLP(n,d,\el)$ is defined over the variables $a_x \;{(x \in
  (\F_2^n)^\el)}$ as:
\begin{align*}
  \max \quad
  & \sum_{x \in (\F_2^n)^\el} a_{x}
  \\
  \text{s.t.} \quad
  & a_{0} = 1
  & &
  & & (\text{Normalization})
  \\
  & a_{(x_1, \dots, x_\el)} = 0
  & & \exists i \in [\el], \abs{x_i} \in \{1,\dots, d-1\}
  & & (\text{Distance constraints})
  \\
  & \sum_{x \in (\F_2^n)^\el} a_x\chi_\alpha(x) \geq 0
  & & \forall \alpha \in (\F_2^n)^\el
  & & (\text{Fourier coefficients})
  \\
  & a_x \geq 0
  & & \forall x \in (\F_2^n)^\el
  & & (\text{Non-negativity}).
\end{align*}

It turns out that $\KLP$ is a symmetrization of $\FLP$ (and likewise for the programs $\KLP_\Lin$ and
$\FLP_\Lin$). We will briefly describe the technique of symmetrizing convex programs, which is also described
in the survey article by Vallentin~\cite{V19}.  The proof that $\KLP$ and $\FLP$ are equivalent continues at
\cref{prop:flp_klp}.

The technique exploits the fact that convex relaxations for the independence number $\alpha(H_{n,d})$ of the
Hamming cube graph $H_{n,d}$ of distance less than $d$ are highly symmetric, that is, programs that are
invariant under large permutation groups as defined below.

\begin{definition}[Program invariance]\label{def:invariance}
  Let $\calP$ be a linear program with variables $(a_x)_{x \in X}$ for some set $X$. We say that $\calP$ is
  invariant under a permutation $\sigma$ of $X$ if for all feasible solutions $(a_x)$, the point
  $a\cdot\sigma$ defined by $(a\cdot\sigma)_x \coloneqq a_{\sigma(x)}$ is also feasible, and the objective
  value is the same.

  Similarly, a semi-definite program $\calP$ with variable $M\in \R^{X \times X}$ is invariant under $\sigma$
  if for all feasible $M$, the matrix $M\cdot\sigma$ defined by $(M\cdot\sigma)[x,y] \defeq M[\sigma(x),
    \sigma(y)]$ is also feasible, and the objective value is the same.

  The group of permutations of $X$ under which $\calP$ is invariant is called the \emph{automorphism group} of
  $\calP$ and is denoted $\Aut(\calP)$.
\end{definition}

If the input of a program $\calP$ is a graph $G$ and the program only depends on the isomorphism class of $G$,
then the program is invariant under the automorphism group $\Aut(G)$ of the graph $G$. For convex relaxations
such as the Lov{\'a}sz $\vartheta$-function or the Sum-of-Squares hierarchy, the variables of the program are
indexed by tuples of vertices from $G$, and thus a case of interest is when $\Aut(G)$ acts diagonally on
tuples of vertices.

By symmetrizing solutions, i.e., by averaging the values of the variables over the automorphism group
$\Aut(\calP)$, we may assume that the solution has the same symmetry:
\begin{fact}\label{fact:invariantsolution}
  For any $H\subseteq\Aut(\calP)$, the value $\val(\calP)$ equals the value of $\calP$ with the additional
  constraints $\forall\sigma\in H,\forall x \in X, a_x = a_{\sigma(x)}$ (or $\forall\sigma\in H,\forall x, y
  \in X, M[x,y] = M[\sigma(x), \sigma(y)]$ for an SDP).
\end{fact}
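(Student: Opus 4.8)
The plan is to show the two inequalities separately. The direction ``$\val(\calP)$ with the extra symmetry constraints $\leq \val(\calP)$'' is immediate, since adding constraints to a maximization problem can only decrease the optimum (every feasible point for the constrained program is feasible for $\calP$ with the same objective value). So the content is the reverse inequality: given any feasible solution of $\calP$, I must produce a feasible solution that is invariant under all of $H$ and has objective value at least as large.

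First I would fix a feasible solution $(a_x)_{x\in X}$ (or a feasible matrix $M$). Because $H$ is a subgroup of $\Aut(\calP)$, the program $\calP$ may be assumed to be invariant under every $\sigma\in H$; in particular each $a\cdot\sigma$ is feasible with the same objective value. The key step is to consider the average
\begin{align*}
  \bar a \coloneqq \frac{1}{\lvert H\rvert}\sum_{\sigma\in H} a\cdot\sigma,
\end{align*}
which in coordinates reads $\bar a_x = \frac{1}{\lvert H\rvert}\sum_{\sigma\in H} a_{\sigma(x)}$ (and analogously $\bar M = \frac{1}{\lvert H\rvert}\sum_{\sigma\in H} M\cdot\sigma$ in the SDP case). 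Since $\calP$ is a linear program, its feasible region is convex (an intersection of half-spaces and hyperplanes), so the convex combination $\bar a$ of feasible points is feasible; its objective value is the same convex combination of the objective values, hence unchanged. For the SDP case the feasible region is likewise convex (the PSD cone is convex and the affine constraints are preserved), so the same averaging argument applies. Finally, $\bar a$ is $H$-invariant: for any $\tau\in H$, reindexing the sum by $\sigma\mapsto\tau\sigma$ (a bijection of $H$ since $H$ is a group) gives $\bar a_{\tau(x)} = \frac{1}{\lvert H\rvert}\sum_{\sigma\in H} a_{\sigma(\tau(x))} = \frac{1}{\lvert H\rvert}\sum_{\sigma'\in H} a_{\sigma'(x)} = \bar a_x$, so $\bar a$ satisfies the additional constraints $\forall\sigma\in H,\forall x,\ a_x = a_{\sigma(x)}$; the matrix case is identical with $M[\tau(x),\tau(y)]$.

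There is really no hard part here — the only thing to be a little careful about is that ``$\calP$ is invariant under $\sigma$'' is used in exactly the form given by \cref{def:invariance} (namely that $a\cdot\sigma$ is feasible with the same objective whenever $a$ is), and that one does not need $H$ to be all of $\Aut(\calP)$, only a subgroup, for the reindexing trick to work. I would present this in a couple of lines, noting that it is the standard symmetrization argument for convex programs (as in Vallentin's survey~\cite{V19}), and that it is exactly what justifies replacing $\FLP$ by its symmetrization $\KLP$ in the sections that follow.
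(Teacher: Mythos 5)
Your proof is correct and is precisely the standard symmetrization argument that the paper takes for granted (the paper states this as a Fact with no proof, deferring to Vallentin's survey). One small mismatch with the statement as written: Fact~\ref{fact:invariantsolution} only assumes $H\subseteq\Aut(\calP)$ is a \emph{subset}, while your averaging and reindexing step genuinely needs $H$ to be a group. This is easy to repair and worth one line: the set $\{\sigma\in S_X \mid \forall x,\ a_x=a_{\sigma(x)}\}$ is a subgroup of $S_X$, so the added constraints for $H$ and for the subgroup $\langle H\rangle$ it generates are the same, and $\langle H\rangle\subseteq\Aut(\calP)$ since $\Aut(\calP)$ is a group; one may therefore replace $H$ by $\langle H\rangle$ (finite, since $X$ is finite) and run your averaging argument verbatim. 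With that remark inserted, everything else — feasibility of the average by convexity of the LP/SDP feasible region, preservation of objective by linearity, and $H$-invariance via the right-translation bijection $\sigma\mapsto\sigma\tau$ — is exactly as it should be.
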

A symmetrized solution is constant on each orbit of the group action on $X$ or $X^2$.
Therefore, the ``effective'' number of variables in the convex program is only the number
of orbits, which may be significantly smaller than even $\abs{V(G)}$.

For example, the graph $H_{n,d}$ has a large symmetry group:
\begin{fact}
  For $1 < d < n$, $\Aut(H_{n,d})$ is the hyperoctahedral group, which is the semidirect product
  $\FF_2^n\rtimes S_n$ in which $S_n$ permutes the coordinates and $\FF_2^n$ applies a bit flip.
\end{fact}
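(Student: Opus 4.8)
The inclusion $\FF_2^n\rtimes S_n\subseteq\Aut(H_{n,d})$ is the easy half: for $v\in\FF_2^n$ the translation $\tau_v\colon x\mapsto x+v$ and for $\sigma\in S_n$ the coordinate permutation $\pi_\sigma\colon x\mapsto(x_{\sigma^{-1}(i)})_{i\in[n]}$ both preserve Hamming distance $\Delta$, hence the edge relation $\Delta(\cdot,\cdot)\in\{1,\dots,d-1\}$; the $\tau_v$ form a regular normal subgroup isomorphic to $\FF_2^n$ on which $S_n$ acts by conjugation as claimed, and the two subgroups intersect trivially, so together they generate exactly $\FF_2^n\rtimes S_n$.

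For the reverse inclusion let $\phi\in\Aut(H_{n,d})$. The translations act transitively on vertices, so after replacing $\phi$ by $\tau_{\phi(\zero)}\comp\phi$ we may assume $\phi(\zero)=\zero$, and it suffices to show that any such $\phi$ is a coordinate permutation. This last step is routine \emph{once one knows that $\phi$ preserves $\Delta$}: fixing $\zero$ it then preserves Hamming weight, so it permutes the weight-one vectors $e_1,\dots,e_n$ by some $\sigma\in S_n$; and for arbitrary $x$ with support $S$ the identities $\Delta(\phi(x),e_{\sigma(i)})=\Delta(\phi(x),\phi(e_i))=\Delta(x,e_i)$ (equal to $\abs{x}-1$ for $i\in S$ and to $\abs{x}+1$ otherwise), together with $\abs{\phi(x)}=\abs{x}$, force $\supp(\phi(x))=\sigma(S)$, i.e.\ $\phi=\pi_\sigma$.

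So everything reduces to the claim that an automorphism of $H_{n,d}$ preserves $\Delta$, which is exactly where $1<d<n$ is used: for $d=1$ the graph is edgeless and for $d=n$ it is $K_{2^n}$ minus a perfect matching, both with far larger automorphism groups. One input is free: $\phi$ preserves the graph distance $d_{H}(x,y)=\ceil{\Delta(x,y)/(d-1)}$ (a shortest path flips $d-1$ differing coordinates at a time), which determines $\Delta$ up to an interval of length $d-1$. Pinning the exact value down is the delicate part, and it cannot be done by any single pairwise statistic — e.g.\ for $d\ge 3$ a Hamming-distance-$1$ pair and a Hamming-distance-$2$ pair have the same number of common neighbours. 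Instead one argues globally, for instance by exhibiting the hypercube $Q_n=H_{n,2}$ as a canonically reconstructible spanning subgraph of $H_{n,d}$ — distinguishing the $\Delta=1$ edges from the others via a finer invariant such as the isomorphism type of the subgraph induced on $N(x)\setminus N[y]$, whose relevant binomial counts stay non-degenerate precisely because $d-1<n$ — and then invoking the classical identity $\Aut(Q_n)=\FF_2^n\rtimes S_n$; alternatively one appeals to known results on automorphism groups of graphs arising from the binary Hamming association scheme $H(n,2)$. I expect this reconstruction step — certifying that $\Delta$ is graph-theoretically recoverable for every $1<d<n$ — to be the main obstacle, the two reductions around it being routine.
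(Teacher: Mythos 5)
The outer scaffolding is sound: the containment $\FF_2^n\rtimes S_n \le \Aut(H_{n,d})$, the reduction to $\phi(\zero)=\zero$ via transitivity of the translations, and the deduction that a $\Delta$-preserving automorphism fixing $\zero$ must be a coordinate permutation (via $\Delta(\phi(x),\phi(e_i))=\Delta(x,e_i)$) are all correct and routine.

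The central claim, however --- that any automorphism of $H_{n,d}$ fixing $\zero$ preserves the exact Hamming distance $\Delta$, not merely the edge relation and the graph distance derived from it --- is the entire content of the Fact, and you do not prove it. Your diagnostics are correct: graph distance only determines $\Delta$ up to an interval of length $d-1$, and (as your own computation for $d=3$ would confirm) the number of common neighbours of an edge does not separate $\Delta=1$ from $\Delta=2$, so a single pairwise statistic cannot suffice. The remedy you sketch --- reconstruct $Q_n=H_{n,2}$ as a canonically recoverable spanning subgraph and then invoke $\Aut(Q_n)=\FF_2^n\rtimes S_n$ --- is a plausible strategy, but the candidate invariant (isomorphism type of the subgraph on $N(x)\setminus N[y]$) is asserted to distinguish $\Delta=1$ edges from the rest ``because the binomial counts stay non-degenerate'' with no computation to back this up across the full parameter range $1<d<n$. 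You flag this yourself as ``the main obstacle,'' which is honest, but an honest flag does not close a gap: as written, the heart of the argument is a conjecture about an unverified graph invariant.

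For context, the paper states this as a Fact with no proof and no citation; it is a classical result about Cayley graphs of $\FF_2^n$ whose connection set is a union of weight shells $W_1\cup\cdots\cup W_{d-1}$. Either carry out the reconstruction in full (e.g.\ by first showing directly that $\phi$ preserves Hamming weight --- for instance by analyzing the graph induced on $N(\zero)$ or by using antipodal pairs, though note that when $(d-1)\mid n$ the antipodal pair is not the unique pair at maximum graph distance, so some care is needed) or cite a reference for the known result; a sketch of an unchecked invariant is not a proof.
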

Even though the hypercube has size $2^n$ and thus $\lvert V(H_{n,\ell})^\ell\rvert = 2^{n\ell}$, the number of
orbits of the diagonal action of $\Aut(H_{n,d})$ on $\ell$-tuples is only $\poly(n)$ for constant $\ell$. For
example, for $\el = 4$, viewing the hypercube momentarily as $\{-1,+1\}^n$, the orbit of $(x_1, x_2, x_3,
x_4)$ essentially only depends on the angles between the vectors: it is determined by the seven numbers
\begin{align}\label{eq:sevenangles}
  \ip{x_1}{x_2}, &&
  \ip{x_1}{x_3}, &&
  \ip{x_1}{x_4}, &&
  \ip{x_2}{x_3}, &&
  \ip{x_2}{x_4}, &&
  \ip{x_3}{x_4}, &&
  \sum_{i=1}^n x_{1,i} x_{2,i} x_{3,i} x_{4,i}.
\end{align}
Equivalently, it is determined by $\config_{n,\ell}^{\Delta}(x_2-x_1,x_3-x_1,x_4-x_1)$ (see
\cref{lem:configorbits}).

Since each of the numbers in~\eqref{eq:sevenangles} takes at most $n+1$ values, the effective number of
variables in the degree-4 Sum-of-Squares relaxation for $\alpha(H_{n,d})$ is at most $O(n^{7})$. Thus, the
search for an upper bound on an exponential-size object is reduced to a polynomial-size convex program! Of
course, to actually run this in polynomial time, one also needs to show that this polynomial-size convex
program can be computed in polynomial time (which rules out explicitly computing the original program then
taking a quotient).

We use the symmetrization technique to show that $\KLP$ and $\FLP$ are equivalent.
\begin{proposition}\label{prop:flp_klp}
  For every $n,\ell\in\NN_+$ and every $d\in\{0,1\ldots,n\}$, we have
  \begin{align*}
    \val(\FLP(n,d,\el)) & = \val(\KLP(n,d,\el)),\\
    \val(\FLP_\Lin(n,d,\el)) & = \val(\KLP_\Lin(n,d,\el)).
  \end{align*}
\end{proposition}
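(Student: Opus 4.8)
The plan is to prove $\val(\FLP(n,d,\el)) = \val(\KLP(n,d,\el))$ (and the linear analogue) by exhibiting the Krawtchouk program as a symmetrization of the Fourier program under the diagonal action of the coordinate-permutation group $S_n$. First I would fix the relevant group: since the distance constraints of $\FLP$ depend only on Hamming weights (of the $x_i$, or of spans, in the linear case), and since the Fourier-coefficient constraint indexed by $\alpha$ transforms into the one indexed by $\alpha\cdot\sigma$ under a simultaneous permutation of coordinates of all $x_i$ and all $\alpha_i$, the program $\FLP(n,d,\el)$ is invariant under the diagonal action of $S_n$ on $(\F_2^n)^\el$ (this is where \cref{lem:configorbits} is used: the orbits of this action on $(\F_2^n)^\el$ are exactly the symmetric difference configurations). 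Using \cref{fact:invariantsolution} with $H = S_n$, $\val(\FLP)$ equals the value of $\FLP$ with the extra constraints forcing $a_x$ to be constant on each $S_n$-orbit, i.e.\ $a_x = a_{\config_{n,\ell}^\Delta(x)}$ for a well-defined family $(a_g)_{g\in\im(\config_{n,\ell}^\Delta)}$.

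Next I would show that, after this symmetrization, the symmetrized $\FLP$ is literally $\KLP$ up to the change of variables $x \mapsto \config_{n,\ell}^\Delta(x)$. The objective $\sum_{x} a_x$ becomes $\sum_g \lvert g\rvert \, a_g$; but note the definition of $\KLP$ uses $\sum_g a_g$ — so I would need to reconcile this by absorbing $\lvert g\rvert = K_g(0)$ into the variables, i.e.\ setting the $\KLP$-variable to be $\lvert g\rvert$ times the symmetrized $\FLP$-variable, or by checking that the paper's $\KLP$ variables are already the ``configuration profile'' quantities $a^C_g$ which count $\ell$-tuples rather than being constant on orbits. The normalization $a_0 = 1$ matches since the trivial configuration is a singleton orbit. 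The distance constraints match by definition of $\forbconfig(n,d,\ell)$ (resp.\ $\forbconfig_\Lin$), using that an $\ell$-tuple lies in a forbidden orbit iff its configuration lies in $\forbconfig$. The non-negativity constraints match trivially. The key computation is the MacWilliams/Fourier constraints: I would average the constraint $\sum_x a_x \chi_\alpha(x) \ge 0$ over the $S_n$-orbit of $\alpha$; since $a_x$ is already orbit-constant, this gives $\sum_g a_g \sum_{x\in g}\bigl(\tfrac{1}{\lvert h\rvert}\sum_{\alpha\in h}\chi_\alpha(x)\bigr) \ge 0$ where $h$ is the orbit of $\alpha$, and by the defining formula \eqref{eq:highkrawtchouk} together with \cref{lem:reflection} the inner double sum is exactly $K_h(g)$ (up to the normalization factor $\lvert h\rvert$, which is a positive constant and can be dropped from an inequality). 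So the symmetrized Fourier constraints are precisely the higher-order MacWilliams inequalities, and conversely every MacWilliams inequality arises this way.

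Having matched objective, variables, and all four constraint families, I would conclude that the symmetrized $\FLP$ and $\KLP$ have equal value, hence $\val(\FLP) = \val(\KLP)$; the argument for the linear versions is identical, replacing $\forbconfig$ by $\forbconfig_\Lin$ and noting that the distance constraints of $\FLP_\Lin$ (``$\exists w\in\Span(x_1,\dots,x_\el)$ with $\lvert w\rvert\in[d-1]$'') are $S_n$-invariant and depend only on the configuration via the subset-sums $g(J) = \lvert\sum_{j\in J} x_j\rvert$. I expect the main obstacle to be bookkeeping the normalization: one must be careful about whether the $\KLP$ variables are orbit-averages or orbit-sums, and whether the objective $\sum_g a_g$ versus $\sum_g \lvert g\rvert a_g$ introduces a discrepancy — but since both programs take $\max$ over the same feasible region up to an invertible diagonal rescaling of variables that also rescales the objective consistently, the optimal values agree. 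A secondary point requiring care is verifying that $\FLP$ really is invariant under $S_n$ (not just that symmetrized solutions are feasible): one needs the constraint set to be permuted among itself, which for the Fourier constraints uses $\chi_{\alpha\cdot\sigma}(x\cdot\sigma) = \chi_\alpha(x)$, and for the distance constraints uses that Hamming weight (of coordinates, or of spans) is permutation-invariant.
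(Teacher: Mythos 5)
Your proof is correct and takes essentially the same route as the paper: symmetrize $\FLP$ under the diagonal $S_n$-action using \cref{fact:invariantsolution} and \cref{lem:configorbits}, pass to orbit variables with $a'_g = \lvert g\rvert\,a_x$, and match the Fourier-coefficient constraints to the higher-order MacWilliams inequalities via the reflection identity \cref{lem:reflection}. The only slip is a minor bookkeeping one: the inner double sum $\sum_{x\in g}\tfrac{1}{\lvert h\rvert}\sum_{\alpha\in h}\chi_\alpha(x)$ equals $K_g(h)=\tfrac{\lvert g\rvert}{\lvert h\rvert}K_h(g)$, not $K_h(g)$ up to a factor of $\lvert h\rvert$ alone, but since you separately absorb the $\lvert g\rvert$ into the change of variables the overall argument goes through.
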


\begin{proof}
  Recall that the natural right action of $S_n$ on $\FF_2^n$ is given by $(x\cdot\sigma)_i\coloneqq x_{\sigma(i)}$
  ($x\in\FF_2^n$, $\sigma\in S_n$, $i\in[n]$) and consider the diagonal action of $S_n$ on $(\FF_2^n)^\ell$
  given by
  \begin{align*}
    (x_1,\ldots,x_\ell)\cdot\sigma & \coloneqq (x_1\cdot\sigma,\ldots,x_\ell\cdot\sigma)
    \quad ((x_1,\ldots,x_\ell)\in (\FF_2^n)^\ell, \sigma\in S_n).
  \end{align*}
  It is straightforward to check that $\FLP(n,d,\el)$ is invariant under this diagonal action.

  By \cref{fact:invariantsolution} we may consider only solution to the LP that are symmetrized over $S_n$,
  that is, we have $a_x = a_y$ for each $x,y \in (\F_2^n)^\el$ in the same orbit of the $S_n$-action.

  Recall from \cref{lem:configorbits} that $\ell$-tuples of words are in the same $S_n$-orbit if and only if
  they have the same symmetric difference configuration. We claim that the correspondence between the program
  $\KLP$ with variables $(a'_g)_{g\in\im(\config_{n,\ell}^\Delta)}$ and $\FLP$ is
  \begin{align*}
    a'_g = \lvert g\rvert \cdot a_{x_1, \dots, x_\el} \qquad \text{for any }(x_1, \dots, x_\el) \in g.
  \end{align*}

  It is straightforward to check that the objective function, normalization, distance, and non-negativity
  constraints for $\FLP(n,d,\ell)$ (under the assumption of an $S_n$-invariant solution) match exactly those
  of $\KLP(n,d,\ell)$. For the MacWilliams inequalities, note that for every
  $h\in\im(\config_{n,\ell}^\Delta)$ and every $\alpha\in h$, we have
  \begin{align*}
    & \!\!\!\!\!\!
    \sum_{g \in \config(n,\el)} a'_g K_h(g) \geq 0  
    \\
    & \iff
    \sum_{g \in \config(n, \el)} a'_g \frac{\abs{h}}{\abs{g}} K_g(c) \geq 0
    & (\text{Reflection, \cref{lem:reflection}})
    \\
    & \iff
    \sum_{g \in \config(n, \el)} \frac{a'_g}{\lvert g\rvert} K_g(h) \geq 0
    \\
    & \iff
    \sum_{x \in (\F_2^n)^\el} a_x \chi_\alpha(x) \geq 0
    & (\text{Definition of }K_g),
  \end{align*}
  where the third equivalence follows since $a_g/\lvert g\rvert=a_x$ for every $x \in g$.
    
  The same proof goes through for $\FLP_\Lin$ and $\KLP_\Lin$.
\end{proof}

\begin{remark}
  The linear programs $\FLP$ and $\FLP_\Lin$ are \emph{not} invariant under the other automorphisms of the
  hypercube of the form $x \mapsto x + z$ ($z \in \F_2^n$), because of the normalization constraint and the
  distance constraints. It makes more sense to view the underlying space as $\F_2^n$ instead of the hypercube,
  which does not have the $\FF_2^n$ automorphism because the origin is treated specially.
\end{remark}

There is actually more symmetry in the programs than just $S_n$.  In the case of the program for non-linear
codes, there is a symmetry under the right action of $S_\el$ on $(\FF_2^n)^\ell$ that permutes the
\emph{words} $x_1, \dots, x_\el$, that is, we have $(x_1,\ldots,x_\ell)\cdot\tau\coloneqq
(x_{\tau(1)},\ldots,x_{\tau(\ell)})$ ($(x_1,\ldots,x_\ell)\in(\FF_2^n)^\ell$, $\tau\in S_\ell$).  In the case
of the program for linear codes, we have symmetry under the action of $\GL_\ell(\F_2)$ that applies a basis
change to $(x_1, \dots, x_\el)$, that is, it is given by
\begin{align*}
  (A\cdot x)_i & \coloneqq \sum_{j\in[\ell]} A[i,j]\cdot x_j \in \FF_2^n
\end{align*}
for every $A\in\GL_\ell(\FF_2)$, every $x\in(\FF_2^n)^\ell$ and every $i\in[\ell]$. The distance constraints
are evidently invariant under this action as it does not change the linear subspace spanned by
$(x_1,\ldots,x_\ell)$. The Fourier constraints are invariant since
\begin{align*}
  \chi_\alpha(A\cdot x)
  & =
  \chi_{A^\top\cdot\alpha}(x)
\end{align*}
for every $x,\alpha\in\FF_2^\ell$.

Note that the actions of $\GL_\ell(\FF)$ and $S_n$ commute with each other and thus induce an action of the
direct product $\GL_\ell(\FF)\times S_n$. Another reasonable definition of the higher-order Krawtchouk
polynomials and linear program symmetrizes under this larger group action of $\GL_\ell(\FF)\times S_n$. There
is one Krawtchouk polynomial and one free variable for each orbit of this action.
\begin{definition}[Fully symmetrized higher-order Krawtchouks]
  Let $O\coloneqq (\FF_2^n)^\el/(\GL_\ell(\FF_2)\times S_n)$ be the set of orbits of the
  $(\GL_\ell(\FF_2)\times S_n)$-action as above. For each $h\in O$ we define the higher-order Krawtchouk
  polynomial $K_h\colon O\to\RR$ by
  \begin{align*}
    K_h(g)
    & \coloneqq
    \sum_{(\alpha_1, \dots, \alpha_\el) \in h} \prod_{j=1}^\el\chi_{\alpha_j}(x_j),
  \end{align*}
  where $(x_1, \dots, x_\el)$ is any element in the orbit $g\in O$.
\end{definition}

Since the symmetry group is larger and the number of orbits is smaller, the size of the resulting LP is
smaller. However, since $\lvert\GL_\ell(\FF_2)\rvert = \prod_{t=0}^{\ell-1} (2^\ell - 2^t) = O_\el(1)$, for a constant
$\ell$, this would only decrease the size of $\KLP$ by a constant factor. For practical computations,
constant factors make a difference and this symmetrization should likely be performed. We chose our definition
of Krawtchouks in \cref{sec:binary_warmup} because the orbits are simpler to describe (being captured by
explicit combinatorial objects, configuration functions) and we can compute the set of orbits and the
Krawtchouk polynomials efficiently (see \cref{prop:complexity}).

There is an equivalent interpretation of $(\GL_\ell(\FF_2)\times S_n)$-orbits as ``subspace weight profiles'' as
follows. The right action of $S_n$ naturally induces an action over linear subspaces of $\FF_2^n$ given by
\begin{align*}
  W\cdot\sigma & \coloneqq \{w\cdot\sigma \mid w\in W\} \quad (W\leq\FF_2^n, \sigma\in S_n).
\end{align*}
It is straightforward to see that two $\ell$-tuples $(x_1,\ldots,x_\ell),(y_1,\ldots,y_\ell)\in(\FF_2^n)^\ell$
are in the same $(\GL_\ell(\FF_2)\times S_n)$-orbit if and only if $\Span\{x_1,\ldots,x_\ell\}$ and
$\Span\{y_1,\ldots,y_\ell\}$ are in the same $S_n$-orbit, which in turn is equivalent to saying that both
spaces have the same dimension, say $k$, and there are ordered bases $b^x = (b^x_1,\ldots,b^x_k)$ and $b^y =
(b^y_1,\ldots,b^y_k)$ of these spaces respectively such that $\config_{n,k}^\Delta(b^x) =
\config_{n,k}^\Delta(b^y)$. Thus, the hierarchy corresponding to the $(\GL_\ell(\FF_2)\times S_n)$-action has
an interesting interpretation as measuring weight statistics of linear subspaces of the linear code of
dimension at most $\ell$.

\subsection{The Hierarchy as an SDP}\label{sec:diagonalization}

The LP hierarchy is also equivalent to an SDP relaxation with the harsh
constraint that the SDP matrix must be \emph{translation invariant}.

Define the semi-definite program $\TransSDP(n,d,\el)$ as
\begin{align*}
  \max \quad
  & \sum_{x \in (\F_2^n)^\el} M[0, x]
  \\
  \text{s.t.} \quad
  & M[0, 0] = 1
  & &
  & & (\text{Normalization})
  \\
  & M[0, (x_1, \dots, x_\el)] = 0
  & & \exists i \in [\el], \abs{x_i} \in \{1,\dots, d-1\}
  & & (\text{Distance constraints})
  \\
  & M[x, y] = M[0, y-x]
  & & \forall x, y \in (\F_2^n)^\el
  & & (\text{Translation symmetry})
  \\
  & M \psdgeq 0
  & &
  & & (\text{PSD-ness})
  \\
  & M[x, y] \geq 0
  & & \forall x, y \in (\F_2^n)^\el
  & & (\text{Non-negativity}),
\end{align*}
where the variable is $M \in \R^{(\F_2^n)^\el \times (\F_2^n)^\el}$.

To form $\TransSDP_\Lin(n,d,\el)$, replace the distance constraints by
\begin{align*}
  M[0, (x_1, \dots, x_\el)] & = 0
  \quad \exists w \in \Span(x_1, \dots, x_\el), \abs{w} \in \{1, \dots, d-1\}.
\end{align*}

The crucial translation symmetry property of $\TransSDP$ ensures $M$ lies in the \emph{commutative} matrix
algebra $\Span\{D_z \mid z \in (\F_2^n)^\el\}$, where
\begin{align*}
  D_z[x,y] & \coloneqq \One[y - x = z].
\end{align*}
The coefficient of $M$ on $D_z$ is $M[0,z]$.

Since the matrices $D_z$ commute, they are simultaneously diagonalizable. More specifically, their common
eigenvectors are the Fourier characters.
\begin{fact}
  The matrices $D_z$ are simultaneously diagonalized by $(\chi_\alpha \mid \alpha \in (\FF_2^n)^\el)$ with the
  eigenvalue of $D_z$ on $\chi_\alpha$ being $\chi_\alpha(z)$.
\end{fact}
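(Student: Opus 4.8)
The plan is to reduce the fact to a single one-line eigenvector computation together with the standard completeness of the Fourier basis. Concretely, I would first recall that for $\alpha=(\alpha_1,\dots,\alpha_\el)\in(\FF_2^n)^\el$ the character $\chi_\alpha\colon(\FF_2^n)^\el\to\RR$ is the product $\chi_\alpha(x)=\prod_{j=1}^\el\chi_{\alpha_j}(x_j)=(-1)^{\sum_{j=1}^\el\ip{\alpha_j}{x_j}}$ already used in $\FLP$, and that these are exactly the Fourier characters of the finite abelian group $(\FF_2^n)^\el\cong\FF_2^{n\el}$: they are pairwise orthogonal under the uniform inner product, there are $2^{n\el}$ of them, and hence they form a basis of the space $\RR^{(\FF_2^n)^\el}$ on which the matrices $D_z$ act. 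I would also note the homomorphism identity $\chi_\alpha(x+z)=\chi_\alpha(x)\,\chi_\alpha(z)$, which is immediate from the displayed formula since $\ip{\alpha_j}{x_j+z_j}=\ip{\alpha_j}{x_j}+\ip{\alpha_j}{z_j}$.

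Then, for fixed $z,\alpha\in(\FF_2^n)^\el$, I would compute directly: for every $x\in(\FF_2^n)^\el$,
\begin{align*}
  (D_z\chi_\alpha)(x)
  & = \sum_{y\in(\FF_2^n)^\el} D_z[x,y]\,\chi_\alpha(y)
  = \sum_{y\in(\FF_2^n)^\el} \One[y-x=z]\,\chi_\alpha(y)
  = \chi_\alpha(x+z)
  = \chi_\alpha(z)\,\chi_\alpha(x),
\end{align*}
where the third equality uses that $y-x=z$ is equivalent to $y=x+z$ over $\FF_2$, and the fourth uses the homomorphism identity. Hence $D_z\chi_\alpha=\chi_\alpha(z)\cdot\chi_\alpha$, i.e., $\chi_\alpha$ is an eigenvector of $D_z$ with eigenvalue $\chi_\alpha(z)$. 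Since this holds for \emph{every} $z$ and the basis $(\chi_\alpha)_{\alpha\in(\FF_2^n)^\el}$ is independent of $z$, the family $\{D_z\}_z$ is simultaneously diagonalized by the characters, with $D_z$ having eigenvalue $\chi_\alpha(z)$ on $\chi_\alpha$, which is the assertion.

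There is essentially no obstacle here: the whole content is the displayed computation. The only point requiring a moment's care is that subtraction equals addition in characteristic $2$, so that $D_z$ acts as translation by $z$ (and in any case $z=-z$, so the eigenvalue is unaffected). One could instead argue abstractly that the $D_z$ pairwise commute, since $D_zD_{z'}=D_{z+z'}=D_{z'}D_z$, and are normal, hence simultaneously diagonalizable, and then identify the common eigenbasis; but the direct argument already exhibits the eigenvectors and eigenvalues explicitly, which is what is needed for the subsequent diagonalization of the translation-invariant matrix $M=\sum_z M[0,z]\,D_z$.
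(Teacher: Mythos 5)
Your proof is correct. The paper states this as a Fact without giving a proof (it is the standard statement that the characters of the finite abelian group $(\FF_2^n)^\el$ simultaneously diagonalize all translation/circulant matrices over that group), and your direct computation $(D_z\chi_\alpha)(x)=\chi_\alpha(x+z)=\chi_\alpha(z)\chi_\alpha(x)$, together with the observation that the $2^{n\el}$ orthogonal characters form a basis independent of $z$, is exactly the argument one would supply.
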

Therefore, the PSD-ness constraint in $\TransSDP$ is particularly simple:
to check that $\lambda_z D_z \psdgeq 0$, it is equivalent to
check $\sum_{z \in (\F_2^n)^\el}\lambda_z \chi_\alpha(z) \geq 0$  for all $\alpha \in (\F_2^n)^\el$.
This is a linear constraint on the $\lambda_z$, and hence we can express the SDP
as an LP, giving yet another formulation of the hierarchy.

\begin{proposition}
  For every $n,\ell\in\NN_+$ and every $d\in\{0,1,\ldots,n\}$, we have
  \begin{align*}
    \val(\FLP(n,d,\el)) & = \val(\TransSDP(n,d,\el)),\\
    \val(\FLP_\Lin(n,d,\el)) & = \val(\TransSDP_\Lin(n,d,\el)).
  \end{align*}
\end{proposition}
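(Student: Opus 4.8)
The plan is to exhibit an explicit value-preserving bijection between feasible solutions of the two programs, so that their optima coincide; the correspondence is precisely the one dictated by the translation-symmetry constraint. To a feasible matrix $M$ of $\TransSDP(n,d,\el)$ I associate the point $a_x \coloneqq M[0,x]$, and conversely to a feasible point $(a_x)_{x\in(\F_2^n)^\el}$ of $\FLP(n,d,\el)$ I associate the matrix defined by $M[x,y]\coloneqq a_{y-x}$. The first thing to check is that this is well-defined in both directions. In one direction, the translation-symmetry constraint $M[x,y]=M[0,y-x]$ says exactly that a feasible $M$ is recovered from its first row by $M[x,y]=a_{y-x}$, i.e.\ that $M=\sum_{z\in(\F_2^n)^\el} a_z D_z$ with $a_z=M[0,z]$. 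In the other direction, any such linear combination $\sum_z a_z D_z$ automatically satisfies translation symmetry, since $D_z[x,y]=\One[y-x=z]$ depends only on $y-x$. So the two feasible regions sit inside the single commutative algebra $\Span\{D_z\}$, parametrized the same way.

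Next I would match the objective and the remaining constraints term by term under this identification. The objective $\sum_x M[0,x]$ is literally $\sum_x a_x$; normalization $M[0,0]=1$ is $a_0=1$; and the distance constraints $M[0,(x_1,\dots,x_\el)]=0$ are verbatim the distance constraints on $(a_x)$. In the linear versions, the modified condition ``$\exists\, w\in\Span(x_1,\dots,x_\el)$ with $\abs{w}\in\{1,\dots,d-1\}$'' appears identically in $\TransSDP_\Lin$ and in $\FLP_\Lin$, so the same matching goes through unchanged. For non-negativity one observes that as $(x,y)$ ranges over all pairs the difference $y-x$ ranges over all of $(\F_2^n)^\el$, so ``$M[x,y]\ge 0$ for all $x,y$'' is equivalent to ``$a_z\ge 0$ for all $z$''.

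The one substantive point is the equivalence between the PSD-ness constraint on $M$ and the Fourier-coefficient constraints on $(a_x)$, and here I would invoke the fact already recorded in the excerpt: the commuting matrices $D_z$ are simultaneously diagonalized by the characters $(\chi_\alpha \mid \alpha\in(\F_2^n)^\el)$, with $D_z$ acting on $\chi_\alpha$ by the scalar $\chi_\alpha(z)\in\{-1,+1\}$. Hence $M=\sum_z a_z D_z$ acts on $\chi_\alpha$ by the real scalar $\sum_z a_z\,\chi_\alpha(z)$, and since the $2^{n\el}$ characters form an orthogonal basis of $\R^{(\F_2^n)^\el}$ these scalars are exactly the eigenvalues of $M$ (with multiplicity). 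As $M$ is real symmetric ($D_z[x,y]=\One[y-x=z]=\One[x-y=z]=D_z[y,x]$ because $-z=z$ over $\F_2$), we conclude $M\psdgeq 0$ if and only if $\sum_z a_z\,\chi_\alpha(z)\ge 0$ for every $\alpha$, which is precisely the Fourier-coefficient constraint of $\FLP$. This closes the loop in both directions and yields $\val(\FLP(n,d,\el))=\val(\TransSDP(n,d,\el))$; the identical argument with the linear distance constraints gives $\val(\FLP_\Lin(n,d,\el))=\val(\TransSDP_\Lin(n,d,\el))$. I do not expect a serious obstacle here — the only step needing a little care is phrasing this PSD $\iff$ Fourier-positivity equivalence cleanly (accounting for the full eigenbasis and the reality of the eigenvalues), and that is essentially already packaged in the cited fact about the $D_z$.
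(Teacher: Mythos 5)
Your proof is correct and takes essentially the same approach as the paper: the paper's own proof is a one-liner ("the formal correspondence of the variables is $M[0,x]=a_x$, the Fourier coefficient constraints are equivalent to PSD-ness as described above, and the other constraints also match up"), deferring to the preceding discussion of the commutative algebra $\Span\{D_z\}$ and its diagonalization by characters. You have simply spelled out those deferred details carefully — including the observation that $D_z$ is symmetric because $-z=z$ over $\F_2$, which the paper leaves implicit.
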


\begin{proof}
    The formal correspondence of the variables is $M[0, x] = a_x$.
    The Fourier coefficient constraints in $\FLP$ are equivalent to PSD-ness
    as described above,
    and the other constraints also match up.
\end{proof}

Along with \cref{prop:flp_klp}, the above implies that $\TransSDP$ also has the same value as $\KLP$.

\begin{remark}
  In previous convex relaxations for $A_2(n,d)$, in order to implement the program efficiently, a key
  technical step has been finding an explicit block diagonalization of the SDP matrix (which
  reduces the program size). This step requires significant technical
  work~\cite{Schrijver05,GMS12,Gijswijt09}. An advantage of 
  the LP hierarchy is that complete diagonalization is trivial. 
\end{remark}

\subsection{\texorpdfstring{The Hierarchy as $\vartheta'$}{The Hierarchy as Theta'}}

The hierarchy can also be seen as computing the (modified) Lov\'{a}sz $\vartheta'$ function on progressively
larger graphs, whose definition is recalled below. In fact, this formulation of the hierarchy holds for any
\emph{association scheme} (see \cref{thm:vartheta'} below).

\begin{definition}[$\vartheta'$ Program]\label{def:vartheta'}
  The (modified) Lov\'{a}sz $\vartheta'$ function is defined as follows. For a graph $G$, $\vartheta'(G)$ is the
  optimum value of the semi-definite program $\cS(G)$ given by
  \begin{align*}
    \max \quad
    & \ip{J}{M}
    \\
    \text{s.t.} \quad
    & \tr M = 1
    & &
    & & (\text{Normalization})
    \\
    & M[u,v] = 0
    & & \forall \{u,v\} \in E(G)
    & & (\text{Independent set})
    \\
    & M \succeq 0
    & &
    & & (\text{PSD-ness})
    \\
    & M[u,v] \ge 0
    & & \forall u,v \in V(G)
    & & (\text{Non-negativity}),
  \end{align*}
  where the variable is $M\in\RR^{V\times V}$ symmetric, $J$ is the all ones matrix and
  $\ip{A}{B}\coloneqq\tr(A^\top B)$.

  By strong duality $\vartheta'(G)$ is also the optimum value of the dual semi-definite program $\cS'(G)$ given
  by
  \begin{align*}
    \min \quad
    & \beta
    \\
    \text{s.t.}\quad
    & \beta I - N\succeq 0
    & &
    & & (\text{PSD-ness})
    \\
    & N[u,v]\geq 1
    & & \forall u,v\in V(G)\text{ with }\{u,v\}\notin E
    & & (\text{Independent set}),
  \end{align*}
  where the variables are $N\in\RR^{V\times V}$ symmetric and $\beta\in\RR$.
\end{definition}

It is straightforward to see that $\vartheta'(G)$ is an upper bound for the independence number of the graph
$G$ since if $A\subseteq V(G)$ is an independent set, then $\One_A\One_A^\top/\lvert A\rvert$ is a feasible
solution of $\cS(G)$ with value $\lvert A\rvert$.

In the same way that a code $C\subseteq\FF_2^n$ of distance at least $d$ can be seen as an independent set in
the graph $H_{n,d}$, we can see $C^\ell$ as an independent set in exclusion graphs defined below based on the
sets $\forbconfig(n,d,\ell)$ and $\forbconfig_\Lin(n,d,\ell)$ of \cref{def:holp}.

\begin{definition}[Exclusion Graph]
  We define the exclusion graph $H_{n,d,\ell}$ to have vertex set $(\F_2^n)^\el$ and edge set
  \begin{align*}
    E(H_{n,d,\ell})
    & \coloneqq
    \left\{
    (x,y)\in\binom{(\FF_2^n)^\ell}{2}
    \;\middle\vert\;
    \config_{n,\ell}^\Delta(x-y)\in\forbconfig(n,d,\ell)
    \right\}.
  \end{align*}

  We define $H_{n,d,\ell}^\Lin$ analogously replacing $\forbconfig(n,d,\ell)$ with
  $\forbconfig_\Lin(n,d,\ell)$.
\end{definition}

\begin{lemma}\label{lem:theta-formulation}
  For every $n,\ell\in\NN_+$ and every $d\in\{0,1,\ldots,n\}$, we have
  \begin{align*}
    \val(\TransSDP(n,d,\ell)) & = \val(\vartheta'(H_{n,d,\ell})),\\
    \val(\TransSDP_\Lin(n,d,\ell)) & = \val(\vartheta'(H^\Lin_{n,d,\ell})).
  \end{align*}
\end{lemma}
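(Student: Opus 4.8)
The plan is to show that $\TransSDP(n,d,\ell)$ is, after a change of variables, \emph{literally} the symmetrization of the program $\cS(H_{n,d,\ell})$ under the action of the translation group $(\FF_2^n)^\ell$, and then invoke \cref{fact:invariantsolution}. First I would observe that $\vartheta'(G)$ for a vertex-transitive graph $G$ can be computed over $\Aut(G)$-invariant solutions, and that $(\FF_2^n)^\ell$ acts on $V(H_{n,d,\ell}) = (\FF_2^n)^\ell$ by translation; this action is a subgroup of $\Aut(H_{n,d,\ell})$ because the edge set depends only on the symmetric difference configuration of $x - y$, which is translation invariant. By \cref{fact:invariantsolution} applied with $H$ equal to this translation subgroup, $\vartheta'(H_{n,d,\ell})$ is achieved by a solution $M$ satisfying $M[x,y] = M[x+z, y+z]$ for all $z$, i.e.\ exactly the translation symmetry constraint of $\TransSDP$, so $M[x,y] = M[0, y-x]$ for a function $z \mapsto M[0,z]$.

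Next I would match up the remaining constraints and the objective under the correspondence $M[0,x] = a_x$ (equivalently $N[x,y]$ in the dual, but I would work with the primal $\cS(G)$). The normalization $\tr M = 1$ becomes $\sum_x M[x,x] = 2^{n\ell} \cdot M[0,0]$; this differs from the constraint $M[0,0]=1$ in $\TransSDP$ by the harmless scaling factor $2^{n\ell}$, and the objective $\ip{J}{M} = \sum_{x,y} M[x,y] = 2^{n\ell} \sum_z M[0,z]$ carries the same factor, so the two optima agree (the factor cancels in the ratio, as it must since both programs upper bound the same combinatorial quantity raised to the $\ell$-th power). The independent-set constraints $M[u,v] = 0$ for $\{u,v\} \in E(H_{n,d,\ell})$ become, after translation reduction, $M[0,z] = 0$ whenever $\config_{n,\ell}^\Delta(z) \in \forbconfig(n,d,\ell)$, which is exactly the distance constraints of $\TransSDP$; the PSD-ness and entrywise non-negativity constraints transfer verbatim. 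Conversely, any feasible $M$ for $\TransSDP$ extends (via $M[x,y] := M[0,y-x]$) to a translation-invariant feasible solution of $\cS(H_{n,d,\ell})$ with matching objective up to the same $2^{n\ell}$ scaling. This gives $\val(\TransSDP(n,d,\ell)) = \vartheta'(H_{n,d,\ell})$, and the identical argument with $\forbconfig_\Lin$ in place of $\forbconfig$ handles the linear case.

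The one point requiring a little care — and what I expect to be the main (minor) obstacle — is the bookkeeping of the normalization/objective scaling: $\cS(G)$ normalizes by the \emph{trace} while $\TransSDP$ normalizes the single entry $M[0,0]$, so one must verify that rescaling a translation-invariant $M$ by $1/\tr M = 1/(2^{n\ell} M[0,0])$ sends a feasible $\cS$-solution to a feasible $\TransSDP$-solution with objective value unchanged, and vice versa that dividing by $M[0,0]$ in the other direction is legitimate (this needs $M[0,0] > 0$, which follows from PSD-ness together with $\tr M = 1$ and non-negativity — if $M[0,0]=0$ then by PSD-ness the whole first row/column vanishes, forcing $M = 0$ by translation symmetry, contradicting $\tr M = 1$). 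Once that scaling is pinned down, everything else is a direct constraint-by-constraint comparison, and the proof is a short paragraph invoking \cref{fact:invariantsolution} and checking the dictionary $M[0,x] \leftrightarrow a_x$.
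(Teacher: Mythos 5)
Your proof is correct and follows essentially the same route as the paper's proof: both invoke \cref{fact:invariantsolution} to restrict $\cS(H_{n,d,\ell})$ to translation-invariant solutions and then identify the resulting program with $\TransSDP(n,d,\ell)$ via the scaling $M = 2^{n\ell}\cdot M'$. Your additional care about the trace-vs.\ entry normalization and the positivity of $M[0,0]$ is a sound clarification of a point the paper treats implicitly (for translation-invariant $M'$, the diagonal is constant, so $\tr M' = 1$ forces $M'[0,0] = 2^{-n\ell} > 0$ directly, which is an even quicker route than your PSD argument).
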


\begin{proof}
  The program $\cS(H_{n,d,\ell})$ corresponding to $\vartheta'(H_{n,d,\ell})$ is invariant under $\Aut(G)$, so
  it is invariant in particular under the translation action of $\FF_2^n$ on itself.
    
  Therefore, by \cref{fact:invariantsolution}, we may consider only solutions of $\cS(H_{n,d,\ell})$ that are
  translation invariant. Now there is a correspondence between solutions $M$ for $\TransSDP$ and translation
  invariant solutions $M'$ for $\cS(H_{n,d,\ell})$ given by $M = 2^{n\el}\cdot M'$. The proof goes through
  similarly for the linear case.
\end{proof}

% LocalWords:  Unsymmetrized Krawtchouk unsymmetrized nonnegative symmetrization symmetrizing Vallentin SDP
% LocalWords:  symmetrized hyperoctahedral semidirect MacWilliams symmetrizes Krawtchouks diagonalizable PSD
% LocalWords:  diagonalized

\section{Generalized Krawtchouk Hierarchies from Association Schemes}
\label{sec:association}

In this section, we recall some of the basic definitions and results of association scheme theory and show
that our construction generalizes nicely to translation schemes with an underlying left module structure over
some ring, producing a translation scheme ``refining'' a tensor power of the original scheme; once this is
shown, MacWilliams identities and inequalities follow from the theory of translation schemes. The general
theory of association schemes will also be used to show our completeness and lifting results
of~\cref{sec:completeness} and~\cref{sec:lifting}.

Further background on association scheme theory can be found in the survey article by Martin and
Tanaka~\cite{MT09}.

\subsection{Association Scheme Theory Review}

\begin{definition}[Association schemes]
  An \emph{association scheme} is a pair $(X,R)$ where $X$ is a finite set and $R\subseteq 2^{X\times X}$ is a
  collection of non-empty subsets of $X\times X$, called \emph{relations}, satisfying the following
  properties.
  \begin{enumerate}
  \item $R$ is a partition of $X\times X$ into non-empty subsets.
  \item The \emph{diagonal relation} $\cD_X\coloneqq\{(x,x)\mid x\in X\}$ is an element of $R$.
  \item For every $r\in R$, the \emph{transposed relation} $r^\top\coloneqq\{(y,x)\mid (x,y)\in r\}$ is an element
    of $R$.
  \item For every $r,s,t\in R$, there exists an \emph{intersection number} $p_{rs}^t\in\NN$ such that for
    every $(x,y)\in t$, we have
    \begin{align*}
      p_{rs}^t & = \lvert\{z\in X \mid (x,z)\in r \land (z,y)\in s\}\rvert.
    \end{align*}
  \end{enumerate}
  Furthermore, the association scheme $S$ is called:
  \begin{enumerate}[label={\arabic*}.]
  \item \emph{Commutative}, if $p_{rs}^t = p_{sr}^t$ for every $r,s,t\in R$.
  \item \emph{Symmetric}, if $r^\top = r$ for every $r\in R$.
  \end{enumerate}
\end{definition}

\begin{fact}
  A symmetric association scheme is also commutative.
\end{fact}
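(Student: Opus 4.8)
The plan is to derive commutativity from a transposition symmetry of the intersection numbers that holds in \emph{every} association scheme, and then collapse it using the symmetric hypothesis.

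First I would show that in any association scheme $(X,R)$ one has $p_{rs}^t = p_{s^\top r^\top}^{t^\top}$ for all $r,s,t\in R$. Fix a pair $(x,y)\in t$; then $(y,x)\in t^\top$, and by definition the number computed by $p_{rs}^t$ is the cardinality of $\{z\in X \mid (x,z)\in r \land (z,y)\in s\}$, while the number computed by $p_{s^\top r^\top}^{t^\top}$ (read off the pair $(y,x)\in t^\top$) is the cardinality of $\{z\in X \mid (y,z)\in s^\top \land (z,x)\in r^\top\}$. Since $(y,z)\in s^\top \iff (z,y)\in s$ and $(z,x)\in r^\top \iff (x,z)\in r$, these two subsets of $X$ coincide, so the two counts agree. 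The axioms already guarantee that each count is independent of the chosen representative pair in $t$ (resp.\ $t^\top$), which is exactly what makes the comparison well posed.

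Then I would invoke the symmetric hypothesis: if $r^\top = r$ for every $r\in R$, then substituting $r^\top = r$, $s^\top = s$, and $t^\top = t$ into the identity above yields $p_{rs}^t = p_{sr}^t$ for all $r,s,t\in R$, which is precisely the definition of commutativity. The only thing demanding attention is the direction-of-traversal bookkeeping in the transposition step (keeping straight which relation is traversed forwards and which backwards); beyond that there is no genuine obstacle, and the statement follows immediately from the axioms.
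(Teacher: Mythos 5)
Your proof is correct, and it is the standard argument for this fact. The paper states the result as a \emph{Fact} without supplying a proof, so there is no internal argument to compare against; your two-step derivation (first $p_{rs}^t = p_{s^\top r^\top}^{t^\top}$ via the bijection $z\mapsto z$ between the two counting sets after reading the pair $(x,y)\in t$ as $(y,x)\in t^\top$, then substituting $r^\top=r$, $s^\top=s$, $t^\top=t$) is exactly the textbook route, and the appeal to the axiom that the intersection numbers are representative-independent is precisely what makes the comparison legitimate.
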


As the next definition describes, association schemes can be viewed as certain matrix algebras.
\begin{definition}[Bose--Mesner algebra]
  Given an association scheme $S=(X,R)$, for each $r\in R$, let $D_r\in\CC^{X\times X}$ be given by
  $D_r[x,y]\coloneqq\One[(x,y)\in r]$. The \emph{Bose--Mesner algebra of $S$} is the $\CC$-algebra $\cA_S$
  generated by $\{D_r \mid r\in R\}$.
\end{definition}

The key observation underlying the above definition is that the intersection numbers $p_{rs}^t$ in the
definition of an association scheme guarantee that the linear span of $\{D_r \mid r \in R\}$ is closed under
matrix multiplication and adjoints. Since
\begin{align*}
  D_r D_s & = \sum_{t \in R} p_{rs}^t D_t,
\end{align*}
it follows that $(D_r)_{r\in R}$ is also a $\CC$-vector space basis of $\cA_S$. Furthermore, the above also
implies that $S$ is commutative if and only if its Bose--Mesner algebra is commutative. Moreover, $S$ is
symmetric if and only if every matrix in $\cA_S$ is symmetric.

\begin{fact}
  The Bose--Mesner algebra $\cA_S$ of a \emph{commutative} association scheme $S$ has a unique (up to
  permutation of its elements) $\CC$-vector space basis of idempotent orthogonal matrices $(E_s)_{s\in R'}$,
  where $\lvert R'\rvert = \lvert R\rvert$. That is, we have $E_{s_1}E_{s_2} = \One[s_1 = s_2]E_{s_1}$ for
  every $s_1,s_2\in R'$; namely, each $E_s$ is the projection onto a maximal common eigenspace of the matrices
  $\{D_r \mid r\in R\}$.
\end{fact}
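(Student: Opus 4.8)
The plan is to recognize $\cA_S$ as a commutative, unital, $*$-closed subalgebra of the full matrix algebra $\CC^{X\times X}$ and then to invoke (and, in this concrete setting, reprove) the structure theory of such algebras; this is the classical spectral theory of commutative association schemes, as in the survey of Martin and Tanaka~\cite{MT09}. First I would record the easy structural facts. Since $R$ is a partition of $X\times X$, the matrices $(D_r)_{r\in R}$ have pairwise disjoint supports, hence are linearly independent, so $\dim_\CC\cA_S = \lvert R\rvert$. Since $\cD_X\in R$ we have $I = D_{\cD_X}\in\cA_S$, so $\cA_S$ is unital. Since $r\mapsto r^\top$ is a permutation of $R$ and $D_r^* = D_{r^\top}$, the algebra is closed under conjugate transpose. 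And commutativity of $S$ is by hypothesis the statement that $\cA_S$ is commutative.

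Next I would establish semisimplicity from the $*$-structure, which is the one place a short argument is genuinely needed: if $A$ lies in the Jacobson radical of $\cA_S$, then so does $A^*A$, which is therefore a nilpotent positive semidefinite Hermitian matrix and hence $0$, forcing $A=0$; thus $\cA_S$ is semisimple. Being a commutative semisimple algebra over the algebraically closed field $\CC$ of dimension $m\coloneqq\lvert R\rvert$, Wedderburn's theorem gives $\cA_S\cong\CC^m$; transporting back the standard coordinate idempotents, one obtains idempotents $(E_s)_{s\in R'}$ with $\lvert R'\rvert = m = \lvert R\rvert$ that are primitive, mutually orthogonal ($E_{s_1}E_{s_2}=\One[s_1=s_2]E_{s_1}$), sum to $I$, and form a $\CC$-basis of $\cA_S$. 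Each $E_s$ is self-adjoint: $E_s^*$ is again a primitive idempotent, and $\tr(E_sE_s^*)>0$ whenever $E_s\neq 0$, so $E_sE_s^*\neq 0$, which for distinct primitive idempotents is impossible; hence $E_s^*=E_s$. Uniqueness up to permutation is the standard uniqueness of the set of minimal idempotents of a commutative semisimple algebra (equivalently, uniqueness of the coordinate decomposition $\cA_S\cong\CC^m$).

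Finally I would identify $\im E_s$ with the maximal common eigenspaces of $\{D_r\mid r\in R\}$. For $a\in\cA_S$ one has $aE_s = E_saE_s\in E_s\cA_S E_s = \CC\cdot E_s$, so $a$ acts on $\im E_s$ as a scalar $\lambda_s(a)$; in particular $\im E_s$ is a common eigenspace of all the $D_r$. Conversely, using $I=\sum_{t\in R'}E_t$ together with the fact that the eigenvalue tuples $(\lambda_t(D_r))_{r\in R}$ are pairwise distinct (the coordinate functionals of $\CC^m$ are distinct and the $D_r$ generate $\cA_S$), every joint eigenvector of the $D_r$ lies in a single $\im E_s$; hence $\im E_s$ is precisely the space of all vectors with a fixed joint eigenvalue, i.e.\ a maximal common eigenspace, and $E_s$ is the (orthogonal, by self-adjointness) projection onto it. The main obstacle is not any single step but keeping this chain tight — in particular the semisimplicity argument that pins down $\lvert R'\rvert=\lvert R\rvert$, and the bookkeeping that matches primitive idempotents with maximal joint eigenspaces; everything else is routine linear algebra once $\cA_S$ is viewed as a $*$-closed commutative matrix algebra.
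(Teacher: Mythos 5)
The paper states this as a \textbf{Fact} without proof, relying on the standard spectral theory of commutative association schemes and citing Martin and Tanaka~\cite{MT09} for background. Your argument is a correct and complete rendition of exactly that classical proof: you identify $\cA_S$ as a commutative, unital, $*$-closed matrix subalgebra, derive semisimplicity from the $*$-structure via the radical/PSD-nilpotent argument, invoke Wedderburn to get $\cA_S\cong\CC^{\lvert R\rvert}$ and the primitive idempotent basis, prove self-adjointness of the $E_s$ by the orthogonality-of-distinct-primitives trick, and identify $\im E_s$ with the maximal joint eigenspaces using the distinctness of the eigenvalue functionals. This is the same route the cited reference takes; there is no gap.
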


Since both $(D_r \mid r\in R)$ and $(E_s \mid s\in R')$ are bases of $\cA_S$, each of their elements can be
written as a linear combination of the elements of the other basis using the $p$ and $q$-functions defined
below.
\begin{definition}[$p$-functions and $q$-functions]
  The \emph{$p$-functions} $p_r\colon R'\to\CC$ ($r\in R$) and \emph{$q$-functions} $q_s\colon R\to\CC$ ($s\in
  R'$) of an association scheme $S$ are the unique functions such that
  \begin{align*}
    D_r & = \sum_{s\in R'} p_r(s) E_s, &
    E_s & = \sum_{r\in R} q_s(r) D_r.
  \end{align*}
\end{definition}

An important subclass of association schemes is that of Schurian schemes defined below, which arise by
considering orbits of the diagonal action induced from a group action on the base set. The Bose--Mesner
algebra of Schurian schemes is then precisely the algebra of matrices that are invariant under the natural
conjugation action (see \cref{fact:schurianBoseMesner} below). This makes Schurian schemes particularly useful
in the study of semi-definite programs as for one such program $P$ (see \cref{fact:invariantsolution}).

\begin{definition}[Schurian scheme]
  Let $G$ be a group acting transitively on a finite set $X$. The \emph{Schurian scheme} associated with this
  action is defined as $S\coloneqq (X,(X\times X)/G)$, where $(X\times X)/G$ is the set of orbits of the
  natural diagonal action of $G$ on $X\times X$ given by $\sigma\cdot(x,y)\coloneqq (\sigma(x),\sigma(y))$
  ($x,y\in X$, $\sigma\in G$).
\end{definition}

\begin{fact}\label{fact:schurianBoseMesner}
  Let $G$ be a group acting transitively on a finite set $X$. The Schurian scheme is an association scheme and
  its Bose--Mesner algebra is precisely the algebra of $G$-invariant matrices under the natural conjugation
  action of $G$ on $\CC^{X\times X}$ given by $A\cdot\sigma = P_\sigma^{-1} A P_\sigma$ ($A\in\CC^{X\times
    X}$, $\sigma\in G$), where $P_\sigma\in\CC^{X\times X}$ is the permutation matrix given by
  $P_\sigma[x,y]\coloneqq \One[x = \sigma(y)]$ ($x,y\in X$).
\end{fact}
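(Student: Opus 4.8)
The plan is to verify the four association-scheme axioms directly from the orbit structure, and then to identify the Bose--Mesner algebra with the invariant algebra using the observation that a matrix is $G$-invariant precisely when it is constant on $G$-orbits. First I would check that $S=(X,(X\times X)/G)$ is an association scheme. Axiom~1 (a partition into non-empty sets) is immediate, since the orbits of a group action always partition. Axiom~2 is the one place where transitivity of $G$ on $X$ is used: given $x,y\in X$, pick $\sigma\in G$ with $\sigma(x)=y$; then $\sigma\cdot(x,x)=(y,y)$, so the diagonal $\cD_X$ is a single $G$-orbit and hence an element of $R$. For Axiom~3 I would note that the orbit of $(x,y)$ transposes to the orbit of $(y,x)$, since $\sigma\cdot(x,y)=(x',y')$ holds exactly when $\sigma\cdot(y,x)=(y',x')$. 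For Axiom~4, given $(x,y),(x',y')$ in the same relation $t$, I would choose $\sigma\in G$ with $\sigma(x)=x'$ and $\sigma(y)=y'$ and observe that $z\mapsto\sigma(z)$ restricts to a bijection from $\{z:(x,z)\in r,\,(z,y)\in s\}$ onto $\{z:(x',z)\in r,\,(z,y')\in s\}$ (using that $r$ and $s$, being orbits, are $G$-invariant), so the two counts coincide; the common value is $p^t_{rs}\in\NN$, finiteness being clear as $X$ is finite.

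Next I would prove the algebra identification. The key computation is $(P_\sigma^{-1}AP_\sigma)[x,y]=A[\sigma(x),\sigma(y)]$ for all $x,y\in X$ and $\sigma\in G$ (a one-line index chase, once the convention $P_\sigma[x,y]=\One[x=\sigma(y)]$ is fixed), so a matrix $A$ is $G$-invariant if and only if it is constant on each $G$-orbit of $X\times X$, i.e.\ constant on each relation $r\in R$. In particular every $D_r$ is $G$-invariant; conversely any $G$-invariant $A$ equals $\sum_{r\in R}c_rD_r$, where $c_r$ is the common value of $A$ on $r$. Hence the set of $G$-invariant matrices is exactly $\Span\{D_r:r\in R\}$. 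Finally, as already noted after the definition of the Bose--Mesner algebra, the relation $D_rD_s=\sum_{t\in R}p^t_{rs}D_t$ together with $D_{\cD_X}=I$ shows this span is itself an algebra, so it coincides with the algebra $\cA_S$ it generates. Combining, $\cA_S$ equals the algebra of $G$-invariant matrices, as claimed.

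I do not foresee a genuine obstacle: the whole argument is routine bookkeeping. The only points demanding care are (i) that transitivity is essential in Axiom~2 --- without it the diagonal splits into several orbits and one obtains a coherent configuration rather than an association scheme --- and (ii) keeping the permutation-matrix convention consistent so that the conjugation formula comes out as $A[\sigma(x),\sigma(y)]$ rather than with $\sigma^{-1}$.
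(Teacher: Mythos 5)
The paper states this as a \emph{Fact} without proof, citing standard association scheme theory (see the reference to Martin--Tanaka). Your proposal supplies the standard proof correctly: orbits partition, transitivity yields $\cD_X$ as a single orbit, $G$-invariance of each orbit gives both the transpose axiom and the well-definedness of the intersection numbers via $z\mapsto\sigma(z)$, the index chase gives $(P_\sigma^{-1}AP_\sigma)[x,y]=A[\sigma(x),\sigma(y)]$, and constancy on orbits characterizes invariance so the invariant algebra is exactly $\Span\{D_r\}=\cA_S$. Everything checks; the only optional addition is to note $D_r^\top = D_{r^\top}\in\Span\{D_r\}$ so the span is also closed under adjoints, which some authors include in the definition of the Bose--Mesner algebra (the paper's definition only asks for an algebra, so your argument already suffices).
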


\begin{definition}[Codes in an association scheme]
  A \emph{code} in an association scheme $S = (X, R)$ is a non-empty subset $C\subseteq X$. 

  The \emph{inner distribution} of the code $C$ is the function $a^C\colon R\to\RR$ given by $a^C_r\coloneqq \lvert
  C^2\cap r\rvert/\lvert C\rvert$.

  For a set $D\subseteq R$, we say that $C$ is a $D$-code if $a_r^C = 0$ for every $r\in R\setminus D$.
\end{definition}

Given an association scheme $S = (X,R)$, by letting $S'\coloneqq(X,R')$, where $R'\coloneqq\{r\cup r^\top \mid
r\in R\}$, it is straightforward to check that $S'$ is a symmetric association scheme and if $C$ is a $D$-code
in $S$, then it is also a $D'$-code in $S'$, where $D'\coloneqq\{r\cup r^\top \mid r\in D\}$. For this reason,
when working with codes, we may suppose without loss of generality that the underlying association scheme is
symmetric.

\begin{definition}[Delsarte linear program]
  Given a set $D\subseteq R$, the Delsarte linear program associated with $(S,D)$ is the
  program $\cL_S(D)$ given by
  \begin{align*}
    \max \quad
    & \sum_{r\in R} a_r
    \\
    \text{s.t.} \quad
    & a_{\cD_X} = 1
    & &
    & & (\text{Normalization})
    \\
    & a_r = 0
    & & \forall r\in R\setminus D
    & & (\text{$D$-code constraints})
    \\
    & \sum_{r\in R} q_s(r)\cdot a_r \in\RR_+
    & & \forall s\in R'
    & & (\text{MacWilliams inequalities})
    \\
    & a_r\in\RR_+
    & & \forall r\in R
    & & (\text{Non-negativity}),
  \end{align*}
  where the variables are $(a_r)_{r\in R}$.
\end{definition}

It is clear that the inner distribution $a^C$ of a $D$-code is a feasible solution of $\cL_S(D)$, so the
optimum value of $\cL_S(D)$ is an upper bound on the size of $D$-codes (since $\sum_{r\in R} a_r^C = \lvert
C\rvert$).

When the underlying scheme $S$ is symmetric, one can use instead the \emph{real Bose--Mesner algebra of $S$},
which is the $\RR$-algebra generated by $\{D_r\mid r\in R\}$. All facts above remain true with $\CC$ replaced
with $\RR$.

\begin{definition}[Dual]
  Two association schemes $S=(X,R)$ and $\widehat{S}=(X,\widehat{R})$ over the same set $X$ are said to be
  \emph{dual} to each other if there exist bijections $f\colon R\to\widehat{R}'$ and $g\colon
  R'\to\widehat{R}$ such that for every $r\in R$ and every $s\in R'$, we have
  \begin{align*}
    p_r(s) & = q_{f(r)}(g(s)), &
    q_s(r) & = p_{g(s)}(f(r)).
  \end{align*}

  In this case it is typical to identify $R$ and $R'$ with $\widehat{R}'$ and $\widehat{R}$, respectively
  through these bijections. An association scheme $S=(X,R)$ is \emph{self-dual} when it is its own dual.
\end{definition}

\begin{definition}[Translation schemes]
  A \emph{translation scheme} is an association scheme $S = (X,R)$ in which $X$ is further equipped with an
  Abelian group structure and each relation $r\in R$ is an $X$-invariant set, i.e., for every $x,y,z\in X$, we
  have $(x,y)\in r\iff (z+x,z+y)\in r$.
  
  Equivalently, an association scheme $S = (X,R)$ where $X$ has Abelian group structure is a translation
  scheme if and only if there exists a function $f_S\colon X\to R$ such that $(x,y)\in f_S(x-y)$ for every
  $x,y\in X$. This means that we can also alternatively view $R$ as a partition of $X$ rather than $X\times
  X$; the relations of the association scheme are defined by the ``first row'' of the matrix.
\end{definition}

\begin{fact}
  Every translation scheme is commutative.
\end{fact}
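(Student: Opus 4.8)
The plan is to use the ``first row'' description of a translation scheme supplied by the function $f_S\colon X\to R$ from the definition, and to recognize the intersection numbers $p_{rs}^t$ as counts of additive decompositions in the Abelian group $X$; such counts are manifestly unchanged under swapping $r$ and $s$.

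First I would fix notation for the relation-to-subset dictionary. For each $r\in R$ put $X_r\coloneqq f_S^{-1}(r)=\{w\in X\mid f_S(w)=r\}$; since $f_S$ is a function, $\{X_r\}_{r\in R}$ is a partition of $X$. From $(x,y)\in f_S(x-y)$ together with the fact that $R$ partitions $X\times X$ one gets the clean equivalence $(x,y)\in r\iff x-y\in X_r$ for all $x,y\in X$ and all $r\in R$ (so, for instance, $X_{\cD_X}=\{0\}$ and $X_{r^\top}=-X_r$, though these will not be needed).

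Next I would compute $p_{rs}^t$ directly. Fix $r,s,t\in R$ and a pair $(x,y)\in t$. Since every relation is $X$-invariant, $(x,z)\in r\iff(x-y,z-y)\in r$ and $(z,y)\in s\iff(z-y,0)\in s$, so the intersection number may be computed on the translated pair $(x-y,0)$ instead; note $x-y\in X_t$. Then, writing $u=x-y-z$ and $v=z$,
\begin{align*}
  p_{rs}^t
  &=\lvert\{z\in X\mid (x-y,z)\in r\ \land\ (z,0)\in s\}\rvert
  \\
  &=\lvert\{z\in X\mid x-y-z\in X_r\ \land\ z\in X_s\}\rvert
  =\lvert\{(u,v)\in X_r\times X_s\mid u+v=x-y\}\rvert.
\end{align*}
Running the identical computation for $p_{sr}^t$ yields the number of pairs $(v,u)\in X_s\times X_r$ with $v+u=x-y$. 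Because $X$ is Abelian, $u+v=v+u$, so $(u,v)\mapsto(v,u)$ is a bijection between the two sets, whence $p_{rs}^t=p_{sr}^t$ for all $r,s,t\in R$, which is exactly commutativity.

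I do not anticipate any real obstacle: the only things to be careful about are that $X$-invariance genuinely allows the reduction to the pair $(x-y,0)$ (immediate from the definition) and that the dictionary $r\leftrightarrow X_r$ is consistent. The crux is simply that commutativity of the group operation of $X$ is inherited by the intersection numbers; the Abelian hypothesis is used exactly once, in identifying the two decomposition counts.
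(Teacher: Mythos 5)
Your argument is correct and complete, and it is the standard proof of this fact. The paper records this statement as a \textbf{Fact} without giving a proof, so there is no paper argument to compare against; your translation-to-the-origin reduction, the dictionary $(x,y)\in r\iff x-y\in X_r$, and the final bijection $(u,v)\mapsto(v,u)$ between the two decomposition sets are exactly what one expects, and commutativity of the group operation is indeed used exactly once where you flag it.
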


\begin{remark}\label{rmk:translationsymmetric}
  It is easy to see that the function $f_S$ satisfies $f_S(-x) = f_S(x)^\top$ for every $x\in X$, hence $S$ is
  symmetric if and only if the function $f_S$ is even (i.e., $f_S(-x) = f_S(x)$ for every $x\in R$).
\end{remark}

\begin{remark}\label{rmk:pqtranslation}
  For translation schemes, the $p$ and $q$-functions can be computed via Fourier analysis as follows. Let us fix
  an indexing of the characters $\chi_x\colon X\to\CC$ of $X$ by $X$ so that $\chi_x(y) = \chi_y(x)$. Define the
  functions $\phi_r\colon X\to\CC$ ($r\in R$) by
  \begin{align*}
    \phi_r(x) & \coloneqq \sum_{y\in f_S^{-1}(r)} \overline{\chi_y(x)}.
  \end{align*}
  The level sets of each $\phi_r$ induce a partition of $X$, so we let $R'\subseteq 2^X$ be the coarsest common
  refinement of them and let $f'_S\colon X\to R'$ be the unique function such that $x\in f'_S(x)$ for every
  $x\in X$. Define then the functions $\psi_s\colon X\to\CC$ ($s\in R'$) by
  \begin{align*}
    \psi_s(y) & \coloneqq \sum_{x\in s} \chi_x(y) = \sum_{x\in (f_S')^{-1}(s)} \chi_x(y).
  \end{align*}

  It is a standard fact of association scheme theory that for every $x\in X$, every $r\in R$ and every $s\in
  R'$, we have
  \begin{align*}
    \phi_r(x) & = p_r(f_S'(x)), &
    \psi_s(x) & = q_s(f_S(x)).
  \end{align*}
\end{remark}

It is also straightforward to see that $f_S'$ induces a translation scheme structure
$\widehat{S}=(X,\widehat{R})$ on $X$ where $\widehat{R}\coloneqq\{r_s \mid s\in R'\}$ for the relations
$r_s\coloneqq\{(x,y)\in X\times X \mid f_S'(x-y) = s\}$ and $S$ and $\widehat{S}$ are dual of each other, and
as such we typically identify $R'$ with $\widehat{R}$ with $s\mapsto r_s$.

\begin{definition}[Additive codes and annihilator codes]
  A code $C$ in a translation scheme $S$ is called \emph{additive} if it is a subgroup of $X$. Trivially, an
  additive code $C$ is a $D$-code if and only if $f_S(C)\subseteq D$.

  Given an additive code $C$ in $S$, the \emph{annihilator code} of $C$ is
  \begin{align*}
    C^\circ & \coloneqq \{y\in X \mid \chi_y(x) = 1\}.
  \end{align*}
\end{definition}

It is straightforward to see that $C^\circ$ is additive and $(C^\circ)^\circ = C$ (as long as $C$ is
additive). It is more natural to see the annihilator code as a code in the dual scheme $\widehat{S}$, as the
(generalized) MacWilliams identities say that the inner distribution of $C^\circ$ in $\widehat{S}$ can be
retrieved from the inner distribution of $C$ in $S$ as follows.
\begin{theorem}[Generalized MacWilliams identities]
  For a translation scheme $S = (X, R)$ with dual scheme $\widehat{S} = (X, \widehat{R})$ and an additive code
  $C$ in $S$,  we have
  \begin{align*}
    a^{C^\circ}_s & = \frac{1}{\lvert C\rvert}\sum_{r\in R} q_s(r) a^C_r,
  \end{align*}
  for all $s \in \widehat{R}$,
\end{theorem}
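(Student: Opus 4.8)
The plan is to reduce the identity to orthogonality of characters over the subgroup $C$, using the Fourier description of the $q$-functions recorded in \cref{rmk:pqtranslation}.

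First I would exploit additivity of $C$ to replace the pair-counts defining the inner distributions by ``first-row'' counts. Since $(x,y)\in r$ iff $f_S(x-y) = r$ and, for fixed $x\in C$, the element $x-y$ runs bijectively over $C$ as $y$ runs over $C$, one gets $\lvert C^2\cap r\rvert = \lvert C\rvert\cdot\lvert C\cap f_S^{-1}(r)\rvert$, hence $a^C_r = \lvert C\cap f_S^{-1}(r)\rvert$. Applying the same observation to the additive code $C^\circ$ inside the dual scheme $\widehat S$ (whose relation indexed by $s\in R'\cong\widehat R$ is $\{(x,y)\mid f_S'(x-y)=s\}$) gives $a^{C^\circ}_s = \lvert C^\circ\cap (f_S')^{-1}(s)\rvert$. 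Thus it suffices to prove
\begin{align*}
  \lvert C\rvert\cdot\lvert C^\circ\cap (f_S')^{-1}(s)\rvert
  & =
  \sum_{r\in R} q_s(r)\cdot\lvert C\cap f_S^{-1}(r)\rvert .
\end{align*}

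Next I would rewrite the right-hand side with the help of \cref{rmk:pqtranslation}, which gives $q_s(f_S(z)) = \psi_s(z) = \sum_{x\in (f_S')^{-1}(s)}\chi_x(z)$. Regrouping $\sum_{r\in R} q_s(r)\,\lvert C\cap f_S^{-1}(r)\rvert$ according to which $z\in C$ contributes, it equals $\sum_{z\in C} q_s(f_S(z)) = \sum_{z\in C}\sum_{x\in (f_S')^{-1}(s)}\chi_x(z)$, and after swapping the order of summation this is $\sum_{x\in (f_S')^{-1}(s)}\sum_{z\in C}\chi_x(z)$. Finally I would invoke the standard fact that, for the finite Abelian group $X$ and the subgroup $C$, the inner sum $\sum_{z\in C}\chi_x(z)$ equals $\lvert C\rvert$ when $\chi_x$ is trivial on $C$ and $0$ otherwise; by the definition of the annihilator code and the indexing convention $\chi_x(z)=\chi_z(x)$ of \cref{rmk:pqtranslation}, ``$\chi_x$ trivial on $C$'' is exactly the condition $x\in C^\circ$. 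Hence the double sum equals $\lvert C\rvert\cdot\lvert (f_S')^{-1}(s)\cap C^\circ\rvert$, and dividing by $\lvert C\rvert$ yields the claimed identity.

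The argument is essentially bookkeeping, so I do not expect a genuine obstacle. The two points requiring a little care are the subgroup reduction in the first step (that $\lvert C^2\cap r\rvert$ factors cleanly in both $S$ and $\widehat S$) and keeping the identification $R'\cong\widehat R$ consistent, so that the index $s$ in $a^{C^\circ}_s$ is the same $s$ appearing in $\psi_s$ and in $q_s$. An alternative route through the idempotents $E_s = \sum_{r\in R} q_s(r) D_r$ of the Bose--Mesner algebra (reading off the $(0,\place)$ entries of $\lvert C\rvert^{-1}\One_C\One_C^\top$ in the two bases) would also work, but seems less direct given that \cref{rmk:pqtranslation} already packages exactly the character sum we need.
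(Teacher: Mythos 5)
Your proof is correct. The paper does not actually prove this theorem — it is stated as a known fact from Delsarte's general theory of translation schemes, with the only proved MacWilliams-type statement being the higher-order binary case in Section~3, where the argument is run through Plancherel applied to indicator functions $\One_h$ and $\One_{(C^\perp)^\ell}$. Your argument is the translation-scheme analogue of that: the reduction $a^C_r=\lvert C\cap f_S^{-1}(r)\rvert$ via additivity, the expansion $q_s(f_S(z))=\psi_s(z)=\sum_{x\in(f'_S)^{-1}(s)}\chi_x(z)$ from \cref{rmk:pqtranslation}, and then orthogonality of characters over the subgroup $C$ to pick out exactly $C^\circ$ — this is the same Fourier content as the Plancherel computation, just organized as a direct double sum rather than as an inner product of Fourier transforms. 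One small note on bookkeeping: the paper's displayed definition of $C^\circ$ omits the universal quantifier over $x\in C$; you read it the intended way, and combined with the indexing convention $\chi_x(y)=\chi_y(x)$ the identification ``$\chi_x$ trivial on $C$'' $\iff$ $x\in C^\circ$ is exactly right. Everything else — the subgroup bijection $x-C=C$ giving the factor $\lvert C\rvert$ in both $S$ and $\widehat S$, the regrouping of the sum over $r\in R$ as a sum over $z\in C$, and the final count $\lvert C\rvert\cdot\lvert(f'_S)^{-1}(s)\cap C^\circ\rvert$ — is airtight.
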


\begin{definition}[Tensor product schemes]
  Given two schemes $S_1 = (X_1,R_1)$ and $S_2 = (X_2,R_2)$, their tensor product is the association scheme
  $S_1\otimes S_2\coloneqq(X_1\times X_2, R_1\otimes R_2)$, where $R_1\otimes R_2\coloneqq\{r_1\otimes r_2
  \mid r_1\in R_1\land r_2\in R_2\}$ for the relations
  \begin{align*}
    r_1\otimes r_2
    & \coloneqq
    \{((x_1,x_2),(y_1,y_2))\in (X_1\times X_2)\times (X_1\times X_2) \mid
    (x_1,y_1)\in r_1\land (x_2,y_2)\in r_2\}.
  \end{align*}

  For $\ell\in\NN_+$, the \emph{$\ell$th tensor power} of the association scheme $S$ is defined as
  \begin{align*}
    S^\ell & \coloneqq \mathop{\underbrace{S\otimes\cdots\otimes S}}\limits_{\ell\text{ times}}.
  \end{align*}
\end{definition}

It is straightforward to check that $S_1\otimes S_2$ is an association scheme that inherits the properties of
$S_1$ and $S_2$ in the sense that if both $S_1$ and $S_2$ are commutative (resp., symmetric, translation),
then $S_1\otimes S_2$ is so (in the case of translation scheme, the group structure in $X_1\times X_2$ is the
direct product group). Furthermore, if $C_i$ is a $D_i$-code in $S_i$ ($i\in[2]$), then $C_1\times C_2$ is a
$D_1\otimes D_2$-code in $S_1\otimes S_2$, where
\begin{align*}
  D_1\otimes D_2
  & \coloneqq
  \{r_1\otimes r_2 \mid r_1\in D_1\land r_2\in D_2\}.
\end{align*}

\begin{definition}[Refinement of a scheme]
  A refinement of an association scheme $S = (X,R)$ is an association scheme $S_2 = (X,R_2)$ over the same
  underlying set $X$ such that each $r\in R$ is a union of elements of $R_2$.
\end{definition}
Trivially, a $D$-code in $S$ is a $D'$-code in $S'$, where
\begin{align*}
  D' & = \{r'\in R_2 \mid \exists r\in D, r'\subseteq r\}.
\end{align*}

\begin{example}[Weak Hamming scheme]\label{ex:weakHamming}
  Given a non-trivial finite Abelian group $G$ and $n\in\NN_+$, the \emph{(weak) Hamming scheme of order $n$
    over $G$} is the translation scheme $\HH_n(G)\coloneqq (G^n,R)$, where $R\coloneqq\{r_i \mid
  i\in\{0,\ldots,n\}\}$ for
  \begin{align*}
    r_i & \coloneqq \{(x,y) \mid \Delta(x,y) = i\},
  \end{align*}
  where $\Delta(x,y)\coloneqq\lvert\{j\in[n] \mid x_j\neq y_j\}\rvert$ is the Hamming distance between $x$ and
  $y$. It is easy to see that $\HH_n(G)$ is a translation scheme over the direct product group $G^n$ in which
  $f_{\HH_n(G)}(x) = r_{\Delta(x,0)}$ for every $x\in G^n$. In fact, $\HH_n(G)$ is self-dual and its $p$ and
  $q$ functions are the Krawtchouk polynomials:
  \begin{align*}
    p_{r_i}(r_j)
    =
    q_{r_i}(r_j)
    & =
    \sum_{t=0}^j
    (-1)^t (\lvert G\rvert - 1)^{i-t} \binom{j}{t}\binom{n - j}{i - t}.
  \end{align*}

  We use the notation $\HH_n\coloneqq\HH_n(\FF_2)$, when the underlying group is the field with two
  elements. Under this notation, a binary code of blocklength $n$ and distance $d$ is simply a $D_d$-code in
  $\HH_n$, where $D_d\coloneqq\{r_0,r_d,r_{d+1},\ldots,r_n\}$.

  Alternatively, the weak Hamming scheme can be seen as a Schurian scheme as follows. Consider the natural
  right action of the symmetric group $S_n$ on $n$ letters on $G^n$ given by $(x\cdot\sigma)_i\coloneqq
  x_{\sigma(i)}$ ($x\in G^n$, $\sigma\in S_n$, $i\in[n]$) and the natural left action of the symmetric group
  $S_{G^n}$ on $G^n$. These actions together induce an action of a semidirect product $S_{G^n}\rtimes S_n$
  on $G^n$ whose associated Schurian scheme is precisely $\HH_n(G)$.
\end{example}

\begin{example}[Strong Hamming scheme]\label{ex:strongHamming}
  Given a non-trivial finite Abelian group $G$ and $n\in\NN_+$, the \emph{strong Hamming scheme of order $n$
    over $G$} is the translation scheme $\HH_n^*(G)\coloneqq (G^n,R)$, where $R\coloneqq\{r_h \mid h\in
  \{0,1,\ldots,n\}^G\}\setminus\{\varnothing\}$, for
  \begin{align*}
    r_h & \coloneqq \{(x,y) \mid \forall g\in G, \lvert x-y\rvert_g = h(g)\},
  \end{align*}
  where
  \begin{align*}
    \lvert z\rvert_g & \coloneqq \lvert\{i\in[n] \mid z_i = g\}\rvert.
  \end{align*}

  It is easy to see that $\HH_n^*(G)$ is a translation scheme that is a refinement of $\HH_n(G)$ and for every
  $x\in G^n$, we have $f_{\HH_n^*(G)}(x) = r_{h_x}$ for $h_x(g)\coloneqq\lvert x\rvert_g$. In fact, $\HH_n^*(G)$ is
  self-dual and its $p$ and $q$ functions are given by
  \begin{align*}
    p_{r_{h_1}}(r_{h_2})
    =
    q_{r_{h_1}}(r_{h_2})
    & =
    \sum_{F\in\cF}
    \left(\prod_{g_1\in G} \frac{h_1(g_1)!}{\prod_{g_2\in G} F(g_1,g_2)!}\right)
    \prod_{g_1,g_2\in G} \chi_{g_1}(g_2)^{F(g_1,g_2)},
  \end{align*}
  where $\cF$ is the set of all functions $F\colon G\times G\to\{0,\ldots,n\}$ such that
  \begin{align*}
    \sum_{g'\in G} F(g,g') & = h_1(g), &
    \sum_{g'\in G} F(g',g) & = h_2(g), &
  \end{align*}
  for every $g\in G$.

  Alternatively, the strong Hamming scheme can be seen as a Schurian scheme as follows. Consider the natural
  right action of the symmetric group $S_n$ on $n$ letters on $G^n$ given by $(x\cdot\sigma)_i\coloneqq
  x_{\sigma(i)}$ ($x\in G^n$, $\sigma\in S_n$, $i\in[n]$) and the natural translation action of the product
  group $G^n$ on itself. These actions together induce an action of a semidirect product $G^n\rtimes S_n$ on
  $G^n$ whose associated Schurian scheme is precisely $\HH_n^*(G)$.
\end{example}

While for binary alphabets, the strong and weak Hamming scheme obviously coincide (i.e., $\HH_n^*(\FF_2) =
\HH_n(\FF_2)$), for larger alphabets this is not the case.

We finish this section recalling the connection of the Delsarte linear program with the modified
Lov\'{a}sz $\vartheta'$-function from graph theory (see \cref{def:vartheta'}).

\begin{definition}
  Given a commutative association scheme $S=(X,R)$ and $D\subseteq R$ with $\cD_X\in D$, the graph $G_S(D)$ is
  defined by
  \begin{align*}
    V(G_S(D))
    & \coloneqq
    X,
    \\
    E(G_S(D))
    & \coloneqq
    \left\{\{x,y\}\in\binom{X}{2}
    \;\middle\vert\;
    \exists r\in R\setminus D, (x,y)\in r\right\}.
  \end{align*}
\end{definition}

Under this definition, a $D$-code on $S$ is simply an independent set in the graph $G_S(D)$. The next theorem
by Schrijver connects the Delsarte linear program $\cL_S(D)$ to the semi-definite program $\cS(G_S(D))$.

\begin{theorem}[Schrijver~\protect{\cite{Schrijver79}}]\label{thm:vartheta'}
  Let $S=(X,R)$ be a commutative association scheme and let $D\subseteq R$ with $\cD_X\in D$. Then
  $\vartheta'(G_S(D))$ is equal to the optimum value of the Delsarte linear program $\cL_S(D)$.
\end{theorem}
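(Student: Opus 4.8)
The plan is to recover Schrijver's equivalence by exhibiting a value-preserving bijection between feasible solutions of $\cL_S(D)$ and the feasible solutions of the semi-definite program $\cS(G_S(D))$ defining $\vartheta'(G_S(D))$ that lie in the Bose--Mesner algebra $\cA_S$, and then arguing that restricting $\cS(G_S(D))$ to $\cA_S$ costs nothing. As observed after the definition of $D$-codes, we may assume $S$ is symmetric, so $\cA_S$ is a commutative $\ast$-algebra carrying the two orthogonal bases $(D_r)_{r\in R}$ and $(E_s)_{s\in R'}$ with respect to $\ip{A}{B}\coloneqq\tr(A^\top B)$; write $n_r$ for the valency of $r$ (so $\lvert r\rvert = n_r\lvert X\rvert$, $D_{\cD_X}=I$, and $J=\sum_{r\in R} D_r\in\cA_S$) and $m_s\coloneqq\tr E_s$ for the multiplicity of $E_s$.

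First I would construct the correspondence. Given a feasible point $a=(a_r)_{r\in R}$ of $\cL_S(D)$, set $M_a\coloneqq\sum_{r\in R}\frac{a_r}{n_r\lvert X\rvert}D_r\in\cA_S$. Routine bookkeeping gives $\tr M_a=a_{\cD_X}$ (only $D_{\cD_X}$ contributes to the diagonal) and $\ip{J}{M_a}=\sum_{r\in R}a_r$, so the normalization and objective match; moreover $M_a[u,v]=a_r/(n_r\lvert X\rvert)$ for the unique $r$ with $(u,v)\in r$, so (using $\cD_X\in D$ to rule out loops) the $D$-code constraints $a_r=0$ for $r\notin D$ become exactly the independent-set constraints $M_a[u,v]=0$ for edges of $G_S(D)$, and non-negativity transfers verbatim. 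For PSD-ness, expand $D_r=\sum_{s}p_r(s)E_s$ to get $M_a=\sum_s\bigl(\sum_r\frac{a_r p_r(s)}{n_r\lvert X\rvert}\bigr)E_s$; since the $E_s$ are orthogonal projections onto subspaces partitioning $\RR^X$, $M_a\succeq 0$ holds iff every bracketed coefficient is non-negative, and the duality identity $p_r(s)m_s=q_s(r)\,n_r\lvert X\rvert$ (read off by computing $\ip{E_s}{D_r}$ in each basis) rewrites that coefficient as $\frac{1}{m_s}\sum_{r\in R}q_s(r)a_r$. Hence $M_a\succeq 0$ is precisely the MacWilliams inequalities, and $a\mapsto M_a$ is a bijection onto the feasible points of $\cS(G_S(D))$ lying in $\cA_S$, preserving objective value; in particular $\val(\cL_S(D))\le\vartheta'(G_S(D))$.

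For the reverse inequality I would symmetrize. Let $\pi\colon\RR^{X\times X}\to\cA_S$ be the orthogonal projection for $\ip{\cdot}{\cdot}$, which since $(D_r)$ is an orthogonal basis of $\cA_S$ is given by $\pi(M)=\sum_r\frac{\ip{D_r}{M}}{n_r\lvert X\rvert}D_r$, i.e. $\pi$ replaces each entry by the average of $M$ over the relation containing that cell. Given any feasible $M$ of $\cS(G_S(D))$, I claim $\pi(M)$ is again feasible with the same objective: $\pi$ fixes $J$ and every element of $\cA_S$, so $\ip{J}{\pi(M)}=\ip{J}{M}$ and $\tr\pi(M)=\tr M=1$; $\pi(M)$ is entrywise non-negative because it averages non-negative entries; and $\pi(M)[u,v]=0$ on edges, since an edge lies in some $r\in R\setminus D$, every cell of $r$ is an edge (no loops, as $r\ne\cD_X$), and $M$ vanishes on all of them. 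The only substantive point is that $\pi$ preserves positive semi-definiteness: using that $(E_s)$ is also an orthogonal basis of $\cA_S$ with $\ip{E_s}{E_s}=m_s$, we have $\pi(M)=\sum_s\frac{\ip{E_s}{M}}{m_s}E_s$, and $\ip{E_s}{M}=\tr(E_sM)=\tr(E_sME_s)\ge 0$ because $E_s$ is a projection and $M\succeq 0$; hence $\pi(M)\succeq 0$. Thus $\pi(M)$ is a feasible SDP point inside $\cA_S$, which by the first paragraph arises from a feasible LP point of the same value, giving $\vartheta'(G_S(D))\le\val(\cL_S(D))$ and hence equality.

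I expect the main obstacle to be this symmetrization step for association schemes that are not Schurian: there is no ambient automorphism group of $G_S(D)$ to average a solution over, so a group-averaging argument is unavailable and one must instead show directly that the ``average over relations'' projection $\pi$ is positivity-preserving --- which is exactly what the identity $\ip{E_s}{M}=\tr(E_sME_s)$ delivers. The remaining mildly delicate point is tracking the normalizing constants $n_r$, $\lvert X\rvert$, $m_s$ correctly in the $p$--$q$ duality identity, but that is a routine two-line computation comparing the expansions of $\ip{E_s}{D_r}$ in the $(D_r)$ and $(E_s)$ bases.
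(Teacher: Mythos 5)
The paper states this theorem as a citation to Schrijver~\cite{Schrijver79} and gives no proof of its own, so there is nothing to compare against within the paper; evaluating your reconstruction on its own merits, it is correct and is in fact Schrijver's original argument. The two key steps are exactly as they should be: the map $a\mapsto M_a=\sum_r \frac{a_r}{n_r\lvert X\rvert}D_r$ identifies feasible LP points with feasible SDP points inside $\cA_S$ (via the $p$--$q$ duality identity $p_r(s)m_s = q_s(r)\,n_r\lvert X\rvert$, so that PSD-ness is precisely the MacWilliams inequalities), and the Reynolds-type projection $\pi$ onto $\cA_S$ preserves feasibility, with the only non-trivial point --- that $\pi$ preserves PSD-ness without any group averaging available --- handled correctly via $\ip{E_s}{M}=\tr(E_sME_s)\ge 0$. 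The one thing you should say explicitly is that, when reducing to the symmetric scheme $S'$ (as the paper's remark after the definition of $D$-codes allows), you also need $D$ to be closed under transposition so that the graph $G_{S'}(D')$ and the LP $\cL_{S'}(D')$ genuinely coincide with $G_S(D)$ and $\cL_S(D)$; this holds automatically in all the applications since inner distributions of codes satisfy $a^C_r = a^C_{r^\top}$, but it is worth flagging since the paper's statement is for an arbitrary $D$.
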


\subsection{Natural Refinements of Translation Schemes}
\label{subsec:ftt}

In this section, we show how our construction generalizes nicely to translation schemes with an underlying
left module structure over some ring. This can be applied to any translation scheme by recalling that any
Abelian group is naturally a $\ZZ$-module, but sometimes it is more interesting to use a different module
structure (e.g., a vector space over a finite field).

\begin{definition}[Association scheme automorphism]  
  An \emph{automorphism} of an association scheme $S = (X,R)$ is a bijection $f\colon X\to X$ that fixes each
  $r\in R$ as a set, that is, we have $\{(f(x),f(y)) \mid (x,y)\in r\} = r$. The group of automorphisms of $S$
  is denoted $\Aut(S)$.
\end{definition}

\begin{definition}
  Let $S = (X,R)$ be a translation scheme and let $f_S\colon X\to R$ be the unique function such that
  $(x,y)\in f_S(x-y)$ for every $(x,y)\in X\times X$. Given a ring $K$, let us further assume that $X$ is
  equipped with a left $K$-module structure extending the Abelian group structure and let
  $\Aut_K(S)\leq\Aut(S)$ be the subgroup of automorphisms of the association scheme $S$ that are also left
  $K$-module automorphisms of $X$.

  A code $C$ in $S$ is called \emph{$K$-linear} if it is both additive and $K$-invariant in the sense that
  $kx\in C$ for every $k\in K$ and every $x\in C$.

  Let $\ell\in\NN_+$ and let $T\subseteq K^\ell$ be a collection of $\ell$-tuples of $K$. Define the function
  $f_{S,T}\colon X^\ell\to R^T$ by
  \begin{align*}
    f_{S,T}(x)(k) & \coloneqq f_S\left(\sum_{i=1}^\ell k_i x_i\right) \qquad (x\in X^\ell, k\in T).
  \end{align*}

  We say that \emph{$f_{S,T}$ factors through types of $S$} if for every $x,y\in X^\ell$, we have $f_{S,T}(x) =
  f_{S,T}(y)$ if and only if there exists $\sigma\in\Aut_K(S)$ such that $\sigma(x_i) = y_i$ for every
  $i\in[\ell]$.
\end{definition}

Similarly to symmetric difference configurations of \cref{def:configuration}, the function $f_{S,T}$ captures
information of the value of $f_S$ in $K$-linear combinations of $\ell$-tuples of elements of $X$ using
coefficients in $T\subseteq K^\ell$. The definition of factoring through types then requires that this
information is enough to determine the orbit\footnote{This is also the reason behind the name ``factors
  through types'': in model theory, two elements of a \emph{finite} model have the same type if and only if
  they are in the same orbit under the action of the automorphism group.} of a tuple $x\in X^\ell$ under the
natural diagonal action of $\Aut_K(S)$.

\begin{remark}\label{rmk:AutKS}
  It is straightforward to see that $f_S$ is $\Aut_K(S)$-invariant in the sense that $f_S\comp\sigma = f_S$
  for every $\sigma\in\Aut_K(S)$.

  This in particular implies that in the definition of $f_{S,T}$ factoring through types, the backward
  implication always holds: if $x,y\in X^\ell$ and $\sigma\in\Aut_K(S)$ are such that $\sigma(x_i) = y_i$ for
  every $i\in[\ell]$, then for every $k\in T$ we have
  \begin{align*}
    f_{S,T}(y)(k)
    & =
    f_S\left(\sum_{i=1}^\ell k_i \sigma(x_i)\right)
    =
    f_S\left(\sigma\left(\sum_{i=1}^\ell k_i x_i\right)\right)
    =
    f_S\left(\sum_{i=1}^\ell k_i x_i\right)
    =
    f_{S,T}(x)(k).
  \end{align*}
\end{remark}

The next definition generalizes the construction of \cref{sec:binary_warmup}.
\begin{definition}
  Let $S = (X,R)$ be a translation scheme over a left $K$-module $X$, let $\ell\in\NN_+$, let $T\subseteq
  K^\ell$ be such that $e_i\in T$ for every $i\in[\ell]$, where $(e_i)_j\coloneqq \One[i = j]$ and suppose
  $f_{S,T}$ factors through types.

  The \emph{$T$-refined $\ell$th tensor power of $S$} is the translation scheme $S^{\ell,T} = (X^\ell,R^{\ell,T})$ is
  defined by letting
  \begin{align*}
    R^{\ell,T} & \coloneqq \{r_h \mid h\in R^T\}\setminus\{\varnothing\},
  \end{align*}
  where 
  \begin{align*}
    r_h & \coloneqq \{(x,y)\in X^\ell\times X^\ell \mid f_{S,T}(x-y) = h\}
  \end{align*}
  for each function $h\colon T\to R$ and $X^\ell$ is equipped with the direct product left $K$-module
  structure. \cref{thm:ftt} below shows that $S^{\ell,T}$ is indeed a translation scheme.
\end{definition}

Before we show that $S^{\ell,T}$ is indeed a translation scheme, two particular choices of $(K,T)$ deserve
special attention.
\begin{enumerate}
\item When $K=\ZZ$ and $T=\{e_i \mid i\in[\ell]\}$, then $S^{\ell,T}$ is just the $\ell$th tensor power $S^\ell$.
\item When $T = K^\ell$, then $f_{S,T}$ encodes the \emph{complete $K$-linear configuration} of $x$ as it is
  able to determine the value of $f_S$ in any $K$-linear combination of $x_1,\ldots,x_\ell$. This will be
  particularly useful when $K$ is a (finite) field (and thus $X$ is a $K$-vector space).
\end{enumerate}

\begin{theorem}\label{thm:ftt}
  Let $S = (X,R)$ be a translation scheme over a left $K$-module $X$, let $\ell\in\NN_+$, let $T\subseteq
  K^\ell$ be such that $e_i\in T$ for every $i\in[\ell]$, where $(e_i)_j\coloneqq \One[i = j]$ and suppose
  $f_{S,T}$ factors through types.

  Then the following hold.
  \begin{enumerate}[label={\arabic*.}, ref={(\arabic*)}]
  \item $S^{\ell,T}\coloneqq (X^\ell,R^{\ell,T})$ is a translation scheme over the direct product group
    $X^\ell$ that refines the tensor power $S^\ell$.%
    \label{thm:ftt:scheme}
  \item If $S$ is symmetric, then so is $S^{\ell,T}$.%
    \label{thm:ftt:symm}
  \item If $C$ is a $K$-linear $D$-code in $S$, then $C^\ell$ is a $K$-linear $D^{\ell,T}$-code in
    $S^{\ell,T}$, where
    \begin{align*}
      D^{\ell,T} & \coloneqq \{r_h \mid h\in R^T\land \im(h)\subseteq D\}\setminus\{\varnothing\}.
    \end{align*}
    \label{thm:ftt:code}
  \end{enumerate}
\end{theorem}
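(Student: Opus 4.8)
The plan is to realize $S^{\ell,T}$ as a Schurian scheme; this immediately supplies all the association-scheme axioms — in particular the existence of intersection numbers, via \cref{fact:schurianBoseMesner} — and the translation-scheme structure then comes for free because the group realizing it will contain all translations of $X^\ell$. Concretely, I would work with $\Gamma \coloneqq X^\ell\rtimes\Aut_K(S)$ acting on $X^\ell$, where the normal subgroup $X^\ell$ acts by translation and each $\sigma\in\Aut_K(S)$ acts \emph{diagonally}, $\sigma\cdot(x_1,\dots,x_\ell)\coloneqq(\sigma(x_1),\dots,\sigma(x_\ell))$. Since each $\sigma$ is an additive automorphism of $X$, this is a well-defined semidirect product and a genuine action, and it is transitive on $X^\ell$ already through its translation subgroup.

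The heart of the argument is computing the orbits of the diagonal action of $\Gamma$ on $X^\ell\times X^\ell$. Two pairs $(x,y)$ and $(x',y')$ are $\Gamma$-equivalent if and only if there is $\sigma\in\Aut_K(S)$ with $\sigma(x_i-y_i)=x'_i-y'_i$ for all $i\in[\ell]$: from $\sigma(x)+t=x'$ and $\sigma(y)+t=y'$ subtract to get the condition, and conversely set $t\coloneqq x'-\sigma(x)$. By the hypothesis that $f_{S,T}$ factors through types — the easy direction being \cref{rmk:AutKS} — this is equivalent to $f_{S,T}(x-y)=f_{S,T}(x'-y')$, i.e., to $(x,y)$ and $(x',y')$ lying in a common $r_h$. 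Hence the nonempty sets $r_h$ are exactly the $\Gamma$-orbits on $X^\ell\times X^\ell$, so $S^{\ell,T}$ is the Schurian scheme of this action; it is therefore an association scheme, and it is a translation scheme because each $r_h$ depends on $(x,y)$ only through $x-y$. I would also record that its diagonal relation is $r_{h_0}$, where $h_0$ is the constant function equal to $\cD_X$, and that the identity $r_{h_0}=\cD_{X^\ell}$ is exactly where the hypothesis $e_i\in T$ is used: $f_{S,T}(x-y)(e_i)=f_S(x_i-y_i)=\cD_X$ forces $x_i=y_i$.

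The remaining points are quick. For the refinement assertion (part~\ref{thm:ftt:scheme}): the relations of $S^\ell$ are the level sets of $(x,y)\mapsto(f_S(x_1-y_1),\dots,f_S(x_\ell-y_\ell))$, and since $\{e_1,\dots,e_\ell\}\subseteq T$ the value $f_{S,T}(x-y)$ determines this tuple, so each $r_h$ is contained in a single relation of $S^\ell$; hence $R^{\ell,T}$ partitions each relation of $S^\ell$. For part~\ref{thm:ftt:symm}: if $S$ is symmetric then $f_S(-w)=f_S(w)^\top=f_S(w)$ for all $w\in X$ by \cref{rmk:translationsymmetric}, so $f_{S,T}\bigl(-(x-y)\bigr)(k)=f_S\bigl(-\sum_i k_i(x_i-y_i)\bigr)=f_S\bigl(\sum_i k_i(x_i-y_i)\bigr)=f_{S,T}(x-y)(k)$, whence $r_h^\top=r_h$ for every $h$. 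For part~\ref{thm:ftt:code}: if $C$ is a $K$-linear $D$-code in $S$, then $C^\ell$ is additive and coordinatewise $K$-invariant, hence $K$-linear in $S^{\ell,T}$; recalling that an additive code in a translation scheme is a $D'$-code exactly when the image of the code under the scheme's defining map lies in $D'$, and that this map sends $x\in X^\ell$ to $r_h$ with $h(k)=f_S(\sum_i k_i x_i)$, I would observe that for $x\in C^\ell$ and $k\in T$ we have $\sum_i k_i x_i\in C$ (as $C$ is closed under $K$-scaling and addition), so $h(k)\in f_S(C)\subseteq D$, giving $\im(h)\subseteq D$ and $r_h\in D^{\ell,T}$; thus $C^\ell$ is a $D^{\ell,T}$-code.

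The main obstacle — and really the only nontrivial step — is the orbit computation: one must see that the hypothesis ``$f_{S,T}$ factors through types'' is precisely the statement that the combinatorial datum $f_{S,T}$ separates the $\Aut_K(S)$-orbits on $X^\ell$ (taken through differences), so that the Schurian scheme of $\Gamma$ coincides on the nose with $S^{\ell,T}$. Everything after that is routine bookkeeping within the translation-scheme formalism of \cref{sec:association}.
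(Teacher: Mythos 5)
Your proof is correct, and it takes a genuinely different route to item~\ref{thm:ftt:scheme} than the paper does. The paper verifies the association-scheme axioms for $S^{\ell,T}$ by hand: it defines the putative intersection numbers $N_{h_1,h_2}(x,y)\coloneqq\lvert\{z\mid f_{S,T}(x-z)=h_1\land f_{S,T}(z-y)=h_2\}\rvert$ and shows, via the change of variables $z\mapsto\sigma^{-1}(z-y')+y$ for the $\sigma\in\Aut_K(S)$ supplied by the factors-through-types hypothesis, that $N_{h_1,h_2}(x,y)$ depends only on $f_{S,T}(x-y)$. Your proposal instead realizes $S^{\ell,T}$ as the Schurian scheme of the transitive action of $\Gamma=X^\ell\rtimes\Aut_K(S)$ on $X^\ell$ and appeals to \cref{fact:schurianBoseMesner} to get all the axioms at once. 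The pivotal step is the same in both arguments — the hypothesis is used precisely to produce a $\sigma\in\Aut_K(S)$ carrying $x-y$ to $x'-y'$ — but your packaging subsumes the verification of the partition, diagonal, and transpose axioms into the orbit computation, whereas the paper checks each of these separately. Your route also makes explicit a structural fact the paper only gestures at in the discussion following the theorem, namely that $S^{\ell,T}$ is itself Schurian for $\Gamma$; the paper's direct approach avoids committing to that realization but gains nothing in generality since, as you observe, the orbit identification works whenever $f_{S,T}$ factors through types, which is exactly the theorem's hypothesis. Items~\ref{thm:ftt:symm} and~\ref{thm:ftt:code} and the refinement claim are handled the same way as in the paper.
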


\begin{proof}
  We start proving item~\ref{thm:ftt:scheme}.

  It is obvious that $R^{\ell,T}$ forms a partition of $X^\ell\times X^\ell$ into non-empty subsets.

  Note also that if $(x,y)\in X^\ell\times X^\ell$ are such that $f_{S,T}(x-y)(k) = \cD_X$ for every $k\in T$, then
  since $e_i\in T$, we get $x_i = y_i$, thus $x = y$. Since we also have $f_{S,T}(0)(k) = f_S(0) = \cD_X$ for
  every $k\in T$, it follows that for the function $T\to R$ that is constant equal to $\cD_X$ we have
  $r_{\cD_X} = \cD_{X^\ell}$.

  It is also easy to see that for $h\colon T\to R$, by letting $h^\top\colon T\to R$ be given by $h^\top(k)\coloneqq
  h(k)^\top$, we have $r_h^\top = r_{h^\top}$.

  Note further that for each $r_1,r_2,\ldots,r_\ell\in R$, we have
  \begin{align*}
    r_1\otimes\cdots\otimes r_\ell & = \bigcup\{r_h \mid h\in R^T \land \forall i\in [\ell], h(e_i)=r_i\}.
  \end{align*}

  It remains only to show that the existence of the intersection numbers for $S^{\ell,T}$.

  For every $h_1,h_2\colon T\to R$ and every $x,y\in X^\ell$, let
  \begin{align*}
    N_{h_1,h_2}(x,y) & \coloneqq \lvert\{z\in X^\ell \mid f_{S,T}(x-z) = h_1 \land f_{S,T}(z-y) = h_2\}\rvert.
  \end{align*}
  It is sufficient to show that if $x,y,x',y'\in X^\ell$ are such that $f_{S,T}(x-y) = f_{S,T}(x'-y')$, then
  $N_{h_1,h_2}(x,y) = N_{h_1,h_2}(x',y')$.

  Since $f_{S,T}$ factors through types of $S$, there exists $\sigma\in\Aut_K(S)$ such that $\sigma(x_i-y_i) =
  x'_i-y'_i$ for every $i\in[\ell]$. Then we have{\footnotesize
  \begin{align*}
    N_{h_1,h_2}(x',y')
    & =
    \lvert\{z\in X^\ell \mid f_{S,T}(x'-z) = h_1\land f_{S,T}(z-y') = h_2\}\rvert
    \\
    & =
    \left\lvert\left\{u\in X^\ell \;\middle\vert\;
    \forall k\in T,
    \left(f_S\left(\sum_{i=1}^\ell k_i(x'_i-y'_i-u_i)\right) = h_1(k)\land
    f_S\left(\sum_{i=1}^\ell k_i u_i\right) = h_2(k)
    \right)
    \right\}
    \right\rvert
    \\
    & =
    \left\lvert\left\{u\in X^\ell \;\middle\vert\;
    \forall k\in T,
    \left(f_S\left(\sum_{i=1}^\ell k_i(\sigma(x_i-y_i)-u_i)\right) = h_1(k)\land
    f_S\left(\sum_{i=1}^\ell k_i u_i\right) = h_2(k)
    \right)
    \right\}
    \right\rvert
    \\
    & =
    \left\lvert\left\{u\in X^\ell \;\middle\vert\;
    \forall k\in T,
    \left(f_S\left(\sum_{i=1}^\ell k_i(x_i-y_i-\sigma^{-1}(u_i))\right) = h_1(k)\land
    f_S\left(\sum_{i=1}^\ell k_i\cdot \sigma^{-1}(u_i)\right) = h_2(k)
    \right)
    \right\}
    \right\rvert
    \\
    & =
    \left\lvert\left\{w\in X^\ell \;\middle\vert\;
    \forall k\in T,
    \left(f_S\left(\sum_{i=1}^\ell k_i(x_i-w_i)\right) = h_1(k)\land
    f_S\left(\sum_{i=1}^\ell k_i(w_i - y_i)\right) = h_2(k)
    \right)
    \right\}
    \right\rvert
    \\
    & =
    N_{h_1,h_2}(x,y),
  \end{align*}}
  where the second equality follows from the substitution $u_i\coloneqq z_i-y'_i$, the fourth equality follows
  since $\sigma$ is a left $K$-module automorphism of $X$ and $f_S$ is $\sigma$-invariant (see
  \cref{rmk:AutKS}) and the fifth equality follows from the substitution $w_i\coloneqq \sigma^{-1}(u_i) +
  y_i$.

  \medskip

  For item~\ref{thm:ftt:symm}, since a translation scheme $S$ is symmetric if and only if the function $f_S$
  is even (see \cref{rmk:translationsymmetric}), the fact that $S$ is symmetric implies $f_S$ is even, hence
  $f_{S,T}$ is also even and thus $S^{\ell,T}$ is symmetric (as $f_{S^{\ell,T}}(x) = r_{f_{S,T}(x)}$).

  \medskip

  For item~\ref{thm:ftt:code}, it is obvious that $C^\ell$ is both a subgroup of $X^\ell$ and $K$-invariant. Let
  $x\in C^\ell$ and note that since $C$ is $K$-linear, for every $k\in T$, we have $\sum_{i=1}^\ell k_i x_i\in C$,
  so $f_{S,T}(x)(k) = f_S(\sum_{i=1}^\ell k_i x_i)\in D$ and thus
  \begin{align*}
    f_{S^{\ell,T}}(C^\ell) & = \{r_{f_{S,T}(x)} \mid x\in C^\ell\} \subseteq D,
  \end{align*}
  hence $C^\ell$ is a $K$-linear $D^{\ell,T}$-code.
\end{proof}

Note that if the underlying translation scheme $S=(X,R)$ is a Schurian scheme associated to a group action of
a semidirect product $X\rtimes G$ in $X$ such as in the (weak or strong) Hamming scheme (see
\cref{ex:weakHamming,ex:strongHamming}), then we could easily produce a Schurian translation scheme refining
the $\ell$th tensor power by considering the action of a semidirect product $X^\ell\rtimes G$ on $X^\ell$
obtained by considering the product action of $X^\ell$ and the diagonal action of $G$. However, even in the
Schurian case, the true value of \cref{thm:ftt} above lies in two facts:
\begin{enumerate}
\item The relation of $(x,y)\in X^\ell\times X^\ell$ is determined by the value of $f_{S,T}(x-y)$, that is,
  the value of $f_S$ in $K$-linear combinations of $(x-y)$ using tuples in $T$.
\item If $C$ is a $K$-linear $D$-code in $S$, we can deduce ``extra'' restrictions of the code $C^\ell$ in
  $S^{\ell,T}$ besides the ones that follow from the tensor power. More specifically, it is trivial that
  $C^\ell$ is a $D^{\otimes \ell}$-code in the tensor power $S^\ell$, which in turn implies that it is a
  $\widehat{D}$-code in $S^{\ell,T}$, where
  \begin{align*}
    \widehat{D}
    & \coloneqq
    \{r_h\in R^{\ell,T} \mid \exists r\in R^{\otimes\ell}, r_h\subseteq r\}
    \\
    & =
    \{r_h \mid h\in R^T\land \forall i\in[\ell], h(e_i)\in D\}.
  \end{align*}
  However, \cref{thm:ftt} says that we further have $h(k)\in D$ for \emph{every} $k\in T$ (not only for the
  $e_i$).
\end{enumerate}

Our next objective is to show that under mild assumptions on the structure of the left $K$-module $G$, for the
weak and strong Hamming schemes $\HH_n(G)$ and $\HH_n^*(G)$ of \cref{ex:weakHamming,ex:strongHamming}, the
functions $f_{\HH_n(G),K^\ell}$ and $f_{\HH_n^*(G),K^\ell}$ factor through types when $G^n$ is equipped with
the direct product left $K$-module structure. A particular case when all such mild assumptions hold is when $G
= K = \FF$ for some finite field $\FF$.

\begin{remark}\label{rmk:klp}
  Once we prove that the function $f_{\HH_n(\FF_2),\FF_2^\ell}$ factors through types of the weak Hamming
  scheme $\HH_n(\FF_2)$ (\cref{cor:highLPboundweak} below), the hierarchy of linear programs
  presented in \cref{sec:binary_warmup} can be retrieved as $\KLP_\Lin(n,d,\ell) =
  \cL_{S^{\ell,T}}(D_d^{\ell,T})$ for $S = \HH_n(\FF_2)$, $T = \FF_2^\ell$ and
  $D_d\coloneqq\{r_0,r_d,r_{d+1},\ldots,r_n\}$.

  Analogously, the hierarchy for non-linear codes can be retrieved as
  $\KLP(n,d,\ell)=\cL_{S^{\ell,T}}(\widehat{D}_d^\ell)$ using instead the weaker restriction set
  \begin{align*}
    \widehat{D}_d^\ell & \coloneqq \{r\in R^{\ell,T} \mid \exists r'\in D_d^{\otimes\ell}, r\subseteq r'\}.
  \end{align*}

  These can also be retrieved using the strong Hamming scheme $\HH_n^*(\FF_2)$ instead (see
  \cref{cor:highLPboundstrong} below) as for the binary case we have $\HH_n^*(\FF_2)=\HH_n(\FF_2)$.

  However, the same corollaries apply for the more general case of a (not necessarily binary) finite field
  $\FF$, in which the weak and strong Hamming schemes are different. In this case, we define
  \begin{align*}
    \KLP^\FF_\Lin(n,d,\ell)\coloneqq \cL_{S^{\ell,T}}(D_d^{\ell,T})
  \end{align*}
  using $S = \HH_n(\FF)$, $T = \FF^\ell$ and $D_d\coloneqq\{r_0,r_d,r_{d+1},\ldots,r_n\}$.

  We can also define $\KLP^\FF(n,d,\ell)$ analogously for arbitrary finite fields, but as we will see in
  \cref{prop:main:lifting:formal}, these hierarchies for non-linear codes collapse and yield the same bound as
  the usual Delsarte linear program.
\end{remark}

Before we start with the case of the strong Hamming scheme $\HH_n^*(G)$, let us prove a few lemmas.

\begin{lemma}\label{lem:nontrivialcharactersum}
  Let $K$ be a finite ring, let $G$ be a finite simple left $K$-module and let $\chi\colon G\to\CC$ be a
  non-trivial character of $G$. Then
  \begin{align*}
    \sum_{k\in K} \chi(kg)
    & =
    \lvert K\rvert\cdot\One[g = 0]
  \end{align*}
  for every $g\in G$.
\end{lemma}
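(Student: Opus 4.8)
The plan is to split on whether $g$ is the zero element. If $g = 0$ then $kg = 0$ for every $k\in K$, so $\chi(kg) = \chi(0) = 1$ and the sum is exactly $\lvert K\rvert = \lvert K\rvert\cdot\One[g=0]$, which is the claim in that case.

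For the main case $g\neq 0$, I would consider the map $\phi_g\colon K\to G$ defined by $\phi_g(k)\coloneqq kg$. This is a homomorphism of Abelian groups (in fact a homomorphism of left $K$-modules, viewing $K$ as a module over itself), so its image $Kg$ is a $K$-submodule of $G$. Working with unital modules, $g = 1\cdot g\in Kg$, hence $Kg\neq\{0\}$; since $G$ is simple this forces $Kg = G$, i.e.\ $\phi_g$ is surjective. As a surjective homomorphism between finite groups, every fiber $\phi_g^{-1}(h)$ has the same cardinality $\lvert K\rvert/\lvert G\rvert$.

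Grouping the sum by fibers then gives $\sum_{k\in K}\chi(kg) = \sum_{h\in G}\lvert\phi_g^{-1}(h)\rvert\,\chi(h) = (\lvert K\rvert/\lvert G\rvert)\sum_{h\in G}\chi(h)$, and the remaining sum $\sum_{h\in G}\chi(h)$ vanishes because $\chi$ is a non-trivial character of the finite Abelian group $G$: picking $h_0$ with $\chi(h_0)\neq 1$, the substitution $h\mapsto h+h_0$ permutes $G$ and shows $\sum_h\chi(h) = \chi(h_0)\sum_h\chi(h)$, so the sum is $0$. Hence $\sum_{k\in K}\chi(kg) = 0 = \lvert K\rvert\cdot\One[g=0]$, finishing the proof.

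I do not expect a genuine obstacle here; the argument is a one-line application of simplicity (to get $Kg = G$) together with the standard orthogonality relation for characters. The only point to be a little careful about is that this uses the convention that the $K$-module structure is unital, so that $g\in Kg$ and $Kg$ is therefore a \emph{nonzero} submodule of $G$ — without that (or if $K$ could act as zero) simplicity of $G$ as a $K$-module would not rule out the action being trivial, and the formula would fail for $g\neq 0$.
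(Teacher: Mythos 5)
Your proof is correct and follows the same route as the paper: simplicity forces $Kg = G$ when $g\neq 0$, and the sum then vanishes by orthogonality of the non-trivial character. Your variant groups the sum over the equal-sized fibers of the surjection $k\mapsto kg$ rather than over the level sets of $k\mapsto\chi(kg)$ (as the paper does), and you rightly flag the unital-module convention as what makes $Kg$ a nonzero submodule.
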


\begin{proof}
  If $g = 0$, then $\chi(kg) = 1$ for every $k\in K$, thus $\sum_{k\in K} \chi(kg) = \lvert K\rvert$.

  On the other hand, if $g\neq 0$, then we must have $Kg = G$ as $Kg$ is a non-trivial left $K$-submodule of $G$
  and $G$ is simple. Note also that for $k_1,k_2\in K$, we have $\chi(k_1g) = \chi(k_2g)$ if and only if
  $k_1-k_2\in H$, where
  \begin{align*}
    H & \coloneqq \{k\in K \mid \chi(kg) = 1\}.
  \end{align*}
  Since $H$ is a subgroup of $K$, each level set of $k\mapsto\chi(kg)$ is a coset of $H$, so they must all have
  the same size, namely $\lvert K\rvert/\lvert H\rvert$, so we get
  \begin{align*}
    \sum_{k\in K} \chi(kg)
    & =
    \frac{\lvert K\rvert}{\lvert H\rvert}\cdot\frac{\lvert\im(\chi)\rvert}{\lvert G\rvert}
    \sum_{g'\in G}\chi(g')
    = 0,
  \end{align*}
  where the last equality follows since $\chi$ is a non-trivial character (so it is orthogonal to the trivial
  character).
\end{proof}

The next lemma says that the joint distribution of $\ell$ random variables with values in a finite simple left
$K$-module (for a finite ring $K$) can be recovered from the (individual) distributions of all $K$-linear
combinations of them.

\begin{lemma}\label{lem:distlincomb}
  Let $K$ be a finite ring, let $G$ be a finite simple left $K$-module and let $\chi\colon G\to\CC$ be a
  non-trivial character of $G$. Suppose further that $\rn{X}$ is a random variable with values in $G^\ell$ for
  some $\ell\in\NN_+$ and for every $k\in K^\ell$, let $\rn{Y}_k \coloneqq \sum_{i=1}^\ell k_i\rn{X}_i$. Then 
  \begin{align*}
    \PP[\rn{X} = x]
    & =
    \frac{1}{\lvert K^\ell\rvert}
    \sum_{\substack{k\in K^\ell\\y\in G}}
    \chi\left(\sum_{i=1}^\ell k_i x_i - y\right)
    \cdot\PP[\rn{Y}_k = y]
  \end{align*}
  for every $x\in G^\ell$.
\end{lemma}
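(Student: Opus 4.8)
The plan is to expand the right-hand side, swap the finite sum over $k$ with the expectation, use additivity of the character $\chi$ to factor the resulting sum over $K^\ell$ into a product of sums over $K$, and finally apply \cref{lem:nontrivialcharactersum} to each factor.

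First I would rewrite the inner sum over $y$ as an expectation: for a fixed $k\in K^\ell$,
\begin{align*}
  \sum_{y\in G} \chi\left(\sum_{i=1}^\ell k_i x_i - y\right)\PP[\rn{Y}_k = y]
  =
  \E\left[\chi\left(\sum_{i=1}^\ell k_i x_i - \rn{Y}_k\right)\right]
  =
  \E\left[\chi\left(\sum_{i=1}^\ell k_i(x_i - \rn{X}_i)\right)\right],
\end{align*}
where the last step uses the definition $\rn{Y}_k = \sum_{i=1}^\ell k_i\rn{X}_i$. Summing over $k\in K^\ell$ and pulling the (finite) sum inside the expectation by linearity, the right-hand side of the claimed identity equals
\begin{align*}
  \E\left[\frac{1}{\lvert K^\ell\rvert}\sum_{k\in K^\ell}\chi\left(\sum_{i=1}^\ell k_i(x_i - \rn{X}_i)\right)\right].
\end{align*}

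Next I would evaluate the bracketed quantity pointwise. Fix an outcome and write $z_i \coloneqq x_i - \rn{X}_i\in G$. Since $\chi$ is a character (hence additive, $\chi(a+b)=\chi(a)\chi(b)$), we have $\chi\bigl(\sum_{i=1}^\ell k_i z_i\bigr) = \prod_{i=1}^\ell \chi(k_i z_i)$, so the sum over $k\in K^\ell$ factors:
\begin{align*}
  \sum_{k\in K^\ell}\chi\left(\sum_{i=1}^\ell k_i z_i\right)
  =
  \prod_{i=1}^\ell\left(\sum_{k_i\in K}\chi(k_i z_i)\right)
  =
  \prod_{i=1}^\ell \lvert K\rvert\cdot\One[z_i = 0]
  =
  \lvert K^\ell\rvert\cdot\One[z = 0],
\end{align*}
where the middle equality is \cref{lem:nontrivialcharactersum} applied with the non-trivial character $\chi$ on the finite simple left $K$-module $G$ (this is exactly where simplicity of $G$ and non-triviality of $\chi$ enter). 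Hence the bracketed quantity is $\One[z=0]=\One[\rn{X}=x]$, and taking expectations yields $\PP[\rn{X}=x]$, as desired.

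I do not expect any genuine obstacle here: the argument is a routine combination of linearity of expectation, multiplicativity of characters under addition of the module, and \cref{lem:nontrivialcharactersum}. The only point to keep an eye on is that $\chi$ must remain non-trivial and $G$ simple so that each per-coordinate character sum vanishes when $z_i\neq 0$; both hypotheses are granted in the statement.
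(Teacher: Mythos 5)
Your proof is correct and follows essentially the same route as the paper's: both rewrite the sum over $y$ in terms of the distribution of $\rn{X}$, factor $\sum_{k\in K^\ell}\chi\bigl(\sum_i k_i z_i\bigr)$ into $\prod_i\sum_{k\in K}\chi(k z_i)$, and kill the off-diagonal terms with \cref{lem:nontrivialcharactersum}. The only difference is cosmetic — you phrase the intermediate step in expectation notation, while the paper expands $\PP[\rn{Y}_k=y]$ as an explicit sum over $z\in G^\ell$ and isolates the $z=x$ term — so the two arguments are interchangeable.
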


\begin{proof}
  First note that for every $k\in K^\ell$ and every $y\in G$, we have
  \begin{align*}
    \PP[\rn{Y}_k = y]
    & =
    \sum_{\substack{z\in G^\ell\\ \sum_{i=1}^\ell k_i z_i = y}} \PP[\rn{X} = z],
  \end{align*}
  which implies that
  \begin{align}
    \frac{1}{\lvert K^\ell\rvert}
    \sum_{\substack{k\in K^\ell\\y\in G}}
    \chi\left(\sum_{i=1}^\ell k_i x_i - y\right)
    \cdot\PP[\rn{Y}_k = y]
    & =
    \frac{1}{\lvert K^\ell\rvert}
    \sum_{z\in G^\ell}
    \PP[\rn{X}=z]\cdot
    \sum_{\substack{k\in K^\ell\\y\in G\\ \sum_{i=1}^\ell k_i z_i = y}}
    \chi\left(\sum_{i=1}^\ell k_i x_i - y\right)
    \notag\\
    & =
    \frac{1}{\lvert K^\ell\rvert}
    \sum_{z\in G^\ell}
    \PP[\rn{X}=z]\cdot
    \sum_{k\in K^\ell}
    \chi\left(\sum_{i=1}^\ell k_i (x_i - z_i)\right)
    \notag\\
    & =
    \PP[\rn{X}=x]
    +
    \frac{1}{\lvert K^\ell\rvert}
    \sum_{\substack{z\in G^\ell\\z\neq x}}
    \PP[\rn{X}=z]\cdot
    \sum_{k\in K^\ell}
    \chi\left(\sum_{i=1}^\ell k_i (x_i - z_i)\right).
    \label{eq:chisum}
  \end{align}

  To complete the proof, it is sufficient to show that the inner sum of the second term in~\eqref{eq:chisum}
  is zero. But note that
  \begin{align*}
    \sum_{k\in K^\ell}
    \chi\left(\sum_{i=1}^\ell k_i (x_i - z_i)\right)
    & =
    \prod_{i=1}^\ell \sum_{k\in K} \chi(k(x_i-z_i))
  \end{align*}
  and since $x_i\neq z_i$ for at least one $i\in[\ell]$, from \cref{lem:nontrivialcharactersum}, the
  above is zero as desired.
\end{proof}

Let us also recall one standard fact from algebra.

\begin{lemma}\label{lem:finitefaithful}
  Let $K$ be a ring and $G$ be a left $K$-module. If $G$ is both finite and faithful, that is, the annihilator
  \begin{align*}
    \Ann_K(G) & \coloneqq \{k\in K\mid \forall g\in G, kg = 0\}
  \end{align*}
  is trivial (i.e., $\Ann_K(G) = \{0\}$). Then $K$ is finite.
\end{lemma}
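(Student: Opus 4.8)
The plan is to use faithfulness to realize $K$ as a subset of a manifestly finite set, namely the set of self-maps of the underlying set of $G$. First I would introduce the \emph{action map}
\[
  \rho\colon K\to G^G,\qquad \rho(k)(g)\coloneqq kg,
\]
where $G^G$ denotes the set of all functions from $G$ to itself; this is well-defined by the module axioms (in fact $\rho$ is a ring homomorphism into $\operatorname{End}_{\ZZ}(G)$ with composition as multiplication, though we will only need that it is a function). Since $G$ is finite, so is $G^G$: indeed $\lvert G^G\rvert = \lvert G\rvert^{\lvert G\rvert}$.

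Next I would verify that $\rho$ is injective. If $\rho(k) = \rho(k')$, then $kg = k'g$ for every $g\in G$, hence $(k-k')g = 0$ for every $g\in G$, which says exactly that $k-k'\in\Ann_K(G)$. By hypothesis $\Ann_K(G) = \{0\}$, so $k = k'$. Therefore $\rho$ embeds $K$ into the finite set $G^G$, giving $\lvert K\rvert\le\lvert G\rvert^{\lvert G\rvert} < \infty$, as desired. There is essentially no obstacle in this argument; the only point requiring (minimal) care is that $\rho$ is being used merely as an injective function of sets, so that no unitality or other structural assumption on $K$ or on the module $G$ is needed — only that multiplication by each $k\in K$ is a function $G\to G$ and that distinct elements of $K$ act differently.
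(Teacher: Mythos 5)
Your proof is correct and follows essentially the same approach as the paper: embed $K$ into the (finite) set of self-maps of $G$ via the action, and use faithfulness to conclude injectivity. The only difference is cosmetic — the paper calls the maps $g\mapsto kg$ ``left $K$-module endomorphisms'' (which need not be true when $K$ is noncommutative), whereas you correctly treat $\rho$ as merely a set-theoretic injection into $G^G$; this is a slight tightening, but the argument is the same.
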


\begin{proof}
  Each element $k\in K$ induces a left $K$-module endomorphism $f_k\colon G\to G$ of $G$ given by $f_k(g) =
  k\cdot g$. Note that for $k_1,k_2\in K$, we have $f_{k_1} = f_{k_2}$ if and only if $k_1-k_2\in\Ann_K(G)$,
  so since $G$ is faithful, all $f_k$ must be different. If $G$ is finite, it has only finitely many
  endomorphisms, so $K$ must also be finite.
\end{proof}

We can now show that for the strong Hamming scheme $\HH_n^*(G)$ over a finite simple left $K$-module $G$, the
function $f_{\HH_n^*(G),K^\ell}$ factors through types. We recall that when $K$ is commutative, the simplicity
condition reduces to saying that there is a maximal ideal $I$ of $K$ such that $G\cong K/I$ as a $K$-module,
that is, it is a $1$-dimensional vector space over the (necessarily finite) field $\FF\coloneqq K/I$ (note
that $K$ itself does not need to be a field, e.g., $K=\ZZ$ and $G=\ZZ_p$ for some prime $p$); in other words,
all cases when $K$ is commutative and $G$ is simple are indirectly captured by the usual case $K = G = \FF$
for some finite field $\FF$.

\begin{proposition}\label{prop:strongHammingftt}
  Let $K$ be a ring, let $G$ be a finite simple left $K$-module and let $n,\ell\in\NN_+$. Consider the strong
  Hamming scheme $\HH_n^*(G)$ of order $n$ over $G$ equipped with the direct product left $K$-module structure
  on $G^n$. Then $f_{\HH_n^*(G),K^\ell}$ factors through types of $\HH_n^*(G)$.
\end{proposition}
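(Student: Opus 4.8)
The plan is to prove both directions of the ``factors through types'' property; the backward implication is exactly \cref{rmk:AutKS}, so I would concentrate on the forward one. Write $S\coloneqq\HH_n^*(G)$ and $T\coloneqq K^\ell$. Unwinding the definitions, $f_{S,T}(x)(k) = f_S(\sum_{i=1}^\ell k_i x_i)$ records only the \emph{composition} of the word $\sum_i k_i x_i\in G^n$: it is the relation $r_{h}$ where $h(g)$ is the number of coordinates of $\sum_i k_i x_i$ that equal $g$.

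First I would reduce to the case where $K$ is finite. The ideal $\Ann_K(G)$ is two-sided and annihilates $G^n$ coordinatewise; hence $f_{S,T}(x)(k)$ depends on $k$ only through its image in $\bar K\coloneqq K/\Ann_K(G)$, and a $K$-module automorphism of $G^n$ is the same thing as a $\bar K$-module automorphism, so $\Aut_K(S)=\Aut_{\bar K}(S)$. Also $G$ is still a faithful simple $\bar K$-module, so ``$f_{S,T}$ factors through types'' with $T=K^\ell$ holds iff the corresponding statement holds with $T=\bar K^\ell$; and by \cref{lem:finitefaithful} the ring $\bar K$ is finite. So from now on I assume $K$ finite.

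The heart of the argument is the following observation. To a tuple $x=(x_1,\dots,x_\ell)\in(G^n)^\ell$ associate its \emph{column multiplicity function} $m_x\colon G^\ell\to\NN$ defined by $m_x(c)\coloneqq\lvert\{\,i\in[n]\mid ((x_1)_i,\dots,(x_\ell)_i)=c\,\}\rvert$; thus $m_x$ records, as a multiset, the $n$ columns of the $\ell\times n$ array with rows $x_1,\dots,x_\ell$. I claim that $f_{S,T}(x)$ and $m_x$ determine each other. One direction is immediate since $\sum_i k_i x_i$ is obtained by applying $c\mapsto\sum_i k_i c_i$ to each column of that array. For the converse, let $j$ be uniform on $[n]$ and set $\rn{X}\coloneqq((x_1)_j,\dots,(x_\ell)_j)\in G^\ell$, a random variable with law $m_x/n$; then for each $k\in K^\ell$ the linear combination $\rn{Y}_k\coloneqq\sum_i k_i\rn{X}_i$ has law $h_k/n$, where $f_{S,T}(x)(k)=r_{h_k}$. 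Since $G$ is a nonzero finite abelian group it has a nontrivial character, so \cref{lem:distlincomb} recovers the law of $\rn{X}$ — hence $m_x$ — from the laws of all the $\rn{Y}_k$, i.e.\ from $f_{S,T}(x)$. (This is the only place where simplicity of $G$ is used, via \cref{lem:nontrivialcharactersum} inside \cref{lem:distlincomb}.)

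Finally, suppose $f_{S,T}(x)=f_{S,T}(y)$ for $x,y\in(G^n)^\ell$. By the previous step $m_x=m_y$, so the two arrays have the same multiset of columns, and there is $\sigma\in S_n$ with $((x_1)_{\sigma(j)},\dots,(x_\ell)_{\sigma(j)})=((y_1)_j,\dots,(y_\ell)_j)$ for all $j$, i.e.\ $x_i\cdot\sigma=y_i$ for every $i\in[\ell]$ under the coordinate-permuting right action of $S_n$ on $G^n$. Right multiplication by $\sigma$ is a $K$-module automorphism of $G^n$ and an automorphism of $S$ — it fixes each relation $r_h$ since $\lvert x\cdot\sigma-y\cdot\sigma\rvert_g=\lvert x-y\rvert_g$ for every $g\in G$ — so it lies in $\Aut_K(S)$ and witnesses that $f_{S,T}$ factors through types. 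The only slightly delicate points are the reduction to finite $K$ (checking that $\Ann_K(G)$ is two-sided and that automorphism groups are unchanged) and recognizing that \cref{lem:distlincomb} is precisely the tool needed to go from marginals of linear combinations back to the joint column distribution; the rest is routine bookkeeping with coordinate permutations.
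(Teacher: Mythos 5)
Your proof is correct and takes essentially the same approach as the paper's: the same reduction to finite $K$ via $K/\Ann_K(G)$ and \cref{lem:finitefaithful}, the same use of \cref{lem:distlincomb} to recover the joint law of a uniformly random column from the laws of all $K$-linear combinations, and the same step of turning equality of column multisets into a permutation $\sigma\in S_n$ witnessing the type equivalence. Your ``column multiplicity function'' $m_x$ is just $n$ times the probability mass function of the random variable $\rn{X}^x$ the paper works with directly, so the organization is cosmetically different but mathematically identical.
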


\begin{proof}
  First, recall that the strong Hamming scheme relations are given by
  \begin{align*}
    r_h & \coloneqq \{(x,y) \mid \forall g\in G, \lvert x-y\rvert_g = h(g)\},
  \end{align*}
  for $h\in\{0,1,\ldots,n\}^G\}$ such that $r_h$ is non-empty, where $\lvert z\rvert_g$ is the number of
  positions $i\in[n]$ such that $z_i=g$. In this proof we will abuse notation and write $f_{\HH_n^*(G)}(x) =
  h$ in place of $f_{\HH_n^*(G)}(x) = r_h$, that is, we will view $f_{\HH_n^*(G)}$ as a function with values
  in $\{0,1,\ldots,n\}^G$ rather than in $R\coloneqq\{r_h \mid
  h\in\{0,1,\ldots,n\}^G\}\setminus\{\varnothing\}$. Accordingly, we will also view $f_{\HH_n^*(G),K^\ell}$ as
  a function with values in $(\{0,1,\ldots,n\}^G)^{K^\ell}$ rather than in $R^{K^\ell}$.

  \medskip
  
  First, we claim that it is enough to prove the case when $K$ is finite. Indeed, recall that the annihilator
  $\Ann_K(G)$ of $G$ in $K$ is a two-sided ideal of $K$ and the left $K$-module structure on $G$ induces a
  natural left $K/\Ann_K(G)$-module structure given by $(k + \Ann_K(G))\cdot g \coloneqq k\cdot g + \Ann_K(G)$
  ($k\in K$, $g\in G$). Note also that for every $x\in (G^n)^\ell$ and every $k\in K^\ell$, we have
  \begin{align*}
    f_{\HH_n^*(G),(K/\Ann_K(G))^\ell}(x)\bigl((k_i + \Ann_K(G) \mid i\in[\ell])\bigr)
    & =
    f_{\HH_n^*(G)}\left(\sum_{i=1}^n k_i\cdot x_i\right)
    =
    f_{\HH_n^*(G), K^\ell}(x)(k),
  \end{align*}
  so if $f_{\HH_n^*(G),(K/\Ann_K(G))^\ell}$ factors through types, then $f_{\HH_n^*(G),K^\ell}$ also does
  so. From \cref{lem:finitefaithful}, $K/\Ann_K(G)$ must be finite as $G$ is a finite faithful left
  $K/\Ann_K(G)$-module, completing our reduction.

  \medskip

  Let us now prove the case when $K$ is finite.

  First, note that $\Aut_K(\HH_n(G))$ contains a subgroup isomorphic to the symmetric group $S_n$ on $n$
  letters\footnote{In fact, $\Aut_K(\HH_n(G))$ is precisely equal to this subgroup, but we will not need this
    fact.}; namely, the natural right action of $S_n$ on $G^n$ given by $(x\cdot\sigma)_i \coloneqq x_{\sigma(i)}$
  ($x\in G^n$, $\sigma\in S_n$, $i\in[n]$) is free and preserves the left $K$-module structure of $G^n$ and
  every relation $r_h\in R$ of $\HH_n^*(G)$ is invariant under this action and thus this action induces a
  subgroup of $\Aut_K(\HH_n^*(G))$ isomorphic to $S_n$ where $\sigma\in S_n$ corresponds to the automorphism
  $F_\sigma\colon x\mapsto x\cdot\sigma$.

  Given a point $z\in (G^n)^\ell$, let $\rn{X}^z$ be the random variable with values in $G^\ell$ defined by
  \begin{align*}
    \rn{X}^z_j & \coloneqq (z_j)_{\rn{i}} \qquad (j\in [\ell]),
  \end{align*}
  where $\rn{i}$ is picked uniformly at random in $[n]$, that is, $\rn{X}^z_j$ is the value of the $j$th word
  $z_j$ at the (uniformly at random) position $\rn{i}$. Note that we use the same $\rn{i}$ for all values of
  $j\in[\ell]$, so the coordinates of $\rn{X}$ are not necessarily independent.

  For every $k\in K^\ell$, let also $\rn{Y}^z_k\coloneqq \sum_{j=1}^\ell k_j \rn{X}^z_j$ and note that for every $g\in
  G$, we have
  \begin{align*}
    \PP[\rn{Y}^z_k = g]
    & =
    \frac{
      \left\lvert\left\{
      i\in [n]
      \;\middle\vert\;
      \sum_{j=1}^\ell k_j (z_j)_i = g
      \right\}\right\rvert
    }{
      n
    }
    =
    \frac{f_{\HH_n^*(G),K^n}(z)(k)}{n}.
  \end{align*}

  By \cref{lem:distlincomb}, the distribution of $\rn{Y}^z$ completely determines the distribution of
  $\rn{X}^z$. This means that if $x,y\in (G^n)^\ell$ are such that $f_{\HH_n^*(G),K^\ell}(x) =
  f_{\HH_n^*(G),K^\ell}(y)$, then $\rn{X}^x$ has the same distribution as $\rn{X}^y$ and thus there exists a
  permutation $\sigma\in S_n$ such that for every $i\in[n]$ and every $j\in[\ell]$, we have
  $(x_j)_{\sigma(i)} = (y_j)_i$, that is, for the automorphism $F_\sigma\in\Aut_K(\HH_n^*(G))$, we have
  $F_\sigma(x_j) = y_j$ for every $j\in[\ell]$, so $f_{\HH_n^*(G),K^\ell}$ factors through types.
\end{proof}

The next example shows that the simplicity assumption in \cref{prop:strongHammingftt} is necessary.

\begin{example}
  Consider the finite left $\FF_2$-module $\FF_2^2$, let $n\coloneqq 4$ and $\ell\coloneqq 2$ and consider the
  elements $x,y\in((\FF_2^2)^n)^\ell$ defined as
  \begin{align*}
    y_1\coloneqq x_1 & \coloneqq \bigl((0,0),(0,1),(1,0),(1,1)\bigr),\\
    x_2 & \coloneqq \bigl((0,1),(1,0),(0,0),(1,1)\bigr),\\
    y_2 & \coloneqq \bigl((1,0),(0,1),(1,1),(0,0)\bigr)
    \intertext{and note that}
    x_1 + x_2 & = \bigl((0,1),(1,1),(1,0),(0,0)\bigr),\\
    y_1 + y_2 & = \bigl((1,0),(0,0),(0,1),(1,1)\bigr),
  \end{align*}

  If we think of the function $f_{\HH_4^*(\FF_2^2)}$ as taking values in $\{0,1,2,3,4\}^{\FF_2^2}$ as we did
  in the proof of \cref{prop:strongHammingftt} and the function $f_{\HH_4^*(\FF_2^2),\FF_2^2}$ as
  taking values in $(\{0,1,2,3,4\}^{\FF_2^2})^{\FF_2^2}$, then for every $g\in\FF_2^2$, we have
  \begin{align*}
    f_{\HH_4^*(\FF_2^2),\FF_2^2}(x)(0,0)(g) = f_{\HH_4^*(\FF_2^2),\FF_2^2}(y)(0,0)(g) & = 4\cdot\One[g = 0],\\
    f_{\HH_4^*(\FF_2^2),\FF_2^2}(x)(1,0)(g) = f_{\HH_4^*(\FF_2^2),\FF_2^2}(y)(1,0)(g) & = 1,\\
    f_{\HH_4^*(\FF_2^2),\FF_2^2}(x)(0,1)(g) = f_{\HH_4^*(\FF_2^2),\FF_2^2}(y)(1,0)(g) & = 1,\\
    f_{\HH_4^*(\FF_2^2),\FF_2^2}(x)(1,1)(g) = f_{\HH_4^*(\FF_2^2),\FF_2^2}(y)(1,0)(g) & = 1.
  \end{align*}

  However, no $\sigma\in\Aut_{\FF_2}(\HH_4^*(\FF_2^2))$ satisfies $\sigma(x_1) = y_1$ and $\sigma(x_2) =
  y_2$. This is because $\sigma(x_1)=y_1$ implies that $\sigma=\id_{(\FF_2^2)^4}$ (as
  $\Aut_{\FF_2}(\HH_4^*(\FF_2^2))$ is precisely given by the natural right action of the symmetric group $S_4$
  on $(\FF_2^2)^4$ by $(g\cdot\sigma)_i\coloneqq g_{\sigma(i)}$ ($g\in(\FF_2^2)^4$, $\sigma\in S_4$ and $i\in[4]$)).
\end{example}

The next example shows that it is not enough to consider the set $\{0,1\}^\ell\subseteq K^\ell$ that captures
information only about subset sums of the tuples of words.

\begin{example}
  Consider the finite simple left $\FF_3$-module $\FF_3$, let $n\coloneqq 3$ and $\ell\coloneqq 2$ and consider the
  elements $x,y\in(\FF_3^n)^\ell$ defined as
  \begin{align*}
    y_1\coloneqq x_1 \coloneqq  x_2 & \coloneqq (0,1,2),\\
    y_2 & \coloneqq (2,0,1),
    \intertext{and note that}
    x_1 + x_2 & = (0,2,1),\\
    y_1 + y_2 & = (2,1,0).
  \end{align*}

  Again, thinking of the function $f_{\HH_3^*(\FF_3)}$ as taking values in $\{0,1,2,3\}^{\FF_3}$ as we did in
  the proof of \cref{prop:strongHammingftt} and the function $f_{\HH_3^*(\FF_3),\{0,1\}^2}$ as taking
  values in $(\{0,1,2,3\}^{\FF_3})^{\{0,1\}^2}$, then for every $g\in\FF_3$, we have
  \begin{align*}
    f_{\HH_3^*(\FF_3),\{0,1\}^2}(x)(0,0)(g) = f_{\HH_3^*(\FF_3),\{0,1\}^2}(y)(0,0)(g) & = 3\cdot\One[g = 0],\\
    f_{\HH_3^*(\FF_3),\{0,1\}^2}(x)(0,1)(g) = f_{\HH_3^*(\FF_3),\{0,1\}^2}(y)(0,1)(g) & = 1,\\
    f_{\HH_3^*(\FF_3),\{0,1\}^2}(x)(1,0)(g) = f_{\HH_3^*(\FF_3),\{0,1\}^2}(y)(1,0)(g) & = 1,\\
    f_{\HH_3^*(\FF_3),\{0,1\}^2}(x)(1,1)(g) = f_{\HH_3^*(\FF_3),\{0,1\}^2}(y)(1,1)(g) & = 1.
  \end{align*}

  However, no $\sigma\in\Aut_{\FF_3}(\HH_3^*(\FF_3))$ satisfies $\sigma(x_1) = y_1$ and $\sigma(x_2) = y_2$
  because the former implies $\sigma=\id_{\FF_3^3}$.
\end{example}

\begin{corollary}\label{cor:highLPboundstrong}
  Let $\FF$ be a finite field and let $C$ be an $\FF$-linear $D$-code in the strong Hamming scheme
  $\HH_n^*(\FF)$. Then for every $\ell\in\NN_+$, we have
  \begin{align*}
    \lvert C\rvert & \leq \val(\cL_{\HH_n^*(\FF)^{\ell,\FF^\ell}}(D^{\ell,\FF^\ell}))^{1/\ell}.
  \end{align*}
\end{corollary}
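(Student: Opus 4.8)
The plan is to instantiate the machinery of \cref{subsec:ftt} with the module $K=\FF$ and the tuple set $T=\FF^\ell$, and then to invoke the general fact that the Delsarte linear program of an association scheme upper bounds the size of its codes.

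First I would check the hypotheses. Regarding $\FF$ as a left module over itself, its submodules are exactly its ideals $\{0\}$ and $\FF$, so $\FF$ is a finite simple left $\FF$-module, and $\FF^n$ carries the direct product left $\FF$-module structure; hence \cref{prop:strongHammingftt} applies and shows that $f_{\HH_n^*(\FF),\FF^\ell}$ factors through types of $\HH_n^*(\FF)$. Since $T=\FF^\ell$ plainly contains every standard basis vector $e_i$, the remaining hypotheses of \cref{thm:ftt} are satisfied with $S=\HH_n^*(\FF)$.

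Then I would read off the conclusions and finish. By \cref{thm:ftt:scheme}, $\HH_n^*(\FF)^{\ell,\FF^\ell}$ is a translation scheme, hence commutative, so its Delsarte linear program $\cL_{\HH_n^*(\FF)^{\ell,\FF^\ell}}(D^{\ell,\FF^\ell})$ is well-defined; and since $C$ is an $\FF$-linear $D$-code in $\HH_n^*(\FF)$, \cref{thm:ftt:code} shows that $C^\ell$ is an $\FF$-linear $D^{\ell,\FF^\ell}$-code in $\HH_n^*(\FF)^{\ell,\FF^\ell}$. As observed right after the definition of the Delsarte linear program for association schemes, the inner distribution $a^{C^\ell}$ of $C^\ell$ is then feasible for $\cL_{\HH_n^*(\FF)^{\ell,\FF^\ell}}(D^{\ell,\FF^\ell})$, with objective value $\sum_r a^{C^\ell}_r=\lvert C^\ell\rvert=\lvert C\rvert^\ell$; therefore $\val(\cL_{\HH_n^*(\FF)^{\ell,\FF^\ell}}(D^{\ell,\FF^\ell}))\geq\lvert C\rvert^\ell$, and taking $\ell$-th roots gives the claim.

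I do not expect a genuine obstacle: the substance of the argument is entirely contained in \cref{prop:strongHammingftt} and \cref{thm:ftt}, which were set up precisely so that this corollary would drop out. The only points requiring any care are the module-theoretic hypotheses — that $\FF$ is simple over itself — and the trivial inclusion $\{e_1,\dots,e_\ell\}\subseteq\FF^\ell$; had \cref{prop:strongHammingftt} not already been available, proving that $f_{\HH_n^*(\FF),\FF^\ell}$ factors through types would be the crux.
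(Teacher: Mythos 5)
Your proof is correct and follows the paper's argument exactly: apply \cref{prop:strongHammingftt} (using that $\FF$ is a finite simple $\FF$-module) together with \cref{thm:ftt} to conclude that $C^\ell$ is an $\FF$-linear $D^{\ell,\FF^\ell}$-code in $\HH_n^*(\FF)^{\ell,\FF^\ell}$, then use feasibility of its inner distribution in the Delsarte LP to get $\lvert C\rvert^\ell\leq\val(\cL_{\HH_n^*(\FF)^{\ell,\FF^\ell}}(D^{\ell,\FF^\ell}))$. You merely spell out a bit more carefully the hypothesis checks (simplicity of $\FF$ over itself, $\{e_1,\ldots,e_\ell\}\subseteq\FF^\ell$) that the paper leaves implicit.
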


\begin{proof}
  Since $\FF$ is a simple $\FF$-module, by \cref{thm:ftt} and \cref{prop:strongHammingftt},
  $C^\ell$ is a $K$-linear $D^{\ell,\FF^\ell}$-code in $\HH_n^*(\FF)^{\ell,\FF^\ell}$ and thus we have the bound
  \begin{align*}
    \lvert C^\ell\rvert & \leq \val(\cL_{\HH_n^*(\FF)^{\ell,\FF^\ell}}(D^{\ell,\FF^\ell}))
  \end{align*}
  provided by the Delsarte linear program for $\HH_n^*(\FF)^{\ell,\FF^\ell}$.
\end{proof}

For the case of the weak Hamming scheme, annihilators of single elements will play an important role and it
will be more convenient to work with annihilator Hamming schemes defined below. We will show that the
annihilator Hamming scheme is indeed
a symmetric translation scheme in \cref{prop:annihilatorHamming} and the connection to the weak
Hamming scheme will be established in \cref{lem:annweakHamming}.

\begin{definition}
  Given a ring $K$, a finite simple left $K$-module $G$ and $n\in\NN_+$, the \emph{annihilator Hamming scheme
    of order $n$ over $G$} is the symmetric translation scheme $\HH_n^{\Ann_K}(G) \coloneqq (G^n,R)$ (with $G^n$
  equipped with the direct product left $K$-module structure), where
  \begin{align*}
    R
    & \coloneqq
    \{r_h \mid h\colon 2^K \to \{0,1,\ldots,n\}\}\setminus\{\varnothing\},
    \\
    r_h
    & \coloneqq
    \{(x,y)\in G^n\times G^n \mid \forall A\subseteq K, \lvert x-y\rvert_A = h(A)\}
    \qquad (h\colon 2^K \to \{0,1,\ldots,n\}),
    \\
    \lvert z\rvert_A & \coloneqq \lvert\{i\in[n] \mid \Ann_K(z) = A\}\rvert,
  \end{align*}
  that is, $(x,y)$ is in the relation $r_h$ if and only if for each set $A\subseteq K$, the number of
  positions $i\in[n]$ such that $\Ann_K(x_i-y_i) = A$ is exactly $h(A)$.
\end{definition}

Before we actually prove that $\HH_n^{\Ann_K}(G)$ is indeed a symmetric translation scheme, let us prove the
following small lemma that says that when $K$ is commutative, then the annihilator and the weak Hamming
schemes coincide.

\begin{lemma}\label{lem:annweakHamming}
  Let $K$ be a ring, let $G$ be a finite simple left $K$-module and let $n\in\NN_+$. If $K$ is commutative,
  then $\HH_n(G) = \HH_n^{\Ann_K}(G)$.
\end{lemma}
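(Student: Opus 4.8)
The plan is to prove the stronger fact that the two schemes have exactly the same set of relations, i.e.\ that they induce the same partition of $G^n\times G^n$; as a byproduct this re-establishes, in the commutative case, that $\HH_n^{\Ann_K}(G)$ is a genuine symmetric translation scheme without appealing to \cref{prop:annihilatorHamming}.

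The one substantive step is to pin down the possible values of $\Ann_K(z)$ for a \emph{single} element $z\in G$. Obviously $\Ann_K(0)=K$. If $z\neq 0$, then $Kz$ is a nonzero submodule of the simple module $G$, hence $Kz=G$; I claim $\Ann_K(z)$ is then the same ideal for all nonzero $z$. Indeed, given nonzero $z,z'$, choose $k\in K$ with $kz=z'$; for $a\in\Ann_K(z)$ we get $az'=a(kz)=k(az)=0$, using commutativity of $K$ to exchange $a$ and $k$. So $\Ann_K(z)\subseteq\Ann_K(z')$, and symmetry yields equality. Write $I$ for this common ideal (so $I=\Ann_K(G)$), and note $I\neq K$ since $G\neq 0$.

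With this in hand the rest is bookkeeping. For any $x,y\in G^n$, the function $A\mapsto\lvert x-y\rvert_A$ is supported on $\{K,I\}$: it equals $n-\Delta(x,y)$ at $A=K$ and $\Delta(x,y)$ at $A=I$. Consequently each pair $(x,y)$ belongs to a unique nonempty $r_h$, and that $h$ depends only on $\Delta(x,y)\in\{0,\dots,n\}$; conversely $\Delta(x,y)=h(I)$ recovers the Hamming distance from $h$. This gives a bijection between the $n+1$ nonempty relations $r_h$ of $\HH_n^{\Ann_K}(G)$ and the relations $r_0,\dots,r_n$ of $\HH_n(G)$ under which the corresponding subsets of $G^n\times G^n$ are identical, so $\HH_n(G)=\HH_n^{\Ann_K}(G)$. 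The only place any care is needed is the use of commutativity in the paragraph above: without it $\Ann_K(z)$ is merely a left ideal and need not be fixed by the automorphism $w\mapsto kw$ of $G$, which is precisely why $\HH_n^{\Ann_K}$ must be defined separately in general; otherwise there is no real obstacle beyond checking the endpoints $i=0$ and $i=n$.
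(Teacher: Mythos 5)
Your proof is correct and takes essentially the same route as the paper's: the paper also reduces to showing that commutativity forces $\Ann_K(z)=\Ann_K(G)$ for every nonzero $z\in G$ (via the chain $\Ann_K(z)=\Ann_K(Kz)=\Ann_K(G)$, whereas you compare $\Ann_K(z)$ and $\Ann_K(z')$ directly, but the mechanism is the same use of $a(kz)=k(az)$), and then observes that $\lvert z\rvert_A$ is supported on $\{K,\Ann_K(G)\}$ and determined by the Hamming weight. Your closing remark that this re-derives, in the commutative case, that $\HH_n^{\Ann_K}(G)$ is a bona fide scheme is consistent with the paper's ordering, since \cref{lem:annweakHamming} appears before \cref{prop:annihilatorHamming}.
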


\begin{proof}
  Since $K$ is commutative, for every $g\in K\setminus\{0\}$ we have
  \begin{align*}
    \Ann_K(g) & = \Ann_K(Kg) = \Ann_K(G)\neq K.
  \end{align*}
  Since the relations of $\HH_n^{\Ann_K}(G)$ are based on the values of
  \begin{align*}
    \lvert z\rvert_A & \coloneqq \lvert\{i\in[n] \mid \Ann_K(z) = A\}\rvert
  \end{align*}
  we get
  \begin{align*}
    \lvert z\rvert_A
    & =
    \begin{dcases*}
      \lvert\{i\in[n] \mid z_i\neq 0\}\rvert, & if $A = \Ann_K(G)$,\\
      \lvert\{i\in[n] \mid z_i = 0\}\rvert, & if $A = K$,\\
      0, & otherwise.
    \end{dcases*}
  \end{align*}
  Thus it follows trivially that $\HH_n^{\Ann_K}(G) = \HH_n(G)$.
\end{proof}

Our next order of business is to show that $\HH_n^{\Ann_K}(G)$ is indeed a symmetric translation scheme (even
when $K$ is not necessarily commutative). To do so, we need one lemma that says that in a finite simple left
$K$-module $G$, the orbits of $G$ under the natural action of the group $\Aut_K(G)$ of $K$-module
automorphisms of $G$ are completely determined by the annihilators of its elements. We in fact will prove a
more general version over $G^\ell$ that will be needed later.

\begin{lemma}\label{lem:AnnAut}
  Let $K$ be a ring, let $G$ be a simple left $K$-module and let $\ell\in\NN_+$. Let us also equip $G^\ell$
  with the natural left $K^\ell$-module structure over the direct product ring $K^\ell$ given by
  $(kg)_i\coloneqq k_i g_i$ ($k\in K^\ell$, $g\in G^\ell$ and $i\in[\ell]$). The following are equivalent for
  $x,y\in G^\ell$.
  \begin{enumerate}
  \item We have $\Ann_{K^\ell}(x) = \Ann_{K^\ell}(y)$.
    \label{lem:AnnAut:Ann}
  \item There exists a left $K$-module automorphism $\sigma\in\Aut_K(G)$ of $G$ such that $\sigma(x_i) = y_i$
    for every $i\in[\ell]$.
    \label{lem:AnnAut:Aut}
  \end{enumerate}
\end{lemma}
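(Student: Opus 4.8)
The plan is to recast condition~\ref{lem:AnnAut:Ann} as a statement about two quotient maps with the same kernel, and then produce $\sigma$ as the composite of the two induced isomorphisms. The key move is to view $K^\ell$ as the free left $K$-module on the standard basis $e_1,\dots,e_\ell$ and identify an $\ell$-tuple $x=(x_1,\dots,x_\ell)\in G^\ell$ with the $K$-module homomorphism $\hat x\colon K^\ell\to G$ determined by $\hat x(e_i)=x_i$, so that $\hat x(k)=\sum_{i=1}^\ell k_i x_i$. Under this identification, condition~\ref{lem:AnnAut:Ann} records exactly the module of $K$-linear relations among the $x_i$, i.e.\ it is the assertion $\ker\hat x=\ker\hat y$. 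With this in hand the implication \ref{lem:AnnAut:Aut}$\Rightarrow$\ref{lem:AnnAut:Ann} is essentially free: if $\sigma\in\Aut_K(G)$ satisfies $\sigma(x_i)=y_i$ for all $i$, then $\hat y$ and $\sigma\circ\hat x$ are both $K$-linear and agree on the basis $e_1,\dots,e_\ell$, hence $\hat y=\sigma\circ\hat x$, and since $\sigma$ is injective we conclude $\ker\hat y=\ker\hat x$.

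For the substantive direction \ref{lem:AnnAut:Ann}$\Rightarrow$\ref{lem:AnnAut:Aut}, put $N\coloneqq\ker\hat x=\ker\hat y$. First I would dispatch the degenerate case: if $N=K^\ell$ then $x_i=\hat x(e_i)=0$ and likewise $y_i=0$ for every $i$, so $\sigma=\id_G$ works. Otherwise $\im\hat x=\sum_{i=1}^\ell Kx_i$ is a nonzero $K$-submodule of $G$, and since $G$ is simple this forces $\im\hat x=G$; the same reasoning gives $\im\hat y=G$. Thus $\hat x$ and $\hat y$ are surjections $K^\ell\twoheadrightarrow G$ sharing the kernel $N$, so by the first isomorphism theorem each factors through a $K$-module isomorphism $\overline{x},\overline{y}\colon K^\ell/N\xrightarrow{\ \sim\ }G$ with $\overline{x}(e_i+N)=x_i$ and $\overline{y}(e_i+N)=y_i$. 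Setting $\sigma\coloneqq\overline{y}\circ\overline{x}^{-1}\in\Aut_K(G)$ gives $\sigma(x_i)=\overline{y}(\overline{x}^{-1}(x_i))=\overline{y}(e_i+N)=y_i$ for every $i$, which is exactly \ref{lem:AnnAut:Aut}.

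The argument is short, and the one genuinely load-bearing ingredient is the simplicity of $G$: it is what upgrades the a priori partial isomorphism $\im\hat x\to\im\hat y$ (between the submodules generated by the two tuples) into an automorphism of all of $G$, and without it the statement can fail. I therefore expect the only points needing care to be this invocation of simplicity and the small degenerate case $N=K^\ell$ (i.e.\ $x=y=0$). It is worth noting that the proof uses neither finiteness of $K$ or $G$ nor commutativity of $K$, and that it is this relation-module formulation over $G^\ell$ — rather than the $\ell=1$ statement applied separately in each coordinate — that is needed later to verify that the annihilator Hamming scheme and its refined tensor powers are bona fide translation schemes.
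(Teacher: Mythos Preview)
Your approach---viewing $x$ as the $K$-linear map $\hat x\colon K^\ell\to G$ and building $\sigma$ via the first isomorphism theorem---is a clean repackaging of the paper's argument, which instead fixes a nonzero coordinate $x_1$, defines $\sigma(kx_1)\coloneqq ky_1$ by hand, checks well-definedness from $\Ann_K(x_1)=\Ann_K(y_1)$, and then verifies $\sigma(x_i)=y_i$ one coordinate at a time. Both routes hinge on the same two uses of simplicity (matching kernels for well-definedness, $Kx_1=G$ for surjectivity), so the underlying ideas coincide; yours is the more conceptual packaging.

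One step deserves care, though: your assertion that condition~\ref{lem:AnnAut:Ann} ``is the assertion $\ker\hat x=\ker\hat y$'' is not what the stated componentwise $K^\ell$-module structure literally gives. Under that structure $\Ann_{K^\ell}(x)=\{k:k_ix_i=0\text{ for all }i\}=\prod_i\Ann_K(x_i)$, which records only the individual $\Ann_K(x_i)$, not the relation module $\ker\hat x=\{k:\sum_ik_ix_i=0\}$. With that literal reading the lemma is in fact false: for $K=G=\FF_3$, $\ell=2$, $x=(1,1)$, $y=(1,2)$, both componentwise annihilators are $\{0\}$ yet no single $\sigma\in\FF_3^\times$ sends $1\mapsto1$ and $1\mapsto2$. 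The paper's own proof, however, treats $\Ann_{K^\ell}(x)$ exactly as you do---it concludes $(k',0,\dots,-1,\dots,0)\in\Ann_{K^\ell}(x)$ from $k'x_1-x_i=0$, which only holds under the sum reading---so your interpretation matches the intended one, and the discrepancy lies in the lemma's stated module structure rather than in your argument. You should flag this explicitly rather than assert the identification without comment.
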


\begin{proof}
  For the implication~\ref{lem:AnnAut:Aut}$\implies$\ref{lem:AnnAut:Ann}, note that for $k\in K^\ell$, we have
  the equivalences
  \begin{align*}
    k y = 0
    & \iff
    (\forall i\in[\ell], k_i y_i = 0)
    \iff
    (\forall i\in[\ell], k_i\sigma(x_i) = 0)
    \iff
    (\forall i\in[\ell], k_i x_i = 0)
    \iff
    k x = 0,
  \end{align*}
  where the third equivalence follows since $\sigma\in\Aut_K(G)$ is a left $K$-module automorphism of $G$. Thus
  $\Ann_{K^\ell}(x) = \Ann_{K^\ell}(y)$.

  \medskip

  Let us now prove the implication~\ref{lem:AnnAut:Ann}$\implies$\ref{lem:AnnAut:Aut}. If $x = 0$, then
  $K^\ell = \Ann_{K^\ell}(x) = \Ann_{K^\ell}(y)$. Since $G$ is simple, we must have $y = 0$ (as $\{z\in G \mid
  \Ann_K(z)=K\}$ is a proper left $K$-submodule of $G$, so it must be trivial) and any left $K$-module
  automorphism $\sigma\in\Aut_K(G)$ of $G$ satisfies $\sigma(x_i) = y_i$ for every $i\in[\ell]$. Suppose then
  that $x\neq 0$ and without loss of generality, suppose that $x_1\neq 0$ and thus $\Ann_K(x_1)\neq K$. Since
  \begin{align*}
    \Ann_K(x_1)
    & =
    \{k\in K \mid (k,0,\ldots,0)\in\Ann_{K^\ell}(x)\}
    =
    \{k\in K \mid (k,0,\ldots,0)\in\Ann_{K^\ell}(y)\}
    =
    \Ann_K(y_1),
  \end{align*}
  it follows that $\Ann_K(y_1)\neq K$ so $y_1\neq 0$.

  Since $G$ is simple, we have $K x_1 = G$, so we can define a function $\sigma\colon G\to G$ indirectly by
  $\sigma(k x_1)\coloneqq k y_1$ for every $k\in K$. To check that $\sigma$ is well-defined, note that if $k_1 x_1
  = k_2 x_1$, then $k_1 - k_2\in\Ann_K(x_1) = \Ann_K(y_1)$ and thus $k_1 y_1 = k_2 y_2$. It is straightforward
  to check that $\sigma$ is a left $K$-module endomorphism of $G$. Since the kernel of $\sigma$ is a left
  $K$-submodule of $G$ that does not contain $x_1$ and $G$ is simple, it follows that the kernel of $\sigma$
  must be trivial, so $\sigma$ is injective. On the other hand, the image of $\sigma$ is a left $K$-submodule
  of $G$ that contains $y_1\neq 0$, so simplicity of $G$ implies that $\sigma$ is surjective and thus $\sigma$
  is a left $K$-module automorphism of $G$.

  Let us now show that $\sigma(x_i) = y_i$ for every $i\in[\ell]$. For $i = 1$, this is obvious. For $i\geq
  2$, let $k'\in K$ be such that $k' x_1 = x_i$ so that $\sigma(x_i) = k' y_1$. We now let $k\in K^\ell$ be
  given by $k_1\coloneqq k'$, $k_i\coloneqq -1$ and $k_j\coloneqq 0$ for every
  $j\in[\ell]\setminus\{1,i\}$. Since $k'x_1 - x_i = 0$, we have $k\in\Ann_{K^\el}(x) = \Ann_{K^\ell}(y)$, so
  we get $k'y_1 - y_i = 0$, and thus $\sigma(x_i) = k'y_1 = y_i$ as desired.
\end{proof}

Let us now prove that $\HH_n^{\Ann_K}(G)$ is indeed a symmetric translation scheme. The proof uses similar
ideas to that of \cref{thm:ftt}.

\begin{proposition}\label{prop:annihilatorHamming}
  Let $K$ be a ring, let $G$ be a finite simple left $K$-module and let $n\in\NN_+$. Then the annihilator
  Hamming scheme $\HH_n^{\Ann_K}(G)$ of order $n$ over $G$ is a symmetric translation scheme over the direct
  group $G^n$.
\end{proposition}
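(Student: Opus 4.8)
The plan is to verify the four association scheme axioms for $\HH_n^{\Ann_K}(G)$ directly, together with the translation and symmetry properties, closely mirroring the structure of the proof of \cref{thm:ftt}. The key tool will be \cref{lem:AnnAut} (applied with $\ell = 1$, and also with larger $\ell$ inside the intersection-number argument), which tells us that the annihilator $\Ann_K(z)$ of an element $z\in G$ is a complete invariant for the $\Aut_K(G)$-orbit of $z$; more generally, $\Ann_{K^m}(z)$ determines the $\Aut_K(G)$-orbit of $z\in G^m$.

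First I would deal with the easy axioms. That $R$ is a partition of $G^n\times G^n$ into nonempty sets is immediate from the definition (the relations are exactly the nonempty fibers of $z\mapsto (\lvert z\rvert_A)_{A\subseteq K}$ applied to $x-y$). The diagonal relation $\cD_{G^n}$ is $r_h$ for $h(A) = n\cdot\One[A = K]$, since $\Ann_K(0) = K$; and it is nonempty. For the transpose, observe $\Ann_K(-z_i) = \Ann_K(z_i)$ (as $-1$ times an annihilator relation is again one), so $\lvert -z\rvert_A = \lvert z\rvert_A$ for all $A$, hence $r_h^\top = r_h$ — this simultaneously gives that the scheme, once shown to be a scheme, is \emph{symmetric}, and (via \cref{rmk:translationsymmetric}, since $f_{\HH_n^{\Ann_K}(G)}(x) = r_{h_x}$ with $h_x(A) = \lvert x\rvert_A$ is even) that it is a \emph{translation scheme} over the direct product group $G^n$: each $r_h$ is visibly $G^n$-invariant because $\Ann_K((z+x_i)-(z+y_i)) = \Ann_K(x_i - y_i)$.

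The one substantive point — and the main obstacle — is the existence of the intersection numbers $p_{rs}^t$, i.e.\ showing that for $h_1,h_2,h_3\colon 2^K\to\{0,\ldots,n\}$, the quantity $N_{h_1,h_2}(x,y) := \lvert\{z\in G^n \mid f(x-z) = h_1 \land f(z-y) = h_2\}\rvert$ depends on $(x,y)$ only through $f(x-y)$. The approach is the same as in \cref{thm:ftt}: by translation invariance reduce to $y = 0$, so we must show $N_{h_1,h_2}(x,0)$ depends only on $f(x)$. If $x, x'\in G^n$ have $f(x) = f(x')$, then for each position $i$ we have $\Ann_K(x_i) = \Ann_K(x'_i)$ after a suitable reindexing of coordinates; more precisely, since the two tuples have the same multiset of annihilators across the $n$ coordinates, there is a permutation $\sigma\in S_n$ with $\Ann_K(x_{\sigma(i)}) = \Ann_K(x'_i)$ for all $i$. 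By \cref{lem:AnnAut} (with $\ell = 1$), for each $i$ there is $\tau_i\in\Aut_K(G)$ with $\tau_i(x_{\sigma(i)}) = x'_i$. The natural action of $S_n\times \Aut_K(G)^n$ (permuting coordinates and applying a module automorphism coordinatewise) consists of automorphisms of the scheme that preserve the $K^n$-module structure, and it carries $x$ to $x'$; composing with the substitution $z\mapsto$ (its image under this automorphism) shows $N_{h_1,h_2}(x,0) = N_{h_1,h_2}(x',0)$, exactly as the chain of equalities in the proof of \cref{thm:ftt}. Commutativity of the scheme then follows for free since it is symmetric (or already since it is a translation scheme). I expect the only delicate bookkeeping to be checking that coordinatewise application of possibly-different automorphisms $\tau_i$ genuinely preserves every relation $r_h$ — which holds because $\Ann_K(\tau_i(w)) = \Ann_K(w)$ for $\tau_i\in\Aut_K(G)$, so the per-coordinate annihilator data, and hence $h$, is unchanged — and that it preserves the translation (group) structure, which is clear since each $\tau_i$ is additive. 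This completes the verification.
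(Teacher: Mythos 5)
Your proof is correct, but the route through the intersection-number axiom is genuinely different from the paper's. You reduce to $y=0$, use \cref{lem:AnnAut} with $\ell=1$ to construct a \emph{global} scheme automorphism $\phi\in\Aut_K(G)^n\rtimes S_n$ of $G^n$ sending $x$ to $x'$ whenever $f(x)=f(x')$, and then observe that $z\mapsto\phi(z)$ is a bijection between the two counting sets (since $\phi$ is additive, fixes $0$, and preserves every $r_h$ because each $\tau_i$ preserves annihilators and a permutation of coordinates preserves the per-$A$ counts). This is essentially the observation that $\HH_n^{\Ann_K}(G)$ is a Schurian scheme for the action of $G^n\rtimes(\Aut_K(G)^n\rtimes S_n)$, and could even have been shortened by invoking \cref{fact:schurianBoseMesner}. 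The paper instead works \emph{per coordinate}: it defines the single-letter count $N_{A_1,A_2}(g_1,g_2)$ for $g_1,g_2\in G$, proves via \cref{lem:AnnAut} (again at $\ell=1$) and a substitution that it depends only on $\Ann_K(g_1-g_2)$, and then assembles the full intersection number as a sum over assignments $F\colon[n]\to 2^K\times 2^K$ of products $\prod_i N^{\Ann_K(x_i-y_i)}_{F(i)_1,F(i)_2}$, with a final re-indexing by $\sigma\in S_n$. The two proofs rest on the same lemma but decompose the work differently: your argument is more economical and conceptual, while the paper's yields an explicit closed-form expression for the intersection numbers that could be useful for actually computing them. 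One small nit: you write ``preserve the $K^n$-module structure,'' but $G^n$ here carries the direct-product \emph{$K$}-module structure (and in fact only additivity of $\phi$ is needed for the bijection argument); this is a cosmetic slip and does not affect the proof.
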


\begin{proof}
  Recall that the relation set of $\HH_n^{\Ann_K(G)}$ is given by
  \begin{align*}
    R
    & \coloneqq
    \{r_h \mid h\colon 2^K \to \{0,1,\ldots,n\}\}\setminus\{\varnothing\},
  \end{align*}
  where  
  \begin{align*}
    r_h
    & \coloneqq
    \{(x,y)\in G^n\times G^n \mid \forall A\subseteq K, \lvert x-y\rvert_A = h(A)\}
    \qquad (h\colon 2^K \to \{0,1,\ldots,n\}),
    \\
    \lvert z\rvert_A & \coloneqq \lvert\{i\in[n] \mid \Ann_K(z) = A\}\rvert.
  \end{align*}

  The fact that $R$ forms a partition of $G^n\times G^n$ into non-empty subsets is obvious.

  Since $G$ is simple, the only element $g\in G$ with $\Ann_K(g) = K$ is $g = 0$ (as the set of such elements
  is a proper left $K$-submodule of $G$, so it must be trivial), thus for the function $h\colon
  2^K\to\{0,1,\ldots,n\}$ given by $h(A)\coloneqq n\One[A=K]$, we have $r_h = \cD_{G^n}$.

  Note further that $\Ann_K(z) = \Ann_K(-z)$ for every $z\in G$, which immediately implies that $r_h^\top =
  r_h$ for every $h\colon 2^K\to\{0,1,\ldots,n\}$.

  It is also obvious that each $r_h$ is invariant under the group action of $G^n$.

  It remains only to show the existence of the intersection numbers for $\HH_n^{\Ann_K(G)}$.

  For every $A_1,A_2\subseteq K$ and every $g_1,g_2\in G$, let
  \begin{align*}
    N_{A_1,A_2}(g_1,g_2) & \coloneqq \lvert\{z\in G \mid \Ann_K(g_1 - z) = A_1\land\Ann_K(z - g_2)=A_2\}\rvert.
  \end{align*}

  \begin{claim}\label{clm:annposition}
    If $\Ann_K(g_1-g_2) = \Ann_K(g_1'-g_2')$, then $N_{A_1,A_2}(g_1,g_2) = N_{A_1,A_2}(g_1',g_2')$.
  \end{claim}

  \begin{proof}
    By \cref{lem:AnnAut}, there exists a left $K$-module automorphism $\sigma\in\Aut_K(G)$ of $G$ such that
    $\sigma(g_1-g_2) = g_1'-g_2'$. Then we have
    \begin{align*}
      N_{A_1,A_2}(g_1'-g_2')
      & =
      \lvert\{z\in G \mid \Ann_K(g_1' - z) = A_1\land\Ann_K(z - g_2')=A_2\}\rvert
      \\
      & =
      \lvert\{u\in G \mid \Ann_K(g_1' - g_2' - u) = A_1\land\Ann_K(u)=A_2\}\rvert
      \\
      & =
      \lvert\{u\in G\mid \Ann_K(\sigma(g_1-g_2) - u)=A_1\land\Ann_K(u)=A_2\}\rvert
      \\
      & =
      \lvert\{u\in G\mid \Ann_K(g_1-g_2 - \sigma^{-1}(u))=A_1\land\Ann_K(\sigma^{-1}(u))=A_2\}\rvert
      \\
      & =
      \lvert\{w\in G\mid \Ann_K(g_1 - w)=A_1\land\Ann_K(w - g_2)=A_2\}\rvert,
    \end{align*}
    where the second equality follows from the substitution $u\coloneqq z - g_2'$, the fourth equality follows since
    $\sigma\in\Aut_K(G)$ is a left $K$-module automorphism and the fifth equality follows from the
    substitution $w\coloneqq\sigma^{-1}(u) + g_2$.
  \end{proof}

  \cref{clm:annposition} implies that for $A_1,A_2,B\subseteq K$ we can define $N_{A_1,A_2}^B\in\NN$ such that
  $N_{A_1,A_2}(g_1,g_2) = N_{A_1,A_2}^B$ whenever $\Ann_K(g_1-g_2)=B$.

  Note now that if $(x,y)\in r_h$ for some $h\colon 2^K\to\{0,1,\ldots,n\}$ and $h_1,h_2\colon
  2^K\to\{0,1,\ldots,n\}$, then we have
  \begin{align*}
    & \!\!\!\!\!\!
    \lvert\{z\in G^n \mid (x,z)\in r_{h_1}\land (z,y)\in r_{h_2}\}\rvert
    \\
    & =
    \sum_{F\in\cF} \prod_{i=1}^n \lvert\{w\in G \mid \Ann_K(x_i-w) = F(i)_1 \land \Ann_K(w-y_i)=F(i)_2\}\rvert,
  \end{align*}
  where $\cF$ is the set of functions $F\colon [n]\to 2^K\times 2^K$ such that
  \begin{align*}
    \lvert\{i\in[n] \mid F(i)_j = A\}\rvert = h_j(A) \qquad (j\in[2], A\subseteq K).
  \end{align*}

  Using the definition of the numbers $N_{A_1,A_2}^B$, we get
  \begin{align*}
    \sum_{F\in\cF} \prod_{i=1}^n \lvert\{w\in G \mid \Ann_K(x_i-w) = F(i)_1 \land \Ann_K(w-y_i)=F(i)_2\}\rvert
    & =
    \sum_{F\in\cF} \prod_{i=1}^n N_{F(i)_1,F(i)_2}^{\Ann_K(x_i-y_i)}.
  \end{align*}

  Finally, note that if $(x',y')\in r_h$, then there exists a permutation $\sigma\in S_n$ such that
  $\Ann_K(x'_i-y'_i) = \Ann_K(x_{\sigma(i)} - y_{\sigma(i)})$ for every $i\in[n]$, which implies that
  \begin{align*}
    \sum_{F\in\cF} \prod_{i=1}^n N_{F(i)_1,F(i)_2}^{\Ann_K(x'_i-y'_i)}
    & =
    \sum_{F\in\cF} \prod_{i=1}^n N_{F(\sigma^{-1}(i))_1,F(\sigma^{-1}(i))_2}^{\Ann_K(x_i-y_i)}
    =
    \sum_{F'\in\cF} \prod_{i=1}^n N_{F'(i)_1,F'(i)_2}^{\Ann_K(x_i-y_i)},
  \end{align*}
  where the last equality follows from the substitution $F'\coloneqq F\comp\sigma^{-1}$. Thus the existence of
  the intersection numbers is proved.
\end{proof}

It will be very convenient to work with the following equivalence relation that can be seen as the equivalence
relation of the ``projective space of $G^\ell$ with origin''.

\begin{definition}
  Let $K$ be a ring, let $G$ be a finite simple left $K$-module and $\ell\in\NN_+$. The equivalence relation
  $\sim_\ell$ over $G^\ell$ is defined by
  \begin{align*}
    x\sim_\ell y & \iff \exists\sigma\in\Aut_K(G),\forall i\in[\ell], \sigma(x_i) = y_i,
  \end{align*}
  that is, the equivalence classes of $\sim_\ell$ are the orbits of the natural diagonal action of $\Aut_K(G)$
  on $G^\ell$ given by $(\sigma\cdot x)_i \coloneqq \sigma(x_i)$ ($\sigma\in\Aut_K(G)$, $x\in G^\ell$,
  $i\in[\ell]$).
\end{definition}
In the definition above, if $K=\FF$ for a finite field $\FF$ (hence $G$ is a $1$-dimensional $\FF$-vector
space), then $x\sim_\ell y$ if and only if there exist $k_1,k_2\in\FF\setminus\{0\}$ such that $k_1x = y$ and
$x = k_2y$, that is, $\sim_\ell$ is the equivalence relation defining the $(\ell-1)$-dimensional projective
space $P(\FF^\ell)\cup\{0\}$ with origin.

\begin{remark}\label{rmk:AnnAut}
  Under the definition of $\sim_\ell$, we can reinterpret \cref{lem:AnnAut}, as saying that $x\sim_\ell
  y$ if and only if $\Ann_{K^\ell}(x) = \Ann_{K^\ell}(y)$.
\end{remark}

The next lemma is an analogue of \cref{lem:distlincomb} that says that the distribution of the
$\sim_\ell$-equivalence class of an $\ell$-tuple of random variables with values in a finite simple left
$K$-module (for a finite ring $K$) can be recovered from the (individual) distributions of the
$\sim_1$-equivalence classes of all $K$-linear combinations of them.

\begin{lemma}\label{lem:simelldistlincomb}
  Let $K$ be a finite ring, let $G$ be a finite simple left $K$-module, let $\rn{X}$ be a random variable with
  values in $G^\ell$ for some $\ell\in\NN_+$ and for every $k\in K^\ell$, let $\rn{Y}_k\coloneqq\sum_{i=1}^\ell
  k_i\rn{X}_i$. Then
  {\small%
    \begin{align*}
      \PP[\rn{X}\sim_\ell x]
      & =
      \begin{multlined}[t]
        \left(
        \frac{\lvert G\rvert}{
          \lvert\Stab_{\Aut_K(G)}(x)\rvert\cdot\lvert K^\ell\rvert\cdot (\lvert G\rvert-1)
        }
        \cdot
        \sum_{k\in K^\ell}
        \left\lvert
        \Stab_{\Aut_K(G)}\left(\sum_{i=1}^\ell k_i x_i\right)
        \right\rvert
        \cdot
        \PP\left[
          \rn{Y}_k \sim_1 \sum_{i=1}^\ell k_i x_i
          \right]
        \right)
        \\
        -
        \frac{\lvert O_{\Aut_K(G)}(x)\rvert}{\lvert G\rvert - 1}
      \end{multlined}
    \end{align*}%
  }%
  for every $x\in G^\ell$, where $\Stab_{\Aut_K(G)}(z)$ is the stabilizer group of $z$ under the action of
  $\Aut_K(G)$ and $O_{\Aut_K(G)}(z)$ is the orbit of $z$ under the action of $\Aut_K(G)$ (the actions of
  $\Aut_K(G)$ on $G$ and $G^\ell$ are respectively the natural action and the diagonal action).
\end{lemma}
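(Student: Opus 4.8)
The statement is phrased as an analogue of \cref{lem:distlincomb}, and the most naive strategy is to imitate its proof: partition the orbit as $\PP[\rn{X}\sim_\ell x]=\sum_{x'\in O_{\Aut_K(G)}(x)}\PP[\rn{X}=x']$ and expand each term by \cref{lem:distlincomb}. Performing the inner sum over $x'$ — using that every $\sigma\in\Aut_K(G)$ is $K$-linear, so $\sigma(\sum_i k_i x_i)=\sum_i k_i\sigma(x_i)$ — together with character orthogonality on $G$ does produce a formula, but one expressed through the quantities $\Ex{\chi(-\rn{Y}_k)}$ rather than through $\PP[\rn{Y}_k\sim_1\place]$, and $\Ex{\chi(-\rn{Y}_k)}$ is \emph{not} individually a function of the $\sim_1$-distribution of $\rn{Y}_k$; it becomes one only after averaging over scalar multiples, since $\rn{Y}_{tk}=t\rn{Y}_k$ and $\frac{1}{\lvert K\rvert}\sum_{t\in K}\chi(-t\rn{Y}_k)=\One[\rn{Y}_k=0]$ by \cref{lem:nontrivialcharactersum}. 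Rather than push that bookkeeping through, I would prove the identity directly, by computing for each fixed $z\in G^\ell$ the coefficient of $\PP[\rn{X}=z]$ on the two sides.

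Fix $z\in G^\ell$ and abbreviate $A:=\lvert\Aut_K(G)\rvert$, $S:=\lvert\Stab_{\Aut_K(G)}(x)\rvert$, $g:=\lvert G\rvert$, $N:=\lvert K^\ell\rvert$, $\delta_z:=\One[z\sim_\ell x]$, so that $\lvert O_{\Aut_K(G)}(x)\rvert=A/S$. Since $\rn{Y}_k=\sum_i k_i\rn{X}_i$ we have $\PP[\rn{Y}_k\sim_1\sum_i k_i x_i]=\sum_{z':\;\sum_i k_i z'_i\sim_1\sum_i k_i x_i}\PP[\rn{X}=z']$, so, absorbing the additive constant via $1=\sum_{z'}\PP[\rn{X}=z']$, the right-hand side of the claimed identity equals $\sum_{z\in G^\ell}c_z\PP[\rn{X}=z]$ with
\begin{align*}
  c_z &= \frac{g}{S\cdot N\cdot(g-1)}\cdot\Phi(z) - \frac{A/S}{g-1},
  \qquad
  \Phi(z):=\sum_{k\in K^\ell:\;\sum_i k_i z_i\sim_1\sum_i k_i x_i}\Bigl\lvert\Stab_{\Aut_K(G)}\Bigl({\textstyle\sum_i k_i x_i}\Bigr)\Bigr\rvert.
\end{align*}
By orbit–stabilizer, $\bigl\lvert\Stab_{\Aut_K(G)}(\sum_i k_i x_i)\bigr\rvert\cdot\One[\sum_i k_i z_i\sim_1\sum_i k_i x_i]=\bigl\lvert\{\sigma\in\Aut_K(G):\sigma(\sum_i k_i x_i)=\sum_i k_i z_i\}\bigr\rvert$, so swapping the two sums and using $K$-linearity of $\sigma$,
\begin{align*}
  \Phi(z) &= \sum_{\sigma\in\Aut_K(G)}\left\lvert\left\{k\in K^\ell : \sum_{i=1}^\ell k_i\bigl(\sigma(x_i)-z_i\bigr)=0\right\}\right\rvert.
\end{align*}

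The one structural input is that for every $u\in G^\ell$ the map $k\mapsto\sum_i k_i u_i$ is a homomorphism of left $K$-modules $K^\ell\to G$ with image $\sum_i Ku_i$, which is $\{0\}$ when $u=0$ and all of $G$ otherwise \emph{because $G$ is simple}; hence its kernel has size $N$ if $u=0$ and $N/g$ if $u\neq 0$ (this is precisely the step that would fail for a non-simple $G$). Applying this with $u_i=\sigma(x_i)-z_i$, noting $u=0$ iff $\sigma\cdot x=z$, and that the number of $\sigma\in\Aut_K(G)$ with $\sigma\cdot x=z$ equals $S\delta_z$, we get $\Phi(z)=NS\delta_z+\tfrac{N}{g}(A-S\delta_z)=\tfrac{N}{g}\bigl(A+S\delta_z(g-1)\bigr)$, whence
\begin{align*}
  c_z &= \frac{g}{SN(g-1)}\cdot\frac{N}{g}\bigl(A+S\delta_z(g-1)\bigr) - \frac{A}{S(g-1)}
  = \frac{A+S\delta_z(g-1)-A}{S(g-1)} = \delta_z,
\end{align*}
that is, the additive constant $-\lvert O_{\Aut_K(G)}(x)\rvert/(\lvert G\rvert-1)$ exactly cancels the ``off-diagonal'' part of $\Phi$, leaving $c_z=\One[z\sim_\ell x]$. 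Summing against $\PP[\rn{X}=z]$ gives $\sum_z c_z\PP[\rn{X}=z]=\PP[\rn{X}\sim_\ell x]$, which is the claim. The whole argument is elementary counting once the module-theoretic fact about the kernels of $k\mapsto\sum_i k_i u_i$ is in hand; I do not expect a genuine obstacle, the only things to be careful with being the $\sim_1$-versus-equality bookkeeping in $\Phi$ and keeping the orbit–stabilizer identities straight. (Incidentally this argument needs no transitivity of $\Aut_K(G)$ on $G\setminus\{0\}$, hence no hypothesis on $K$ beyond finiteness.)
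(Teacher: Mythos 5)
Your proof is correct, and it takes a genuinely different route from the paper's. The paper first writes $\PP[\rn{X}\sim_\ell x] = \frac{1}{\lvert\Stab_{\Aut_K(G)}(x)\rvert}\sum_{\sigma\in\Aut_K(G)}\PP[\rn{X}=\sigma(x)]$, applies \cref{lem:distlincomb} to each term (introducing a nontrivial character $\chi$ of $G$), and then eliminates $\chi$ by averaging over all nontrivial characters, using that this average of $\chi(g)$ equals $(\lvert G\rvert\One[g=0]-1)/(\lvert G\rvert-1)$; after reorganizing the resulting double sum it reads off the stated identity. You instead bypass \cref{lem:distlincomb} and characters entirely: you compare, for each fixed $z\in G^\ell$, the coefficient of $\PP[\rn{X}=z]$ on the two sides, convert the $\sim_1$ indicator times stabilizer size into a count of automorphisms by orbit--stabilizer, swap the order of summation, and reduce everything to the kernel sizes of the $K$-module homomorphisms $k\mapsto\sum_i k_i u_i$ on $K^\ell$, which by simplicity of $G$ are $\lvert K^\ell\rvert$ when $u=0$ and $\lvert K^\ell\rvert/\lvert G\rvert$ otherwise. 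Your computation $\Phi(z)=\frac{N}{g}(A+S\delta_z(g-1))$ and the resulting cancellation giving $c_z=\One[z\sim_\ell x]$ check out. What your route buys is a self-contained, character-free argument in which the role of simplicity is isolated in one transparent step; what the paper's route buys is reuse of \cref{lem:distlincomb}, which it has already established (and needs elsewhere, e.g., in \cref{prop:strongHammingftt}), so the incremental work there is just the character averaging. Both uses of \cref{lem:AnnAut}/\cref{rmk:AnnAut} (identifying $\sim_1$ and $\sim_\ell$ classes with $\Aut_K(G)$-orbits) are the same.
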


\begin{proof}
  By \cref{lem:AnnAut} (see also \cref{rmk:AnnAut}), we know that the $\sim_\ell$-equivalence class
  of $x$ is precisely the orbit $O_{\Aut_K(G)}(x)$, so we have
  \begin{align*}
    \PP[\rn{X}\sim_\ell x]
    & =
    \frac{1}{\lvert\Stab_{\Aut_K(G)}(x)\rvert}\sum_{\sigma\in\Aut_K(G)} \PP[\rn{X} = \sigma(x)].
  \end{align*}
  Letting $\chi$ be a non-trivial character of $G$, by \cref{lem:distlincomb}, we get
  \begin{align*}
    \PP[\rn{X}\sim_\ell x]
    & =
    \frac{1}{\lvert\Stab_{\Aut_K(G)}(x)\rvert\cdot\lvert K^\ell\rvert}
    \cdot
    \sum_{\substack{\sigma\in\Aut_K(G)\\k\in K^\ell\\y\in G}}
    \chi\left(\sum_{i=1}^\ell k_i\sigma(x_i) - y\right)
    \cdot\PP[\rn{Y}_k = y].
  \end{align*}

  Recall that for every $g\in G$, if we average the value $\chi(g)$ over all non-trivial characters of $G$, then
  we get $(\lvert G\rvert\One[g = 0] - 1)/(\lvert G\rvert-1)$, thus by performing such averaging operation in
  the above, we get
  \begin{align}\label{eq:simell}
    \PP[\rn{X}\sim_\ell x]
    & =
    \begin{multlined}[t]
      \frac{1}{\lvert\Stab_{\Aut_K(G)}(x)\rvert\cdot\lvert K^\ell\rvert\cdot(\lvert G\rvert - 1)}
      \\
      \cdot
      \sum_{\substack{\sigma\in\Aut_K(G)\\k\in K^\ell\\y\in G}}
      \left(\lvert G\rvert\One\left[\sum_{i=1}^\ell k_i\sigma(x_i) = y\right] - 1\right)
      \cdot\PP[\rn{Y}_k = y].
    \end{multlined}
  \end{align}

  Note now that
  \begin{equation}\label{eq:simellfirst}
    \begin{aligned}
      \sum_{\substack{\sigma\in\Aut_K(G)\\k\in K^\ell\\y\in G}}
      \lvert G\rvert\One\left[\sum_{i=1}^\ell k_i\sigma(x_i) = y\right]
      \cdot\PP[\rn{Y}_k = y]
      & =
      \sum_{\substack{\sigma\in\Aut_K(G)\\k\in K^\ell}}
      \lvert G\rvert
      \cdot\PP\left[\rn{Y}_k = \sigma^{-1}\left(\sum_{i=1}^\ell k_i x_i\right)\right]
      \\
      & =
      \lvert G\rvert
      \sum_{k\in K^\ell}
      \left\lvert\Stab_{\Aut_K(G)}\left(\sum_{i=1}^\ell k_i x_i\right)\right\rvert
      \cdot\PP\left[\rn{Y}_k\sim_1\sum_{i=1}^\ell k_i x_i\right],
    \end{aligned}
  \end{equation}
  where the last equality follows since \cref{lem:AnnAut} and \cref{rmk:AnnAut} imply that the
  $\sim_1$-equivalence class of $\sum_{i=1}^\ell k_i x_i$ is precisely its $\Aut_K(G)$-orbit.

  On the other hand, we have
  \begin{align}\label{eq:simellsecond}
    \sum_{\substack{\sigma\in\Aut_K(G)\\k\in K^\ell\\y\in G}}
    \PP[\rn{Y}_k = y]
    & =
    \sum_{\substack{\sigma\in\Aut_K(G)\\k\in K^\ell}} 1
    =
    \lvert\Aut_K(G)\rvert\cdot\lvert K^\ell\rvert
    =
    \lvert O_{\Aut_K(G)}(x)\rvert\cdot\lvert\Stab_{\Aut_K(G)}(x)\rvert\cdot\lvert K^\ell\rvert.
  \end{align}

  The result now follows by putting together~\eqref{eq:simell}, \eqref{eq:simellfirst}
  and~\eqref{eq:simellsecond}.
\end{proof}

We can finally prove that for the annihilator Hamming scheme $\HH_n^{\Ann_K}(G)$, the associated function
$f_{\HH_n^{\Ann_K}(G),K^\ell}$ factors through types.

\begin{proposition}\label{prop:annihilatorHammingftt}
  Let $K$ be a ring, let $G$ be a finite simple left $K$-module and let $n,\ell\in\NN_+$. Consider the
  annihilator Hamming scheme $\HH_n^{\Ann_K}(G)$ of order $n$ over $G$. Then $f_{\HH_n^{\Ann_K}(G),K^\ell}$
  factors through types of $\HH_n^{\Ann_K}(G)$.
\end{proposition}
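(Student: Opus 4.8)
The plan is to mimic the proof of \cref{prop:strongHammingftt}, using the annihilator analogue \cref{lem:simelldistlincomb} of \cref{lem:distlincomb}. Since the backward implication of ``factoring through types'' is automatic by \cref{rmk:AutKS}, I only need the forward one: if $f_{\HH_n^{\Ann_K}(G),K^\ell}(x) = f_{\HH_n^{\Ann_K}(G),K^\ell}(y)$ then some $\sigma\in\Aut_K(\HH_n^{\Ann_K}(G))$ satisfies $\sigma(x_j) = y_j$ for every $j\in[\ell]$. First I would reduce to the case $K$ finite exactly as in \cref{prop:strongHammingftt}: passing to $K' \coloneqq K/\Ann_K(G)$ keeps $G$ a finite \emph{faithful} simple $K'$-module (so $K'$ is finite by \cref{lem:finitefaithful}), leaves $\Aut_K(G)$, $\Aut_K(\HH_n^{\Ann_K}(G))$ and the scheme itself unchanged (under the evident identification of relations indexed by $\Ann_K$ with those indexed by $\Ann_{K'}$), and the value $f_{\HH_n^{\Ann_K}(G),K^\ell}(z)(k)$ depends only on the image of $k$ in $(K')^\ell$ since $(k_i - k_i')x_i = 0$ whenever $k_i - k_i'\in\Ann_K(G)$.

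With $K$ finite, I would attach to each $z\in(G^n)^\ell$ the $G^\ell$-valued random variable $\rn{X}^z$ with $\rn{X}^z_j\coloneqq (z_j)_{\rn{i}}$, where $\rn{i}$ is uniform on $[n]$ (the same $\rn{i}$ for every $j$), and set $\rn{Y}^z_k\coloneqq\sum_{j=1}^\ell k_j\rn{X}^z_j$ for $k\in K^\ell$. Writing $w\coloneqq\sum_{j=1}^\ell k_j z_j\in G^n$, one has $\rn{Y}^z_k = w_{\rn{i}}$, and \cref{lem:AnnAut} (equivalently \cref{rmk:AnnAut}) identifies the $\sim_1$-class of an element of $G$ with its $\Ann_K$; hence $\PP[\rn{Y}^z_k\sim_1 g]$ equals $\lvert\{i\in[n] \mid \Ann_K(w_i) = \Ann_K(g)\}\rvert/n$, which is exactly $f_{\HH_n^{\Ann_K}(G),K^\ell}(z)(k)(\Ann_K(g))/n$ when the relation $f_{\HH_n^{\Ann_K}(G),K^\ell}(z)(k)$ is read as a function $2^K\to\{0,\ldots,n\}$. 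So $f_{\HH_n^{\Ann_K}(G),K^\ell}(z)$ and the family $(\PP[\rn{Y}^z_k\sim_1\,\cdot\,])_{k\in K^\ell}$ determine each other, and by \cref{lem:simelldistlincomb} applied to $\rn{X} = \rn{X}^z$ this family determines $\PP[\rn{X}^z\sim_\ell\,\cdot\,]$, i.e.\ it determines, for every $\sim_\ell$-class, how many coordinates $i$ have $((x_1)_i,\ldots,(x_\ell)_i)$ in that class.

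Consequently $f_{\HH_n^{\Ann_K}(G),K^\ell}(x) = f_{\HH_n^{\Ann_K}(G),K^\ell}(y)$ forces the columns $((x_1)_i,\ldots,(x_\ell)_i)$ and $((y_1)_i,\ldots,(y_\ell)_i)$ to have the same multiset of $\sim_\ell$-classes, so there are a permutation $\tau\in S_n$ and automorphisms $\phi_i\in\Aut_K(G)$ with $\phi_i((x_j)_i) = (y_j)_{\tau(i)}$ for all $i\in[n]$, $j\in[\ell]$. I would then define $\sigma\colon G^n\to G^n$ by $\sigma(w)_{\tau(i)}\coloneqq\phi_i(w_i)$; this $\sigma$ is a $K$-module automorphism of $G^n$ (a coordinate permutation followed by coordinatewise $K$-module automorphisms), it preserves every $\lvert\cdot\rvert_A$ since $\Ann_K(\phi_i(g)) = \Ann_K(g)$, hence it fixes every relation $r_h$ and lies in $\Aut_K(\HH_n^{\Ann_K}(G))$, and by construction $\sigma(x_j) = y_j$ for all $j$. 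I expect the main obstacles to be bureaucratic rather than conceptual: making the reduction to finite $K$ fully rigorous, and assembling $\sigma$ from the per-coordinate data $(\phi_i)$ and $\tau$ while checking it is a bona fide scheme automorphism; the probabilistic core is a direct transcription of the \cref{prop:strongHammingftt} argument with \cref{lem:distlincomb} replaced by \cref{lem:simelldistlincomb}.
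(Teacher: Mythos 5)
Your proposal is correct and mirrors the paper's proof essentially step for step: reduce to finite $K$ via $K/\Ann_K(G)$ and \cref{lem:finitefaithful}, run the same probabilistic argument with the column random variable $\rn{X}^z$ and \cref{lem:simelldistlincomb} in place of \cref{lem:distlincomb}, and assemble the required automorphism from a permutation $\tau\in S_n$ together with per-coordinate automorphisms $\phi_i\in\Aut_K(G)$. The only cosmetic difference is that the paper establishes up front that $\Aut_K(G)^n\rtimes S_n$ embeds in $\Aut_K(\HH_n^{\Ann_K}(G))$ before invoking it, whereas you verify the automorphism property at the end; the content is the same.
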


\begin{proof}
  Recall that the relation set of $\HH_n^{\Ann_K(G)}$ is given by
  \begin{align*}
    R
    & \coloneqq
    \{r_h \mid h\colon 2^K \to \{0,1,\ldots,n\}\}\setminus\{\varnothing\},
    \intertext{where}
    r_h
    & \coloneqq
    \{(x,y)\in G^n\times G^n \mid \forall A\subseteq K, \lvert x-y\rvert_A = h(A)\}
    \qquad (h\colon 2^K \to \{0,1,\ldots,n\}),
    \\
    \lvert z\rvert_A & \coloneqq \lvert\{i\in[n] \mid \Ann_K(z) = A\}\rvert.
  \end{align*}
  
  Just as in the proof of \cref{prop:strongHammingftt}, we will abuse notation and write
  $f_{\HH_n^{\Ann_K}(G)}(x) = h$ in place of $f_{\HH_n^{\Ann_K}(G)}(x) = r_h$, thus viewing
  $f_{\HH_n^{\Ann_K}}$ as a $\{0,1,\ldots,n\}^{2^K}$-valued function. Accordingly, we will also view
  $f_{\HH_n^{\Ann_K}(G),K^\ell}$ as a function with values in $(\{0,1,\ldots,n\}^{2^K})^{K^\ell}$ rather than
  in $R^{K^\ell}$.

  \medskip

  By the same argument as in the proof of \cref{prop:strongHammingftt}, it is enough to prove the
  case when $K$ is finite: the arbitrary case can be reduced to the finite case by letting $K'\coloneqq
  K/\Ann_K(G)$, considering the natural $K'$-module structure on $G$ and noting that
  $f_{\HH_n^{\Ann_K}(G),K^\ell}$ factors through types of $\HH_n^{\Ann_K}(G)$ if and only if
  $f_{\HH_n^{\Ann_{K'}}(G),(K')^\ell}$ factors through types of $\HH_n^{\Ann_{K'}}(G)$ and $K'$ is finite by
  \cref{lem:finitefaithful}.

  \medskip

  Let us then prove the case when $K$ is finite.

  First, we claim that $\Aut_K(\HH_n^{\Ann_K}(G))$ contains a subgroup isomorphic to a semidirect
  product\footnote{In fact, $\Aut_K(\HH_n^{\Ann_K}(G))$ is exactly equal to this semidirect product, but we
    will not need this result.}  $\Aut_K(G)^n\rtimes S_n$, where $S_n$ is the symmetric group on
  $[n]$. Indeed, consider the natural actions of $\Aut_K(G)$ and $S_n$ on $G^n$ given by
  \begin{align*}
    (F\cdot g)_i & \coloneqq F_i(g_i), &
    (g\cdot\sigma)_i & \coloneqq g_{\sigma(i)}
  \end{align*}
  for $F\in\Aut_K(G)^n$, $g\in G^n$, $\sigma\in S_n$ and $i\in[n]$. It is obvious that these actions are free
  and preserve the $K$-module structure of $G^n$, and thus induce subgroups of the $K$-module automorphism
  group of $G^n$ isomorphic to $\Aut_K(G)^n$ and $S_n$, respectively. Let $H$ be the product of these
  subgroups. It is straightforward to check that if $F\cdot g = g\cdot\sigma$ holds for every $g\in G^n$, then
  $F_i=\id_G^n$ for every $i\in[n]$ and $\sigma=\id_n$, so these subgroups have trivial intersection. Since
  \begin{align*}
    \Bigl(\bigl(F\cdot (g\cdot\sigma)\bigr)\cdot\sigma^{-1}\Bigr)_i
    & =
    \bigl(F\cdot (g\cdot\sigma)\bigr)_{\sigma^{-1}(i)}
    =
    F_{\sigma^{-1}(i)}\bigl((g\cdot\sigma)_{\sigma^{-1}(i)}\bigr)
    =
    F_{\sigma^{-1}(i)}(g_i),
  \end{align*}
  it follows that the subgroup isomorphic to $\Aut_K(G)^n$ is normal in $H$ and thus $H\cong\Aut_K(G)^n\rtimes
  S_n$. It remains to show that $H$ also preserves the association scheme structure. Indeed, note that for
  $F\in\Aut_K(G)^n$, $g\in G^n$, $\sigma\in S_n$ and $A\subseteq K$, we have
  \begin{align*}
    \lvert F(g)\rvert_A
    & =
    \lvert\{i\in[n] \mid \Ann_K(F_i(g_i)) = A\}\rvert
    =
    \lvert\{i\in [n] \mid \Ann_K(g_i) = A\}\rvert
    =
    \lvert g\rvert_A
    \\
    \lvert g\cdot\sigma\rvert_A
    & =
    \lvert\{i\in [n]\mid \Ann_K(g_{\sigma(i)})=A\}\rvert
    =
    \lvert\{i\in [n] \mid \Ann_K(g_i)=A\}\rvert
    =
    \lvert g\rvert_A.
  \end{align*}
  Thus $f_{\HH_n^{\Ann_K}(G)}$ is invariant under both actions of the groups $\Aut_K(G)^n$ and $S_n$, so $H$ is a
  subgroup of $\Aut_K(\HH_n^{\Ann_K}(G))$.

  \medskip

  We now define the same random variables as in \cref{prop:strongHammingftt}: given a point $z\in
  (G^n)^\ell$, let $\rn{X}^z$ be the random variable with values in $G^\ell$ defined by
  \begin{align*}
    \rn{X}^z_j & \coloneqq (z_j)_{\rn{i}} \qquad (j\in [\ell]),
  \end{align*}
  where $\rn{i}$ is picked uniformly at random in $[n]$ and for every $k\in K^\ell$, let $\rn{Y}^z\coloneqq
  \sum_{j=1}^\ell k_j\rn{X}^z_j$.

  Note that for every $g\in G$, we have
  \begin{align*}
    \PP[\rn{Y}^z_k \sim_1 g]
    & =
    \PP[\Ann_K(\rn{Y}^z_k) = \Ann_K(g)]
    \\
    & =
    \frac{
      \left\lvert\left\{
      i\in[n]
      \;\middle\vert\;
      \Ann_K\left(\sum_{j=1}^\ell k_j (z_j)_i\right) = \Ann_K(g)
      \right\}\right\rvert
    }{
      n
    }
    \\
    & =
    \frac{f_{\HH_n^{\Ann_K}(G),K^\ell}(z)(k)(\Ann_K(g))}{n},
  \end{align*}
  where the first equality follows from \cref{lem:AnnAut} and \cref{rmk:AnnAut}.

  By \cref{lem:simelldistlincomb}, the individual distributions of the $\sim_1$-equivalence classes of
  the $\rn{Y}^z_k$ ($k\in K^\ell$) completely determine the distribution of the $\sim_\ell$-equivalence
  class of $\rn{X}^z$. This means that if $x,y\in (G^n)^\ell$ are such that $f_{\HH_n^{\Ann_K}(G),K^\ell}(x) =
  f_{\HH_n^{\Ann_K}(G),K^\ell}(y)$, then for every $w\in G^\ell$, we have
  \begin{align*}
    \PP[\rn{X}^x\sim_\ell w] & = \PP[\rn{X}^y\sim_\ell w].
  \end{align*}
  Thus there exists a permutation $\sigma\in S_n$ such that for every $i\in[n]$, we have
  \begin{align*}
    \bigl((x_1)_{\sigma(i)},(x_2)_{\sigma(i)},\ldots,(x_\ell)_{\sigma(i)}\bigr)
    & \sim_\ell
    \bigl((y_1)_i,(y_2)_i,\ldots,(y_\ell)_i\bigr),
  \end{align*}
  so by the definition of $\sim_\ell$, for each $i\in[n]$, there exists a left $K$-module automorphism
  $F_i\in\Aut_K(G)$ such that
  \begin{align*}
    F_i((x_j)_{\sigma(i)}) & = (y_j)_i
  \end{align*}
  for every $j\in[\ell]$ and thus for $F = (F_1,\ldots,F_n)\in\Aut_K(G)^n$, we get
  \begin{align*}
    F\cdot (x_j\cdot\sigma) & = y_j
  \end{align*}
  for every $j\in[\ell]$, so $f_{\HH_n^{\Ann_K}(G),K^\ell}$ factors through types.
\end{proof}

From \cref{lem:annweakHamming} and \cref{prop:annihilatorHammingftt} above, we can finally
show that under mild assumptions $f_{\HH_n(G),K^\ell}$ also factors through types for the weak Hamming scheme
$\HH_n(G)$.

\begin{proposition}\label{prop:weakHammingftt}
  Let $K$ be a ring and $G$ be a finite simple left $K$-module and $n,\ell\in\NN_+$. Consider the weak Hamming
  scheme $\HH_n(G)$ of order $n$ over $G$ equipped with the product left $K$-module structure on $G^n$.

  If $K$ is commutative, then $f_{\HH_n(G),K^\ell}$ factors through types of $\HH_n(G)$.
\end{proposition}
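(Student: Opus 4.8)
The plan is to reduce the statement to the corresponding fact for the \emph{annihilator} Hamming scheme, \cref{prop:annihilatorHammingftt}, via the identification of schemes provided by \cref{lem:annweakHamming}. First I would invoke \cref{lem:annweakHamming}: since $K$ is commutative, $\HH_n(G) = \HH_n^{\Ann_K}(G)$ as association schemes, \ie they share the underlying set $G^n$ and, literally, the same partition $R$ of $G^n\times G^n$ into relations; only the combinatorial labels attached to the relations differ (the $\HH_n(G)$-relation indexed by $i\in\{0,\ldots,n\}$ is the $\HH_n^{\Ann_K}(G)$-relation indexed by the function $2^K\to\{0,\ldots,n\}$ supported on $\{\Ann_K(G),K\}$ with value $i$ at $\Ann_K(G)$). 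In particular the scheme function $f_S\colon G^n\to R$ is the same in both cases, and since both schemes carry the same direct-product left $K$-module structure on $G^n$, the subgroups $\Aut_K(\HH_n(G))$ and $\Aut_K(\HH_n^{\Ann_K}(G))$ of $\Aut(\HH_n(G))$ coincide.

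Next I would observe that, because $f_{S,T}$ is built from $f_S$ by $f_{S,T}(x)(k)=f_S(\sum_{i=1}^\ell k_i x_i)$, and because the property ``$f_{S,T}$ factors through types of $S$'' is phrased entirely in terms of equalities $f_{S,T}(x)=f_{S,T}(y)$ and the existence of some $\sigma\in\Aut_K(S)$ with $\sigma(x_i)=y_i$ for every $i$, this property is insensitive to a bijective relabeling of the relations of $S$. Hence ``$f_{\HH_n(G),K^\ell}$ factors through types of $\HH_n(G)$'' and ``$f_{\HH_n^{\Ann_K}(G),K^\ell}$ factors through types of $\HH_n^{\Ann_K}(G)$'' are one and the same assertion, and the latter is exactly \cref{prop:annihilatorHammingftt}. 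This completes the proof.

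I do not expect a real obstacle here; the only point that deserves care is making explicit that \cref{lem:annweakHamming} yields an \emph{honest equality} of schemes, so that the transfer above is immediate rather than a chase through the definition of ``factors through types'' — and this is exactly what the proof of that lemma establishes. If one preferred to avoid invoking \cref{lem:annweakHamming}, the alternative would be to rerun the argument of \cref{prop:annihilatorHammingftt} verbatim for $\HH_n(G)$: commutativity of $K$ forces $\Ann_K(g)\in\{\Ann_K(G),K\}$ for every $g\in G$, so by \cref{lem:AnnAut} (with $\ell=1$) one has $g\sim_1 g'$ iff $g,g'$ are both zero or both nonzero, \ie the $\sim_1$-type of a $K$-linear combination is recorded by its Hamming weight; feeding this into \cref{lem:simelldistlincomb} reconstructs the distribution of the $\sim_\ell$-type of the random-coordinate tuple and produces the required permutation $\sigma\in S_n\leq\Aut_K(\HH_n(G))$.
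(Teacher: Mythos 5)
Your proof is correct and takes essentially the same route as the paper's: invoke \cref{lem:annweakHamming} to identify $\HH_n(G)$ with $\HH_n^{\Ann_K}(G)$ (honest equality of schemes, hence of $f_S$, of $\Aut_K(S)$, and of the ``factors through types'' assertion), then apply \cref{prop:annihilatorHammingftt}. The paper states this in one sentence; you have supplied the bookkeeping that makes the transfer visibly immediate, and your closing sketch of the direct rerun is also sound.
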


\begin{proof}
  Follows directly from \cref{prop:annihilatorHammingftt} as \cref{lem:annweakHamming} implies
  that the weak Hamming scheme $\HH_n(G)$ coincides with the annihilator Hamming scheme $\HH_n^{\Ann_K}(G)$.
\end{proof}

\begin{corollary}\label{cor:highLPboundweak}
  Let $\FF$ be a finite field and let $C$ be an $\FF$-linear $D$-code in the weak Hamming scheme
  $\HH_n(\FF)$. Then for every $\ell\in\NN_+$, we have
  \begin{align*}
    \lvert C\rvert & \leq \val(\cL_{\HH_n(\FF)^{\ell,\FF^\ell}}(D^{\ell,\FF^\ell}))^{1/\ell}.
  \end{align*}
\end{corollary}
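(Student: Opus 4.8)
The plan is to mimic the proof of \cref{cor:highLPboundstrong} almost verbatim, replacing the strong Hamming scheme with the weak one and the corresponding ``factors through types'' input. First I would observe that a finite field $\FF$, viewed as a module over itself, is a finite simple left $\FF$-module (its only submodules are $\{0\}$ and $\FF$), and of course $\FF$ is commutative. These are exactly the hypotheses of \cref{prop:weakHammingftt}, so I may conclude that $f_{\HH_n(\FF),\FF^\ell}$ factors through types of the weak Hamming scheme $\HH_n(\FF)$ when $\FF^n$ is equipped with the product left $\FF$-module structure.

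Next I would invoke \cref{thm:ftt}: since $f_{\HH_n(\FF),\FF^\ell}$ factors through types and $e_i\in\FF^\ell$ for every $i\in[\ell]$, the $\FF^\ell$-refined $\ell$th tensor power $\HH_n(\FF)^{\ell,\FF^\ell}$ is a (symmetric) translation scheme, and by item~\ref{thm:ftt:code} the product code $C^\ell$ is an $\FF$-linear $D^{\ell,\FF^\ell}$-code in it, where $D^{\ell,\FF^\ell} = \{r_h \mid h\in R^{\FF^\ell}\land\im(h)\subseteq D\}\setminus\{\varnothing\}$. Then the standard Delsarte linear programming bound for $D$-codes in an association scheme (the feasibility of the inner distribution $a^{C^\ell}$ in $\cL_{\HH_n(\FF)^{\ell,\FF^\ell}}(D^{\ell,\FF^\ell})$, whose objective value is $\lvert C^\ell\rvert$) yields $\lvert C^\ell\rvert \leq \val(\cL_{\HH_n(\FF)^{\ell,\FF^\ell}}(D^{\ell,\FF^\ell}))$. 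Finally, since $\lvert C^\ell\rvert = \lvert C\rvert^\ell$, taking $\ell$th roots gives the claimed inequality.

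There is essentially no obstacle at this stage: all the real work has already been done, namely establishing that $\HH_n^{\Ann_\FF}(\FF) = \HH_n(\FF)$ (\cref{lem:annweakHamming}) and that $f_{\HH_n^{\Ann_\FF}(\FF),\FF^\ell}$ factors through types (\cref{prop:annihilatorHammingftt}), which together give \cref{prop:weakHammingftt}. The only thing to be a little careful about is that the commutativity of $\FF$ is genuinely used here (it is what makes \cref{lem:annweakHamming} apply and hence what lets us transfer the ``factors through types'' conclusion from the annihilator scheme to the weak scheme); without it one would only get the statement for the annihilator or strong Hamming scheme. So the proof is a two-line corollary, and I would write it exactly in parallel with the proof of \cref{cor:highLPboundstrong}.
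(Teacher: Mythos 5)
Your proposal matches the paper's proof exactly: invoke \cref{prop:weakHammingftt} (which applies since $\FF$ is a commutative ring and $\FF$ is a finite simple left $\FF$-module) to get that $f_{\HH_n(\FF),\FF^\ell}$ factors through types, then \cref{thm:ftt} to conclude $C^\ell$ is a $D^{\ell,\FF^\ell}$-code in the refined scheme, apply the Delsarte LP bound, and take $\ell$th roots. The remarks about why commutativity matters are accurate but not needed in the formal proof; otherwise this is the same two-line argument the authors give.
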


\begin{proof}
  By \cref{thm:ftt} and \cref{prop:weakHammingftt}, $C^\ell$ is a $K$-linear
  $D^{\ell,\FF^\ell}$-code in $\HH_n(\FF)^{\ell,\FF^\ell}$ and thus we have the bound
  \begin{align*}
    \lvert C^\ell\rvert & \leq \val(\cL_{\HH_n(\FF)^{\ell,\FF^\ell}}(D^{\ell,\FF^\ell}))
  \end{align*}
  provided by the Delsarte linear program for $\HH_n(\FF)^{\ell,\FF^\ell}$.
\end{proof}

% LocalWords:  Krawtchouk MacWilliams Tanaka Mesner Schurian Delsarte Abelian blocklength semidirect
% LocalWords:  Schrijver submodule

\section{Main Properties of the Krawtchouk Hierarchies}\label{sec:main-results}

This section presents our main results on the linear programming hierarchy. The first result is the
completeness of the higher-order linear programming hierarchies for \emph{linear} codes. The second result is
the collapse of the hierarchies for \emph{general} codes.

\subsection{Completeness for Linear Codes}\label{sec:completeness}

In this section, we show the (approximate) completeness of our linear
programming hierarchy for linear codes over a finite field $\FF$.

We will show completeness at level $O(n^2)$ via a counting argument.
The intuition is that the hierarchy is likely already complete
at level $n$ (and we conjecture this to be the case).
At level $n$, the feasible region of the LP already encodes
  $A_q^\Lin(n,d)$.  That is, since at level $n$ there is a variable for each possible basis
  of a subspace of $\F_q^n$, just writing down the distance
  constraints of $\KLP_\Lin^\F(n,d,n)$ allows one to deduce the true
  value of $A_q^\Lin(n,d)$.  Of course this property is not sufficient
  to imply that the \emph{value} of $\KLP_\Lin^\F(n,d,n)$ is correct.
  At an intuitive level, the below proof shows that at level $O(n^2)$
  the large-dimensional subspaces outweigh the small-dimensional
  subspaces enough to deduce the correct value of $A_q^\Lin(n,d)$.

\begin{theorem}[Completeness]\label{theo:main:completeness:formal}
  Let $\FF$ be a finite field, let $q\coloneqq\lvert\FF\rvert$, let $\epsilon \in (0,1)$ and let $\ell \geq
  9(n^2\ln(q) + 1)/(\ln(1+\epsilon))^2$. Then for every $d\in\{0,1,\ldots,n\}$, we have
  \begin{align*}
    \val(\KLP^\FF_\Lin(n,d,\ell))^{1/\ell} & \leq (1+\epsilon) \cdot A^\Lin_q(n,d).
  \end{align*}
\end{theorem}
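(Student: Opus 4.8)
The goal is an upper bound $\val(\KLP^\FF_\Lin(n,d,\ell))^{1/\ell} \le (1+\epsilon) A^\Lin_q(n,d)$. The key structural fact (from \cref{rmk:klp} and the association-scheme machinery of \cref{sec:association}, specifically \cref{thm:ftt} and \cref{prop:weakHammingftt}) is that a feasible solution of $\KLP^\FF_\Lin(n,d,\ell)$ is (after symmetrization) the inner distribution of a $D_d^{\ell,\FF^\ell}$-code in the refined tensor scheme $\HH_n(\FF)^{\ell,\FF^\ell}$. The plan is to exploit the fact that the variables/relations of this scheme are indexed by ``$\ell$-configurations'' that record, for every coefficient tuple $k\in\FF^\ell$, the Hamming weight of $\sum_i k_i x_i$; in particular a configuration determines the $\FF$-linear span $W = \Span(x_1,\dots,x_\ell)\le\FF_q^n$ up to its $S_n$-orbit, and the distance constraints force this span $W$ to have minimum distance $\ge d$, hence $\dim W \le k^\ast$ where $q^{k^\ast} = A^\Lin_q(n,d)$ (i.e.\ $k^\ast = \log_q A^\Lin_q(n,d)$). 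I would first make this precise: for any feasible $a$, the support of $a$ lives on configurations $g$ arising from tuples $(x_1,\dots,x_\ell)$ all of whose $\FF$-span is a valid code (minimum distance $\ge d$, so of dimension $\le k^\ast$).

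The second ingredient is an averaging/counting bound on $\val$. Write the objective $\sum_g a_g = \sum_g a_g$; grouping configurations $g$ by the dimension $j = \dim W(g)$ of the associated subspace, I would bound the contribution of each dimension class. The crucial inequality is a ``rank bound'' in the MacWilliams / positivity sense: using the MacWilliams inequality indexed by the appropriate $h$ together with non-negativity, one shows that the total mass on configurations whose span has dimension exactly $j$ is at most $(\text{number of }\ell\text{-tuples spanning a fixed }j\text{-dim space})$ times a normalization, i.e.\ at most $q^{\ell j}$ up to lower-order factors — this is the same phenomenon as $a_0 = 1$ forcing $\sum a_g \le |C|^\ell$ in the soundness proof, now applied dimension-by-dimension. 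Concretely I expect a bound of the shape
\begin{align*}
  \sum_{g:\ \dim W(g) = j} a_g \ \le\ N_{n,\ell,j}\cdot q^{\ell j},
\end{align*}
where $N_{n,\ell,j}$ counts $S_n$-orbits of $\ell$-tuples whose span is a fixed $j$-dimensional subspace of $\FF_q^n$; and $N_{n,\ell,j}$ is at most the number of $\ell$-configurations at ``internal dimension'' $j$, which is $\poly(n)$ for fixed $\ell$ but — crucially — is bounded by $q^{n^2}$ uniformly (a $j$-dim subspace of $\FF_q^n$ is determined by $\le n^2$ field elements). Summing over $j\le k^\ast$ gives
\begin{align*}
  \val(\KLP^\FF_\Lin(n,d,\ell)) \ \le\ q^{n^2}\cdot (k^\ast+1)\cdot q^{\ell k^\ast}
  \ =\ q^{n^2}\cdot(k^\ast+1)\cdot A^\Lin_q(n,d)^{\ell}.
\end{align*}

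Taking $\ell$-th roots, $\val^{1/\ell} \le A^\Lin_q(n,d)\cdot q^{(n^2 + \log_q(k^\ast+1))/\ell}$. Since $k^\ast\le n$ one has $\log_q(k^\ast+1)\le 1$ crudely (or absorb it), so the overhead is $q^{(n^2+1)/\ell} = \exp(\ln(q)(n^2+1)/\ell)$. Choosing $\ell \ge 9(n^2\ln q + 1)/(\ln(1+\epsilon))^2$ makes $\ln(q)(n^2+1)/\ell \le (\ln(1+\epsilon))^2/9 \le \ln(1+\epsilon)$ (using $(\ln(1+\epsilon))/9 \le 1$ for $\epsilon<1$, plus a slack constant to cover the $k^\ast+1$ factor), which yields the claimed $(1+\epsilon)$ factor. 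I would present the chain of inequalities as one display, being careful to keep the polynomial prefactor explicit so the final arithmetic with the hypothesis on $\ell$ is transparent.

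\textbf{Main obstacle.} The genuinely substantive step is the dimension-by-dimension mass bound $\sum_{g:\dim W(g)=j} a_g \le N_{n,\ell,j}\, q^{\ell j}$. The clean way I would argue it: fix a $j$-dimensional subspace $W_0$; the configurations with $\dim W(g) = j$ realized inside the $S_n$-orbit of $W_0$ correspond to $\ell$-tuples drawn from $W_0$, and for the restriction of the LP solution to such tuples one can run the same positivity argument as in the soundness proof but ``relative to $W_0$'' — the Fourier/MacWilliams inequality on the subscheme supported on $W_0^\ell$ plus non-negativity of all $a_g$ and the normalization $a_0=1$ forces the mass there to be at most $|W_0|^\ell = q^{\ell j}$. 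Then sum over the $\le q^{n^2}$ choices of $W_0$-orbit (equivalently over the relevant configurations). The technical care is in checking that a feasible solution of $\KLP^\FF_\Lin$ really does restrict to a feasible solution of the analogous program for the smaller scheme on $W_0$ (this uses that the distance constraints $\forbconfig_\Lin$ are ``monotone'' under taking subtuples and that the refined scheme $\HH_n(\FF)^{\ell,\FF^\ell}$ restricts nicely), and in handling the normalization correctly so that the bound is genuinely $q^{\ell j}$ and not something weaker. Everything else is bookkeeping with $\binom{n+q^\ell-1}{q^\ell-1}$-type polynomial counts and the elementary inequality manipulation to match the stated threshold on $\ell$.
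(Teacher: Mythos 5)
Your high-level strategy works, but it is genuinely different from the paper's. Both arguments reduce to bounding the LP mass on tuples whose span has dimension at most $k_0 \coloneqq \log_q A^\Lin_q(n,d)$, and both then absorb a $\poly(n)\cdot q^{O(n^2)}$ prefactor by taking $\ell$-th roots; the difference is in how the mass bound is proved. The paper goes through \cref{thm:vartheta'} to rewrite $\val(\KLP^\FF_\Lin)$ as $\vartheta'(G)$, symmetrizes the PSD matrix $M$ so that $\tr M=1$ forces every diagonal entry to equal $q^{-n\ell}$, and then uses PSD-ness of $2\times 2$ principal minors to get $\lVert M\rVert_\infty\leq q^{-n\ell}$; from there, bounding $\langle J,M\rangle$ is a pure count of nonzero entries $\gamma_{n,\ell,k}$ for $k\leq k_0$. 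You instead work directly in $\FLP_\Lin$ and bound, for each fixed $j$-dimensional subspace $W_0$, the partial sum $\sum_{x\in W_0^\ell}a_x$. That bound \emph{is} correct, and the mechanism behind it is nice: $\One_{W_0^\ell}$ has nonnegative Fourier transform (it is the indicator of a subgroup), so the restriction of $a$ to $W_0^\ell$ inherits nonnegative Fourier coefficients as a function on $W_0^\ell$ (its Fourier coefficients are sums of those of $a$ over cosets of $(W_0^\perp)^\ell$); combined with $a(0)=1$ and $a\geq 0$, the elementary inequality $\widehat{f}(0)\leq\sum_\alpha\widehat{f}(\alpha)=f(0)$ gives $\sum_{x\in W_0^\ell}a_x\leq q^{j\ell}$. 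Summing over the at most $q^{nj}\leq q^{n^2}$ subspaces of each dimension $j\leq k_0$ recovers a bound of the same shape as the paper's. In effect, the paper bounds $a_x\leq 1$ entrywise (via a $2\times 2$ minor), while you bound an average over a subgroup (via Fourier positivity on the quotient); both are valid, and yours avoids Schrijver's $\vartheta'$ identification entirely.

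Two things in your writeup would need to be fixed before this is a proof. First, the appeal to ``the same positivity argument as in the soundness proof'' is not the right citation: the soundness proof shows that a code's profile is \emph{feasible} and hence gives a \emph{lower} bound on $\val$, whereas here you need an \emph{upper} bound on the mass, which is a different (dual-type) argument — the one-line $\widehat{f}(0)\leq f(0)$ observation above. Second, your notation $N_{n,\ell,j}$ is overloaded: you first define it as a count of $S_n$-orbits of tuples within a fixed subspace (which can grow with $\ell$ and is not bounded by $q^{n^2}$), but then use it as a count of $j$-dimensional subspaces in the final display. You want the latter; and you do not actually need the restricted solution to be feasible for a full sub-LP (the worry about $\forbconfig_\Lin$ being monotone under subtuples is unnecessary) — you only need the nonnegativity of the restricted Fourier coefficients, which follows from $W_0$ being an $\FF$-linear subspace and does not use the distance constraints at all.
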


Before proving this theorem, note that since $\FF$-linear codes must necessarily have size of the form $q^k$
for some $k\in\NN$, by taking $\epsilon < q - 1$, we get
\begin{align*}
  A^\Lin_q(n,d)
  & =
  q^{\floor{(\log_q \val(\KLP^\FF_\Lin(n,d,\ell)))/\ell}}
\end{align*}
whenever $\ell > 9(n^2\ln(q) + 1)/(\ln(q))^2$.

\begin{proof}
  By~\cref{rmk:klp} that $\KLP^\FF_\Lin(n,d,\ell)$ is the Delsarte linear program
  $\cL_{S^{\ell,T}}(D_d^{\ell,T})$ for $S = \HH_n(\FF_2)$, $T = \FF_2^\ell$ and
  $D_d\coloneqq\{r_0,r_d,r_{d+1},\ldots,r_n\}$. By~\cref{thm:vartheta'}, the value of this linear program
  coincides with the value of $\vartheta'$ in the associated graph $G\coloneqq G_{S^{\ell,T}}(D_d^{\ell,T})$,
  i.e., the optimum value of the semi-definite program
  \begin{align*}
    \max \quad
    & \ip{J}{M}
    \\
    \text{s.t.} \quad
    & \tr M = 1
    & &
    & & (\text{Normalization})
    \\
    & M[u,v] = 0
    & & \forall \{u,v\} \in E(G)
    & & (\text{Independent set})
    \\
    & M \succeq 0
    & &
    & & (\text{PSD-ness})
    \\
    & M[u,v] \ge 0
    & & \forall u,v \in V(G)
    & & (\text{Non-negativity}),
  \end{align*}
  where the variables is $M\in\RR^{V\times V}$ symmetric and
  \begin{align*}
    E(G)
    & \coloneqq
    \left\{\{x,y\}\in \binom{(\FF^n)^\ell}{2}
    \;\middle\vert\;
    \forall k\in\FF^\ell, \left\lvert\sum_{i=1}^\ell k_i(x_i-y_i)\right\rvert\notin [d-1]
    \right\},
  \end{align*}
  where $\lvert z\rvert\coloneqq\lvert\{j\in[n]\mid z_j\neq 0\}\rvert$ is the Hamming weight of $z$.

  Let $k_0$ be the maximum dimension of an $\FF$-linear code of distance $d$ in $\FF^n$ (that is, let
  $k_0\coloneqq\log_q A^\Lin_q(n,d)$), let $M$ be a feasible solution of the program above and let
  us provide an upper bound for the objective value $\langle J,M\rangle$. Note that symmetrizing $M$ under the
  automorphism group $\Aut(G)$ of the Cayley graph $G$ does not change the objective value $\langle
  J,M\rangle$ (and preserves all restrictions), so we may suppose that $M$ is $\Aut(G)$-invariant, which in
  particular implies that all diagonal entries of $M$ are equal and since the trace of $M$ is $1$, it follows
  that all diagonal entries of $M$ are equal to $q^{-n\ell}$. On the other hand, since $M$ is positive
  semi-definite, any $2\times 2$ principal minor of $M$ is non-negative and thus all off-diagonal entries of
  $M$ have absolute value at most $q^{-n\ell}$, that is, we have $\lVert M\rVert_\infty = q^{-n\ell}$.

  Since the objective value $\langle J,M\rangle$ is simply the sum of all entries of $M$, we can provide an
  upper bound for it by simply giving an upper bound on how many entries of $M$ are allowed to be non-zero.

  Note that for an entry $M_{xy}$ indexed by $(x,y)\in(\FF^n)^\ell\times (\FF^n)^\ell$ to be non-zero, the
  difference vectors $z_1,\ldots,z_\ell\in\FF^n$ given by $z_i\coloneqq x_i - y_i$ ($i\in[\ell]$) must span an
  $\FF$-linear subspace of dimension at most $k_0$ (as any subspace of larger dimension necessarily has
  distance smaller than $d$ and thus some $k\in\FF^\ell$ will have $\lvert\sum_{i=1}^\ell
  k_i(x_i-y_i)\rvert\in [d-1]$).

  By letting $\gamma_{n,\ell,k}$ be the number of tuples $(z_1,\ldots,z_\ell)$ that
  span a subspace of dimension $k\in\{0,1,\ldots,n\}$, since each difference $(z_1,\ldots,z_\ell)$ is realized
  as $z_i=x_i-y_i$ for exactly $q^{n\ell}$ pairs $(x,y)\in(\FF^n)^\ell\times(\FF^n)^\ell$, we get
  \begin{align*}
    \langle J,M\rangle
    & \leq
    \sum_{k=0}^{k_0} \gamma_{n,\ell,k}\cdot q^{n\ell}\cdot\lVert M\rVert_\infty
    \leq
    \sum_{k=0}^{k_0} \gamma_{n,\ell,k}.
  \end{align*}

  We claim that
  \begin{align}\label{eq:gammabound}
    \gamma_{n,\ell,k}
    & \leq
    \binom{\ell}{k}
    \cdot
    \beta_{n,k}
    \cdot
    (q^k)^{\ell-k},
  \end{align}
  where
  \begin{align*}
    \beta_{n,k}
    & \coloneqq
    \prod_{j=0}^{k-1} (q^n - q^j)
  \end{align*}
  is the number of linearly independent ordered $k$-tuples in $\FF^n$. Indeed, the upper bound
  in~\eqref{eq:gammabound} follows by picking $k$ out of the $\ell$ vectors to have a linearly independent
  ordered $k$-tuple, then picking each of the other $\ell-k$ positions to be a linear combination of
  these $k$ vectors.

  Using this bound along with $\binom{\ell}{k}\leq\ell^k$ and $\beta_{n,k}\leq q^{nk}$, we get
  \begin{align*}
    \langle J,M\rangle
    & \leq
    \sum_{k=0}^{k_0} \gamma_{n,\ell,k}
    \leq
    \sum_{k=0}^{k_0}
    \ell^k q^{nk} q^{k \ell}
    \leq
    \ell^{k_0} q^{n k_0} \left(q^{k_0\ell} + \sum_{k=0}^{k_0-1} q^{k\ell}\right)
    \\
    & =
    \ell^{k_0} q^{n k_0} \left(q^{k_0\ell} + \frac{q^{k_0\ell}-1}{q^\ell-1}\right)
    \leq
    2\ell^n q^{n^2} q^{k_0\ell}.
  \end{align*}

  Taking the $\ell$th root and recalling that $q^{k_0} = A^\Lin_q(n,d)$ we conclude that
  \begin{align*}
    \val(\KLP^\FF_\Lin(n,d,\ell))^{1/\ell} & \leq (2\ell^n q^{n^2})^{1/\ell} A^\Lin_q(n,d).
  \end{align*}
  Finally, the hypothesis $\ell\geq 9(n^2\ln(q) + 1)/(\ln(1+\epsilon))^2$ implies that $(2\ell^n
  q^{n^2})^{1/\ell}\leq 1+\epsilon$, which concludes the proof (a detailed computation is included
  in~\cref{lemma:completeness_param_comp} in \cref{sec:deferred}).
\end{proof}

\begin{remark}\label{rmk:completeness}
  The same proof of Theorem~\ref{theo:main:completeness:formal} also works for $D$-codes over the
  weak Hamming scheme $\HH_n(\FF)$ yielding
  \begin{align*}
    \val(\cL_{\HH_n(\FF)}^{\ell,\FF^\ell}(D^{\ell,\FF^\ell}))^{1/\ell} & \leq (1+\epsilon)\lvert C^*\rvert,
  \end{align*}
  where $C^*$ is a $D$-code in $\HH_n(\FF)$ of maximum size. The same proof also applies to $D$-codes over
  the strong Hamming scheme $\HH_n^*(\FF)$.
\end{remark}

\subsection{Hierarchy Collapse for General Codes}\label{sec:lifting}

In this section, we show that without the additional semantic linearity constraints imposed by
$\KLP_\Lin(n,d,\ell)$, the associated hierarchy $\KLP(n,d,\ell)$ does not give any improvement over the
original Delsarte linear programming approach. The proof is in two steps: first, we show that just tensoring
the program does not change the relative value (\cref{lem:tensorlifting}). Second, we show that refining the
scheme and adding only natural (non-semantic) constraints does not change the value of the associated Delsarte
linear program (\cref{lem:refinementlifting}).

Recall that all of our linear programming hierarchies can be interpreted as a Delsarte LP of some association
scheme and some code constraints (see \cref{rmk:klp}). For these proofs, we then heavily rely on the
connection to the unsymmetrized program $\vartheta'$ in \cref{thm:vartheta'}.

\begin{lemma}\label{lem:tensorlifting}
  Let $S_1=(X_1,R_1)$ and $S_2=(X_2,R_2)$ be commutative association schemes and let $D_i\subseteq R_i$ with
  $\cD_{X_i}\in D_i$ ($i\in[2]$). Then
  \begin{align*}
    \val(\cL_{S_1\otimes S_2}(D_1\otimes D_2)) & = \val(\cL_{S_1}(D_1))\cdot\val(\cL_{S_2}(D_2)),
  \end{align*}
  where
  \begin{align*}
    D_1\otimes D_2 & \coloneqq \{r_1\otimes r_2 \mid r_1\in D_1\land r_2\in D_2\}.
  \end{align*}
\end{lemma}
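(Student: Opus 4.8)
The statement is that the Delsarte LP value is multiplicative under tensor product of association schemes, when the forbidden sets are taken to be the tensor product $D_1 \otimes D_2$. The cleanest route is via the $\vartheta'$ characterization (\cref{thm:vartheta'}): $\val(\cL_{S}(D)) = \vartheta'(G_S(D))$ for any commutative scheme $S$ and $D \ni \cD_X$. So it suffices to show that $\vartheta'(G_{S_1 \otimes S_2}(D_1 \otimes D_2)) = \vartheta'(G_{S_1}(D_1)) \cdot \vartheta'(G_{S_2}(D_2))$. First I would identify the graph $G_{S_1 \otimes S_2}(D_1 \otimes D_2)$: its vertex set is $X_1 \times X_2$, and $\{(x_1,x_2),(y_1,y_2)\}$ is an edge iff the relation containing this pair is \emph{not} in $D_1 \otimes D_2$, i.e. iff $(x_1,y_1)$ lies in some $r_1 \notin D_1$ \emph{or} $(x_2,y_2)$ lies in some $r_2 \notin D_2$ (using that $r_1 \otimes r_2 \in D_1 \otimes D_2 \iff r_1 \in D_1 \wedge r_2 \in D_2$, together with the fact that both $D_i$ contain the diagonal so a pair with $x_1 = y_1$ contributes no edge from the first coordinate). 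In other words, $G_{S_1 \otimes S_2}(D_1 \otimes D_2)$ is exactly the \emph{co-normal product} (also called disjunctive product) $G_{S_1}(D_1) * G_{S_2}(D_2)$ of the two graphs.

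The proof then reduces to: $\vartheta'$ is multiplicative under the co-normal product. This is a known fact for the Lov\'asz $\vartheta$-function and its variants, but I would give a self-contained argument using the two programs in \cref{def:vartheta'}. For the lower bound $\vartheta'(G_1 * G_2) \geq \vartheta'(G_1)\vartheta'(G_2)$, take optimal primal solutions $M_1$ for $\cS(G_1)$ and $M_2$ for $\cS(G_2)$ and consider the Kronecker product $M_1 \otimes M_2 \in \RR^{(V_1 \times V_2)^2}$. Then $\tr(M_1 \otimes M_2) = 1$, $M_1 \otimes M_2 \succeq 0$, entries are nonnegative, $\ip{J}{M_1 \otimes M_2} = \ip{J}{M_1}\ip{J}{M_2} = \vartheta'(G_1)\vartheta'(G_2)$, and crucially the independent-set constraint holds: if $\{(u_1,u_2),(v_1,v_2)\}$ is an edge of the co-normal product then $\{u_1,v_1\} \in E(G_1)$ or $\{u_2,v_2\} \in E(G_2)$, so one of the factors $M_1[u_1,v_1]$, $M_2[u_2,v_2]$ vanishes, hence so does the product entry. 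For the upper bound, dually, take optimal solutions $(N_1,\beta_1)$, $(N_2,\beta_2)$ of the minimization program $\cS'(G_i)$ and set $N \coloneqq N_1 \otimes N_2$, $\beta \coloneqq \beta_1\beta_2$; one checks $\beta I - N = \beta_1\beta_2 I - N_1 \otimes N_2 \succeq 0$ (e.g. because $\beta_1 I \succeq N_1 \succeq 0$ and $\beta_2 I \succeq N_2 \succeq 0$ imply $\beta_1\beta_2 I \succeq N_1 \otimes N_2$ by monotonicity of the Kronecker product on PSD matrices, or by diagonalizing) and $N[(u_1,u_2),(v_1,v_2)] = N_1[u_1,v_1]N_2[u_2,v_2] \geq 1$ whenever $\{(u_1,u_2),(v_1,v_2)\}$ is a \emph{non}-edge, since then $\{u_1,v_1\} \notin E(G_1)$ and $\{u_2,v_2\} \notin E(G_2)$, so both factors are $\geq 1$. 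This gives $\vartheta'(G_1 * G_2) \leq \beta_1\beta_2 = \vartheta'(G_1)\vartheta'(G_2)$, completing the two-sided bound.

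I should double-check one boundary point: the maximization form of $\vartheta'$ is stated with the constraint ``$M[u,v] = 0$ for $\{u,v\} \in E(G)$'' but the non-negativity constraint is on all entries including (implicitly) the diagonal; the Kronecker construction preserves all of this, and likewise the diagonal entries of $N = N_1 \otimes N_2$ need no special treatment since $\cD_X \in D_i$ means loops are never edges. The main obstacle — really the only non-routine part — is the PSD claim $\beta_1\beta_2 I - N_1 \otimes N_2 \succeq 0$ in the dual direction; the clean way to see it is to write $\beta_1\beta_2 I - N_1\otimes N_2 = \beta_1(\beta_2 I - N_2)\otimes I + \ldots$ — more carefully, $\beta_1\beta_2 I - N_1 \otimes N_2 = (\beta_1 I - N_1)\otimes \beta_2 I + N_1 \otimes (\beta_2 I - N_2)$, which is a sum of Kronecker products of PSD matrices (using $N_1 \succeq 0$, $\beta_2 I - N_2 \succeq 0$, $\beta_1 I - N_1 \succeq 0$, $\beta_2 I \succeq 0$) and hence PSD. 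With that identity in hand the whole argument is a few lines.
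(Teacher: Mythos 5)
Your proof follows the same route as the paper's: reduce to $\vartheta'$ via Schrijver's theorem, tensor primal feasible solutions for the $\geq$ direction, and tensor dual feasible solutions for the $\leq$ direction. Your observation that $G_{S_1\otimes S_2}(D_1\otimes D_2)$ is the co-normal product $G_{S_1}(D_1) * G_{S_2}(D_2)$ is a nice piece of terminology the paper doesn't supply, and the boundary checks (in particular the role of $\cD_{X_i}\in D_i$ in guaranteeing $x_1\neq y_1$) are correct.

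There is, however, a point you should not wave past. You write that $\beta_1\beta_2 I - N_1\otimes N_2 \succeq 0$ ``because $\beta_1 I \succeq N_1 \succeq 0$ and $\beta_2 I \succeq N_2 \succeq 0$,'' and your decomposition
\[
\beta_1\beta_2 I - N_1\otimes N_2 \;=\; (\beta_1 I - N_1)\otimes\beta_2 I \;+\; N_1\otimes(\beta_2 I - N_2)
\]
genuinely needs $N_1\succeq 0$ for the second term to be PSD. But the dual program $\cS'(G)$ as stated carries no constraint $N\succeq 0$: the only constraints are $\beta I - N\succeq 0$ and $N[u,v]\geq 1$ on non-edges, and a feasible $N$ can have $\lambda_{\min}(N) < -\beta$. (For instance, on the $3$-vertex graph with single edge $\{2,3\}$, the feasible $N$ with diagonal $1$, $N[1,2]=N[1,3]=c$, $N[2,3]=-c$ has spectrum $\{1+c,\,1+c,\,1-2c\}$; for $c>2$ one gets $|\lambda_{\min}|>\lambda_{\max}$, so $N\otimes N\not\preceq \lambda_{\max}(N)^2 I$.) So the claim is not automatic, and to close the argument you must justify why an \emph{optimal} dual solution may be taken with $N\succeq 0$, or at least with $\lambda_{\min}(N)\geq -\lambda_{\max}(N)$; neither is immediate from the SDP as written. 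The paper's proof contains exactly the same step (``Since $N^i\preceq\beta_i I$, it follows that $N\preceq\beta I$'') with even less indication that anything needs checking, so to your credit you at least surfaced the hypothesis you were using. If you want to avoid the issue entirely, the $\leq$ direction also follows by a partial-trace argument on the primal: given optimal $M$ for $\cS(G_1 * G_2)$, set $M_1[u_1,v_1]:=\sum_{u_2,v_2} M[(u_1,u_2),(v_1,v_2)]$ and $M_2[u_2,v_2]:=\sum_{u_1} M[(u_1,u_2),(u_1,v_2)]$; one checks $M_1/\tr M_1$ and $M_2$ are feasible for $\cS(G_1)$ and $\cS(G_2)$, $\ip{J}{M}=\ip{J}{M_1}\leq\vartheta'(G_1)\tr M_1=\vartheta'(G_1)\ip{J}{M_2}\leq\vartheta'(G_1)\vartheta'(G_2)$, which needs no dual solution at all.
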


\begin{proof}
  By \cref{thm:vartheta'}, for $i\in[2]$, we have
  \begin{align*}
    \val(\cL_{S_i}(D_i)) & = \vartheta'(G_{S_i}(D_i)), &
    \val(\cL_{S_1\otimes S_2}(D_1\otimes D_2)) & =\vartheta'(G_{S_1\otimes S_2}(D_1\otimes D_2)).
  \end{align*}

  Given solutions $M_1$ and $M_2$ of the primal semi-definite programs $\cS(G_{S_1}(D_1))$ and
  $\cS(G_{S_2}(D_2))$ associated with $\vartheta'(G_{S_1}(D_1))$ and $\vartheta'(G_{S_2}(D_2))$, respectively,
  note that the tensor product $M\coloneqq M_1\otimes M_2$ is a feasible solution of $\cS(G_{S_1\otimes
    S_2}(D_1\otimes D_2))$ since for if $((x_1,x_2),(y_1,y_2))\in r_1\otimes r_2$ for some $r_1\otimes r_2\in
  (R_1\otimes R_2)\setminus (D_1\otimes D_2)$, then $(x_1,y_1)\in r_1$ or $(x_2,y_2)\in r_2$, which implies
  that $M_{(x_1,x_2),(y_1,y_2)} = 0$. Since we also have
  \begin{align*}
    \langle J, M\rangle & = \langle J,M_1\rangle\cdot\langle J,M_2\rangle,
  \end{align*}
  it follows that
  \begin{align*}
    \val(\cL_{S_1\otimes S_2}(D_1\otimes D_2)) & \geq \val(\cL_{S_1}(D_1))\cdot\val(\cL_{S_2}(D_2)).
  \end{align*}

  \medskip

  For the other inequality, given solutions $(\beta_1,N^1)$ and $(\beta_2,N^2)$ of the dual semi-definite
  programs $\cS'(G_{S_1}(D_1))$ and $\cS'(G_{S_2}(D_2))$, respectively, let $\beta\coloneqq\beta_1\cdot\beta_2$
  and $N\coloneqq N^1\otimes N^2$. Since $N^i\preceq \beta_i I$ ($i\in[2]$), it follows that $N\preceq\beta
  I$. Note also that if $((x_1,x_2),(y_1,y_2))\in r_1\otimes r_2$ for some $r_1\otimes r_2\in D_1\otimes D_2$,
  then since $r_i\in D_i$, we must have
  \begin{align*}
    N_{((x_1,x_2),(y_1,y_2))} & = N^1_{x_1,y_1}\cdot N^2_{x_2,y_2} \geq 1,
  \end{align*}
  thus $(\beta,N)$ is a feasible solution of $\cS'(G_{S_1\otimes S_2}(D_1\otimes D_2))$ which implies
  \begin{align*}
    \val(\cL_{S_1\otimes S_2}(D_1\otimes D_2)) & \leq \val(\cL_{S_1}(D_1))\cdot\val(\cL_{S_2}(D_2)),
  \end{align*}
  as desired.
\end{proof}

We now proceed to refinements.

\begin{lemma}\label{lem:refinementlifting}
  Let $S'=(X,R')$ be a commutative refinement of a commutative association scheme $S=(X,R)$, let $D\subseteq
  R$ be such that $\cD_X\in D$ and let
  \begin{align*}
    D' & \coloneqq \{r\in R'\mid\exists \widehat{r}\in R, r\subseteq \widehat{r}\}.
  \end{align*}
  Then
  \begin{align*}
    \val(\cL_S(D)) & = \val(\cL_{S'}(D')).
  \end{align*}
\end{lemma}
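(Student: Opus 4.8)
The plan is to reduce the claimed identity to an equality of graphs and then invoke Schrijver's theorem (\cref{thm:vartheta'}). First I would check that its hypotheses transfer to $S'$ and $D'$: the diagonal relation $\cD_X$ is a relation of every association scheme, so $\cD_X \in R'$, and since $\cD_X \subseteq \cD_X \in D$ we also have $\cD_X \in D'$ (recall that $D'$ is the set of $r' \in R'$ contained in some relation of $D$, as in the remark following the definition of refinement). As both $S$ and $S'$ are commutative, \cref{thm:vartheta'} gives $\val(\cL_S(D)) = \vartheta'(G_S(D))$ and $\val(\cL_{S'}(D')) = \vartheta'(G_{S'}(D'))$, so it suffices to prove $G_S(D) = G_{S'}(D')$.

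Both graphs have vertex set $X$, so only equality of edge sets is at issue. The single structural fact I would use is that, because $S'$ refines $S$, each $r \in R$ is a disjoint union of relations of $R'$ while each non-empty $r' \in R'$ is contained in a \emph{unique} $\widehat r \in R$, uniqueness coming from $R$ being a partition of $X \times X$. For $E(G_{S'}(D')) \subseteq E(G_S(D))$: an edge $\{x,y\}$ comes with $r' \in R' \setminus D'$ containing $(x,y)$; let $\widehat r \in R$ be the unique relation with $r' \subseteq \widehat r$. If $\widehat r \in D$ then $r' \subseteq \widehat r \in D$ would put $r'$ in $D'$, a contradiction; hence $\widehat r \in R \setminus D$ contains $(x,y)$, so $\{x,y\} \in E(G_S(D))$. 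For the reverse inclusion: an edge $\{x,y\}$ comes with $r \in R \setminus D$ containing $(x,y)$; writing $r$ as a union of $R'$-relations yields $r' \in R'$ with $(x,y) \in r' \subseteq r$. If $r' \in D'$, then $r' \subseteq \widehat r$ for some $\widehat r \in D$; but the non-empty set $r'$ lies in exactly one relation of $R$, forcing $\widehat r = r$, contradicting $r \notin D$. Hence $r' \in R' \setminus D'$ and $\{x,y\} \in E(G_{S'}(D'))$. (Transposes cause no difficulty: $(r')^\top \subseteq (\widehat r)^\top$ whenever $r' \subseteq \widehat r$, and one may work with the symmetrizations of $S, S'$ as in the discussion preceding \cref{def:vartheta'}.)

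Putting this together, $G_S(D) = G_{S'}(D')$, whence $\val(\cL_S(D)) = \vartheta'(G_S(D)) = \vartheta'(G_{S'}(D')) = \val(\cL_{S'}(D'))$. This argument is essentially just bookkeeping; the only point demanding care is the reverse edge-set inclusion, where the uniqueness of the $R$-relation containing a given non-empty $R'$-relation is exactly what prevents a refining relation from being simultaneously ``forbidden'' (lying in $D'$) and ``allowed'' (lying inside some $r \in R \setminus D$). I do not anticipate any genuine obstacle beyond getting this partition argument right.
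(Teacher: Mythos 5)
Your proof is correct and follows essentially the same route as the paper: reduce both values to $\vartheta'$ of the associated graphs via \cref{thm:vartheta'}, then observe that the two graphs (equivalently, the two SDPs) coincide because each non-empty $R'$-relation sits inside a unique $R$-relation. The paper simply asserts that the SDPs are identical; your partition argument is exactly the bookkeeping that justifies this. (You also correctly read the definition of $D'$ as $\{r\in R'\mid\exists\widehat{r}\in D,\,r\subseteq\widehat{r}\}$; the $\widehat{r}\in R$ in the lemma statement is a typo, since with $R$ in place of $D$ one would get $D'=R'$.)
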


\begin{proof}
  By \cref{thm:vartheta'}, we have
  \begin{align*}
    \val(\cL_S(D)) & = \vartheta'(G_S(D)), &
    \val(\cL_{S'}(D')) & = \vartheta'(G_{S'}(D')).
  \end{align*}
  But note that the semi-definite programs $\cS(G_S(D))$ and $\cS(G_{S'}(D'))$ corresponding to
  $\vartheta'(G_S(D))$ and $\vartheta'(G_{S'}(D'))$ are identical (i.e., have exactly the same restrictions
  and objective value), so we get $\vartheta'(G_S(D)) = \vartheta'(G_{S'}(D'))$ trivially.
\end{proof}

Composing \cref{lem:tensorlifting,lem:refinementlifting}, we conclude that if we do not add
any extra restrictions other than the natural ones, the value of the Delsarte linear program remains
unchanged.

\begin{proposition}[Lifting]\label{prop:main:lifting:formal}
  For every finite field $\FF$ and every $\ell\in\NN_+$, we have
  \begin{align*}
    \val(\KLP^\FF(n,d,\ell))^{1/\ell}  = \val(\DLP^\FF(n,d)).
  \end{align*}
\end{proposition}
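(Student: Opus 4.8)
The plan is to reduce $\KLP^\FF(n,d,\ell)$ to a composition of two operations on Delsarte LPs — tensoring and refining — and then apply \cref{lem:tensorlifting} and \cref{lem:refinementlifting} in sequence. Recall from \cref{rmk:klp} that $\KLP^\FF(n,d,\ell) = \cL_{S^{\ell,T}}(\widehat{D}_d^\ell)$, where $S = \HH_n(\FF)$, $T = \FF^\ell$, and $\widehat{D}_d^\ell = \{r\in R^{\ell,T} \mid \exists r'\in D_d^{\otimes\ell}, r\subseteq r'\}$ is exactly the set of relations of the refined scheme $S^{\ell,T}$ that sit inside some relation of the $\ell$th tensor power $S^\ell$. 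In other words, $S^{\ell,T}$ is a commutative refinement of $S^\ell$ (this is \cref{thm:ftt}\ref{thm:ftt:scheme}, noting that $f_{\HH_n(\FF),\FF^\ell}$ factors through types by \cref{prop:weakHammingftt} since $\FF$ is a commutative ring), and $\widehat{D}_d^\ell$ is precisely the ``pullback'' set $D'$ of \cref{lem:refinementlifting} applied to the refinement $S^{\ell,T}$ of $S^\ell$ with $D = D_d^{\otimes\ell}$.

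So the first step is to verify carefully that $\widehat{D}_d^\ell = \{r\in R^{\ell,T} \mid \exists \widehat r\in R^{\otimes\ell}, r\subseteq \widehat r\}$, i.e., that the set appearing in $\KLP^\FF(n,d,\ell)$ matches the set $D'$ in the statement of \cref{lem:refinementlifting} when applied to the refinement $S^{\ell,T}$ of $S^\ell$ with $D$-set $D_d^{\otimes\ell}$. This is essentially a matter of unwinding definitions: an element $r_h\in R^{\ell,T}$ is contained in some relation of $S^\ell$ iff the values $h(e_1),\dots,h(e_\ell)$ of the type function all lie in $D_d$, which is precisely the condition defining $\widehat D_d^\ell$. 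Once this identification is made, \cref{lem:refinementlifting} gives
\begin{align*}
  \val(\cL_{S^{\ell,T}}(\widehat D_d^\ell)) = \val(\cL_{S^\ell}(D_d^{\otimes\ell})).
\end{align*}

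The second step is to iterate \cref{lem:tensorlifting}. Since $S^\ell = S\otimes S^{\ell-1}$ and $D_d^{\otimes\ell} = D_d\otimes D_d^{\otimes(\ell-1)}$, and since $\HH_n(\FF)$ is commutative (being a translation scheme) with $\cD_{\FF^n}\in D_d$, an easy induction on $\ell$ using \cref{lem:tensorlifting} gives
\begin{align*}
  \val(\cL_{S^\ell}(D_d^{\otimes\ell})) = \val(\cL_S(D_d))^\ell = \val(\DLP^\FF(n,d))^\ell,
\end{align*}
where the last equality is just the identification $\cL_{\HH_n(\FF)}(D_d) = \DLP^\FF(n,d)$ (again from the discussion around \cref{rmk:klp}, together with the fact that the $p,q$-functions of $\HH_n(\FF)$ are the Krawtchouk polynomials, \cref{ex:weakHamming}). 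Chaining the two displays and taking $\ell$th roots yields $\val(\KLP^\FF(n,d,\ell))^{1/\ell} = \val(\DLP^\FF(n,d))$, as desired.

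I do not expect any serious obstacle here — both key lemmas are already proved in the excerpt, and the whole argument is a bookkeeping exercise in matching up the refinement/tensor-power formalism of \cref{sec:association} with the concrete definition of $\KLP$. The one point that needs genuine (if routine) care is the first step: making sure that the ``forbidden configurations'' set $\widehat D_d^\ell$ used in $\KLP(n,d,\ell)$ really is the full preimage $D'$ of \cref{lem:refinementlifting}, rather than some smaller set — if it were smaller, the refinement lemma would not apply verbatim. But by construction $\widehat D_d^\ell$ is defined exactly as ``all refined relations contained in some tensor-power relation of $D_d^{\otimes\ell}$,'' which is $D'$ on the nose, so this is immediate once the definitions are lined up.
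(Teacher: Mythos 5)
Your proof is correct and follows the same route as the paper's: identify $\KLP^\FF(n,d,\ell)$ with $\cL_{S^{\ell,T}}(\widehat D_d^\ell)$ via \cref{rmk:klp}, strip the refinement with \cref{lem:refinementlifting}, and then peel off the tensor factors with \cref{lem:tensorlifting}. The paper's proof is a terse one-liner while yours spells out the iterated tensor step and the identification of $\widehat D_d^\ell$ with the set $D'$ of \cref{lem:refinementlifting}; the latter is indeed the one point worth checking, and your reading (``$\exists r'\in D_d^{\otimes\ell}$'' as in \cref{rmk:klp}, not ``$\exists r'\in R^{\otimes\ell}$'') is the intended one — both \cref{lem:refinementlifting} and the proof of \cref{prop:main:lifting:formal} as written contain a small typo where $D$ (resp.\ $D_d^{\otimes\ell}$) was replaced by $R$ (resp.\ $R^{\otimes\ell}$), which you have silently and correctly repaired.
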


\begin{proof}
  By \cref{rmk:klp}, the program $\KLP^\FF(n,d,\ell)$ can be seen as the Delsarte linear program
  $\cL_{\HH_n(\FF)^{\ell,\FF^\ell}}(\widehat{D}_d^\ell)$ of the refinement
  $\HH_n(\FF)^{\ell,\FF^\ell}$ of the tensor power $\HH_n(\FF)^\ell$ using the natural restrictions
  \begin{align*}
    \widehat{D}_d^\ell & \coloneqq \{r\in R^{\ell,\FF_2^\ell} \mid \exists r'\in R^{\otimes\ell}, r\subseteq r'\},
  \end{align*}
  so the result follows from \cref{lem:tensorlifting,lem:refinementlifting}.
\end{proof}

As a secondary corollary, we can also show that in the linear case, the logarithm of the value of the hierarchy
is subadditive. Let us note that this is also true of the hierarchy for non-linear codes for trivial reasons.

\begin{corollary}
  For every finite field $\FF$ and every $\ell_1,\ell_2\in\NN_+$, we have
  \begin{multline*}
    \val(\KLP_\Lin^\FF(n,d,\ell_1+\ell_2))
    \\
    \leq
    \val(\KLP_\Lin^\FF(n,d,\ell_1))\cdot\val(\KLP_\Lin^\FF(n,d,\ell_2)),
  \end{multline*}
\end{corollary}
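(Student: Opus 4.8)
The plan is to realize all three programs as Delsarte LPs of association schemes, exactly as in \cref{rmk:klp}, and then deduce the bound from the tensor and refinement lemmas of \cref{sec:lifting}. Write $S\coloneqq\HH_n(\FF)$, $D_d\coloneqq\{r_0,r_d,\ldots,r_n\}$, and for the two levels set $T_1\coloneqq\FF^{\ell_1}$, $T_2\coloneqq\FF^{\ell_2}$, $T\coloneqq\FF^{\ell_1+\ell_2}$; by \cref{rmk:klp} we then have $\KLP_\Lin^\FF(n,d,\ell_i)=\cL_{S^{\ell_i,T_i}}(D_d^{\ell_i,T_i})$ for $i\in\{1,2\}$ and $\KLP_\Lin^\FF(n,d,\ell_1+\ell_2)=\cL_{S^{\ell_1+\ell_2,T}}(D_d^{\ell_1+\ell_2,T})$. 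Using the obvious identifications $(\FF^n)^{\ell_1}\times(\FF^n)^{\ell_2}\cong(\FF^n)^{\ell_1+\ell_2}$ and $\FF^{\ell_1}\times\FF^{\ell_2}\cong\FF^{\ell_1+\ell_2}$, the scheme $S^{\ell_1+\ell_2,T}$ and the tensor product $S^{\ell_1,T_1}\otimes S^{\ell_2,T_2}$ live on the same vertex set, so these lemmas will all be applicable.

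First I would verify that $S^{\ell_1+\ell_2,T}$ is a \emph{refinement} of $S^{\ell_1,T_1}\otimes S^{\ell_2,T_2}$. This follows straight from the definitions: the $(S^{\ell_1,T_1}\otimes S^{\ell_2,T_2})$-relation of a pair $(x,y)$ is determined by the values $f_S(\sum_i k_i(x_i-y_i))$ for $k$ supported on the first $\ell_1$ coordinates together with those for $k$ supported on the last $\ell_2$ coordinates, while the $S^{\ell_1+\ell_2,T}$-relation records $f_S$ on \emph{all} combinations $k\in\FF^{\ell_1+\ell_2}$; the latter data obviously determines the former, which is exactly what refinement means. Both schemes are commutative---$S^{\ell_1+\ell_2,T}$ is a translation scheme by \cref{thm:ftt}, and $S^{\ell_1,T_1}\otimes S^{\ell_2,T_2}$ is a tensor product of translation schemes---so \cref{lem:refinementlifting} will be usable. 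By the same token, if $r_h\in D_d^{\ell_1+\ell_2,T}$ (that is, $\im(h)\subseteq D_d$), then restricting $h$ to the combinations within each of the two blocks still yields values in $D_d$, so $r_h$ is contained in a relation of $D_d^{\ell_1,T_1}\otimes D_d^{\ell_2,T_2}$; hence $D_d^{\ell_1+\ell_2,T}\subseteq D'$, where $D'\coloneqq\{r\in R^{\ell_1+\ell_2,T}\mid \exists\,\widehat{r}\in D_d^{\ell_1,T_1}\otimes D_d^{\ell_2,T_2},\ r\subseteq\widehat{r}\}$.

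Then I would chain the inequalities. Shrinking the set of admitted relations of a Delsarte LP only imposes additional constraints $a_r=0$ (and the diagonal relation remains admitted), so $\val(\cL_{S^{\ell_1+\ell_2,T}}(D_d^{\ell_1+\ell_2,T}))\le\val(\cL_{S^{\ell_1+\ell_2,T}}(D'))$. By \cref{lem:refinementlifting} applied to the refinement $S^{\ell_1+\ell_2,T}$ of $S^{\ell_1,T_1}\otimes S^{\ell_2,T_2}$ with $D=D_d^{\ell_1,T_1}\otimes D_d^{\ell_2,T_2}$, the right-hand side equals $\val(\cL_{S^{\ell_1,T_1}\otimes S^{\ell_2,T_2}}(D_d^{\ell_1,T_1}\otimes D_d^{\ell_2,T_2}))$, and by \cref{lem:tensorlifting} the latter equals $\val(\cL_{S^{\ell_1,T_1}}(D_d^{\ell_1,T_1}))\cdot\val(\cL_{S^{\ell_2,T_2}}(D_d^{\ell_2,T_2}))$. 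Reading this back through \cref{rmk:klp} gives precisely $\val(\KLP_\Lin^\FF(n,d,\ell_1+\ell_2))\le\val(\KLP_\Lin^\FF(n,d,\ell_1))\cdot\val(\KLP_\Lin^\FF(n,d,\ell_2))$.

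The only genuine content---and the reason the statement is an inequality rather than an equality---is the inclusion $D_d^{\ell_1+\ell_2,T}\subseteq D'$: the level-$(\ell_1+\ell_2)$ program carries the extra ``semantic'' distance constraints coming from linear combinations that mix coordinates of the two blocks, and these are exactly the constraints that \cref{lem:refinementlifting} discards. I expect the main thing to get right is just the bookkeeping of the two identifications above, so that $S^{\ell_1+\ell_2,T}$ and $S^{\ell_1,T_1}\otimes S^{\ell_2,T_2}$ really do share a vertex set and the claimed containments hold on the nose; all the analytic work is already packaged in \cref{lem:tensorlifting,lem:refinementlifting}.
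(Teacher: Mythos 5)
Your proposal is correct and is essentially the paper's own argument: both realize the three programs as Delsarte LPs via \cref{rmk:klp}, introduce the same intermediate admissible set (your $D'$ is the paper's $\widehat{D}_d^{\ell_1,\ell_2}$), observe the containment $D_d^{\ell_1+\ell_2,\FF^{\ell_1+\ell_2}}\subseteq D'$, and then invoke \cref{lem:refinementlifting} (viewing $\HH_n(\FF)^{\ell_1+\ell_2,\FF^{\ell_1+\ell_2}}$ as a refinement of $\HH_n(\FF)^{\ell_1,\FF^{\ell_1}}\otimes\HH_n(\FF)^{\ell_2,\FF^{\ell_2}}$) followed by \cref{lem:tensorlifting}. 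You merely make explicit a few things the paper leaves implicit --- the vertex-set identifications, the monotonicity of the Delsarte LP under shrinking the admissible set $D$, and the refinement check --- but the decomposition, the key lemmas, and the chain of (in)equalities are the same.
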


\begin{proof}
  By \cref{rmk:klp}, the program $\KLP_\Lin^\FF(n,d,\ell)$ can be seen as the Delsarte linear program
  $\cL_{S_\ell}(D_d^{\ell,\FF^\ell})$ of the translation scheme $S_\ell\coloneqq\HH_n(\FF)^{\ell,\FF^\ell}$.

  Note also that for $\ell_1,\ell_2\in\NN_+$, if
  \begin{align*}
    \widehat{D}_d^{\ell_1,\ell_2}
    & \coloneqq
    \{r\in R^{\ell_1+\ell_2,\FF^{\ell_1+\ell_2}}
    \mid
    \exists r_1\in D_d^{\ell_1,\FF^{\ell_1}},\exists r_2\in D_d^{\ell_2,\FF^{\ell_2}},
    r\subseteq r_1\otimes r_2\},
  \end{align*}
  then we have
  \begin{align*}
    D_d^{\ell_1+\ell_2,\FF^{\ell_1+\ell_2}}\subseteq\widehat{D}_d^{\ell_1,\ell_2},
  \end{align*}
  and thus we get
  \begin{align*}
    & \!\!\!\!\!\!
    \val(\KLP_\Lin^\FF(n,d,\ell_1+\ell_2))
    \\
    & =
    \val(\cL_{S_{\ell_1+\ell_2}}(D_d^{\ell_1+\ell_2,\FF^{\ell_1+\ell_2}}))
    \\
    & \leq
    \val(\cL_{S_{\ell_1+\ell_2}}(\widehat{D}_d^{\ell_1,\ell_2}))
    \\
    & =
    \val(\cL_{S_{\ell_1}}(D_d^{\ell_1,\FF^{\ell_1}}))\cdot
    \val(\cL_{S_{\ell_2}}(D_d^{\ell_2,\FF^{\ell_2}}))
    \\
    & =
    \val(\KLP_\Lin^\FF(n,d,\ell_1))\cdot\val(\KLP_\Lin^\FF(n,d,\ell_2)),
  \end{align*}
  where the second equality follows from \cref{lem:tensorlifting,lem:refinementlifting}.
\end{proof}

% LocalWords:  Krawtchouk Delsarte symmetrizing Cayley unsymmetrized subadditive

\section{Conclusion}
\label{sec:concl}

In this paper, we presented a pair of hierarchies of linear programs $\KLP^\FF(n,d,\ell)$ and
$\KLP^\FF_\Lin(n,d,\ell)$ that provide upper bounds for the maximum size of codes and linear codes,
respectively, of distance $d$ in the weak Hamming scheme $\HH_n(\FF)$ over a finite field $\FF$. We
showed that while the first hierarchy $\KLP^\FF(n,d,\ell)$ collapses, the second hierarchy obtains the true
value of the maximum code up to rounding by level $\ell = O(n^2)$. Finally, we also showed how to extend these
hierarchy constructions to translation schemes under the mild assumption of factoring through types.

\medskip

As we mentioned in the introduction, we view the main contribution of $\KLP_\Lin$ as being a hierarchy that is
sufficiently powerful to ensure completeness while still being sufficiently simple to remain a hierarchy of linear
programs (as opposed to SDPs), and bearing enough similarities with the original Delsarte's LP to be amenable to
theoretical analysis. Thus the main open problem is to provide better upper or lower bounds to the optimum
value of $\KLP_\Lin$.

The contrast between completeness of $\KLP_\Lin$ and collapse of $\KLP$ also surfaces a very natural
question: are optimum codes very far from being linear? Along these lines, note that at level $\ell$,
$\KLP_\Lin(n,d,\ell)$ does not require full linearity of a code; namely, if $C\subseteq\FF_2^n$ satisfies
\begin{align}\label{eq:condition}
  \Delta\left(\sum_{j=1}^t x_j, \sum_{j=1}^t y_j\right) \notin [d-1],
\end{align}
for every $t\leq\ell$ and every $x_1,\ldots,x_t,y_1,\ldots,y_t\in C$, then $a^C$ is a feasible solution of the
program $\KLP_\Lin(n,d,\ell)$. For constant $\ell$, the condition~\eqref{eq:condition} is extremely mild and
much weaker than $C$ being a linear (or even affine) code. For example, if $0\in C$, then~\eqref{eq:condition}
boils down to requiring sums of at most $2\ell$ codewords from $C$ to not have Hamming weight in $[d-1]$. This
makes studying $\KLP_\Lin(n,d,\ell)$ at constant levels $\ell$ quite interesting.

In \cref{theo:main:completeness:formal}, we showed the (approximate)
completeness of $\KLP^\FF_\Lin(n,d,\ell)$ at level $O(n^2)$, via an
unusual counting argument. The hierarchy does not have the same
conceptual structure as Sum-of-Squares or Sherali--Adams, so
completeness does not follow in the same way. In an
earlier version of this manuscript, we conjectured that level $n$
would have exact completeness, and we believe we now have a proof of this
result\footnote{To
 give appropriate time for the verification of this proof, we leave
  it to a future work. We are including this note here to
  alert the interested reader that a proof might now be known.}. It is
plausible that exact completeness of $\KLP^\FF_\Lin(n,d,\ell)$ can be
attained at level $O(k_0)$, where $k_0$ is the dimension of an optimum
linear code over $\FF$ of distance $d$ and blocklength $n$.

As we mentioned in the introduction, our techniques provide a higher-order version of the linear program
responsible for the first linear programming bound in~\cite{MRRW77}. The second linear programming bound
in~\cite{MRRW77} also consists of analyzing a Delsarte LP but for the Johnson scheme instead of the Hamming
scheme. However, since the Johnson scheme is not a translation scheme, one cannot apply the theory developed
in \cref{subsec:ftt} directly. It is then natural to ask if there is a suitable generalization of this
construction that would apply to non-translation schemes such as the Johnson scheme.

In \cref{subsec:ftt}, we showed how to generalize the hierarchy constructions to translation schemes under the
assumption of factoring through types. However, in the general case it is not clear that the $p$ and
$q$-functions of $S^{\ell,T}$ can be computed efficiently even if those of $S$ can be computed
efficiently. For the particular case of the binary Hamming scheme, we obtained efficient formulas in
\cref{lem:explicitkrawtchouk,lem:recursive} (see also \cref{prop:complexity}), but one can also compute the
higher-order Krawtchouk polynomials efficiently from the usual Krawtchouk polynomials. This raises the natural
question: for a translation scheme $S$ in which $f_{S,T}$ factors through types, can the $p$ and $q$-functions
of $S^{\ell,T}$ be efficiently computed from the $p$ and $q$-functions of $S$?

For the particular case of the strong Hamming scheme $\HH_n^*(\FF)$ over an arbitrary finite field $\FF$, an
efficient formula for the higher-order $\FF$-Krawtchouk polynomials can be obtained by generalizing
\cref{lem:explicitkrawtchouk}: first, one generalizes the notion of Venn diagram configuration by saying that
$(x_1,\ldots,x_\ell)\in(\FF^n)^\ell$ has $\FF$-Venn diagram configuration $g\colon\FF^\ell\to\{0,1,\ldots,n\}$ if
\begin{align*}
  g(t) & = \lvert\{i\in[n] \mid \forall j\in[\ell], (x_j)_i = t_j\}\rvert
\end{align*}
for every $t\in\FF^\ell$. \cref{lem:distlincomb} implies that $f_{\HH_n^*(\FF),\FF^\ell}(x) =
f_{\HH_n^*(\FF),\FF^\ell}(y)$ if and only if $x$ and $y$ have the same $\FF$-Venn diagram configuration. By
indexing the $\FF$-Krawtchouk polynomials of order $\ell$ by $\FF$-Venn diagram configurations, a proof
analogous to that of \cref{lem:explicitkrawtchouk} gives
\begin{align*}
  K_h(g)
  & =
  \sum_{F\in\cF} \prod_{t\in\FF^\ell} \frac{g(t)!}{\prod_{u\in\FF^\ell} F(t,u)!} \prod_{t,u\in\FF^\ell} \chi_t(u)^{F(t,u)},
\end{align*}
for all $\FF$-Venn diagram configurations $h,g\colon\FF^\ell\to\{0,1,\ldots,n\}$, where $\cF$ is the set of
functions $F\colon \FF^\ell\times\FF^\ell\to\{0,1,\ldots,n\}$ such that
\begin{align*}
  \forall t\in\FF^\ell, \sum_{u\in\FF^\ell} F(t,u) = g(t),\\
  \forall u\in\FF^\ell, \sum_{t\in\FF^\ell} F(t,u) = h(u).
\end{align*}
One can also obtain efficient formulas for the $\FF$-Krawtchouk polynomials of order $\ell$ in the weak
Hamming scheme $\HH_n(\FF)$ with similar methods (but the formulas are considerably more complicated).

% LocalWords:  SDPs Delsarte's codewords polytope Sherali integrality Delsarte Krawtchouk

\bibliographystyle{alphaurl}
\bibliography{macros,references}

\appendix

\section{Deferred Binary Case Proofs}\label{sec:binaryproofs}

Since the proofs of \cref{lem:configcount,lem:configorbits} use \cref{lem:configconversion}, we postpone them
until after the proof of the latter.

\validVDconfigs*

\begin{proof}
  It is obvious that every Venn diagram configuration $g$ is in the set in right-hand side
  of~\eqref{eq:validVDconfigs}. On the other hand, if $g$ is in the set in the right-hand side
  of~\eqref{eq:validVDconfigs}, then the hypotheses imply that we can find a partition
  $(X_J)_{J\subseteq[\ell]}$ of $[n]$ into $2^\ell$ parts such that $\lvert X_J\rvert = g(J)$. It is easy to
  see that the words $z_1,\ldots,z_\ell\in\FF_2^n$ defined by
  \begin{align*}
    (z_j)_i & \df \One[\exists J\subseteq[\ell], (j\in J\land i\in X_J)]\qquad (i\in[n],j\in[\ell])
  \end{align*}
  have Venn diagram configuration $g$.
\end{proof}

\configconversion*

\begin{proof}
  First note that for $g\in Z_{n,\ell}$, we have
  \begin{align*}
    \sum_{J\subseteq[\ell]} V_{n,\ell}(g)(J)
    & =
    n
    +
    2^{1-\ell}\sum_{T\subseteq[\ell]} g(T) \sum_{J\subseteq[\ell]} (-1)^{\lvert T\cap J\rvert-1}
    =
    n
    -
    2^{1-\ell}\sum_{T\subseteq[\ell]} g(\varnothing) 2^\ell
    =
    n,
  \end{align*}
  so $V_{n,\ell}$ is well-defined. Since for $g\in S_{n,\ell}$, we clearly have $D_{n,\ell}(g)(\varnothing) =
  0$, it follows that $D_{n,\ell}$ is also well-defined.

  Let now $g\in S_{n,\ell}$ and note that
  \begin{equation}\label{eq:VD}
    \begin{aligned}
      V_{n,\ell}(D_{n,\ell}(g))(J)
      & =
      n\cdot\One[J = \varnothing]
      +
      2^{1-\ell}\sum_{T\subseteq[\ell]} (-1)^{\lvert T\cap J\rvert-1}
      \sum_{\substack{K\subseteq[\ell]\\\lvert K\cap T\rvert\text{ odd}}} g(K)
      \\
      & =
      n\cdot\One[J = \varnothing]
      +
      2^{1-\ell}\sum_{K\subseteq[\ell]} g(K)
      \sum_{\substack{T\subseteq[\ell]\\\\\lvert K\cap T\rvert\text{ odd}}} (-1)^{\lvert T\cap J\rvert-1}.
    \end{aligned}
  \end{equation}

  But note that
  \begin{align*}
    \sum_{\substack{T\subseteq[\ell]\\\\\lvert K\cap T\rvert\text{ odd}}} (-1)^{\lvert T\cap J\rvert-1}
    & =
    \sum_{T\subseteq[\ell]} (-1)^{\lvert T\cap J\rvert-1}\frac{1 - (-1)^{\lvert K\cap T\rvert}}{2}
    \\
    & =
    \frac{1}{2}
    \left(-\sum_{T\subseteq[\ell]} (-1)^{\lvert T\cap J \rvert}
    + \sum_{T\subseteq[\ell]} (-1)^{\lvert T\cap (J\symdiff K)\rvert}
    \right)
    \\
    & =
    2^{\ell-1}(-\One[J = \varnothing] + \One[J = K]),
  \end{align*}
  so plugging this in~\eqref{eq:VD}, we get
  \begin{align*}
    V_{n,\ell}(D_{n,\ell}(g))(J)
    & =
    n\cdot\One[J = \varnothing]
    +
    \sum_{K\subseteq[\ell]} g(K)
    (-\One[J=\varnothing] + \One[J = K])
    =
    g(J),
  \end{align*}
  where the second equality follows since $\sum_{K\subseteq[\ell]} g(K) = n$ as $g\in S_{n,\ell}$.

  Therefore $V_{n,\ell}$ is a left-inverse of $D_{n,\ell}$. But since both $S_{n,\ell}$ and $Z_{n,\ell}$ are
  $\RR$-linear subspaces of dimension $2^\ell-1$ and $V_{n,\ell}$ and $D_{n,\ell}$ are $\RR$-linear, it
  follows that $V_{n,\ell}$ and $D_{n,\ell}$ are inverses of each other.

  By \cref{lem:validVDconfigs}, we know that $\im(\config_{n,\ell}^V)\subseteq S_{n,\ell}$. On the other hand,
  if $g\in\im(\config_{n,\ell}^V)$ and $\config_{n,\ell}^V(z_1,\ldots,z_\ell)=g$, then it is straightforward
  to check that $\config_{n,\ell}^\Delta(z_1,\ldots,z_\ell) = D_{n,\ell}(g)$, thus $\config_{n,\ell}^\Delta =
  D_{n,\ell}\comp\config_{n,\ell}^V$. Applying $V_{n,\ell}$ to both sides, we get
  $V_{n,\ell}\comp\config_{n,\ell}^\Delta = \config_{n,\ell}^V$.
\end{proof}

\configcount*

\begin{proof}
  By \cref{lem:configconversion}, it is sufficient to prove that $\lvert\im(\config_{n,\ell}^V)\rvert =
  \binom{n+2^\ell-1}{2^\ell-1}$. But the number of valid Venn diagram configurations is easy to count using
  \cref{lem:validVDconfigs}: it is exactly the number of partitions of $n$ indistinguishable objects into
  $2^\ell$ distinguishable parts, which is $\binom{n + 2^\ell - 1}{2^\ell - 1}$.
\end{proof}

\configorbits*

\begin{proof}
  By \cref{lem:configconversion}, item~\ref{lem:configorbits:SD} is equivalent to:
  \begin{enumerate}[start=3]
  \item $\config_{n,\ell}^V(x_1,\ldots,x_\ell) = \config_{n,\ell}^V(y_1,\ldots,y_\ell)$.
    \label{lem:configorbits:VD}
  \end{enumerate}

  \medskip

  Let us prove that~\ref{lem:configorbits:orbit}$\implies$\ref{lem:configorbits:VD}, if
  $(y_1,\ldots,y_\ell) = (x_1,\ldots,x_\ell)\cdot\sigma$ for some $\sigma\in S_n$, then for every
  $J\subseteq[\ell]$, we have
  \begin{align*}
    \config_{n,\ell}^V(y_1,\ldots,y_\ell)
    & =
    \lvert\{i\in[n] \mid \{j\in[\ell]\mid (y_j)_i = 1\} = J\}\rvert
    \\
    & =
    \lvert\{i\in[n] \mid \{j\in[\ell]\mid (x_j)_{\sigma(i)} = 1\} = J\}\rvert
    =
    \config_{n,\ell}^V(x_1,\ldots,x_\ell).
  \end{align*}

  \medskip

  To show~\ref{lem:configorbits:VD}$\implies$\ref{lem:configorbits:orbit}, let $(X_J)_{J\subseteq[\ell]}$ and
  $(Y_J)_{J\subseteq[\ell]}$ be the partitions corresponding to $(x_1,\ldots,x_\ell)$ and
  $(y_1,\ldots,y_\ell)$, respectively, given by
  \begin{align*}
    X_J & \df \bigcap_{j\in J}\supp(x_j)\cap\bigcap_{j\in[\ell]\setminus J}([n]\setminus\supp(x_j)),\\
    Y_J & \df \bigcap_{j\in J}\supp(y_j)\cap\bigcap_{j\in[\ell]\setminus J}([n]\setminus\supp(y_j)).
  \end{align*}

  Since $\config_{n,\ell}^V(x_1,\ldots,x_\ell)=\config_{n,\ell}^V(y_1,\ldots,y_\ell)$, it follows that $\lvert
  X_J\rvert = \lvert Y_J\rvert$ for every $J\subseteq[\ell]$, so there exists a permutation $\sigma\in S_n$
  such that $\sigma(X_J) = Y_J$ for every $J\subseteq[\ell]$. Since
  \begin{align*}
    (x_j)_i & \df \One[\exists J\subseteq[\ell], j\in J\land i\in X_J],\\
    (y_j)_i & \df \One[\exists J\subseteq[\ell], j\in J\land i\in Y_J],
  \end{align*}
  for every $j\in[\ell]$ and every $i\in[n]$, it follows that $(x_1,\ldots,x_\ell)\cdot\sigma = (y_1,\ldots,y_\ell)$.
\end{proof}

\configsize*

\begin{proof}
  By \cref{lem:configconversion}, $\lvert g\rvert$ is precisely the number of $(z_1,\ldots,z_\ell)\in\FF_2^n$
  whose Venn diagram configuration is $V_{n,\ell}(g)$. But the set of such $(z_1,\ldots,z_\ell)$ is naturally in
  bijection with the set of partitions $(X_J)_{J\subseteq[\ell]}$ of $[n]$ such that $\lvert X_J\rvert = V_{n,\ell}(g)(J)$
  ($J\subseteq[\ell]$) and the number of the latter is clearly the multinomial
  \begin{align*}
    \binom{n}{V_{n,\ell}(g)}
    & =
    \frac{n!}{\prod_{J\subseteq[\ell]} V_{n,\ell}(g)(J)!}.
  \end{align*}

  Finally, from~\eqref{eq:highkrawtchouk}, we also have
  \begin{align*}
    K_g(0)
    & =
    \sum_{(y_1,\ldots,y_\ell)\in g} \prod_{j=1}^\ell \chi_{y_j}(0)
    =
    \lvert g\rvert.
    \qedhere
  \end{align*}
\end{proof}

\orthogonality*

\begin{proof}
  By \cref{rmk:highkrawtchouk}, we have
  \begin{align*}
    K_h(g)
    & =
    2^{\ell n}\cdot \widehat{\One_h}(x),
    &
    K_{h'}(g)
    & =
    2^{\ell n}\cdot \widehat{\One_{h'}}(x),
  \end{align*}
  and thus we have
  \begin{align*}
    \sum_{g\in\im(\config_{n,\ell}^\Delta)}
    \lvert g\rvert\cdot K_h(g)\cdot K_{h'}(g)
    & =
    \sum_{x\in(\FF_2^n)^\ell}
    2^{2\ell n}\cdot\widehat{\One_h}(x)\cdot\widehat{\One_{h'}}(x)
    \\
    & =
    2^{2\ell n}\cdot\ip{\One_h}{\One_{h'}}
    \\
    & =
    2^{\ell n}\cdot\lvert h\rvert\cdot\One[h = h'],
  \end{align*}
  as desired.
\end{proof}

Since the proof of \cref{lem:reflection} uses \cref{lem:explicitkrawtchouk}, we prove the latter first.

\explicitkrawtchouk*

\begin{proof}
  For an $\ell$-tuple $z=(z_1,\ldots,z_\ell)\in(\FF_2^n)^\ell$, let $P^z = (P^z_J)_{J\subseteq[\ell]}$ be the
  natural partition of $[n]$ associated with $z$ given by
  \begin{align*}
    P^z_J & \coloneqq \bigcap_{j\in J}\supp(z_j)\cap\bigcap_{j\in[\ell]\setminus J} ([n]\setminus\supp(z_j)).
  \end{align*}
  Note that by \cref{lem:configconversion}, if the symmetric difference configuration of $z$ is some function
  $f$, then it has Venn diagram configuration $V_{n,\ell}(f)$ and thus $\lvert P^z_J\rvert = V_{n,\ell}(f)(J)$
  for every $J\subseteq[\ell]$.
  
  Fix an $\ell$-tuple $x=(x_1,\ldots,x_\ell)\in g$ whose symmetric difference configuration is $g$. We now
  classify the $\ell$-tuples $y=(y_1,\ldots,y_\ell)\in h$ of symmetric difference configuration $h$ based on
  how the partitions $P^x$ and $P^y$ interact; namely, to each such $y$ we associate the function $F_y\colon
  2^{[\ell]}\times 2^{[\ell]}\to\{0,1,\ldots,n\}$ given by
  \begin{align*}
    F_y(J,K) & \coloneqq \lvert P^x_J\cap P^y_K\rvert
  \end{align*}
  for every $J,K\subseteq[\ell]$.

  By our previous observations, we know that for every $J\subseteq[\ell]$, we have
  \begin{align*}
    \sum_{K\subseteq[\ell]} F_y(J,K)
    & =
    \sum_{K\subseteq[\ell]} \lvert P^x_J\cap P^y_K\rvert
    =
    \lvert P^x_J\rvert
    =
    V_{n,\ell}(g)(J).
  \end{align*}
  Similarly, we know that for every $K\subseteq[\ell]$, we have
  \begin{align*}
    \sum_{J\subseteq[\ell]} F_y(J,K)
    & =
    \sum_{J\subseteq[\ell]} \lvert P^x_J\cap P^y_K\rvert
    =
    \lvert P^y_K\rvert
    =
    V_{n,\ell}(h)(K).
  \end{align*}
  Therefore $F_y\in\cF$.

  Note further that for each $j\in[\ell]$ we have
  \begin{align*}
    \supp(x_j)\cap\supp(y_j)
    & =
    \bigcup_{\substack{J,K\subseteq[\ell]\\j\in J\cap K}} P^x_J\cap P^y_K,
  \end{align*}
  so in the formula~\eqref{eq:highkrawtchouk}, the summand of $y\in h$ is given by
  \begin{align*}
    \prod_{j=1}^\ell \chi_{y_j}(x_j)
    & =
    \prod_{j=1}^\ell (-1)^{\supp(x_j)\cap\supp(y_j)}
    =
    \prod_{j=1}^\ell \prod_{\substack{J,K\subseteq[\ell]\\j\in J\cap K}} (-1)^{F(J,K)}.
  \end{align*}

  For each $F\in\cF$, let $n_F$ be the number of $y\in h$ such that $F_y=F$. It is easy to compute $n_F$ from
  the definition of $F_y$: since $F_y = F$ if and only if the partition $(P^x_J\cap
  P^y_K)_{J,K\subseteq[\ell]}$ satisfies $\lvert P^x_J\cap P^y_K\rvert = F(J,K)$, it follows that to get $F =
  F_y$, each part $P^x_J$ (whose size is $V_{n,\ell}(g)$) has to be partitioned into $2^\ell$ parts of sizes
  $(F(J,K))_{K\subseteq[\ell]}$ and thus $n_F$ is given by the following product of multinomials
  \begin{align*}
    n_F
    & =
    \prod_{J\subseteq[\ell]} \binom{V_{n,\ell}(g)(J)}{F(J,\place)}
    =
    \prod_{J\subseteq[\ell]} \frac{V_{n,\ell}(g)(J)!}{\prod_{K\subseteq[\ell]} F(J,K)!}.
  \end{align*}

  Putting everything together, we get
  \begin{align*}
    K_h(g)
    & =
    \sum_{F\in\cF} n_F\cdot
    \prod_{j=1}^\ell \prod_{\substack{J,K\subseteq[\ell]\\j\in J\cap K}}
    (-1)^{F(J,K)}
    \\
    & =
    \sum_{F\in\cF}
    \prod_{J\subseteq[\ell]} \frac{V_{n,\ell}(g)(J)!}{\prod_{K\subseteq[\ell]} F(J,K)!}
    \cdot
    \prod_{j=1}^\ell \prod_{\substack{J,K\subseteq[\ell]\\ j\in J\cap K}} (-1)^{F(J,K)},
  \end{align*}
  as desired.  
\end{proof}

\reflection*

\begin{proof}
  Let $V_{n,\ell}$ the function of \cref{lem:configconversion} given by~\eqref{eq:V}. By
  \cref{lem:configsize}, we have
  \begin{align*}
    \frac{\lvert g\rvert}{\lvert h\rvert}
    & =
    \frac{\binom{n}{V_{n,\ell}(g)}}{\binom{n}{V_{n,\ell}(h)}}
    =
    \prod_{J\subseteq[\ell]} \frac{V_{n,\ell}(h)(J)!}{V_{n,\ell}(g)(J)!}
  \end{align*}
  and thus by using the formula of \cref{lem:explicitkrawtchouk}, we get
  \begin{align*}
    \frac{\lvert g\rvert}{\lvert h\rvert}
    \cdot
    K_h(g)
    & =
    K_g(h),
  \end{align*}
  which gives the result.
\end{proof}

\recursive*

\begin{proof}
  We start by proving~\eqref{eq:recursive1}.
  
  First note that if $K_0\subseteq[\ell]$ is such that $V_{n,\ell}(h)(K_0) > 0$ for some symmetric difference
  configuration $h\in\im(\config_{n,\ell}^\Delta)$, then \cref{lem:validVDconfigs,lem:configconversion} imply
  that $V_{n,\ell}(h) - \One_{\{K_0\}}$ is a Venn diagram configuration in the space $\FF_2^{n-1}$ (and level
  $\ell$) and thus $h\ominus K_0 = D_{n-1,\ell}(V_{n,\ell}(h) - \One_{\{K_0\}})$ is a symmetric difference
  configuration in the space $\FF_2^{n-1}$. This also shows that $g\ominus J_0$ is a symmetric difference
  configuration in the space $\FF_2^{n-1}$.

  Let us denote by $\cF_{g,h}$ the set of functions $F\colon 2^{[\ell]}\times 2^{[\ell]}\to\{0,1,\ldots,n\}$
  such that
  \begin{align*}
    \forall J\subseteq [\ell], \sum_{K\subseteq[\ell]} F(J,K) = V_{n,\ell}(g)(J),\\
    \forall K\subseteq [\ell], \sum_{J\subseteq[\ell]} F(J,K) = V_{n,\ell}(h)(K).
  \end{align*}
  We define $\cF_{g\ominus J_0,h\ominus K_0}$ analogously (replacing $n$ with $n-1$).

  By \cref{lem:explicitkrawtchouk}, we have
  \begin{align*}
    K_h(g)
    & =
    \sum_{F\in\cF_{g,h}}
    \prod_{J\subseteq[\ell]} \frac{V_{n,\ell}(g)(J)!}{\prod_{K\subseteq[\ell]} F(J,K)!}
    \cdot
    \prod_{j=1}^\ell\prod_{\substack{J,K\subseteq[\ell]\\j\in J\cap K}} (-1)^{F(J,K)}.
  \end{align*}

  Using the multinomial identity
  \begin{align*}
    \frac{V_{n,\ell}(g)(J_0)!}{\prod_{K\subseteq[\ell]} F(J_0,K)!}
    & =
    \binom{V_{n,\ell}(g)(J_0)}{F(J_0,\place)}
    =
    \sum_{\substack{K_0\subseteq[\ell]\\ F(J_0,K_0) > 0}}
    \frac{(V_{n,\ell}(g)(J_0)-1)!}{(F(J_0,K_0)-1)!\prod_{\substack{K\subseteq[\ell]\\K\neq K_0}} F(J_0,K)!}
  \end{align*}
  and noting that $F(J_0,K_0) > 0$ implies $V(h)(K_0) > 0$, we obtain
  \begin{align*}
    K_h(g)
    & =
    \sum_{\substack{K_0\subseteq[\ell]\\ V(h)(K_0) > 0}}
    \sum_{\substack{F\in\cF_{g,h}\\ F(J_0,K_0) > 0}}
    \frac{(V_{n,\ell}(g)(J_0)-1)!}{(F(J_0,K_0)-1)!\prod_{\substack{K\subseteq[\ell]\\K\neq K_0}} F(J_0,K)!},
    \cdot
    \prod_{j=1}^\ell\prod_{\substack{J,K\subseteq[\ell]\\j\in J\cap K}} (-1)^{F(J,K)}
    \\
    & =
    \sum_{K_0\subseteq[\ell]}
    \sum_{F'\in\cF_{g\ominus J_0, h\ominus K_0}}
    \prod_{J\subseteq[\ell]} \frac{V_{n-1,\ell}(g\ominus J_0)(J)!}{\prod_{K\subseteq[\ell]} F'(J,K)!}
    \cdot
    \prod_{j=1}^\ell\prod_{\substack{J,K\subseteq[\ell]\\j\in J\cap K}} (-1)^{F'(J,K)}
    \cdot (-1)^{\lvert J_0\cap K_0\rvert}
  \end{align*}
  where the second equality follows from the substitution corresponding to the bijection
  \begin{align*}
    \{F\in\cF_{g,h} \mid F(J_0,K_0) > 0\} \to \cF_{g\ominus J_0, h\ominus K_0}
  \end{align*}
  that maps $F$ to $F'\coloneqq F - \One_{\{(J_0,K_0)\}}$.

  Equation~\eqref{eq:recursive1} now follows by applying \cref{lem:explicitkrawtchouk} again.

  \medskip

  Note now that since~$h\oplus\varnothing\ominus\varnothing = h$, equation~\eqref{eq:recursive2} is equivalent
  to
  \begin{align*}
    \sum_{\substack{K_0\subseteq[\ell]\\ V(h)(K_0) > 0}}
    K_{h\oplus\varnothing\ominus K_0}(g)
    & =
    \sum_{\substack{K_0\subseteq[\ell]\\ V(h)(K_0) > 0}}
    (-1)^{\lvert J_0\cap K_0\rvert}\cdot K_{h\oplus\varnothing\ominus K_0}(g\oplus\varnothing\ominus J_0).
  \end{align*}
  By~\eqref{eq:recursive1}, both sides of the above are equal to $K_{h\oplus\varnothing}(g\oplus\varnothing)$:
  the left-hand side using~\eqref{eq:recursive1} with $J_0 = \varnothing$ and the right-hand side using $J_0 =
  J_0$.
\end{proof}

\section{Deferred Computations}
\label{sec:deferred}

\begin{lemma}\label{lemma:completeness_param_comp}
  Let $\epsilon\in(0,1)$ and $n,q\in\NN_+$ be positive integers with $q \ge 2$. For $\ell\geq 9(n^2\ln(q) +
  1)/(\ln(1+\epsilon))^2$, we have
  \begin{align*}
    (2 \ell^n q^{n^2})^{1/\ell} \leq 1 + \epsilon.
  \end{align*}
\end{lemma}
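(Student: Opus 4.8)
The plan is to take logarithms and reduce the statement to the inequality
\begin{align*}
  \ln 2 + n\ln\ell + n^2\ln q \;\leq\; \ell L,
\end{align*}
where $L\coloneqq\ln(1+\epsilon)$; this is precisely what one gets from $(2\ell^n q^{n^2})^{1/\ell}\leq 1+\epsilon$ by applying $\ell^{-1}\ln(\cdot)$ to both sides. Since $\epsilon\in(0,1)$ we have $L\in(0,\ln 2)$, so in particular $0<L<1$, and the hypothesis on $\ell$ reads $\ell\geq\ell_0\coloneqq 9(n^2\ln q+1)/L^2$.

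First I would reduce to the boundary case $\ell=\ell_0$. The map $\ell\mapsto \ell L-n\ln\ell$ has derivative $L-n/\ell$, hence is nondecreasing on $[n/L,\infty)$, and $\ell_0\geq 9n^2\ln q/L^2\geq n/L$ (the last inequality being equivalent to $9n\ln q\geq L$, which holds since $L<1<9n\ln q$). Therefore it suffices to prove the displayed inequality with $\ell$ replaced by $\ell_0$. Substituting and expanding $\ln\ell_0=\ln 9+\ln(n^2\ln q+1)+2\ln(1/L)$, the target becomes
\begin{align*}
  \ln 2 + n\ln 9 + n\ln(n^2\ln q+1) + 2n\ln(1/L) + n^2\ln q \;\leq\; \frac{9(n^2\ln q+1)}{L}.
\end{align*}

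The only term on the left that is not controlled purely by $n$ and $q$ is $2n\ln(1/L)$, which is unbounded as $\epsilon\to 0$; the key observation is that the right-hand side carries a matching factor $1/L$. I would tame this term by the elementary tangent-line bound $\ln t\leq t/a+\ln a-1$ (valid for all $t,a>0$), applied with $t=1/L$ and $a=2n$, which yields $2n\ln(1/L)\leq 1/L+2n\ln(2n)-2n$. Moving this $1/L$ to the right (it is absorbed since $1/L\leq(n^2\ln q+1)/L$) and then discarding the residual factor $1/L\geq 1$ on the right, the problem collapses to the $L$-free inequality
\begin{align*}
  \ln 2 + n\ln 9 + n\ln(n^2\ln q+1) + 2n\ln(2n) \;\leq\; 7n^2\ln q + 8
\end{align*}
for all integers $n\geq 1$, $q\geq 2$. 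This is where the (routine) bookkeeping concentrates: using crude estimates such as $\ln(2n)\leq n$ and $\ln(n^2\ln q+1)\leq\ln(2n^2q)=\ln 2+2\ln n+\ln q$, the left-hand side is $O(n\ln n+n\ln q)$ whereas the right-hand side is $\Omega(n^2)$, so the inequality holds once a handful of small values of $n$ are checked directly. I expect the only real obstacle to be this last quantitative balancing — verifying that the absolute constant $9$ in the hypothesis is large enough to survive the tangent-line estimate together with the crude bounds — rather than anything conceptual.
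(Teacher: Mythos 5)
Your argument is correct and reaches the same destination as the paper's, but via a genuinely cleaner route. Both proofs take logarithms and reduce, by a monotonicity-in-$\ell$ argument, to the boundary case: the paper uses that $t \mapsto \ln t / t$ is decreasing on $[e,\infty)$, while you observe that $\ell \mapsto \ell L - n\ln\ell$ is nondecreasing on $[n/L,\infty)$ — the same fact in different clothing. The real divergence is in how the unbounded $2n\ln(1/L)$ term is handled. The paper keeps $L$ on the right-hand side (reducing to $\ln 9 + 2\ln n + \ln\ln q + 2\ln(1/L) \leq 8n\ln q / L$ and then discharging each term with the crude $\ln x \leq x$ plus $L\leq 1$), whereas you apply the parameterized tangent-line bound $\ln t \leq t/a + \ln a - 1$ at $t=1/L$, $a=2n$, which produces an isolated $1/L$ that is exactly absorbed by the $+9/L$ coming from the $+1$ inside $\ell_0$ — a $+1$ that the paper discards before estimating. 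This absorption is the key move and it eliminates $\epsilon$ altogether. Your route is in fact the more robust one: the paper's final chain of crude bounds is slightly too lossy at the corner $n=1$, $q=2$, $\epsilon$ near $1$ (the replacement $\ln\ln q \leq \ln q$ gives up more than a unit when $q=2$ since $\ln\ln 2 < 0$, and $3 + 2n + \ln q + 2/L$ then overshoots $8n\ln q/L$ as $L \to \ln 2$), whereas your $\epsilon$-free reduction handles that corner comfortably. Two small remarks: first, a slightly less wasteful absorption (write $9(n^2\ln q+1)/L - 1/L = (9n^2\ln q + 8)/L \geq 9n^2\ln q + 8$) lands you at $8n^2\ln q + 8$ rather than your $7n^2\ln q + 8$ on the right, giving yourself a bit more room; second, the remaining elementary inequality in $n,q$ is indeed routine but should be written out — your crude bounds $\ln(n^2\ln q + 1) \leq \ln 2 + 2\ln n + \ln q$ and $\ln(2n) \leq n$ give the right asymptotics and the extremal case $n=1$, $q=2$ checks out with a large margin.
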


\begin{proof}
  The statement is equivalent to
  \begin{align*}
    \ln 2 + n\ln\ell + n^2\ln q \leq \ell\ln(1+\epsilon),
  \end{align*}
  which in turn is equivalent to
  \begin{align}\label{eq:reduced}
    \frac{n^2\ln q + \ln 2}{\ln (1+\epsilon)}
    & \leq
    \ell\left(1 - \frac{\ln \ell}{\ell}\cdot\frac{n}{\ln(1+\epsilon)}\right).
  \end{align}

  We claim that it is sufficient to show that
  \begin{align}\label{eq:eightninths}
    \frac{\ln \ell}{\ell} & \leq \frac{8}{9}\cdot\frac{\ln(1+\epsilon)}{n}.
  \end{align}
  Indeed, if this is the case, then the right-hand side of~\eqref{eq:reduced} is at least $\ell/9$, which in
  turn is at least $(n^2\ln(q) + 1)/(\ln(1+\epsilon))^2$ and thus~\eqref{eq:reduced} follows from
  $\ln(1+\epsilon)\leq \ln 2\leq 1$.

  To show~\eqref{eq:eightninths}, first note that the function $f(t)\df\ln(t)/t$ is decreasing when $t\geq e$
  and since
  \begin{align*}
    \ell
    & \geq
    \frac{9(n^2\ln(q) + 1)}{(\ln(1+\epsilon))^2}
    \geq
    \frac{9n^2\ln(q)}{(\ln(1+\epsilon))^2}
    \geq
    e,
  \end{align*}
  it is sufficient to prove that
  \begin{align}\label{eq:eightninthsfunction}
     f\(\frac{9n^2\ln(q)}{(\ln(1+\epsilon))^2}\) & \leq \frac{8}{9}\cdot\frac{\ln(1+\epsilon)}{n}.
  \end{align}

  But since
  \begin{align*}
    f\left(\frac{9n^2\ln(q)}{(\ln(1+\epsilon))^2}\right)
    & =
    \frac{\displaystyle
      \ln 9 + 2\ln n + \ln\ln q + 2\ln\frac{1}{\ln(1+\epsilon)}
    }{\displaystyle
      \frac{9n^2\ln(q)}{\ln(1+\epsilon)^2}
    },
  \end{align*}
  \eqref{eq:eightninthsfunction} is equivalent to
  \begin{align*}
    \ln 9 + 2\ln n + \ln\ln q + 2\ln\frac{1}{\ln(1+\epsilon)} & \leq \frac{8 n\ln(q)}{\ln(1+\epsilon)}.
  \end{align*}
  This is clearly true by recalling that $\ln(1+\epsilon)\leq 1$ and upper bounding the terms on the left-hand
  side of the above respectively by
  \begin{align*}
    3, & &
    2n, & &
    \ln(q), & &
    \frac{2}{\ln(1+\epsilon)}.
  \end{align*}
  The last three bounds follow from $\ln x\leq x$ for $x > 0$.
\end{proof}

\end{document}

% LocalWords:  bigraph bigraphs bijective bigraphons bigraphon surjective endomorphisms Schwarz subcoloring blocklength
% LocalWords:  preorder duals maximality involutions subbigraphs